\let\subparagraph\paragraph
\tikzstyle{tri} = [draw, isosceles triangle, isosceles triangle apex angle=90,
\newenvironment{bprooftree}
  {\leavevmode\hbox\bgroup}
  {\DisplayProof\egroup}
\definecolor{babyblue}{rgb}{0.63, 0.79, 0.95}
\definecolor{blush}{rgb}{0.82, 0.62, 0.91}
\definecolor{columbiablue}{rgb}{0.61, 0.87, 1.0}
\definecolor{darkseagreen}{rgb}{0.56, 0.74, 0.56}
\definecolor{darksalmon}{rgb}{0.91, 0.59, 0.48}
\definecolor{deeppeach}{rgb}{1.0, 0.8, 0.64}
\newcommand{\secref}[1]{\S\ref{#1}}%
\newcommand\cmark{\ding{51}}
\newcommand\xmark{\ding{55}}
\newcommand{\defeq}{\triangleq}
\newcommand{\eqdef}{\defeq}
\newcommand{\brr}{\boldsymbol{\longrightarrow}}
\newcommand\trace[1]{\ensuremath{\mathrm{tr}(#1)}}
\newcommand{\cat}[1]{\mathbf{#1}}
\newcommand{\Hilbc}{\mathbf{Hilb}^{\aleph_o}}
\newcommand{\id}{\mathrm{id}}
\newcommand{\iid}{\mathrm{id}}
\newcommand{\rmsucc}{\mathrm{succ}}
\newcommand{\rmctrl}{\mathrm{crtl}}
\newcommand{\NMIU}{\mathbf{NMIU}}
\newcommand{\NCPSU}{\mathbf{NCPSU}}
\newcommand{\bit}{\textit{bit}}
\newcommand{\qbit}{\textit{qbit}}
\newcommand{\matalg}[1]{\mathrm M_{#1}}
\newcommand{\ov}[1]{\overline{#1}}
\newcommand{\lrb}[1]{{\left\llbracket #1 \right\rrbracket}}
\newcommand\restr[2]{{
  \left.\kern-\nulldelimiterspace % automatically resize the bar with \right
  #1 % the function
  \vphantom{\big|} % pretend it's a little taller at normal size
  \right|_{#2} % this is the delimiter
  }}
\renewcommand{\bra}[1]{\ensuremath{\left\langle #1 \right|}}
\renewcommand{\ket}[1]{\ensuremath{\left|  #1 \right\rangle}}
\newcommand{\ketbra}[2]{\ensuremath{\ket{#1}\!\bra{#2}}}
\newcommand{\abs}[1]{\left| #1 \right|}
\newcommand{\norm}[1]{\left\Vert #1 \right\Vert}
\newcommand{\inv}{^{*}}
\newcommand{\R}{\mathbb{R}}
\newcommand{\N}{\mathbb{N}}
\newcommand{\C}{\mathbb{C}}
\newcommand{\sset}[1]{\ensuremath{\left\{#1\right\}}}
\newcommand{\match}[3]{\mathrm{match}(#1,#2,#3)}
\newcommand{\ip}[2]{\langle #1 , #2 \rangle}
\newcommand{\psu}[1]{\underline{\textbf{#1}}}
\newcommand{\alt}{\mid}
\newcommand{\iso}{\mapsto}
\newcommand{\inl}[1]{\mathbf{inl}{\;#1}}
\newcommand{\ini}[1]{\mathbf{inx}{\;#1}}
\newcommand{\inx}[1]{\mathbf{inx}{\;#1}}
\newcommand{\inr}[1]{\mathbf{inr}{\;#1}}
\newcommand{\bintensor}[2]{#1 \ptimes #2}
\newcommand{\clause}[2]{\mid  #1 \mapsto #2}
\newcommand{\clauses}[1]{\left\{ ~#1~ \right\}}
\newcommand{\isobasiquel}{\clauses{\clause{v_1}{l_1}\mid\dots\clause{v_n}{l_n}}}
\newcommand\isobasique\isobasiquel
\newcommand{\uctrl}[1]{\mathrm{ctrl}~#1}
\newcommand{\isoterm}{\ensuremath{u}}
\newcommand{\OD}[2]{{\rm ONB}_{#1}{#2}}
\newcommand{\ODe}[2]{{\rm ONB^e}_{#1}{#2}}
\newcommand{\ODval}[2]{{\rm ONB}^{\emph{val}}_{#1}{#2}}
\newcommand{\entail}{\vdash}
\newcommand{\mynl}{\\[1ex]}
\newcommand{\ketz}{\ket{{\ps 0}}}
\newcommand{\keto}{\ket{{\ps 1}}}
\newcommand{\pket}[1]{\ensuremath{\ps{\ket{#1}}}}
\newcommand{\Schrod}{Schr\"odinger}
\newcommand{\qif}[3]{\mathbf{qif}~\ps{#1}~\mathbf{then}~\ps{#1} \ptimes #2~\mathbf{else}~\ps{#1} \ptimes #3}
\newcommand{\tinl}[1]{\textit{inl}(#1)}
\newcommand{\tinr}[1]{\textit{inr}(#1)}
\newcommand{\tcase}[3]{\textit{case}\, #1 \,\textit{of}\, \{ \tinl{x} {\to} #2, \ \tinr{y} {\to} #3\}}
\newcommand{\tcaseof}[1]{\textit{case}\ #1 \textit{ of }}
\newcommand{\tlet}[3]{\textit{let}\ #1 = #2\textit{ in } #3}
\newcommand{\tletpair}[4]{\tlet{#1 \otimes #2}{#3}{#4}}
\newcommand{\tletb}[3]{\tlet{\mathcal B(#1)}{#2}{#3}}
\newcommand{\tletpairb}[4]{\tlet{\mathcal B(#1 \otimes #2)}{#3}{#4}}
\newcommand{\tlift}[1]{\textit{lift}(#1)}
\newcommand{\tforce}[1]{\textit{force}(#1)}
\newcommand{\tzero}{\textit{zero}}
\newcommand{\tsucc}[1]{\textit{succ}(#1)}
\newcommand{\tmatch}[3]{\textit{match}\, #1 \,\textit{with}\, \{ \tzero {\to} #2, \ \tsucc{x} {\to} #3\}}
\newcommand{\tpure}[1]{\textit{pure}(#1)}
\newcommand{\tmeas}[1]{\textit{meas}(#1)}
\newcommand{\tun}[2]{\mathcal B({#1})(#2)}
\newcommand{\skipsum}[1]{%
  \mathop{{\vphantom{\sum}}_{#1}\kern-\scriptspace}\!\sum\nolimits
}
\newcommand{\wvdots}{\ \vdots \ }
\newcommand{\Hilb}{\mathbf{Hilb}}
\newcommand{\Coiso}{\mathbf{Coisometry}}
\newcommand{\Contr}{\mathbf{Contr}}
\newcommand{\Isometry}{\Iso}
\newcommand{\Unitary}{\mathbf{Unitary}}
\newcommand{\Iso}{\mathbf{Isometry}}
\newcommand{\Ker}{\mathrm{Ker}}
\newcommand{\iim}{\mathrm{Im}}
\newcommand\nat\Rightarrow
\newcommand{\intf}[1]{\left\llbracket #1 \right\rrbracket} % interpretation functor
\newcommand\interp\intf
\newcommand\sem\intf
\newcommand{\cvariables}{\textsc{Vars}}
\newcommand{\ctypes}{\textsc{Types}}
\newcommand{\cterms}{\textsc{Terms}}
\newcommand{\cvalues}{\textsc{Values}}
\newcommand{\ps}[1]{\boldsymbol{#1}}
\newcommand{\pterms}{\textbf{\textit{Terms}}}%
\newcommand{\pbterms}{\textbf{\textit{Basis Values}}}%
\newcommand{\pvalues}{\textbf{\textit{Values}}}%
\newcommand{\punitaries}{\textbf{\textit{Unitaries}}}%
\newcommand{\pexp}{\textbf{\textit{Expressions}}}%
\newcommand{\ptypes}{\textbf{\textit{Ground quantum types}}}%
\newcommand{\putypes}{\textbf{\textit{Unitary types}}}%
\newcommand{\Nat}{\mathrm{Nat}}
\newcommand{\isotype}[2]{U(#1,#2)}
\newcommand{\ptype}{\boldsymbol Q}%
\newcommand{\ptypeone}{\boldsymbol Q_1}%
\newcommand{\ptypetwo}{\boldsymbol Q_2}%
\newcommand{\pbasic}{\boldsymbol I}%
\newcommand{\pGamma}{\ps \Gamma}%
\newcommand{\pQ}{{\ensuremath{\ps Q}}}%
\newcommand{\pqbit}{\textbf{qbit}}%
\newcommand{\ptqbit}[1]{\textbf{qbit}^{{#1}}}%
\newcommand{\palpha}{\ps \alpha}%
\newcommand{\pbeta}{\ps \beta}%
\newcommand{\entailpure}{\ps \vdash}%
\newcommand{\entailiso}{\Vdash}
\newcommand{\pdash}{\entailpure}%
\newcommand{\udash}{\entailiso}%
\newcommand{\isotypeonetwo}{U(\ptypeone,\ptypetwo)}
\newcommand{\isotypetwoone}{U(\ptypetwo,\ptypeone)}
\newcommand{\ptimes}{\boldsymbol\otimes}
\newcommand{\pplus}{\boldsymbol\oplus}
\newcommand{\pNat}{\textbf{qnat}}
\newcommand{\pnat}{\pNat}%
\newcommand{\phad}{\mathrm{Had}}
\newcommand{\pb}{{\ensuremath{\ps b}}}
\newcommand{\pcnot}{\mathrm{CNOT}}
\newcommand{\pt}{\boldsymbol{t}}
\newcommand{\pv}{\boldsymbol{v}}
\newcommand{\pe}{\boldsymbol{e}}
\newcommand{\psucc}[1]{\textbf{S}{\;#1}}
\newcommand{\pzero}{\textbf{0}}
\newcommand{\unibasique}{\clauses{\clause{\ps b_1}{\ps e_1}\ \dots\clause{\ps b_n}{\ps e_n}}}
\newcommand{\unibasiqueppm}{\clauses{\clause{\ps b'_1}{\ps e'_1}\ \dots\clause{\ps b'_m}{\ps e'_m}}}
\newcommand{\unibasiqueshort}{\{\ldots \mid \! {\ps b_i} \mapsto {\pe_i} \ldots\}}
\newcommand{\unibreduit}{\clauses{\clause{\ps b_i}{\ps e_i}}_i}
\newcommand{\confs}{\texttt{Conf}}
\newcommand{\vconfs}{\texttt{VConf}}
\newcommand{\svconfs}{\texttt{SVConf}}
\newcommand{\wfconfs}[2]{\mathrm{WF}_{#1}(#2)}
\newcommand{\pdot}{\ps \cdot}
\newcommand\psdot\pdot
\newcommand{\pair}[2]{#1 \ptimes #2}
\newtcbox{\rombox}[1][gray]{on line,
arc=7pt,
colback=#1!35!white,
colframe=black,
before upper={\rule[-3pt]{0pt}{10pt}},boxrule=0pt,
boxsep=0pt,left=4pt,right=4pt,top=2pt,bottom=2pt}
\tikzstyle{empty}=[shape=circle, tikzit fill={rgb,255: red,191; green,191; blue,191}]
\tikzstyle{dot}=[fill=black, draw=black, shape=circle]
\tikzstyle{gather}=[shading=gatherballshading, outer color={zx_grey_thick}, inner color={zx_grey}, fill={zx_grey}, draw=black, tikzit category=scal, rounded corners=0.8mm, regular polygon, regular polygon sides=3, shape border rotate=-90, inner sep=1.6pt, anchor=center, outer sep=-.1cm, line width=0.75 pt]
\tikzstyle{hook}=[right hook->, draw=black, tikzit draw=magenta]
\tikzstyle{mono}=[draw=black, to reversed->]
\tikzstyle{to}=[draw=black, ->]
\tikzstyle{equal-arrow}=[-, double equal sign distance]
\tikzstyle{dashed-arrow}=[dashed, ->]
\tikzstyle{tirets}=[-, draw=black, dashed]
\tikzstyle{double}=[<->]
\begin{document}

\title[A programming language combining quantum and classical control]{A programming language combining quantum and classical control}

\author{Kinnari Dave}[a,b]

\author{Louis Lemonnier}[c]

\author{Romain Péchoux}[b]

\author{Vladimir Zamdzhiev}[a]

\address{Université Paris-Saclay, CNRS, ENS Paris-Saclay,
Inria, Laboratoire Méthodes Formelles, 91190, Gif-sur-Yvette, France}

\address{Université de Lorraine, CNRS, Inria, LORIA, F-54000 Nancy, France}

\address{University of Edinburgh, United Kingdom}

%%%%%%%%%%%%%%%%%%%%%%

\begin{abstract}
	The two main notions of control in quantum programming languages are often
	referred to as ``quantum'' control and ``classical'' control.  With the
	latter, the control flow is based on classical information, potentially
	resulting from a quantum measurement, and this paradigm is well-suited to
	mixed state quantum computation.  Whereas with quantum control, we are
	primarily focused on pure quantum computation and there the ``control'' is
	based on superposition.  The two paradigms have not mixed well
	traditionally and they are almost always treated separately. In this work,
	we show that the paradigms may be combined within the same system. The key
	ingredients for achieving this are: (1) syntactically: a modality for
	incorporating pure quantum types into a mixed state quantum type system;
	(2) operationally: an adaptation of the notion of ``quantum configuration''
	from quantum lambda-calculi, where the quantum data is replaced with pure
	quantum primitives; (3) denotationally: suitable (sub)categories of Hilbert
	spaces, for pure computation and von Neumann algebras, for mixed state
	computation in the Heisenberg picture of quantum mechanics.
\end{abstract}

\maketitle

% This is a journal submission of a paper \cite{dave2025control} published in the
% proceedings of FoSSaCS'25. Compared to the journal submission, the syntax of
% quantum control is more detailed, with full typing rules and definition of
% substitution. We also added details about the denotational semantics as
% unitaries and isometries between separable Hilbert spaces, which is now fully
% worked out with all the proofs. All the elements around the equational theory
% are also added. Finally, the proof of completeness is incorporated into the
% text. Furthermore, similarly to quantum control, the proofs for the
% fully-fledged syntax have been added, with a more detailed discussion on the
% denotational semantics in von Neumann algebras.

%%%%%%%%%%%%%%%%%%%%%%%%%%%%%%%%%%%%%%%%%%%%%%%%%%%%%%%%%%%%%%%%%%%%%%%%%%%%%%
%\section{Introduction}
%%%%%%%%%%%%%%%%%%%%%%%%%%%%%%%%%%%%%%%%%%%%%%%%%%%%%%%%%%%%%%%%%%%%%%%%%%%%%%
%\input{intro-fossacs.tex}
%%%%%%%%%%%%%%%%%%%%%%%%%%%%%%%%%%%%%%%%%%%%%%
\section{Introduction}
\label{sec:intro}
%%%%%%%%%%%%%%%%%%%%%%%%%%%%%%%%%%%%%%%%%%%%%%

There are two important paradigms in the design of quantum programming languages -- ``classical control'' and ``quantum control''.
In the classical control approach (e.g.,
\cite{selinger2004towards,selinger2009quantum,selinger2006lambda,kenta-bram,qpl-fossacs,popl22})
the control flow of a program is
conditioned on classical information that may result from quantum measurements.
Type systems for quantum programming languages that are based on classical
control are able to represent a variety of effects, e.g., quantum state
preparation, state evolution through the application of unitary operators, and
probabilistic effects induced by quantum measurements. Because of this, it is
natural to conceptualise such lambda-calculi using quantum structures and
models that are suited to describing mixed-state quantum computation and
information (e.g., density matrices, superoperators).
In the quantum control approach (e.g., \cite{sabry2018symmetric,diazcaro2019unitary,diazcaro2022unitary,pablo2022unbounded,valiron2022semantics}), one usually places an emphasis on pure state quantum
computation and more specifically on superposition of terms, pure quantum
states and unitary operators. In approaches that utilise classical control, one
often starts with a selection of constants that represent some unitary
operators (e.g., Hadamard, CNOT) and more complex unitary operations can be
described in a circuit-like fashion by composing the atomic unitaries in a
suitable way. Whereas in the quantum control approach, unitary operators are
defined through more fundamental primitives that do not require the programmer
to specify a circuit-like decomposition of the unitary operation. Because of
this, a considerable economy in terms of syntax can be achieved with quantum
control. %For instance, the two-qubit unitary operator controlled Hadamard
%can be defined as in Example~\ref{ex:unitaries}, whereas doing so in a typical
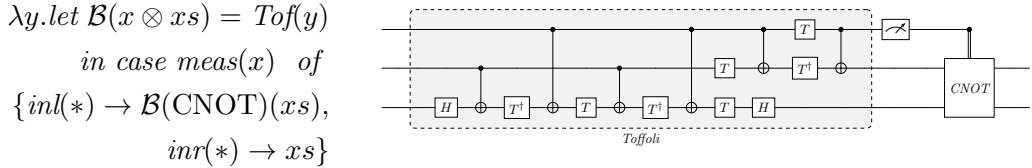
\begin{figure}[t]
	%\centering
	\begin{subfigure}[b]{0.2\textwidth}
		\centering
		\begin{align*}
			\scalebox{1}{$\lambda y. \textit{let}\ \mathcal B(x \otimes xs)=\textit{Tof}(y)$} \\
			%\tcase{\tmeas{x_1}}{\tun{\pcnot}{x_2}}{x_2} \notag
			\scalebox{1}{$\textit{in case}\ \tmeas{x} \ \textit{ of }$} \\
			\scalebox{1}{$\{ \tinl{\ast} \to \tun{\pcnot}{xs},$} \\
			\scalebox{1}{$\tinr{\ast} \to xs \}$} 
		\end{align*}
		%\caption{A program with quantum and classical control}
	\end{subfigure}
	\hspace{0.2em}
	\begin{subfigure}[b]{0.6\textwidth}
		\centering
		\raisebox{3em}{
			\scalebox{0.47}{
				\begin{quantikz}
					&&\gategroup[3, steps=14, style={dashed,rounded corners, fill=gray!10, inner xsep=2pt}, background, label style={label position=below, anchor=north, yshift=-0.2cm}]{\textit{Toffoli}} 
					&&&&\ctrl{2}&&&&\ctrl{2}&&\ctrl{1}&\gate{T}&\ctrl{1}&& \meter{} &&\ctrl[vertical wire=c]{1} \\
					&&&&\ctrl{1}&&&&\ctrl{1}&&&\gate{T}&\targ{}&\gate{T^\dagger}&\targ{}&&&&\gate[2]{\textit{CNOT}}&&\\
					&&&\gate{H}&\targ{}&\gate{T^\dagger}&\targ{}&\gate{T}&\targ{}&\gate{T^\dagger}&\targ{}&\gate{T}&\gate{H}&&&&&&&&
				\end{quantikz}}}
		%\caption{A circuit for the program}
	\end{subfigure}
	\caption{A program and the corresponding circuit}
	\label{fig:prog}
\end{figure}

%quantum programming language with classical control that uses the (usual) atomic gate
%set $\{H,CNOT,T\}$ would require two $H$ gates, six $T$ gates and one $CNOT$ gate.
For example, consider the circuit in Figure~\ref{fig:prog}. It uses
the well-known \textit{Toffoli} gate, which would have to be decomposed as a
large circuit like the one in the figure. However, the program representing this
circuit can instead be written as a simple lambda
term in our language given in Figure~\ref{fig:prog}, where $\textit{Tof} \defeq \mathcal B(\mathrm{ctrl}~\pcnot)$.
To highlight this further note that the controlled Hadamard can be defined similarly
to the Toffoli in our language, but a typical classically controlled quantum programming language
would require two $H$ gates, six $T$ gates and one $CNOT$ gate.
%For instance, the two-qubit unitary operator controlled Hadamard
%(resp. the three-qubit unitary operator Toffoli) can be defined as in
%Example~\ref{ex:unitaries}, whereas doing so in a typical quantum programming
%language with classical control that uses the (usual) atomic gate set
%$\{H,CNOT,T\}$ would require two $H$ gates, six $T$ gates and one $CNOT$ gate
%(resp. two $H$ gates, seven $T$ gates and seven $CNOT$gates)
%\cite{amy2013meet}. Any programming language with classical control would have to rely
%on this tedious decomposition in order to write a simple program representing the circuit
%in Figure~\ref{fig:prog}. However, in our syntax it can be represented as a simple lambda
%term given in Figure~\ref{fig:prog}, where $\textit{Tof} \defeq \mathcal B(\mathrm{ctrl}~\pcnot)$.
Thus, our language offers an ease of writing programs by abstracting away
quantum circuits. We aim to shift focus from drawing quantum circuits, which
is a more \textit{low-level} approach to writing code, to writing programs
directly using syntax, which is a more \textit{high-level} approach. In
classical (non-quantum) programming, we can use high-level languages to write
algorithms without having to specify boolean circuits or to encode integers
using tuples of bits/booleans. With our paper, we are trying to advance
quantum programming abstractions towards such a direction. For example, an
individual qnat $\ket n$ can be used instead of a tensor of qubits
$\ket{q_1q_2...q_k}$ whose binary encoding corresponds to $n$. We view
boolean circuits and binary encodings of integers as low-level and likewise
for their quantum counterparts. Our aim is to develop higher-level quantum
abstractions.

\subsubsection*{Our Contributions}
\label{sec:contr}
Because of the potential presence of quantum entanglement, it is impossible (in
general) to decompose a quantum state into a non-trivial tensor product of two
smaller quantum states. Indeed, in the quantum lambda-calculus (QLC)
\cite{quantitative}, which is based on classical control, the type $\pqbit$ of
qubits is an opaque type in the sense that there are no \emph{closed} values of
such a type. Program states may be described via \emph{quantum
configurations} which are triples $(\ket \psi, \ell, M)$, where $\ket \psi \in
\mathbb C^{2^n}$ is a \emph{pure} quantum state (possibly entangled); $M$ is a
program possibly containing free variables of type $\pqbit$; $\ell$
is a \emph{linking} function that maps the free quantum variables of $M$ to
appropriate components of the state $\ket \psi$. It is possible to also view
$\ell$ as a unitary permutation acting on $\ket \psi$ and this view is
important for our development. Such configurations, even though they are not
part of the user-facing syntax, allow us to reason about entangled states and
the operational semantics are described via a relation $(\ket \psi, \ell, M)
\to_p (\ket{\psi'}, \ell', M')$, with $p$ the probability of reduction.

The main idea that allows us to combine both quantum and classical control is
to modify the quantum configurations described above by replacing the quantum
state $\ket \psi$ with a suitable \emph{pure quantum term} $\ps t$ and to
replace the linking function $\ell$ with a suitable \emph{unitary permutation}
$u_\sigma$, where both $\ps t$ and $u_\sigma$ are syntactic constructs from our
pure quantum language that may be assigned types.

In order to accurately model the situation with the QLC, the term $\pt$ has to
correspond to a \emph{normalised} vector and we replace the unitary constants from the
QLC with programmable unitary constructs, so the pure subsystem also has to
ensure their \emph{unitarity}. Our pure
subsystem is based on ideas described in \cite{phd-kostia,sabry2018symmetric},
but we further build on this by introducing an \emph{equational theory} for our
pure subsystem, which has unique normal forms and we describe a
\emph{denotational semantics} for it that is \emph{sound and complete} with
respect to the equational theory.

% To make the system more interesting, we also add a type $\pnat$, which is interpreted using the infinite-dimensional space $\ell^2(\mathbb N)$, so that the system is not restricted to qubit-level computation.

Next, we integrate the pure quantum subsystem into our main calculus, which
allows us to describe both quantum and classical information and, therefore also
classical control. This is a variant of the QLC with the % addition of a type $\Nat$ for classical natural numbers and the
addition of a modality $\mathcal
B(\pQ)$ which allows us to view pure quantum types $\pQ$ as types of mixed
state quantum operations in the Heisenberg picture. The intuition behind this
is the following: mixed states in the \Schrod{} picture can be seen as CPTP
maps\footnote{CPTP stands for `completely positive trace preserving' and
$\mathcal T(H)$ and $\mathcal B(H)$ stand for the trace-class and bounded
linear operators, respectively, on a Hilbert space $H$.}
$(1 \mapsto \rho) \colon \mathbb C \to \mathcal T(H)$ whereas mixed states in
the Heisenberg picture are given by functionals of the form $\trace{\rho -}
\colon \mathcal B(H) \to \mathbb C, $
both of which are determined by a choice of density operator $\rho \colon H \to H$
(see \cite[Section 7]{cho2014semantics} for detail on this matter).
% We motivate the choice for working in the Heisenberg picture in our paper and this is the justification for the symbol we use for the modality.
This modality allows us to replace the \emph{constants} for state
preparation and unitaries acting on qubits in the QLC with \emph{terms and
expressions} from the pure subsystem acting on more general quantum types
beyond qubits. The operational semantics are described via a suitable adaptation
of the aforementioned quantum configurations. The denotational semantics follows previous work
on quantum programming semantics based on von Neumann algebras
\cite{kenta-bram,qpl-fossacs,popl22} and we show, in addition, that the
assignment $\mathcal B(-)$ may be extended to a strict monoidal functor
(between the relevant categories of Hilbert spaces and von Neumann algebras)
that is crucial for our denotational semantics.

Overall, the two main contributions of our work are: (1) we show that quantum
and classical control can be combined in a syntactic, operational and
denotational sense by integrating a pure quantum control subsystem as part of
the meta-theory and syntax of a quantum lambda-calculus; (2) an equational
theory for a (sub)system for quantum control with unique normal forms and a
sound and complete denotational semantics. Some supplementary material including 
extra examples, proofs, and complete figures for formation rule predicates are provided in Appendix.
 
% Even though this might look surprising, because the two notions have been treated mostly separately, we hope that our reasoning above and what follows in the sequel can convince readers that this combination is a natural one.

\subsubsection*{Related Work}
\label{sec:r-work}

\begin{figure}[t]
\centering
\scalebox{0.8}{
	\begin{tabular}{|| m{7.3em} | m{2.3em} | m{3.9em} | m{5em} | m{3.2em} | m{4em} | m{3.8em} | m{3.2em} | m{4.2em} | m{2.6em} ||}
\hline
%		 & Our work & $\textbf{Qunity}^1$ & $\lambda-\textbf{s1}^2$ & $\textbf{Sym}^3$ & $\textbf{CPM}^4$ & $\textbf{GS}^5$ & $\textbf{Quipper}^6$ & $\textbf{QWIRE}^7$ & $\textbf{PrS}^8$ \\[0.5ex]
		& \textbf{Our work}& \textbf{Qunity} & $\qquad \lambda\textbf{-}\mathcal{S}_1$ &\textbf{Sym} &\textbf{Quipper} &\textbf{QWIRE} &  \multicolumn{3}{c|}{\textbf{Q.}$\lambda$-\textbf{calculus}}\\[0.5ex]
		\cline{8-10}
		&&\raggedright \cite{qunity}& \cite{diazcaro2019unitary,diazcaro2022unitary,pablo2022unbounded,semimodules,lineal}& \raggedright\cite{sabry2018symmetric}& \cite{green2013quipper}& \cite{paykin2017qwire}
		&\cite{quantitative}&\cite{quantum-game,quantum-game-full-abstraction,marc-phd}&\cite{enriched-presheaf}\\
		\hline \hline
	         Quantum control & \cmark & \cmark & \cmark& \cmark & \xmark & \xmark & \xmark & \xmark& \xmark\\
		 \hline
		 Classical control & \cmark & \cmark & \xmark& \xmark & \cmark & \cmark & \cmark &\cmark & \cmark\\
		 \hline
		 \raggedright Infinite dimensional quantum type & \cmark & \xmark & \xmark & \cmark & \xmark & \xmark & \cmark &\cmark &\cmark \\
		 \hline
		 Unitary function & & & & & & & & & \\
		 spaces & \cmark & \xmark & \xmark & \cmark & \xmark & \cmark & \xmark &\xmark & \xmark\\
		 \hline
		 Normalised \qquad  quantum terms & \cmark & \xmark & \xmark & \xmark & \cmark & \cmark & \cmark &\cmark & \cmark\\
		 \hline
		 Denotational & & & & & & & & & \\
		 semantics & \cmark & \cmark & \cmark & \xmark & \xmark & \cmark & \cmark &\cmark & \cmark\\
		 \hline
		 Adequacy & \cmark & \xmark & \cmark & \xmark & \xmark& \cmark & \cmark&\cmark & \cmark\\
		 \hline
		 Termination & \cmark & \xmark & \cmark& \cmark & \xmark$^{\ast}$ & \cmark & \xmark$^{\ast}$ & \xmark$^{\ast}$ & \xmark$^{\ast}$\\
%		\hline
%	\multicolumn{10}{c}{
%	  $^1$\cite{qunity}, $^2$\cite{diazcaro2019unitary,diazcaro2022unitary,pablo2022unbounded,semimodules,lineal}, $^3$\cite{sabry2018symmetric},
%	  $^4$\cite{quantitative}, $^5$\cite{quantum-game,quantum-game-full-abstraction,marc-phd},
%	  $^6$\cite{green2013quipper},
%	  $^7$\cite{paykin2017qwire}, $^8$\cite{enriched-presheaf}}\\
\hline
\end{tabular}
}
\caption{A comparison of some quantum programming languages.}
\label{tab:comp}
\end{figure}

We begin by commenting on semantic approaches to classical control. In \cite{quantitative}, the authors describe a QLC where the
higher-order primitives are interpreted using techniques from the semantics of
quantitative models of linear logic. In
\cite{quantum-game,quantum-game-full-abstraction,marc-phd}, the author(s) work
with QLCs whose denotational semantics is given via game semantics.
More recently, in \cite{enriched-presheaf}, the authors have shown how to
interpret a QLC with recursive types using presheaves that are enriched over a
suitable base category and they also prove full abstraction. The approach that we use in this paper is based on von
Neumann algebras which have been previously used for the semantics of quantum
programming languages \cite{kenta-bram,popl22,qpl-fossacs}. In particular, if
we disregard the interaction between the pure subsystem and the main calculus,
then our denotational model for the main calculus coincides with the one in
\cite{kenta-bram}. The reason that we choose von Neumann algebras over the
other approaches is because we think this gives the model which is the closest
to mathematical physics out of the ones mentioned. Furthermore, our primary
focus is not on developing the semantics of the classical paradigm;
instead it is on the \emph{interaction} between the pure quantum control
subsystem and the main calculus based on classical control. Another
distinguishing feature of our system, compared to the ones above, is the
addition of types such as $\mathcal B({\pnat})$ which allow us to describe
states in infinite-dimensional Hilbert spaces, such as $\ell^2(\mathbb N)$, and
which we do not think can be easily included using the other approaches.
\let\thefootnote\relax\footnotetext{$^\ast$ These languages admit general recursion.}
%\addtocounter{footnote}{-1}
%\footnotetext{This work proves termination for a finitary segment of the language.}
%\addtocounter{footnote}{+1}\footnotetext{A blank entry indicates that the property has not been discussed and it is not straightforward to check whether it holds.}
For the quantum control paradigm, relevant work includes
\cite{diazcaro2019unitary,diazcaro2022unitary,pablo2022unbounded,semimodules,lineal}. What all these approaches have in
common is that they have specific terms in their syntax for representing
superposition. However, these approaches do not ensure unitarity of the
relevant function spaces, in general. Our approach, instead, is based on
related work first described in \cite{sabry2018symmetric} and then later in
\cite{phd-kostia}. One of the most distinctive features of this approach to
quantum control is that the terms that introduce superposition are subject to
strict formation conditions that ensure (linear algebraic) normalisation of the corresponding
vectors. Furthermore, this approach to quantum control also imposes strict
admissibility criteria on the unitary function spaces that can be formed
through the type system and this ensures unitarity of the relevant expressions.
Because of these considerations, we choose this approach to model quantum
control in our pure subsystem. However, we have made some changes, in
particular, we replace the operational semantics from these works with an
equational theory instead. This works better for our approach to
\emph{combining} quantum and classical control within the quantum configurations
(in QLCs the quantum data is considered modulo equality). We
continue building on this development by describing a denotational semantics
that is sound and complete with respect to the equational theory. This is the
first such result for languages with quantum control.

Another paper which combines quantum and
classical control is Qunity \cite{qunity}. However, compared to our work,
Qunity does not have an operational semantics, but instead there is a
compilation procedure to quantum circuits. Furthermore, the denotational
semantics of Qunity: (1) does not ensure unitarity and normalisation of the
pure primitives in its language, whereas ours does; (2) does not ensure
trace-preservation of the mixed primitives in the \Schrod{} picture, whereas
ours does indeed ensure unitality of the mixed primitives in the Heisenberg
picture (which corresponds to trace-preservation under the Heisenberg-\Schrod{}
duality). Other differences include that Qunity is restricted to
finite-dimensional quantum data and they do not have higher-order lambda
abstractions for dealing with mixed-state primitives. Instead, Qunity relies on
a try-catch mechanism to combine quantum and classical control, whereas our
approach is based on quantum configurations. The table in Figure~\ref{tab:comp}
gives a comparison of our work with other quantum programming languages. \textit{Normalised quantum terms}
refers to the property that terms representing pure/mixed quantum states always correspond to complex vectors with norm one/density operators with trace one. 
\textit{Adequacy} refers to the property that two observationally distinct programs
have distinct denotational interpretations. 
%A blank entry indicates that the property has not
%been discussed in the paper and it is not straighforward to check whether it holds.

\subsubsection*{Conference paper.} 
This is an extended version of a paper \cite{dave2025control} published in the 
proceedings of FoSSaCS'25. Compared to the conference paper, the syntax of
quantum control is more detailed, with full typing rules and definition of
substitution. We also added details about the denotational semantics as
unitaries and isometries between separable Hilbert spaces, which is now fully
worked out with all the proofs. All the elements around the equational theory
are also added. Finally, the proof of completeness is incorporated into the 
text. Furthermore, similarly to quantum control, the proofs for the 
fully-fledged syntax have been added, with a more detailed discussion on the 
denotational semantics in von Neumann algebras.

%%%%%%%%%%%%%%%%%%%%%%%%%%%%%%%%%%%%%%%%%%%%%%
\section{Quantum Control}\label{sec:quantum-control}
%%%%%%%%%%%%%%%%%%%%%%%%%%%%%%%%%%%%%%%%%%%%%%

This section mainly consists of work completed in one of the authors' PhD
thesis~\cite[Chapter 3]{louis-thesis}, with notations adapted to this paper. 

%%%%%%%%%%%%%%%%%%%%%%%%%%%%%%%%%%%%%%%%%%%%%%%%%%%%%%%%%%%%%%%%%%%%%%%%%%%%%%
\subsection{The Language of Quantum Control}
\label{sec:pure-language}
%%%%%%%%%%%%%%%%%%%%%%%%%%%%%%%%%%%%%%%%%%%%%%%%%%%%%%%%%%%%%%%%%%%%%%%%%%%%%%

We start by describing in details the language of (simply-typed) quantum
control. We first lay out the syntax~\secref{sub:simple-syntax}, then set forth
the associated typing rules~\secref{sub:pure-formation}, before detailing out
substitution works within this language~\secref{sub:qua-substitution}.

%%%%%%%%%%%%%%%%%%%%%%%%%%%%%%%%%%%%%%%%%%%%%%%%%%%%%%%%%%%%%%%%%%%%%%%%%%%%%%
\subsubsection{Syntax}
\label{sub:simple-syntax}
%%%%%%%%%%%%%%%%%%%%%%%%%%%%%%%%%%%%%%%%%%%%%%%%%%%%%%%%%%%%%%%%%%%%%%%%%%%%%%

\begin{figure}
	\begin{alignat*}{10}
		&\text{({\ptypes})}\quad & \pQ_1, \pQ_2 &~~&&::= ~&\quad& \pbasic \alt
		\pQ_1 \pplus \pQ_2 \alt \pQ_1 \ptimes \pQ_2 \alt \pNat
		\\
		&\text{({\putypes})} & U &&& ::= && U(\pQ_1, \pQ_2) \\[1ex]
		&\text{(\pbterms)} & \pb &&&::=&& \ps * \alt \ps x \alt \inl{\ps b} \alt \inr{\pb} \alt
		\pb \ptimes \pb \alt \pzero \alt \psucc \pb \\
		&\text{(\pvalues)} & \ps v &&& ::= && \Sigma_{i \in I} (\palpha_i \pdot \pb_i) \\
		&\text{(\pexp)} & \ps e &&& ::= && \ps * \alt \ps x \alt \inl{\ps e} \alt \inr{\ps e} \alt
		\pair{\ps e}{\ps e} \alt \pzero \alt \psucc{\ps e} \\
		&&&&&&&  \alt \Sigma_{i \in I} (\palpha_i \pdot \ps e_i)
		\\
		&\text{(\punitaries)} & \isoterm &&&::=&& \unibasique \\ 
		&&&&&&& \alt \isoterm \otimes \isoterm \alt \isoterm \oplus \isoterm \alt
		\isoterm \circ \isoterm \alt \isoterm\inv \alt \uctrl \isoterm \\
		&\text{(\pterms)} & \ps t &&& ::= && \ps * \alt \ps x \alt \inl{\ps t} \alt \inr{\ps t} \alt
		\pair{\ps t}{\ps t} \alt \pzero \alt \psucc t \\
		&&&&&&& \alt u~\ps t \alt \Sigma_{i \in I} (\palpha_i \pdot \ps t_i)
	\end{alignat*}
	\caption{Syntax of quantum control.}
	\label{fig:qu-syntax}
\end{figure}

The syntax of the programming language studied in this section is described in
a usual way, with grammars given in Figure~\ref{fig:qu-syntax}. The ground
types are given by a unit type $\pbasic$ and the usual connectives $\pplus$ and
$\ptimes$, which are respectively called \emph{direct sum} and \emph{tensor
product}. We also have the inductive type $\pNat$, as a witness that it is
possible to work with non finite data types, and thus non finite-dimensional
spaces in the model. We interpret every quantum type as a separable Hilbert
space: the unit type is interpreted as $\lrb \pbasic \defeq \mathbb C$; pair
types are interpreted as tensor products; (quantum) sum types are interpreted
as direct sums; finally, $\lrb \pnat \defeq \ell^2(\mathbb N)$.  The type of
qubits may be defined as $\pqbit \eqdef \pbasic \pplus \pbasic.$ We write
$\ptqbit{n}$ for the $n$-fold tensor product $\pqbit \ptimes \cdots \ptimes
\pqbit$. The Hilbert space $\ell^2(\mathbb N)$ allows us to form superpositions
with respect to a countable orthonormal basis (e.g., $\sfrac{1}{\sqrt 2} \ket 3
+ \sfrac{1}{\sqrt 2} \ket 7$). We may think of $\ell^2(\mathbb N)$ as a quantum
analogue of the natural numbers, to abstract from qubit-level computation.  We
equip functions, called \emph{unitaries} in this section, with a separate type,
written $\isotypeonetwo$ when $\pQ_1$ and $\pQ_2$ are two ground types. They
are not interpreted as Hilbert spaces, but as sets of unitary operators:
$\lrb{\isotypeonetwo} \defeq \{ u \colon \lrb \ptypeone \to \lrb \ptypetwo \ |\
u \emph{ is a unitary operator} \}$.

The terms of the language are given as follows:
\begin{itemize}
	\item variables $\ps x, \ps y, \ps z,\dots$, given as elements of a set of variables
		$\mathbf{Var}$, assumed totally ordered;
	\item a term $\ps *$ called the \emph{unit} corresponding to the unit type $\pbasic$;
	\item usual connectives for the direct sum, $\inl\!$ and $\inr\!$ which are
		respectively called the \emph{left injector} and \emph{right injector};
	\item a connective corresponding to the tensor product, that is also
		written $\ptimes$;
	\item terms for natural numbers, $\pzero$ and the connective $\psucc\!$ that
		gives the successor of a term;
	\item the application of a unitary to a term, written $u~\ps t$ when
		$u$ is a unitary and $\ps t$ a term;
	\item finally, given a set of indices $\pbasic$, which is assumed totally
		ordered, a family of complex numbers $(\palpha_i)_{i \in I}$ and a
		family of terms $(\ps t_i)_{i \in I}$, one can form the term $\sum_{i
		\in I} (\palpha_i \pdot \ps t_i)$, representing the linear combination
		of the terms with complex scalars. This last term construction embodies
		the quantum effect of the language.
\end{itemize}
In quantum theory, a linear combination of vectors $\sum_{i \in I} \palpha_i
\ket{x_i}$ is normalised if $\sum_{i \in I} \abs{\palpha_i}^2 = 1$. The family
of real numbers $(\abs{\palpha_i}^2)_{i \in I}$ is then seen as a probability
distribution. This work does not focus on the probabilistic aspect of quantum
theory; however, we want to work with well-formed states, and thus normalised
states. This is why we ensure later that a linear combination of terms is
normalised. Throughout the section, a term $\Sigma_{i \in \set{1,2}} (\palpha_i
\pdot \ps t_i)$ might be written $\palpha_1 \pdot \ps t_1 + \palpha_2 \pdot \ps
t_2$, regarded as syntactic sugar. In some examples, a term $\Sigma_{i \in \set
*} 1 \pdot \ps t$ can be written $1 \pdot \ps t$ or even $\ps t$ for
readability; but note that $\Sigma_{i \in \set *} 1 \pdot \ps t$ and $\ps t$
are different terms in the syntax.

We call \emph{basis values}, the terms that are unitary-free -- in the sense
that they do not contain any function application -- and that do not involve
linear combinations. These terms are naturally classical. Their name comes from
the fact that we use them as a syntactic representation of the canonical basis
of a Hilbert space. Note that basis values can be totally ordered.

Values, on the other hand, are linear combinations of basis values. In a value
$\sum_{i \in I} (\palpha_i \pdot \ps b_i)$, we assume that the family $(\ps
b_i)_{i \in I}$ is an increasing sequence, and that none of the scalars is
equal to $0$; this allows us to work with unique normal forms later in the
section. Since this definition is restrictive, we need to introduce a more
general piece of syntax, which still does not include unitary application,
called an \emph{expression}. They are used as outputs of functions, that we
introduce
below.

\begin{exa}
	\label{ex:basis-terms}
	The basis term $\ps \ast \colon \ps I$ represents the complex scalar $1 \in
	\mathbb C$. The basis terms that represent the computational basis for
	qubits can be defined by $\ps{\ket 0} \eqdef \inl{\ps \ast} \colon \pqbit$
	and $\ps{\ket 1} \eqdef \inr{\ps \ast} \colon \pqbit.$ Other basis terms
	that can be defined include $\pket{00} \eqdef \pket 0 \ptimes \pket 0
	\colon \ptqbit{2}$ and $\pket{11} \eqdef \pket 1 \ptimes \pket 1 \colon
	\ptqbit{2}$. If $\pb$ is a basis term, we often write it as $\pket{b}$ to
	distinguish it from other (non-basis) terms, e.g., $\pket{\psu{2}} \colon
	\pnat$ represents the qnat $\ket 2 \in \ell^2(\mathbb N).$ Single-qubit
	states may be defined by $\pket{\pm} \eqdef \left( \sfrac{1}{\sqrt 2} \pdot
	\pket{0} \pm \sfrac{1}{\sqrt 2} \pdot \pket{1} \right) \colon \pqbit$. %
	and (Entangled) quantum states can be defined, e.g., $\textbf{Bell} \eqdef
	\left( \sfrac{1}{\sqrt 2} \pdot \pket{00} + \sfrac{1}{\sqrt 2} \pdot
	\pket{11} \right) \colon \ptqbit{2}$.  Linear combinations of qnats can
	also be written in the language, \emph{i.e.}~$\left( \sfrac{1}{\sqrt 2}
	\pdot \pket{\psu{\pzero}} + \sfrac{1}{\sqrt 6} \pdot \pket{\psu{1}} +
	\sfrac{1}{\sqrt 3} \pdot \pket{\psu{2}} \right) \colon \pnat .$
\end{exa}

Next, we describe the unitaries $\isoterm$ of Figure \ref{fig:qu-syntax}.
\emph{Unitary pattern-matching}, whose syntax is given by $\unibasique$, allows
us to build unitary maps out of basis terms and values: every \emph{closed}
basis term $\pb_i$ is mapped to the closed value $\ps v_i$; basis terms with
free variables $\pb_i$ determine a mapping with the use of substitution with
respect to the corresponding value $\ps v_i$, wherein both $\pb_i$ and $\ps
v_i$ have the same free variables occurring within them, and the induced
mapping is determined by performing all possible substitutions of the free
variables on both sides with the same closed basis terms. The formation
conditions (see \secref{sub:pure-formation}) ensure that the basis terms $\pb_i$
(resp. values $\ps v_i$) determine an ONB for the type $\pQ_1$ (resp.
$\pQ_2$) and therefore $\unibasique$ determines a unitary map. We introduce
syntactic sugar for quantum if statements, written $\textbf{qif}$.  If $u_1$ is
of the form $\unibasique$ and $u_2$ of the form $\unibasiqueppm$ then
$\qif{x}{u_2}{u_1}$ is short for the unitary:
\[
	\scalebox{1}{$
	\left\{
		\begin{array}{lcl}
			\ps{\ket 0} \ptimes \ps b_1 &\mapsto & \ps{\ket 0} \ptimes \ps v_1 \\
			& \vdots & \\
			\ps{\ket 0} \ptimes \ps b_n &\mapsto & \ps{\ket 0} \ptimes \ps v_n \\
		\end{array}
		\quad,\quad
	\right.
	\left.
		\begin{array}{lcl}
			\ps{\ket 1} \ptimes \ps b'_1 &\mapsto & \ps{\ket 1} \ptimes \ps v'_1 \\
			& \vdots & \\
			\ps{\ket 1} \ptimes \ps b'_m &\mapsto & \ps{\ket 1} \ptimes \ps v'_m
		\end{array}
	\right\}$}
\]

\begin{exa}
 \label{ex:unitaries}
	The unitaries $\phad$ and $\pcnot$ below encode the standard Hadamard and
	CNOT gates, respectively. $\pcnot$ makes use of some simple pattern-matching
	on its last line: it can match either with $\bintensor{\ketz}{\ketz}$ or
	with $\bintensor{\ketz}{\keto}$. The unitary $\phad_{\pnat} $ defines an
	extension of the Hadamard gate on the space $\ell^2(\mathbb N)$.

\noindent
	\begin{minipage}{0.4\textwidth}
		\begin{align*}
			\scalebox{1}{$\phad$}&\scalebox{1}{$\ \colon \isotype{\textbf{qbit}}{\textbf{qbit}}$}
			\\
			\scalebox{1}{$\phad$} &\scalebox{1}{$\defeq
			\left\{ \begin{array}{lcl}
				\mid \ketz & \iso & \sfrac{1}{\sqrt 2} \pdot\ketz + \sfrac{1}{\sqrt 2} \pdot \keto \\
				\mid \keto & \iso & \sfrac{1}{\sqrt 2} \pdot\ketz - \sfrac{1}{\sqrt 2} \pdot \keto
			\end{array} \right\} \qquad\qquad$}
		\end{align*}
	\end{minipage}
	\hspace*{-1cm}
	\begin{minipage}{0.4\textwidth}
		\begin{align*}
			\scalebox{1}{$\pcnot$} &\scalebox{1}{$\ \colon \isotype{\ptqbit{2}}{\ptqbit{2}}$}
			\\
			\scalebox{1}{$\pcnot$} &\scalebox{1}{$\defeq
			\left\{ \begin{array}{lcl}
				\mid\bintensor{\keto}{\ketz} & \iso & \bintensor{\keto}{\keto} \\
				\mid \bintensor{\keto}{\keto} & \iso & \bintensor{\keto}{\ketz} \\
				\mid\bintensor{\ketz}{\ket{\ps x}} & \iso & \bintensor{\ketz}{\ket{\ps x}}
			\end{array} \right\}$}
		\end{align*}
	\end{minipage}
	\begin{align*}
		\scalebox{1}{$\phad_{\pnat}$}  &\scalebox{1}{$\ \colon \isotype{\pnat}{\pnat}$} \\
		\scalebox{1}{$\phad_{\pnat}$} & \scalebox{1}{$\defeq
		\left\{ \begin{array}{lcl}
			\mid \ket{\psu{\pzero}} & \iso & \sfrac{1}{\sqrt 2} \pdot\ket{\psu{\pzero}} + \sfrac{1}{\sqrt 2}\pdot \ket{\psu{1}} \\
			\mid \ket{\psu{1}} & \iso & \sfrac{1}{\sqrt 2} \pdot\ket{\psu{0}} - \sfrac{1}{\sqrt 2} \pdot\ket{\psu{1}} \\
			\mid \ket{\psu{$\ps x$+2}} & \iso & \ket{\psu{$\ps x$+2}}
		\end{array} \right\}$}
	\end{align*}

\end{exa}

The remaining expressions for forming unitaries are easy to understand:
$\isoterm_2 \circ \isoterm_1$ represents composition of unitaries, $\isoterm^*$
represents the adjoint of a unitary, $\isoterm_1 \otimes \isoterm_2$ and
$\isoterm_1 \oplus \isoterm_2$ represent tensor products and direct sums of
unitaries, and finally, $\mathrm{ctrl}\ \isoterm$ represents a qubit-controlled
unitary operator. For simplicity, we sometimes write $\unibasiqueshort$ as a
shorthand notation for the unitary $\unibasique$, when $n$ is clear from
context or unimportant. We also point out that unitaries depend only on values
and basis terms and that general pure terms cannot be used for the construction
of unitaries.

%%%%%%%%%%%%%%%%%%%%%%%%%%%%
\subsubsection{Types and Formation Rules}
\label{sub:pure-formation}
%%%%%%%%%%%%%%%%%%%%%%%%%%%%

As usual, a \emph{typing context} consist of a set of pairs of a variable and a
type, written $\pGamma$ and generated over $\pGamma ::= \emptyset \alt \set{\ps
x \colon \pQ} \cup \pGamma$. A comma between two contexts represents the
union of the contexts, \emph{i.e.} $\pGamma, \pGamma' = \pGamma \cup \pGamma'$.
We have two levels of judgements: the one for terms, where sequents are written
$\pGamma \vdash \ps t \colon \pQ_1$ and a typing judgement for unitaries, noted
$\entailiso u \colon \isotypeonetwo$. Variables in a context $\pGamma$ are
strictly linear: given $\pGamma \entail \ps t \colon \pQ_1$, every element of
$\pGamma$ has to occur \emph{exactly once} in the term $\ps t$.

Values (resp. expressions, terms) in the language are, in general, normalised
linear combinations of basis values (resp. expressions, terms). However, not
all basis values (resp. expressions, terms) can be summed in a normalised
combination. If that statement is not convincing, the next example will help.

\begin{exa}
	$\frac{1}{\sqrt 2} \pdot (\inl{\ps *}) + \frac{1}{\sqrt 2} \pdot (\inl{\ps
	*})$ should not be considered a quantum state: its norm is $\sqrt 2$, not
	$1$. Its interpretation in Hilbert spaces is the following: $\frac{1}{\sqrt
	2} \ket 0 + \frac{1}{\sqrt 2} \ket 0 = \frac{2}{\sqrt 2} \ket 0 = \sqrt 2
	\ket 0$. This is not a normalised state.
\end{exa}

The next definition introduces the predicate of orthogonality, written $\bot$.
Given $\ps t_1 ~\bot~\ps t_2$, we are allowed to form a normalised linear combination
of the two terms.

\begin{defi}[Orthogonality]
	\label{def:orthogonality}
	We introduce a symmetric binary relation $\bot$ on terms. Given two terms
	$\ps t_1, \ps t_2$, $\ps t_1~\bot~\ps t_2$ holds if it can be derived
	inductively with the rules below; we say that $\ps t_1$ and $\ps t_2$ are
	orthogonal. The relation $\bot$ is defined as the smallest symmetric
	relation such that:
 \[
  \begin{array}{c}
   \infer{\inl{\ps t_1}~\bot~\inr{\ps t_2}}{}
			\quad
			\infer{\pzero~\bot~\psucc t}{}
			\quad
			\infer{\psucc{\ps t_1}~\bot~\psucc{\ps t_2}}{\ps t_1~\bot~\ps t_2}
   \quad
   \infer{\pair{\ps t}{\ps t_1}~\bot~\pair{\ps t'}{\ps t_2}}{\ps t_1~\bot~\ps t_2}
   \quad
   \infer{\pair{\ps t_1}{\ps t}~\bot~\pair{\ps t_2}{\ps t'}}{\ps t_1~\bot~\ps t_2}
   \\[1.5ex]
   \infer{\inl{\ps t_1}~\bot~\inl{\ps t_2}}{\ps t_1~\bot~\ps t_2}
   \quad
   \infer{\inr{\ps t_1}~\bot~\inr{\ps t_2}}{\ps t_1~\bot~\ps t_2}
			\quad
			\infer{u~\ps t_1~\bot~u~\ps t_2}{\ps t_1~\bot~\ps t_2}
 			\quad
 			\infer{
				\ps t~\bot~\Sigma_{i \in I} (\palpha_i \pdot \ps t_i)
 			}{
 				\forall i \in I, \ps t~\bot~\ps t_i
 			}
   \\[1.5ex]
 			\infer{
				\ps t~\bot~\Sigma_{i \in I \cup \set *} (\palpha_i \pdot \ps t_i)
 			}{
 				\forall i \in I, \ps t~\bot~\ps t_i
				&
				\palpha_* = 0
				&
				\ps t_* = \ps t
 			}
			\quad
			\infer{\Sigma_{j \in J} (\palpha_j \pdot \ps t_j)~\bot~\Sigma_{k \in K}
			(\pbeta_k \pdot \ps t_k)}
				{\forall i\neq j \in I, \ps t_i~\bot~\ps t_j
				&
				J,K \subseteq I
				&
				\sum_{i \in J \cap K} \bar{\palpha_i}\pbeta_i = 0
				}
  \end{array}
 \]
\end{defi}

\begin{rem}[Orthogonality of variables]
	Given two variables $\ps x$ and $\ps y$, the terms $\ps x$ and $\ps y$ are
	not orthogonal.  The main reason is that they could be instantiated with
	the same value. On the other hand, $\inl{\ps x}$ and $\inr{\ps y}$ are
	orthogonal, for example.
\end{rem}

\begin{rem}
	Consider the following unitary:
	\[
		\sset{\begin{array}{lcl}
			\mid \inl{} \ps x & \iso & \inr{} \ps x \\
			\mid \inr{} \ps y & \iso & \inl{} \ps y
		\end{array}
		}
	\]
	which swaps the two sides of a direct sum $\pplus$ and call this unitary
	$u$. Applying $u$ to $\inl{\ps *}$ for example, we get $\inr{\ps *}$, which
	is orthogonal to $\inl{\ps *}$.  However, we cannot derive that $u(\inl{\ps
	*})$ and $\inl{\ps *}$ are orthogonal with the rules given above. This is
	because we wish to be able to derive orthogonal \emph{statically}.
\end{rem}

Note that orthogonality holds without any formation rules or notion of type.
Figure~\ref{fig:typing-values-simple} introduces the formation rules for
expressions, and therefore for basis values and values, thanks to the notion of
orthogonality.

\begin{figure}
	\[\begin{array}{c}
		\infer{
			\emptyset \entail \ps * \colon \pbasic,
		}
		{}
		\qquad
		\infer{
			\ps x \colon \pQ \entail \ps x \colon \pQ,
		}{}
		\qquad
		\infer{
			\pGamma_1,\pGamma_2\entail \pair{\ps e_1}{\ps e_2} \colon \pQ_1 \ptimes \pQ_2,
		}{
			\pGamma_1\entail \ps e_1 \colon \pQ_1
			&
			\pGamma_2\entail \ps e_2 \colon \pQ_2
		}
		\\[1.5ex]
		\infer{
			\pGamma\entail \inl{\ps e} \colon \pQ_1 \pplus \pQ_2,
		}{
			\pGamma\entail \ps e \colon \pQ_1
		}
		\qquad
		\infer{
			\pGamma\entail \inr{\ps e} \colon \pQ_1 \pplus \pQ_2,
		}{
			\pGamma\entail \ps e \colon \pQ_2
		}
		\\[1.5ex]
		\infer{\entail \pzero \colon \pNat,}{}
		\qquad
		\infer{
			\pGamma \entail \psucc{\ps e} \colon \pNat,
		}{
			\pGamma \entail \ps e \colon \pNat
		}
		\\[1.5ex]
		\infer{
			\pGamma \entail \Sigma_i (\palpha_i \pdot \ps e_i) \colon \pQ.
		}{
			\pGamma \entail \ps e_i \colon \pQ
			&
			\Sigma_i \abs{\palpha_i}^2 = 1
			&
			\forall i\not= j, \ps e_i~\bot~\ps e_j
		}
	\end{array}
	\]
	\caption{Formation rules of (basis) values and expressions.}
	\label{fig:typing-values-simple}
\end{figure}

Once we know how basis values are formed in a certain type, we can discuss some
orthogonality properties among specific types. In linear algebra, given a
vector space and an orthogonal basis of that space, if two elements of the
basis are not orthogonal, then they are equal. Our case is more subtle,
because, for example, $\inr{} (\inl{\ps *})$ and $\inr{\ps x}$ are not equal,
but also not orthogonal. We will see, later in this section, that they are
linked by substitution.

\begin{lem}
	\label{lem:typed-orthogonality}
	Given well-formed basis values $\pGamma_1 \entail \ps b_1 \colon \pQ_1$,
	$\pGamma_2 \entail \ps b_2 \colon \pQ_1$ and $\pGamma_3 \entail \ps b_3
	\colon \pQ_1$, if $\ps b_1~\bot~\ps b_2$ and $\neg(\ps b_2~\bot~\ps b_3)$,
	then $\ps b_1~\bot~\ps b_3$.
\end{lem}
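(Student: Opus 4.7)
The plan is to proceed by induction on the derivation of $\ps b_1 \bot \ps b_2$. Since $\ps b_1$ and $\ps b_2$ are basis values (containing no sums and no unitary applications), the only rules that can fire are the two axioms $\inl{\ps t_1} \bot \inr{\ps t_2}$ and $\pzero \bot \psucc{\ps t}$, together with the congruences for $\inl$, $\inr$, $\psucc$ and the two pair rules. The guiding observation is that in each such case the last rule pins down the outermost constructor of $\ps b_2$; since $\ps b_3$ shares the type $\pQ_1$, its outermost shape is constrained in the same way, and $\neg(\ps b_2 \bot \ps b_3)$ rules out any shape for $\ps b_3$ that would let an axiom fire against $\ps b_2$. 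Consequently, $\ps b_3$ must share $\ps b_2$'s outermost constructor, which is exactly what is needed to rerun the same rule against $\ps b_1$.

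I would first dispose of the two axiom cases. If $\ps b_1 = \inl{\ps b_1'}$ and $\ps b_2 = \inr{\ps b_2'}$ arise from the $\inl/\inr$ axiom, then $\ps b_3$ has a sum type and cannot be of the form $\inl{\,}$, so $\ps b_3 = \inr{\ps b_3'}$ and the axiom yields $\ps b_1 \bot \ps b_3$ directly. The $\pzero/\psucc$ axiom is symmetric. For the congruence cases, take for instance $\ps b_1 = \inr{\ps b_1'}$ and $\ps b_2 = \inr{\ps b_2'}$ with premise $\ps b_1' \bot \ps b_2'$. The $\inr$ congruence is then the only rule that can derive $\ps b_2 \bot \ps b_3$ when $\ps b_3 = \inr{\ps b_3'}$, so $\neg(\ps b_2 \bot \ps b_3)$ is equivalent to $\neg(\ps b_2' \bot \ps b_3')$. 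The induction hypothesis applied at the sub-terms yields $\ps b_1' \bot \ps b_3'$, and the $\inr$ congruence lifts this to $\ps b_1 \bot \ps b_3$. The cases of $\inl$, $\psucc$ and the two pair rules follow the same template; the pair case requires care to reuse the same side (first or second component) on which $\ps b_1 \bot \ps b_2$ was originally derived.

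The main obstacle I expect is the interaction with free variables appearing inside basis values. Since no rule ever concludes $\ps x \bot \ps b$ or $\ps b \bot \ps x$ when $\ps x$ is a variable, the case analysis must propagate enough structural information to prevent $\ps b_3$ (or a critical sub-term of it) from collapsing to a bare variable at the level where the conclusion is needed. The required alignment should come from inverting congruences step by step, together with the typing hypotheses $\pGamma_2 \entail \ps b_2 \colon \pQ_1$ and $\pGamma_3 \entail \ps b_3 \colon \pQ_1$, which keep $\ps b_2$ and $\ps b_3$ structurally parallel as the induction descends through matching constructors. The remainder is routine bookkeeping dictated by the grammar of basis values and the rules of Definition~\ref{def:orthogonality}.
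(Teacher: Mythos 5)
Your induction on the derivation of $\ps b_1~\bot~\ps b_2$, with each case driven by inversion of $\neg(\ps b_2~\bot~\ps b_3)$, is the natural structural route, and your handling of the axiom and congruence cases is fine as far as it goes. But the obstacle you flag in your final paragraph is not routine bookkeeping that typing can discharge — it is exactly where the argument breaks. A bare variable is a well-formed basis value \emph{at every type}, so the hypotheses $\pGamma_2 \entail \ps b_2 \colon \pQ_1$ and $\pGamma_3 \entail \ps b_3 \colon \pQ_1$ do not keep $\ps b_2$ and $\ps b_3$ structurally parallel: at sum type, $\ps b_3$ may be $\ps x$ just as well as $\inl{\ps b_3'}$ or $\inr{\ps b_3'}$. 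Concretely, take $\ps b_1 = \inl{\ps *}$, $\ps b_2 = \inr{\ps *}$, $\ps b_3 = \ps x$, all well-formed at $\pbasic \pplus \pbasic$. Since no rule of Definition~\ref{def:orthogonality} concludes an orthogonality with a bare variable on either side, $\neg(\ps b_2~\bot~\ps b_3)$ holds, yet $\ps b_1~\bot~\ps b_3$ is not derivable, so your axiom case ("$\ps b_3$ cannot be of the form $\inl{\cdot}$, so $\ps b_3 = \inr{\ps b_3'}$") cannot be closed. The failure also propagates to subterms through the pair and injection congruences (e.g.\ $\ps b_1 = \ps y_1 \ptimes \inl{\ps *}$, $\ps b_2 = \ps y_2 \ptimes \inr{\ps *}$, $\ps b_3 = \ps y_3 \ptimes \ps z$), so the induction hypothesis is equally unable to rescue it: read literally against the printed rules, the statement needs some hypothesis beyond well-formedness to exclude a bare variable at a mismatch position, and your proposed appeal to "inverting congruences together with the typing hypotheses" cannot supply one.

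Two repairs are available, and your write-up should commit to one. If $\ps b_3$ is \emph{closed}, the variable case disappears entirely and there is in fact a shorter argument than your structural induction: by Proposition~\ref{prop:patterns-are-orthogonal}, $\neg(\ps b_2~\bot~\ps b_3)$ yields a valuation $\sigma$ with $\match{\sigma}{\ps b_2}{\ps b_3}$, hence $\ps b_3 = \sigma(\ps b_2)$, and Lemma~\ref{lem:ortho-subst-equiv} applied to $\ps b_1~\bot~\ps b_2$ (taking the identity valuation on the left) gives $\ps b_1~\bot~\sigma(\ps b_2) = \ps b_3$ directly. Alternatively, one can restrict attention to the regime the surrounding text intends, namely values "belonging to a common basis": the ${\rm ONB}$ rules admit a variable only as the \emph{sole} element of a decomposition, which is what blocks the counterexample in that setting. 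Without one of these side conditions, the sentence "the remainder is routine bookkeeping dictated by the grammar" papers over the one point where the proof — and the unqualified statement — actually fails.
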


Of course, this lemma holds for basis values because they belong to a common
basis, that can be seen as the \emph{canonical basis}. Nothing can be said
about the orthogonality of values that do not belong to a common basis.

\begin{exa}
	The values $\inl{\ps *}$ and $\inr{\ps *}$ are orthogonal, but of course,
	they are both not orthogonal to $\frac{1}{\sqrt 2} \pdot (\inl{\ps *}) -
	\frac{1}{\sqrt 2} \pdot (\inr{\ps *})$.
\end{exa}

This motivates a syntactic definition for a basis, containing values of the
language. Given a set of orthogonal values, we want to ensure that this set
parses the whole type, so that it can be seen as an orthonormal basis. This is
given by the notion of \emph{orthogonal decomposition}. First, given a set $S =
\set{\pair{\ps b_1}{\ps b_1'}, \dots, \pair{\ps b_n}{\ps b_m'}}$, we define
$\pi_1(S) = \set{\ps b_1, \dots, \ps b_n}$ and $\pi_2(S) = \set{\ps b_1',
\dots, \ps b_m'}$. Finally, we define $S_\pb^1$ and $S_\pb^2$ respectively as
$\set{\ps b' \alt \pair{\ps b}{\ps b'} \in S}$ and $\set{\ps b' \alt \pair{\ps
b'}{\ps b} \in S}$.

\begin{defi}[Orthonormal Basis]
	\label{def:OD}
	We introduce a predicate $\OD{\pQ}{(-)}$ on finite sets of basis values. Given
	a finite set of values $S$, $\OD{\pQ_1}{S_\pb}$ holds if it can be derived with the
	following rules. The predicate is defined inductively as the smallest
	predicate such that:
	\[
  \begin{array}{c}
			\infer{\OD{\pQ}{(\set{\ps x})}}{} 
			\qquad 
			\infer{\OD{\pbasic}{(\set{\ps *})}}{} 
			\qquad
  	\infer{
				\OD{\pQ_1 \pplus \pQ_2}{(\{\inl{\ps b} \alt \ps b \in S \} \cup
				\{\inr{\ps b} \alt \ps b\in T \})}
			}{
				\OD{\pQ_1}{S} \qquad \OD{\pQ_2}{T}
			} 
  	\mynl
			\infer{
				\OD{\pNat}{(\set{\pzero} \cup \set{\psucc{\ps v} \alt \ps v \in S})}
			}{
				\OD{\pNat}{S}
			}
			\qquad
  	\infer{
				\OD{\pQ_1\otimes \pQ_2}{S} 
			}{
				\begin{array}{c}
					\OD{\pQ_2}{(\pi_2(S))} \text{ and } \forall
					\ps b \in \pi_2(S), \OD{\pQ_1}{(S^2_\pb)}
				\end{array}
			}
			\mynl
  	\infer{
				\OD{\pQ_1\otimes \pQ_2}{S} 
			}{
				\begin{array}{c}
					\OD{\pQ_1}{(\pi_1(S))} \text{ and } \forall b \in \pi_1(S),
					\OD{\pQ_2}{(S^1_b)} 
				\end{array}
			}
		\end{array}
	\]
	The existence of this predicate is ensured by induction on the cardinal of $S$.
\end{defi}

To simplify the notations in later proofs, we will write $S \boxplus T$ for the
set $\{\inl{\ps b} \alt \ps b \in S \} \cup \{\inr{\ps b} \alt \ps b \in T \}$,
and $S^{\oplus 0}$ for the set $\set{\pzero} \cup \set{\psucc{\ps v} \alt \ps v
\in S}$. We adopt functional programming convention regarding parentheses: when
readable, we write $\OD{\pQ}{S}$ instead of $\OD{\pQ}{(S)}$. Also, we call
${\rm ONB}$ the predicate introduced above in general, without precision of
type, to facilitate the later discussions.

\begin{rem}
	Note that the precondition to derive $\OD{\pQ_1 \ptimes \pQ_2}{S}$ cannot
	be simplified: there are sets $S$ such that $\OD{\pQ_1}{(\pi_1(S))}$ and
	for all $\ps b \in \pi_1(S)$, $\OD{\pQ_2}{(S_\pb^1)}$ but not all $\ps b \in
	\pi_2(S)$ is such that $\OD{\pQ_2}{(S^2_\pb)}$, \textit{e.g.} $S =
	\set{(\inl{\ps *}) \ptimes \ps y, (\inr{\ps x}) \ptimes (\inl{\ps *}),
	(\inr{\ps x}) \ptimes (\inr{\ps *})}$ with the type $(\pbasic \pplus
	(\pbasic \pplus \pbasic)) \ptimes (\pbasic \ptimes \pbasic)$.
\end{rem}

The predicate ${\rm ONB}_{\pQ}$ defined above ensures that a finite set $S$ of
values represents an orthonormal basis of $\pQ$, that we can view as the
canonical basis. Note that even for $\pNat$, the set representing the basis is
finite, \textit{e.g.} $\set{\pzero, \psucc \pzero, \psucc (\psucc x)}$. With knowledge
of linear algebra, one can say that the bases secured by ${\rm ONB}$ are not all
the possible orthonormal bases. A change of basis with a unitary matrix also
provides an orthonormal basis. This is the purpose of the next definition.

\begin{defi}
	\label{def:od-ext}
	We extend the previous definition to general values. The statement
	$\ODe{\pQ}{(-)}$ is a predicate on finite sets of values, and
	$\ODe{\pQ}{S}$ holds if it can be derived from the following rule.
	\[ \begin{array}{c}
		\infer{\ODe{\pQ_1}{(\set{\ps x})}}{} 
		\qquad 
		\infer{\ODe{I}{(\set{\ps *})}}{} 
		\qquad
		\infer{
			\ODe{\pQ_1\oplus \pQ_2}{(\{\inl{\ps e} \alt \ps e \in S \} \cup
			\{\inr{\ps e} \alt \ps e \in T \})}
		}{
			\ODe{\pQ_1}{S} \qquad \ODe{\pQ_2}{T}
		} 
		\mynl
		\infer{
			\ODe{\pNat}{(\set{\pzero} \cup \set{\psucc{\ps v} \alt \ps v \in S})}
		}{
			\ODe{\pNat}{S}
		}
		\qquad
		\infer{
			\ODe{\pQ_1 \ptimes \pQ_2}{S} 
		}{
			\begin{array}{c}
				\ODe{\pQ_2}{(\pi_2(S))} \text{ and } \forall
				\ps e \in \pi_2(S), \ODe{\pQ_1}{(S^2_{\ps e})}
			\end{array}
		}
		\mynl
		\infer{
			\ODe{\pQ_1 \ptimes \pQ_2}{S} 
		}{
			\begin{array}{c}
				\ODe{\pQ_1}{(\pi_1(S))} \text{ and } \forall \ps e \in \pi_1(S),
				\ODe{\pQ_2}{(S^1_{\ps e})} 
			\end{array}
		}
		\mynl
		\infer{
			\ODe{\pQ_1}{(\set{\Sigma_{\ps e'\in S} (\palpha_{\ps e, \ps e'}
			\pdot \ps e') \alt \ps e \in S})}
		}{
			\ODe{\pQ_1}{S}
			& (\palpha_{\ps e, \ps e'})_{(\ps e,\ps e') \in S \times S}\text{ is a unitary matrix} 
		}
	\end{array} \]
	Given $\ODe{\pQ}{S}$, we say that $S$ is an \emph{orthogonal decomposition} of type $\pQ$.
\end{defi}

To simplify the notations, we will write $S^\alpha$ the set
$\set{\Sigma_{\ps b'\in S} (\palpha_{\ps b, \ps b'} \pdot \ps b') \alt \ps b \in S}$.

\begin{exa}
 \label{ex:ONBs}
	For $\pqbit $, the rules from Definitions~\ref{def:OD} and \ref{def:od-ext} show that
	$\OD{\pqbit}{(\{ \pket{0}, \pket{1} \})}$ and that
	$\ODe{\pqbit}{(\{\pket{+}, \pket{-} \})}$ hold (with $\pket{\pm}$ defined
	in Example \ref{ex:basis-terms}). We also have that $\OD{\pqbit}{(\{ \ps x
	\} )}$, where $\ps x$ is a variable. In other words, we determine an ONB
	for the type $\pqbit$ by substituting $\ps x$ with all possible closed
	computational \emph{basis} terms of type $\pqbit$ (see Figure
	\ref{fig:qu-syntax}). For $\pnat$, any set $S$ with the property
	that $\OD{\pnat}{(S)}$ holds, must contain a term that is not closed. We
	have that $\OD{\pnat}{(\{ \pket{\psu x}\} )}$, $\OD{\pnat}{(\{
		\pket{\psu{\pzero}}, \pket{\psu{$\ps x$+1}}\}) }$, and $\OD{\pnat}{(\{
		\pket{\psu{\pzero}}, \pket{\psu{1}}, \pket{\psu{$\ps x$+2}} \} )}$ hold
	true and determine the same ONB for the type $\pnat$ (it represents the
	computational basis of $\ell^2(\mathbb N)$). The set used in the last
	predicate allows us to conclude that
	$
  \ODval{\pnat}{\left(\left\{
   \sfrac{1}{\sqrt 2} \psdot\pket{\psu{\pzero}} + \sfrac{1}{\sqrt 2}\psdot \pket{\psu{1}},
   \sfrac{1}{\sqrt 2} \psdot\pket{\psu{\pzero}} - \sfrac{1}{\sqrt 2} \psdot\pket{\psu{1}},
			\pket{\psu{$\ps x$+2}}
  \right\}\right)}
 $
	holds. This last set of values is used to type $\phad_{\pnat}$ from Example
	\ref{ex:unitaries}.
\end{exa}

We define the predicate ${\rm ONB}$ above without the help of the orthogonality
predicate $\bot$, but they are not unrelated: the elements of an orthogonal
decomposition, are in particular pairwise orthogonal.

\begin{lem}[$\mathrm{ONB}$ implies $\bot$]
	\label{lem:od-imp-bot}
	Given $\ODe{\pQ}{S}$, for all $\ps t_1 \neq \ps t_2 \in S$, $\ps
	t_1~\bot~\ps t_2$.
\end{lem}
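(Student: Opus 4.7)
The natural approach is induction on the derivation of $\ODe{\pQ}{S}$ from Definition~\ref{def:od-ext}. The base cases $\ODe{\pQ}{\set{\ps x}}$ and $\ODe{\pbasic}{\set{\ps *}}$ are vacuous because $S$ is a singleton. In each inductive case, we take two distinct elements of the generated set and show they are orthogonal, invoking the inductive hypothesis on the premise plus the appropriate orthogonality rules from Definition~\ref{def:orthogonality}.

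For the direct sum case, any two distinct elements are either both of the form $\inl{-}$ (resp.\ both $\inr{-}$), in which case we apply the induction hypothesis to get orthogonality of the arguments in $S$ (resp.\ $T$) and lift it through the $\inl$/$\inr$ orthogonality rule, or they are of the form $\inl{\ps e_1}$ and $\inr{\ps e_2}$, which is a base rule for $\bot$. The $\pNat$ case is analogous, using the $\pzero \bot \psucc t$ base rule and the congruence rule for $\psucc$. For the two tensor product cases, given distinct elements $\pair{\ps e_1}{\ps e_1'}, \pair{\ps e_2}{\ps e_2'} \in S$, either their second components differ (in which case the IH on $\ODe{\pQ_2}{\pi_2(S)}$ yields $\ps e_1' \bot \ps e_2'$, and the rule for pairs applies) or they agree and the first components differ (in which case the IH on $\ODe{\pQ_1}{S^2_{\ps e_1'}}$ yields $\ps e_1 \bot \ps e_2$, and the other rule for pairs applies); the symmetric tensor case is treated identically.

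The main obstacle is the linear combination case $\ODe{\pQ_1}{S^\alpha}$. Two distinct elements of $S^\alpha$ have the form $\ps v_1 = \Sigma_{\ps e' \in S} (\palpha_{\ps e_1, \ps e'} \pdot \ps e')$ and $\ps v_2 = \Sigma_{\ps e' \in S} (\palpha_{\ps e_2, \ps e'} \pdot \ps e')$ for distinct $\ps e_1, \ps e_2 \in S$. By the inductive hypothesis applied to $\ODe{\pQ_1}{S}$, the elements of $S$ are pairwise orthogonal, so the first premise of the sum-orthogonality rule in Definition~\ref{def:orthogonality} is satisfied by taking $I = J = K = S$. The remaining premise requires
\[
\sum_{\ps e' \in S} \overline{\palpha_{\ps e_1, \ps e'}}\, \palpha_{\ps e_2, \ps e'} = 0,
\]
which is precisely the statement that the rows of the matrix $(\palpha_{\ps e, \ps e'})$ indexed by $\ps e_1$ and $\ps e_2$ are orthogonal. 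This follows from the side condition that $(\palpha_{\ps e, \ps e'})$ is unitary (hence $UU^* = I$). Applying the sum rule then yields $\ps v_1 \bot \ps v_2$, completing the induction.
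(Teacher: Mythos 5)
Your proof is correct and follows essentially the same route as the paper: induction on the derivation of $\ODe{\pQ}{S}$, lifting orthogonality through the constructors and, in the $S^\alpha$ case, using unitarity of $(\palpha_{\ps e,\ps e'})$ to discharge the inner-product side condition of the sum rule. If anything, your tensor-product case (splitting on whether the shared component coincides and then invoking the hypothesis on $S^2_{\ps e_1'}$) is spelled out more carefully than the paper's, which only appeals to the induction hypothesis on $\pi_1(S)$.
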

\begin{proof}
	The proof is done by induction on ${\rm ONB}$.
	\begin{itemize}
		\item $\ODe{\pQ}{\set{\ps x}}$. There is no pair of different terms in $\set{\ps x}$.
		\item $\ODe{\pbasic}{\set{\ps *}}$. There is no pair of different terms in $\set{\ps *}$.
		\item $\ODe{\pQ_1 \pplus \pQ_2}{S \boxplus T}$. There are several cases: either both
			terms are of the form $\inl -$, namely $\ps t_1 = \inl{\ps t'_1}$ and $\ps t_2 = \inl{\ps
			t'_2}$, with $\ps t'_1$ and $\ps t'_2$ in $S$, in which case the induction
			hypothesis on $\ODe{\pQ_1}{S}$ gives that $\ps t'_1~\bot~\ps t'_2$, and thus $\inl{\ps t'_1}~\bot~\inl{\ps t'_2}$; or both are of the $\inr -$, the case is similar
			with the induction hypothesis on $\ODe{\pQ_2}{T}$; or $\ps t_1 = \inl{\ps t'_1}$ and $\ps t_2
			= \inr{\ps t'_2}$, then we have directly $\inl{\ps t'_1}~\bot~\inr{\ps t'_2}$.
		\item $\ODe{\pQ_1 \ptimes \pQ_2}{S}$. Both cases are similar. Suppose
			that $\ODe{\pQ_1}{\pi_1(S)}$. We know that $\ps t_1 = \ps t'_1
			\ptimes \ps t''_1$ and $\ps t'_2 \ptimes \ps t''_2$. The induction
			hypothesis on $\ODe{\pQ_1}{\pi_1(S)}$ gives that $\ps t'_1~\bot~\ps
			t'_2$ and thus $\ps t'_1 \ptimes \ps t''_1~\bot~\ps t'_2 \ptimes
			\ps t''_2$.
		\item $\ODe{\pNat}{S^{\oplus 0}}$. If one of $\ps t_1$ or $\ps t_2$ is
			$\pzero$, the conclusion is direct; else, $\ps t_1 = \psucc{\ps
			t'_1}$ and $\ps t_2 = \psucc{\ps t'_2}$ and the induction
			hypothesis on $\ODe{\pNat}{S}$ gives that $\ps t'_1~\bot~\ps t'_2$
			and thus $\psucc{\ps t'_1}~\bot~\psucc{\ps t'_2}$.
		\item $\ODe{\pQ_1}{S^\alpha}$. By induction hypothesis, all the terms in $S$
			are pairwise orthogonal; and the matrix $\palpha$ is unitary, which means
			that the inner product of its columns is zero when the columns are
			different, which ensures that two different terms in $S^\alpha$ are
			orthogonal.
	\end{itemize}
\end{proof}

Note that while orthogonality ensure non-overlapping, it does not
ensure exhaustivity, only ${\rm ONB}_\pQ$ does. The next lemma details
the exhaustivity of ${\rm ONB}_\pQ$, and is a consequence of a later result,
that uses the notion of substitution (see \secref{sub:qua-substitution}
and Lemma~\ref{lem:od-output-exhaustive}).

\begin{prop}
	\label{lem:od-exhaustive}
	If $\ODe{\pQ_1}{S}$ and $\pGamma \vdash \ps e \colon \pQ_1$, then there
	exists $\ps e' \in S$ such that $\neg(\ps e~\bot~\ps e')$.
\end{prop}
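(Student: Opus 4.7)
I would proceed by induction on the derivation of $\ODe{\pQ_1}{S}$, with an inner case analysis on the shape of $\ps e$. The base cases are the variable rule ($S = \{\ps x\}$) and the unit rule ($S = \{\ps *\}$). In the variable case, inspection of Definition~\ref{def:orthogonality} shows that no rule concludes $\ps t ~\bot~ \ps x$ with a bare variable $\ps x$ on the right, so $\neg(\ps e ~\bot~ \ps x)$ for every $\ps e$ of type $\pQ_1$. For the unit case, any $\ps e$ of type $\pbasic$ must, by the formation rules of Figure~\ref{fig:typing-values-simple}, reduce to either $\ps *$, a variable, or a normalised (necessarily singleton, since no two terms of type $\pbasic$ are orthogonal) sum built from these; in each case $\ps e$ fails to be orthogonal to $\ps *$.

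The inductive cases for $\pplus$, $\pNat$, and $\ptimes$ follow a common pattern: if $\ps e$ has the matching top-level constructor (e.g., $\ps e = \inl{\ps e'}$ when $S = S_1 \boxplus S_2$), apply the induction hypothesis to the smaller ONB ($S_1$) to get $\ps e_1' \in S_1$ with $\neg(\ps e' ~\bot~ \ps e_1')$, and conclude $\neg(\inl{\ps e'} ~\bot~ \inl{\ps e_1'})$ because the only rule that produces orthogonality of two $\mathbf{inl}$-terms demands orthogonality of the interiors. The asymmetric $\ptimes$-rules of ${\rm ONB}^e$ are handled by using the projection $\pi_i(S)$ first, then using the fibre $S^i_{\ps e_i}$ for the remaining component. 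If the top-level constructor of $\ps e$ does not match $S$'s structure — for instance $\ps e = \ps y$ is a variable of type $\pQ_1 \pplus \pQ_2$ — the variable rule of $\bot$ again saves us: no rule derives $\inl{\ps b} ~\bot~ \ps y$, so any element of $S$ works.

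The two genuinely subtle cases are the sum rule on $\ps e$ and the unitary-change-of-basis rule $\ODe{\pQ_1}{S^\alpha}$. For a sum $\ps e = \sum_{i} \palpha_i \pdot \ps e_i$, the induction hypothesis produces for each $i$ an element $\ps e_i^\star \in S$ with $\neg(\ps e_i ~\bot~ \ps e_i^\star)$. Combining Lemma~\ref{lem:typed-orthogonality} with Lemma~\ref{lem:od-imp-bot} (elements of $S$ are pairwise orthogonal), the map $i \mapsto \ps e_i^\star$ is the unique assignment making the pairing non-orthogonal, so the only way $\ps e$ could be orthogonal to a particular $\ps e' \in S$ via the last orthogonality rule is through $\sum_{i : \ps e_i^\star = \ps e'} \overline{\palpha_i} \cdot 1 = 0$ for every such $\ps e'$. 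Picking any $\ps e' \in \{\ps e_i^\star\}_i$ where some $\palpha_i \ne 0$ (which exists by the normalisation condition $\sum|\palpha_i|^2 = 1$), this coefficient sum is nonzero, giving the desired witness. For the change-of-basis rule, given $\ODe{\pQ_1}{S^\alpha}$ with the underlying $\ODe{\pQ_1}{S}$, apply the inductive hypothesis to obtain $\ps e' \in S$ non-orthogonal to $\ps e$, and use that $\alpha$ is a unitary matrix — in particular has no zero column — to select a row whose coefficient on $\ps e'$ is nonzero, which yields the required element of $S^\alpha$.

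The hard part is clearly the sum case, since it is the only place where orthogonality depends on a nontrivial algebraic cancellation among complex scalars rather than on a structural mismatch of syntax. All other constructors propagate non-orthogonality in an essentially syntax-directed way, so the whole argument reduces to taming this one algebraic obstruction; the link to Lemma~\ref{lem:od-output-exhaustive} mentioned in the paper is presumably exactly a substitution lemma that makes the basis-value version of this cancellation argument cleaner, which one could either invoke or internalise into the induction as sketched above.
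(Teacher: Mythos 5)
Your structural cases are fine, but the two places you yourself flag as subtle are exactly where the argument breaks, and in both the error is the same: you treat non-orthogonality to an element of $S$ as if it singled out that element, which is only true for the canonical basis. In the sum case you invoke Lemma~\ref{lem:typed-orthogonality} to get that $i \mapsto \ps e_i^\star$ is "the unique assignment", but that lemma is stated (and is only true) for \emph{basis values}; an $\ODe{\pQ}{S}$ set may consist of proper superpositions via the $S^\alpha$ rule, and the paper warns immediately after that lemma that nothing of the sort holds for general values. Concretely, for $S=\{\pket{+},\pket{-}\}$ and $\ps e_i = \pket{0}$, \emph{both} elements of $S$ are non-orthogonal to $\ps e_i$, so the "unique" witness does not exist, and the coefficient computation built on it (which also misuses the two-sum rule of Definition~\ref{def:orthogonality} — that rule requires \emph{both} sides to be sums over a common pairwise-orthogonal family, not a sum against an arbitrary $\ps e' \in S$) collapses. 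There is also a smaller bookkeeping issue: an induction "on the derivation of $\ODe{\pQ_1}{S}$" does not license applying the IH to the components $\ps e_i$ with the \emph{same} $S$; you would need a lexicographic induction on the pair $(S,\ps e)$.

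The change-of-basis case is wrong as stated, not just under-justified. Take $S=\{\pket{0},\pket{1}\}$, $\alpha$ the Hadamard matrix, so $S^\alpha=\{\pket{+},\pket{-}\}$, and $\ps e = \pket{-}$. The IH gives $\ps e'=\pket{0}$; every row of $\alpha$ has a nonzero entry in that column, yet the row producing $\pket{+}$ is a legitimate choice and $\pket{-}~\bot~\pket{+}$ \emph{is} derivable (two-sum rule, $\frac{1}{2}-\frac{1}{2}=0$). So "pick any row with nonzero coefficient on $\ps e'$" can select an orthogonal element; to pick the right row you need to know the coefficients of $\ps e$ \emph{relative to $S$}, i.e., you need to decompose $\ps e$ over $S$ first. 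That is precisely what the paper's route supplies: Lemma~\ref{lem:od-output-exhaustive} rewrites any closed expression, in the equational theory, as a normalised combination $\Sigma_i (\palpha_i \pdot \sigma_i(\ps s_i))$ of substitution instances of elements of $S$, and the proposition then follows using stability of $\bot$ under substitution (Lemma~\ref{lem:ortho-subst-equiv}) and under provable equality (Lemma~\ref{lem:equal-ortho}), closing free variables of $\ps e$ by a well-formed valuation. (A purely semantic alternative also works: if $\ps e$ were orthogonal to every element of $S$, Lemma~\ref{lem:orthogonal-semantics-ortho} plus Lemma~\ref{lem:towards-unitary} would force $\sem{\ps e}=0$, contradicting Lemma~\ref{lem:term-isometry}.) Either of these handles the algebraic cancellation uniformly; your attempt to internalise it case-by-case does not go through.
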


This notion of orthogonal decomposition allows us to introduce \emph{unitary}
abstractions in our syntax. A basic unitary has the form $\unibasique$ and is
well-formed only if $(\ps b_i)_i$ forms a basis and $(\ps v_i)_i$ also forms a
basis; we then retrieve the intuition that a unitary should be equivalent to a
change of orthonormal basis. The typing rules for deriving unitaries and
unitary operations are detailed in Figure~\ref{fig:typisos}.

\begin{figure}
	\[
		\begin{array}{c}
			\infer{
				\entailiso \unibasique \colon \isotypeonetwo,
			}{
				\begin{array}{@{}l@{}}
					\pGamma_i \entail \ps b_i \colon \pQ_1 \\
					\pGamma_i \entail \ps e_i \colon \pQ_2
				\end{array}
				&
				\begin{array}{@{}l@{}}
					\OD{\pQ_1}{\{\ps b_1, \dots, \ps b_n\}} \\
					\ODe{\pQ_2}{\{\ps e_1, \dots, \ps e_n\}}
				\end{array}
			}
			\qquad
			\infer{
				\entailiso u\inv \colon \isotypetwoone,
			}{
				\entailiso u \colon \isotypeonetwo
			}
			\mynl
			\infer{
				\entailiso u_2 \circ u_1 \colon \isotype{\pQ_1}{\pQ_3},
			}{
				\entailiso u_1 \colon \isotype{\pQ_1}{\pQ_2}
				&
				\entailiso u_2 \colon \isotype{\pQ_2}{\pQ_3}
			}
			\qquad
			\infer{
				\entailiso u_1 \otimes u_2 \colon \isotype{\pQ_1 \ptimes \pQ_3}{\pQ_2 \ptimes \pQ_4}, 
			}{
				\entailiso u_1 \colon \isotype{\pQ_1}{\pQ_2} 
				&
				\entailiso u_2 \colon \isotype{\pQ_3}{\pQ_4} 
			}
			\mynl
			\infer{
				\entailiso u_1 \oplus u_2 \colon \isotype{\pQ_1 \pplus \pQ_3}{\pQ_2 \pplus \pQ_4}, 
			}{
				\entailiso u_1 \colon \isotype{\pQ_1}{\pQ_2} 
				&
				\entailiso u_2 \colon \isotype{\pQ_3}{\pQ_4} 
			}
			\qquad
			\infer{
				\entailiso \uctrl \isoterm \colon \isotype{
				\pqbit \ptimes \pQ}{\pqbit \ptimes \pQ}.
			}{
				\entailiso \isoterm \colon \isotype{\pQ}{\pQ}
			}
		\end{array}
	\]
	\caption{Formation rules of unitaries.}
	\label{fig:typisos}
\end{figure}

Terms in our syntax are either expressions or an application of a unitary to a
term, in a similar style to the $\lambda$-calculus; however, abstractions are
considered separately in the grammar. The formation rules are the same as the
one for values given in Figure~\ref{fig:typing-values-simple}, with the
addition of a rule that enables the application of a unitary to a term. The
details are in Figure~\ref{fig:typterms}. 

\begin{figure}
	\[\begin{array}{c}
		\infer{
			\emptyset \entail \ps * \colon \pbasic,
		}
		{}
		\qquad
		\infer{
			\ps x \colon \pQ \entail \ps x \colon \pQ,
		}{}
		\qquad
		\infer{
			\pGamma_1,\pGamma_2 \entail \pair{\ps t_1}{\ps t_2} \colon \pQ_1 \ptimes \pQ_2,
		}{
			\pGamma_1 \entail \ps t_1 \colon \pQ_1
			&
			\pGamma_2 \entail \ps t_2 \colon \pQ_2
		}
		\\[1.5ex]
		\infer{
			\pGamma\entail \inl{\ps t} \colon \pQ_1 \pplus \pQ_2,
		}{
			\pGamma\entail \ps t \colon \pQ_1
		}
		\qquad
		\infer{
			\pGamma\entail \inr{\ps t} \colon \pQ_1 \pplus \pQ_2,
		}{
			\pGamma\entail \ps t \colon \pQ_2
		}
		\\[1.5ex]
		\infer{\vdash \pzero \colon \pNat,}{}
		\qquad
		\infer{
			\pGamma \vdash \psucc{\ps t} \colon \pNat,
		}{
			\pGamma \vdash \ps t \colon \pNat
		}
		\\[1.5ex]
		\infer{
			\pGamma \entail \Sigma_i (\palpha_i \pdot \ps t_i) \colon \pQ,
		}{
			\pGamma \entail \ps t_i \colon \pQ
			&
			\sum_i \abs{\palpha_i}^2 = 1
			&
			\forall i\not= j, \ps t_i~\bot~\ps t_j
		}
		\qquad
		\infer{
			\pGamma \entail \isoterm~\ps t \colon \pQ_2.
		}{
			\entailiso \isoterm \colon \isotype{\pQ_1}{\pQ_2}
			&
			\pGamma \entail \ps t \colon \pQ_1
		}
	\end{array}
	\]
	\caption{Formation rules of terms.}
	\label{fig:typterms}	
\end{figure}

The application of a unitary to a term is what carries the computational power
of the language. In the $\lambda$-calculus, the $\beta$-reduction reduces an
application to a term where a substitution is performed. A similar mechanism is
at play in this syntax; however, the left-hand side in a unitary abstraction
can contain several variables. Hence the introduction of \emph{valuations},
which we use to perform substitution and to define our equivalent of
$\beta$-reduction.

%%%%%%%%%%
\subsubsection{Valuations and Substitution}
\label{sub:qua-substitution}
%%%%%%%%%%

We recall the formalisation proposed in~\cite{sabry2018symmetric}, with the
notion of valuation: a partial map from a finite set of variables (the support)
to a set of values. Given two basis values $\ps b$ and $\ps b'$, we build the
smallest valuation $\sigma$ such that the patterns of $\ps b$ and $\ps b'$
match and that the application of the substitution to $\ps b$, written
$\sigma(\ps b)$, is equal to $\ps b'$. We denote the matching of a basis value
$\ps b'$ against a pattern $\ps b$ and its associated valuation $\sigma$ as
$\match{\sigma}{\ps b}{\ps b'}$. Thus, $\match{\sigma}{\ps b}{\ps b'}$ means
that $\ps b'$ matches with $\ps b$ and gives a smallest valuation $\sigma$,
while $\sigma(\ps b)$ is the substitution performed. The predicate
$\match{\sigma}{\ps b}{\ps b'}$ it is defined as follows, with $\ini{\ps b}$
being either $\inl{\ps b}$ or $\inr{\ps b}$.
\[
	\begin{array}{c}
		\infer{\match{\sigma}{\ps *}{\ps *}}{}
		\quad
		\infer{\match{\sigma}{\ps x}{\ps b'}}{\sigma = \{ \ps x \mapsto \ps b'\}}
		\quad
		\infer{\match{\sigma}{\ini{\ps b}}{\ini{\ps b'}}}{\match{\sigma}{\ps b}{\ps b'}}
		\mynl
		\infer{
			\match{\sigma}{\ps b_1 \ptimes \ps b_2}{\ps b'_1 \ptimes \ps b'_2}
		}{
			\match{\sigma}{\ps b_1}{\ps b'_1}
			&
			\match{\sigma}{\ps b_2}{\ps b'_2}
			&
			\text{supp}(\sigma_1) \cap \text{supp}(\sigma_2) = \emptyset
			&
			\sigma = \sigma_1\cup\sigma_2
		}
		\mynl
		\infer{\match{\sigma}{\pzero}{\pzero}}{}
		\quad
		\infer{\match{\sigma}{\psucc{\ps b}}{\psucc{\ps b'}}}{\match{\sigma}{\ps b}{\ps b'}}
	\end{array}
\]
Besides basis values, we authorise valuations to replace variables with any
expression, \emph{e.g.} $\set{\ps x \mapsto \ps e}$. Whenever $\sigma$ is a
valuation whose support contains the variables of $\ps t$, we write $\sigma(\ps
t)$ for the value where the variables of $\ps t$ have been replaced with the
corresponding terms in $\sigma$, as follows: 
\begin{itemize}
	\item $\sigma(\ps x) = \ps e$ if $\{\ps x\mapsto \ps e\}\subseteq \sigma$, 
	\item $\sigma(\ps *) = \ps *$,
	\item $\sigma(\inl{\ps t}) = \inl{\sigma(\ps t)}$,
	\item $\sigma(\inr{\ps t}) = \inr{\sigma(\ps t)}$, 
	\item $\sigma(\pair{\ps t_1}{\ps t_2}) = \pair{\sigma(\ps t_1)}{\sigma(\ps t_2)}$, 
	\item $\sigma(\psucc{\ps t}) = \psucc \sigma(\ps t)$,
	\item $\sigma(\Sigma_i (\palpha_i \pdot \ps t_i)) = \Sigma_i (\palpha_i
		\pdot \sigma(\ps t_i))$,
	\item $\sigma(u~\ps t) = u~\sigma(\ps t)$.
\end{itemize}

\begin{rem}
	If $\match{\sigma}{\ps b}{\ps b'}$, then $\sigma(\ps b) = \ps b'$.
\end{rem}

We can now show the soundness of orthogonality with regard to pattern-matching:
in other words, orthogonality is stable by substitution, and thus the previous
remark ensures there cannot be any match between two basis values if they are
orthogonal.

\begin{lem}
	\label{lem:ortho-subst-equiv}
	Given two terms $\ps t_1$ and $\ps t_2$, if $\ps t_1~\bot~\ps t_2$, then
	for all valuations $\sigma_1$ and $\sigma_2$, $\sigma_1(\ps
	t_1)~\bot~\sigma_2(\ps t_2)$.
\end{lem}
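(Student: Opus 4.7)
I would proceed by structural induction on the derivation of $\ps t_1~\bot~\ps t_2$ given by Definition~\ref{def:orthogonality}, aiming to show at each step that $\sigma_1(\ps t_1)~\bot~\sigma_2(\ps t_2)$ for any valuations $\sigma_1,\sigma_2$. The engine of the proof is that substitution commutes with every term constructor, so the outermost shape that witnesses an application of an orthogonality rule is preserved; it remains to check that the corresponding premises continue to hold after substitution.

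The base rules $\inl{\ps t_1}~\bot~\inr{\ps t_2}$ and $\pzero~\bot~\psucc\ps t$ are discharged immediately: substitution pushes inside $\inl, \inr, \psucc, \pzero$, so the same base rule applies to the substituted terms. For the congruence rules on a single premise (the ones for $\inl, \inr, \psucc, \ptimes$, and for $u~(-)$), the inductive hypothesis applied to the premise yields orthogonality of the substituted subterms, and one re-applies the same rule on top; here the only observation needed is $\sigma(\inl{\ps t}) = \inl{\sigma(\ps t)}$, $\sigma(\psucc\ps t) = \psucc\sigma(\ps t)$, $\sigma(\pair{\ps t}{\ps t'}) = \pair{\sigma(\ps t)}{\sigma(\ps t')}$, and $\sigma(u~\ps t) = u~\sigma(\ps t)$. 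The unary rule $\ps t~\bot~\Sigma_{i \in I}(\palpha_i \pdot \ps t_i)$ (and its variant with a zero coefficient) is handled by applying the inductive hypothesis to each premise $\ps t~\bot~\ps t_i$ under $\sigma_1$ and $\sigma_2$ uniformly, and then closing with the same rule, since $\sigma_2\bigl(\Sigma_i (\palpha_i \pdot \ps t_i)\bigr) = \Sigma_i(\palpha_i \pdot \sigma_2(\ps t_i))$ and the scalar data are untouched by substitution.

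The main obstacle is the last rule, about two linear combinations
\[
	\Sigma_{j \in J}(\palpha_j \pdot \ps t_j)~\bot~\Sigma_{k \in K}(\pbeta_k \pdot \ps t_k),
\]
derived from pairwise orthogonality of a common family $(\ps t_i)_{i \in I}$ together with the scalar condition $\sum_{i \in J \cap K} \overline{\palpha_i}\pbeta_i = 0$. Since $\sigma$ distributes over sums, the substituted statement becomes a sum of the $\sigma_1(\ps t_j)$'s against a sum of the $\sigma_2(\ps t_k)$'s. To re-apply the rule, I would build a combined family indexed by disjointly-tagged copies of $J$ and $K$ with terms $\sigma_1(\ps t_j)$ and $\sigma_2(\ps t_k)$; pairwise orthogonality of this family follows from the inductive hypothesis applied to each original premise $\ps t_i~\bot~\ps t_j$ with $i\neq j$, while the inner-product condition is inherited from the original one (on the shared indices after re-indexing). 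The careful bookkeeping of the new index set and the matching of scalar coefficients is where the argument is genuinely delicate; everything else is a routine propagation of the inductive hypothesis through the constructors of Figure~\ref{fig:qu-syntax}.
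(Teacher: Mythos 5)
Your overall strategy coincides with the paper's: induction on the derivation of $\bot$, using that valuations commute with every term constructor, and all the cases you call routine are indeed routine. The gap is precisely in the one case you flag as delicate. For the rule on two linear combinations, tagging $J$ and $K$ apart and taking the family $\{\sigma_1(\ps t_j)\}_{j\in J}\cup\{\sigma_2(\ps t_k)\}_{k\in K}$ does \emph{not} yield a pairwise-orthogonal family: for an index $i\in J\cap K$ the tagged family contains both $\sigma_1(\ps t_i)$ and $\sigma_2(\ps t_i)$ at distinct positions, and their orthogonality is demanded by the rule but is supplied by no induction hypothesis --- the premises only give $\ps t_i~\bot~\ps t_j$ for $i\neq j$, never $\ps t_i~\bot~\ps t_i$ --- and it fails in general (take $\ps t_i=\ps x$ with $\sigma_1(\ps x)=\pket{0}$ and $\sigma_2(\ps x)=\pket{+}$). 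Note also that after disjoint tagging the scalar condition $\sum_{i\in J'\cap K'}\bar{\palpha_i}\pbeta_i=0$ becomes vacuous, so nothing is ``inherited''; you have traded it for orthogonality obligations you cannot discharge. The same issue infects the zero-coefficient variant, which you fold into the easy unary case: its side condition $\ps t_*=\ps t$ survives substitution only when $\sigma_2(\ps t)=\sigma_1(\ps t)$.

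What actually makes these cases close is keeping the original index set rather than duplicating it: when $\sigma_1$ and $\sigma_2$ agree on the variables of the summands shared by the two combinations --- in particular when $\sigma_1=\sigma_2$, or when those shared summands are closed, which covers every use of the lemma in the paper (well-formedness of a substituted sum uses a single $\sigma$; pattern-matching compares against a closed basis value) --- the substituted sums are sub-sums of the single family $(\sigma(\ps t_i))_{i\in I}$, pairwise orthogonal by the induction hypothesis, with unchanged coefficients, and the same rule applies verbatim. With genuinely independent valuations on shared summands the rule instance cannot be rebuilt at all: for instance, with $\ps t_1=\sfrac{1}{\sqrt 2}\pdot\inl{\ps x}+\sfrac{1}{\sqrt 2}\pdot\inr{\ps y}$ and $\ps t_2=\sfrac{1}{\sqrt 2}\pdot\inl{\ps x}-\sfrac{1}{\sqrt 2}\pdot\inr{\ps y}$, taking $\sigma_1$ sending both variables to $\pket{0}$ and $\sigma_2$ sending $\ps x\mapsto\pket{+}$, $\ps y\mapsto\pket{0}$, the two substituted terms are not even denotationally orthogonal, so no syntactic derivation can exist. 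So your bookkeeping for the final rule is not merely incomplete; it needs the agreement proviso on shared variables to be stated and used, and the proof should be reorganised around a common (untagged) family accordingly.
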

\begin{proof}
	Observe that $\sigma_1(\inl{\ps t_1}) = \inl{\sigma_1(\ps t_1)}$ and
	$\sigma_2(\inr{\ps t_2}) = \inr{\sigma_2(\ps t_2)}$ and thus, whatever the
	valuations are, those two terms are orthogonal. The rest of the proof
	falls directly by induction on the definition of $\bot$.
\end{proof}

If one the basis values is closed, we observe that there is an equivalence
between matching the pattern and not being orthogonal; which also implies that
different patterns are orthogonal.

\begin{prop}
	\label{prop:patterns-are-orthogonal}
	Given two well-formed basis values $\pGamma \entail b \colon \pQ$ and $~\entail
	b' \colon \pQ$, $\neg(\ps b~\bot~\ps b')$ iff there exists $\sigma$ such that
	$\match{\sigma}{\ps b}{\ps b'}$.
\end{prop}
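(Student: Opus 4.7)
The plan is to prove the two implications separately by structural induction, with the reverse direction reducing to an irreflexivity argument and the forward direction forming the bulk of the work.

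For the reverse direction, suppose $\match{\sigma}{\ps b}{\ps b'}$ and, for contradiction, $\ps b \bot \ps b'$. By the remark preceding Lemma~\ref{lem:ortho-subst-equiv} we have $\sigma(\ps b) = \ps b'$; Lemma~\ref{lem:ortho-subst-equiv} applied with $\sigma$ on the left and the empty valuation on the right then yields $\ps b' \bot \ps b'$. Hence it suffices to establish that $\bot$ is irreflexive on closed basis values, which is a routine structural induction on $\ps b'$: none of the constructor-mismatch clauses of Definition~\ref{def:orthogonality} have equal sides; the recursive clauses $\inl\bot\inl$, $\inr\bot\inr$, $\psucc\bot\psucc$, and the two pair clauses reduce to a strict subterm being orthogonal to itself and are ruled out by the IH; and the variable, sum, and unitary-application clauses simply cannot fire on a closed basis value.

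For the forward direction, I induct on the derivation of $\pGamma \entail \ps b \colon \pQ$, using the closedness of $\ps b'$ at every step to pin down its top-level shape. When $\ps b = \ps x$, set $\sigma = \{\ps x \mapsto \ps b'\}$. When $\ps b = \ps *$, the type $\pbasic$ forces $\ps b' = \ps *$ and the empty valuation works. When $\ps b = \inl{\ps b_0}$ of type $\pQ_1 \pplus \pQ_2$, the closed $\ps b'$ is either $\inl{\ps b'_0}$ or $\inr{\ps b'_0}$; the second would trigger $\ps b \bot \ps b'$ via the $\inl\bot\inr$ clause, contradicting the hypothesis, so $\ps b' = \inl{\ps b'_0}$, and the contrapositive of $\inl\bot\inl$ gives $\neg(\ps b_0 \bot \ps b'_0)$; the IH then delivers the required $\sigma$, which also matches at the $\inl$ layer. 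The $\inr$, $\pzero$, and $\psucc$ cases are analogous. For $\ps b = \ps b_1 \ptimes \ps b_2$, closedness forces $\ps b' = \ps b'_1 \ptimes \ps b'_2$; if either $\ps b_i \bot \ps b'_i$ held, one of the pair clauses of $\bot$ would contradict $\neg(\ps b \bot \ps b')$, so the IH supplies valuations $\sigma_1$ and $\sigma_2$, whose union matches the pair.

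The main obstacle is the disjoint-support side condition in the pair case, but it is discharged cleanly by linearity of the typing discipline: each $\sigma_i$ has support contained in $\pGamma_i$, and the well-formedness of the joined context $\pGamma_1, \pGamma_2$ already forces $\mathrm{supp}(\sigma_1) \cap \mathrm{supp}(\sigma_2) = \emptyset$. Elsewhere the induction is driven entirely by the shape of $\ps b$, with closedness of $\ps b'$ eliminating the competing injection or constructor alternatives so that no explicit case analysis on $\ps b'$ is needed beyond inspecting its head symbol.
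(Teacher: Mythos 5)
Your proof is correct and follows essentially the same route as the paper: the forward direction is the paper's ``direct induction on $\pGamma \entail \ps b \colon \pQ$'', and your reverse direction via Lemma~\ref{lem:ortho-subst-equiv} is exactly the argument the paper sketches in the paragraph preceding the proposition. The only addition is that you make explicit the small irreflexivity fact ($\neg(\ps b' \bot \ps b')$ for closed basis values) that the paper leaves implicit, which is a reasonable bit of extra care rather than a divergence.
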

\begin{proof}
	This is proven by a direct induction on $\pGamma \entail \ps b \colon \pQ$.
\end{proof}

The next two lemmas provide a strong link between orthogonal decompositions and
substitutions.

\begin{lem}[Exhaustivity and non-overlapping]
	\label{lem:od-substitution}
	Let $\OD{\pQ}{S}$, then for all closed basis values $~\entail \ps b' \colon
	\pQ$, there exists a unique $\ps b \in S$ and a unique $\sigma$ such that
	$\match{\sigma}{\ps b}{\ps b'}$.
\end{lem}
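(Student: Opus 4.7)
The plan is to proceed by induction on the derivation of $\OD{\pQ}{S}$, handling each formation rule from Definition~\ref{def:OD}, and to establish existence and uniqueness simultaneously. At each step the induction hypothesis gives, for every closed basis value of the smaller type, a unique pattern/valuation pair; we then combine these to cover the larger type.

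For the base cases: if $S = \{\ps x\}$, then any closed basis value $\vdash \ps b' \colon \pQ$ matches against $\ps x$ via $\sigma = \{\ps x \mapsto \ps b'\}$, and uniqueness is trivial since $S$ is a singleton; if $\pQ = \pbasic$ and $S = \{\ps *\}$, the only closed basis value is $\ps *$ itself and the empty valuation. For the sum case $\OD{\pQ_1 \pplus \pQ_2}{S \boxplus T}$, canonical forms for closed basis values of type $\pQ_1 \pplus \pQ_2$ are either $\inl{\ps c}$ or $\inr{\ps c}$ with $\ps c$ closed of the appropriate type, and the IH applied to $\OD{\pQ_1}{S}$ or $\OD{\pQ_2}{T}$ supplies the unique matching $\ps b$ and $\sigma$; uniqueness across sides holds because $S \boxplus T$ is a disjoint union of $\inl$- and $\inr$-headed terms. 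The $\pNat$ case is analogous, with $\pzero$ as an extra base case.

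The principal work is the tensor case $\OD{\pQ_1 \ptimes \pQ_2}{S}$. Given a closed basis value $\ps b'_1 \ptimes \ps b'_2 \colon \pQ_1 \ptimes \pQ_2$, assume without loss of generality the derivation uses $\OD{\pQ_1}{\pi_1(S)}$ and $\OD{\pQ_2}{S^1_{\ps b}}$ for every $\ps b \in \pi_1(S)$. By IH on $\OD{\pQ_1}{\pi_1(S)}$, there is a unique $\ps b_1 \in \pi_1(S)$ and unique $\sigma_1$ with $\match{\sigma_1}{\ps b_1}{\ps b'_1}$. Apply IH again to $\OD{\pQ_2}{S^1_{\ps b_1}}$ with the closed basis value $\ps b'_2$ to obtain a unique $\ps b_2 \in S^1_{\ps b_1}$ and unique $\sigma_2$ with $\match{\sigma_2}{\ps b_2}{\ps b'_2}$. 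Then $\ps b_1 \ptimes \ps b_2 \in S$ and we set $\sigma = \sigma_1 \cup \sigma_2$; the disjoint-support condition needed by the rule for matching pairs follows from linearity of typing contexts for basis values, which forces $\pGamma_1 \cap \pGamma_2 = \emptyset$ for $\pGamma_i \vdash \ps b_i \colon \pQ_i$. For uniqueness, any other pair $\widetilde{\ps b}_1 \ptimes \widetilde{\ps b}_2 \in S$ matching $\ps b'_1 \ptimes \ps b'_2$ would give a match of $\widetilde{\ps b}_1$ against $\ps b'_1$, forcing $\widetilde{\ps b}_1 = \ps b_1$ by uniqueness at the first component; this in turn places $\widetilde{\ps b}_2$ in $S^1_{\ps b_1}$, whence $\widetilde{\ps b}_2 = \ps b_2$ and the valuations agree.

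The step I expect to be most delicate is this tensor case, specifically justifying that the two sub-valuations really compose into a well-defined single valuation and that uniqueness carries over when the definition of $\OD{-}{-}$ may proceed via either $\pi_1$ or $\pi_2$. The linearity constraint on variables in well-typed basis values, together with the fact that $\pi_1(S)$ and each fibre $S^1_{\ps b}$ uniquely determine $S$ up to the chosen projection, is what makes both existence and uniqueness go through; the remaining cases are straightforward unfoldings of the inductive definitions of $\OD{}{}$ and $\match{}{}{}$.
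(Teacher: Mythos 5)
Your proof is correct and follows essentially the same route as the paper's: induction on the derivation of $\OD{\pQ}{S}$, with the base, sum, and $\pNat$ cases handled identically and the tensor case resolved by applying the induction hypothesis first to $\pi_1(S)$ (or $\pi_2(S)$) and then to the corresponding fibre. Your treatment of the tensor case is in fact slightly more explicit than the paper's (which compresses it into one line), including the disjoint-support condition for combining the two valuations, so there is nothing to flag.
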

\begin{proof}
	This is proven by induction on the derivation of $\OD{\pQ}{S}$.
	\begin{itemize}
		\item If $\OD{\pQ}{\set{\ps x}}$. There is only $\ps x$ in $S$ and
			$\set{\ps x \mapsto \ps b'}$ is the only possible substitution.
		\item If $\OD{\pbasic}{\set{\ps *}}$, we have $\ps b' = \ps *$ and
			there is nothing to do.
		\item If $\OD{\pQ_1 \pplus \pQ_2}{S \boxplus T}$, there are two cases:
			\begin{itemize}
				\item either $\ps b' = \inl{\ps b'_A}$, in which case the
					induction hypothesis gives a unique $\ps b_A \in S$ and a
					unique $\sigma$ such that $\match{\sigma}{\ps b_A}{\ps
					b'_A}$, and thus $\match{\sigma}{\inl{\ps b_A}}{\ps b}$ in
					a unique way,
				\item or $\ps b' = \inr{\ps b'_B}$, and a similar argument
					gives a unique match $\match{\sigma}{\inr{\ps b_B}}{\ps
					b'}$.
			\end{itemize}
		\item If $\OD{\pQ_1 \ptimes \pQ_2}{S}$, $\ps b' = \ps b'_A \ptimes \ps
			b'_B$, in both cases to derive ${\rm ONB}$, we get unique $\ps b_A$,
			$\ps b_B$, $\sigma_A$ and $\sigma_B$ such that $\match{\sigma_A
			\cup \sigma_B}{\ps b_A \ptimes \ps b_B}{\ps b'}$.
		\item If $\OD{\pNat}{S^{\oplus 0}}$, there are two cases: either $\ps
			b' = \pzero$, in which case there is nothing to do, or $\ps b' =
			\psucc{\ps b''}$, and the induction hypothesis gives a unique $\ps
			b$ and a unique $\sigma$ such that $\match{\sigma}{\ps b}{\ps b''}$
			and thus $\match{\sigma}{\psucc{\ps b}}{\ps b'}$.
	\end{itemize}
\end{proof}

Observe that some of the results in this section do not involve linear
combinations -- except when the substitution is applied, of course. This is
thanks to the choice of basis values, that allows to write unitary abstractions
as $\unibasique$, and to focus the pattern-matching on basis values only.

The definition of valuation $\sigma$ does not involve any condition on types or
type judgements. However, we need this sort of condition to formulate a
substitution lemma, hence the next definition.

\begin{defi}
	\label{def:well-valuation}
	A valuation $\sigma$ is said to be well-formed with regard to a context
	$\pGamma$ if for all $(\ps x_i \colon \pQ_i) \in \pGamma$, we have
	$\set{\ps x_i \mapsto \ps e_i} \subseteq \sigma$ and $~\entail \ps e_i
	\colon \pQ_i$. We write $\pGamma \Vdash \sigma$.
\end{defi}

\begin{rem}
	The valuation $\sigma$ obtained in Lemma~\ref{lem:od-substitution} is
	well-formed iff $\ps b$ is.
\end{rem}

\begin{lem}
	\label{lem:qua-subst-type}
	Given a well-formed term $\pGamma \entail \ps t \colon \pQ$ and a
	well-formed valuation $\pGamma \Vdash \sigma$, then we have $\entail
	\sigma(\ps t) \colon \pQ$.
\end{lem}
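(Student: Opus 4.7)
The plan is to proceed by induction on the derivation of $\pGamma \entail \ps t \colon \pQ$, using the formation rules of Figure~\ref{fig:typterms}. The cases for $\ps *$, variables, injections, $\pzero$, $\psucc$, and unitary application are essentially immediate. For the unit, $\pGamma$ is empty, so $\sigma(\ps *) = \ps *$ is still well-typed. For the variable case $\ps x \colon \pQ \entail \ps x \colon \pQ$, well-formedness of $\sigma$ (Definition~\ref{def:well-valuation}) gives $\{\ps x \mapsto \ps e\} \subseteq \sigma$ with $\entail \ps e \colon \pQ$, and $\sigma(\ps x) = \ps e$ concludes. The injection, successor, and unitary-application cases follow by applying the induction hypothesis to the subterm and then reapplying the same formation rule.

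For the pair case $\pGamma_1, \pGamma_2 \entail \pair{\ps t_1}{\ps t_2} \colon \pQ_1 \ptimes \pQ_2$, linearity of contexts ensures $\pGamma_1$ and $\pGamma_2$ share no variables, so $\sigma$ splits as $\sigma_1 \cup \sigma_2$ with $\pGamma_i \Vdash \sigma_i$. The induction hypothesis yields $\entail \sigma_i(\ps t_i) \colon \pQ_i$, and since $\sigma(\pair{\ps t_1}{\ps t_2}) = \pair{\sigma_1(\ps t_1)}{\sigma_2(\ps t_2)}$ (because each $\ps t_i$ uses only variables in $\pGamma_i$), the pair formation rule gives the desired judgement.

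The only mildly delicate case is the linear combination $\pGamma \entail \Sigma_i (\palpha_i \pdot \ps t_i) \colon \pQ$, which unfolds under substitution to $\Sigma_i (\palpha_i \pdot \sigma(\ps t_i))$. By the induction hypothesis, each summand has type $\pQ$; the normalisation condition $\sum_i |\palpha_i|^2 = 1$ is unaffected by $\sigma$; and the pairwise orthogonality side-condition $\ps t_i \bot \ps t_j$ for $i \neq j$ must be transported to $\sigma(\ps t_i) \bot \sigma(\ps t_j)$. This is exactly where we invoke Lemma~\ref{lem:ortho-subst-equiv} (taking $\sigma_1 = \sigma_2 = \sigma$), which ensures that orthogonality is preserved under substitution.

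The main (and essentially only) technical hurdle is this preservation of orthogonality in the summation case; everything else is bookkeeping. Since Lemma~\ref{lem:ortho-subst-equiv} has already been established, the induction goes through smoothly.
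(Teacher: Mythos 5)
Your proof is correct and follows essentially the same route as the paper's: induction on the typing derivation, with all cases routine except the linear combination, where orthogonality under substitution is transported via Lemma~\ref{lem:ortho-subst-equiv}. Nothing to add.
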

\begin{proof}
	The proof is done by induction on $\pGamma \entail \ps t \colon \pQ$.
	\begin{itemize}
		\item $~\entail \ps * \colon \pbasic$. Direct.
		\item $\ps x \colon \pQ \entail \ps x \colon \pQ$. Since $\ps x \colon
			\pQ \Vdash \sigma$, there is a well-formed term $\ps t$ such that
			$\set{\ps x \mapsto \ps e} \subseteq \sigma$ and thus $\sigma(\ps
			x)$ is well-formed.
		\item $\pGamma_1, \pGamma_2 \entail \ps t_1 \ptimes \ps t_2 \colon
			\pQ_1 \ptimes \pQ_2$. By induction hypothesis, $\sigma(\ps t_1)$
			and $\sigma(\ps t_2)$ are well-formed, and thus $\sigma(\ps t_1
			\ptimes \ps t_2) = \sigma(\ps t_1) \ptimes \sigma(\ps t_2)$ is also
			well-formed.
		\item $\pGamma \entail \ini{\ps t} \colon \pQ_1 \pplus \pQ_2$.  By
			induction hypothesis, $\sigma(\ps t)$ is well-formed, and thus
			$\ini{\sigma(\ps t)} = \sigma(\ini{\ps t})$ is also well-formed.
		\item $~\entail \pzero \colon \pNat$. Direct.
		\item $\pGamma \entail \psucc{\ps t} \colon \pNat$.
			By induction hypothesis, $\sigma(\ps t)$ is well-formed, and thus
			$\psucc{\sigma(\ps t)} = \sigma(\psucc{\ps t})$ is also well-formed.
		\item $\pGamma \entail \Sigma_i (\palpha_i \pdot \ps t_i) \colon \pQ$.
			By induction hypothesis, $\sigma(\ps t_i)$ is well-formed for all
			$i$, and thus $\Sigma_i (\palpha_i \pdot \sigma(\ps t_i)) =
			\sigma(\Sigma_i (\palpha_i \pdot \ps t_i))$ is also well-formed,
			thanks to Lemma~\ref{lem:ortho-subst-equiv}.
		\item $\pGamma \entail u~\ps t \colon \pQ_2$.  
			By induction hypothesis, $\sigma(\ps t)$ is well-formed, and thus
			$u~\sigma(\ps t) = \sigma(u~\ps t)$ is also well-formed.
	\end{itemize}
\end{proof}

Substitutions, now properly defined, help us formalise how the language handles
operations such as the $\beta$-reduction.

%%%%%%%%%%%%%%%%%%%%%%%%%%%%%%%%%%%%%
\subsection{Hilbert spaces}
\label{sec:back-hilbert}
%%%%%%%%%%%%%%%%%%%%%%%%%%%%%%%%%%%%%

One of the main focus in the study of mathematical quantum mechanics is Hilbert
spaces. We assume basic knowledge of linear algebra, such as: vectors, linear
maps, bases, kernels, \emph{etc.} 

%%%%%%%%%%%%%%%%%%%
\subsubsection{Introductory Definitions}
\label{sub:hilb-intro}
%%%%%%%%%%%%%%%%%%%

Formally, a Hilbert space is a vector space equipped with an inner product,
written $\ip - -$, and this inner product induces a complete metric space. The
inner product is used to calculate probabilities of measurement outcomes in
quantum theory. Note that, given a complex number $\alpha$, we write
$\overline\alpha$ its conjugate.

\begin{defi}[Inner product]
	An inner product on a complex vector space $V$ is an operator $\ip - - \colon
	V \times V \to \C$, such that:
	\begin{itemize}
		\item for all $x,y \in V$, $\ip x y = \overline{\ip y x}$;
		\item for all $x,y,z \in V$ and $\alpha \in \C$,
			\[
				\ip{x}{\alpha y} = \alpha \ip x y,
				\qquad
				\ip{x}{y + z} = \ip x y + \ip x z;
			\]
		\item for all $x \in V$, $\ip x x \geq 0$, and if $\ip x x = 0$, then
			$x = 0$.
	\end{itemize}
\end{defi}

\begin{defi}[Norm]
	Given a vector space with an inner product, the norm of a vector $x$
	is defined as $\Vert x \Vert \defeq \ip x x$.
\end{defi}

This is enough to state the definition of a Hilbert space.

\begin{defi}[Hilbert space]
	A Hilbert space is a vector space $H$ equipped with a \emph{complete} inner
	product. By complete, we mean that: if a sequence of vectors $(v_i)_{i \in
	\N}$ is such that $\sum_{i = 0}^\infty \Vert v_i \Vert < \infty$, then there
	exists a vector $v \in H$ such that $\Vert v - \sum_{i = 0}^\infty v_i \Vert$
	tends to zero as $n$ goes to the infinity. The vector $v$ is called a
	\emph{limit}.
\end{defi}

It is interesting to observe that all finite-dimensional vector spaces with an
inner product are complete. Moreover, any vector space with an inner product
can be \emph{completed}, by adding the adequate limit vectors.

\begin{rem}
	A basis in a Hilbert space is not exactly defined the same way as a basis
	in a vector space. A basis in a Hilbert space is such that any vector
	is \emph{limit} of a linear combination of the elements of the basis.
\end{rem}

\begin{defi}[Bounded linear map]
	Given two Hilbert spaces $X$ and $Y$, a linear map $f \colon X \to Y$
	is \emph{bounded} if there exists $\alpha \in \R$ such that $\Vert f x \Vert
	\leq \alpha \Vert x \Vert$ for all $x \in X$.
\end{defi}

%%%%%%%%%%%%%%%%%%%
\subsubsection{Additional Structure}
\label{sub:hilb-additional}
%%%%%%%%%%%%%%%%%%%

We make use of different kinds of structure in vector spaces, such as direct
sums and tensor products.

\begin{defi}[Direct sum]
	Given two complex vector spaces $V, W$, one can form their \emph{direct sum}
	$V \oplus W$, whose elements are $(v,w)$ with $v \in V$ and $w \in W$, such
	that, for all $v,v' \in V$ and $w,w' \in W$ and $\alpha,\beta \in \C$,
	$\alpha (v,w) + \beta (v',w') = (\alpha v + \beta v', \alpha w + \beta w')$.
\end{defi}

\begin{rem}
	$V \oplus \set 0$ is isomorphic to $V$, and given $v \in V$, $(v,0)$ can
	written $v$ when there is no ambiguity. Given $v \in V$ and $w \in W$,
	$(v,w)$ can sometimes be written $v + w$.
\end{rem}

The structure of Hilbert space is preserved by the direct sum, with the
following inner product: $\ip{(x_1,y_1)}{(x_2,y_2)}_{X \oplus Y} =
\ip{x_1}{x_2}_{X} + \ip{y_1}{y_2}_{Y}$. However, this is not true for the
tensor product.

\begin{defi}[Tensor product]
	Given two complex vector spaces $V, W$, there is a vector space $V \otimes
	W$, together with a bilinear map $- \otimes- \colon V \times W \to V \otimes
	W :: (v,w) \mapsto v \otimes w$, such that for every bilinear map $h \colon V
	\times W \to Z$, there is a unique linear map $h' \colon V \otimes W \to Z$,
	such that $h = h' \circ \otimes$.
\end{defi}

The tensor product of two Hilbert spaces $X$ and $Y$ is obtained with the
tensor product of the underlying vector spaces, with the inner product $\ip{x_1
\otimes y_1}{x_2 \otimes y_2}_{X \otimes Y} = \ip{x_1}{x_2}_X \ip{y_1}{y_2}_Y$
and then the completion of this space gives the desired Hilbert spaces.

The category of Hilbert spaces and bounded linear maps between them, written
$\cat{Hilb}$, is symmetric monoidal with $(\otimes, \C)$ and with $(\oplus, \set
0)$. Classical computers operate on bits, while quantum computers apply
operations on qubits, written $\ket 0$ and $\ket 1$. They are usually denoted
as vectors in the Hilbert space $\C \oplus \C$ with $\ket 0 \defeq (1,0)$ and
$\ket 1 \defeq (0,1)$ the elements of its canonical basis.

There is another important structure to Hilbert spaces, called the \emph{adjoint}.

\begin{lem}
	Given a linear map $f \colon X \to Y$ between Hilbert spaces, there is
	a unique $f\inv \colon Y \to X$ such that for all $x \in X$ and $y \in Y$,
	$\ip{f(x)}{y} = \ip{x}{f\inv(y)}$. $f\inv$ is called the adjoint of $f$.
\end{lem}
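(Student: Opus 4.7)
The plan is to construct $f^*$ pointwise using the Riesz representation theorem, which is the standard tool for extracting a vector out of a linear functional on a Hilbert space. Concretely, I will fix $y \in Y$ and examine the map $\varphi_y \colon X \to \mathbb{C}$ defined by $\varphi_y(x) = \ip{f(x)}{y}$. This is linear in $x$ because $f$ is linear and the inner product is linear in its second argument (following the convention established earlier in the paper); it is bounded because $\abs{\varphi_y(x)} \leq \norm{f(x)} \norm{y} \leq \norm{f} \norm{x} \norm{y}$ by Cauchy--Schwarz together with boundedness of $f$ (which is implicit since the paper works in $\cat{Hilb}$ with bounded linear maps).

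Having established that $\varphi_y$ is a bounded linear functional on $X$, I invoke Riesz representation to obtain a unique vector $z_y \in X$ such that $\varphi_y(x) = \ip{x}{z_y}$ for all $x \in X$. I then \emph{define} $f^*(y) \defeq z_y$. The defining equation $\ip{f(x)}{y} = \ip{x}{f^*(y)}$ then holds by construction.

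Next I would verify that $f^*$ is itself a linear map $Y \to X$. For $y_1, y_2 \in Y$ and $\alpha \in \mathbb{C}$, I compute $\ip{x}{f^*(\alpha y_1 + y_2)} = \ip{f(x)}{\alpha y_1 + y_2} = \overline\alpha\ip{f(x)}{y_1} + \ip{f(x)}{y_2} = \ip{x}{\alpha f^*(y_1) + f^*(y_2)}$, using conjugate-linearity in the first argument of the inner product; since this holds for every $x$, the vectors agree. Boundedness of $f^*$ follows by testing $\norm{f^*(y)}^2 = \ip{f^*(y)}{f^*(y)} = \ip{f(f^*(y))}{y} \leq \norm{f} \norm{f^*(y)} \norm{y}$, yielding $\norm{f^*} \leq \norm{f}$.

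Finally, uniqueness of $f^*$ as a map $Y \to X$ reduces to the pointwise uniqueness: if $g \colon Y \to X$ also satisfies $\ip{f(x)}{y} = \ip{x}{g(y)}$ for all $x, y$, then for each $y$ we have $\ip{x}{f^*(y) - g(y)} = 0$ for all $x \in X$, and setting $x = f^*(y) - g(y)$ forces $f^*(y) = g(y)$. The only step that is genuinely non-trivial is the appeal to Riesz representation; everything else is algebraic manipulation. Since the paper is presenting this as background and does not prove Riesz, I would simply cite it as a standard result.
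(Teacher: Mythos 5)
Your Riesz-representation construction is the standard proof of this fact and it is essentially correct; note that the paper itself states this lemma as unproved background material, so there is no in-paper argument to diverge from. One point to tidy up: you claim that $\varphi_y(x)=\ip{f(x)}{y}$ is linear in $x$ ``because the inner product is linear in its second argument'', but $x$ occupies the \emph{first} argument, and under the paper's convention ($\ip{x}{\alpha y}=\alpha\ip{x}{y}$, hence conjugate-linearity in the first slot) $\varphi_y$ is in fact antilinear. The construction survives --- either apply Riesz to the genuinely linear functional $x\mapsto\ip{y}{f(x)}$ and conjugate the resulting identity, or invoke the antilinear form of the Riesz theorem --- but the justification as written does not match the convention you say you are following. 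The same slip recurs in your linearity check for $f\inv$: under the paper's convention $\ip{f(x)}{\alpha y_1+y_2}=\alpha\ip{f(x)}{y_1}+\ip{f(x)}{y_2}$ (no conjugate appears), and the final step uses linearity in the \emph{second} argument rather than ``conjugate-linearity in the first''; the conclusion that $f\inv$ is linear is nevertheless correct once the slots are sorted out. The boundedness estimate $\norm{f\inv(y)}\leq\norm{f}\,\norm{y}$, the reduction of uniqueness to positive-definiteness of the inner product, and your observation that boundedness of $f$ (implicit since morphisms of $\cat{Hilb}$ are bounded) is exactly what licenses the appeal to Riesz are all fine.
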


The category $\cat{Hilb}$ is equipped with a structure of symmetric monoidal
\emph{dagger} category, meaning that $(-)\inv$ is an involutive contravariant
endofunctor which is the identity on objects. Moreover, the dagger and the
monoidal tensor respect some coherence conditions. For example, given
two bounded linear maps $f \colon X_1 \to Y_1$ and $g \colon X_2 \to Y_2$,
\[
	(f \otimes g)\inv = f\inv \otimes g\inv \colon X_2 \otimes Y_2 \to X_1 \otimes
	Y_1.
\]

\begin{rem}
	Given two maps $f \colon X \to Y, g \colon Y \to Z$, we write $gf \colon X
	\to Z$ for the composition $g \circ f \colon X \to Z$. In addition, given a
	complex number $\alpha$ and a map $f \colon A \to B$, we write $\alpha f
	\colon X \to Y$ for the multiplication of the vector space $\alpha . f
	\colon X \to Y$.
\end{rem}

\begin{defi}
	Given a morphism $f \colon X \to Y$ in $\cat{Hilb}$, we say that $f$ is:
	\begin{itemize}
		\item a unitary, if it is an isomorphism and $f^{-1} = f\inv$;
		\item a contraction (or contractive map), if for all $x \in X$, $\norm{fx}
			\leq \norm x$;
		\item an isometry, if $f\inv f = \mathrm{id}$;
		\item a coisometry, if $ff\inv = \mathrm{id}$.
	\end{itemize}
\end{defi}

%%%%%%%%%%%%%%%%%%%
\subsubsection{Quantum Computing}
\label{sub:hilb-computing}
%%%%%%%%%%%%%%%%%%%

Note that Hilbert spaces and unitaries (resp. contractions) form a dagger
category. They are wide subcategories of $\cat{Hilb}$. Unitary maps are of central
importance because they are the proper quantum operations, as solutions of the
Schrödinger equation. One of the most significant unitary maps in quantum
computing is the basis change in $\C \oplus \C$, also known as the
\emph{Hadamard gate}: $\ketbra 0 + + \ketbra 1 -$, where $\ket + =
\frac{1}{\sqrt 2} (\ket 0 + \ket 1)$ and $\ket - = \frac{1}{\sqrt 2} (\ket 0 -
\ket 1)$. We observe here that $\ketbra 0 +$ and $\ketbra 1 -$ are
contractions, and that unitary maps can be formulated as linear combination of
compatible contractive maps. Note also that the \emph{states}, like $\ket 0$,
are isometries.

We write $\Contr$, the category of \emph{countably-dimensional} Hilbert
spaces and contractive maps, $\Iso$, the category of
\emph{countably-dimensional} Hilbert and isometries between them, and $\Coiso$
the category of \emph{countably-dimensional} Hilbert spaces and coisometries
between them. The category of \emph{countably-dimensional} Hilbert spaces and
bounded linear maps is often written $\Hilbc$.

\begin{lem}[Zero map]
	Given any Hilbert spaces $A,Y$, there exists a linear map
	$0_{X,Y} \colon X \to Y$ whose image is $\set 0$, we also
	write that $\Ker (0_{X,Y}) = X$. When it is not ambiguous, we
	write $0 \colon X \to Y$. It is a contractive map.
\end{lem}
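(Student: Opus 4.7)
The statement is essentially a sanity check, so the plan is to exhibit the map explicitly and then verify the three claimed properties (linearity, image being $\{0\}$, contractivity) by direct computation, without appealing to anything beyond the definitions of inner-product-space operations already recalled in \secref{sub:hilb-intro} and \secref{sub:hilb-additional}.

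First, I would define $0_{X,Y} \colon X \to Y$ by setting $0_{X,Y}(x) \defeq 0_Y$ for every $x \in X$, where $0_Y$ is the zero vector of the Hilbert space $Y$ (which exists because $Y$ is in particular a vector space). Linearity is then immediate: for any $x_1, x_2 \in X$ and any $\alpha, \beta \in \C$, one has
\[
0_{X,Y}(\alpha x_1 + \beta x_2) = 0_Y = \alpha \cdot 0_Y + \beta \cdot 0_Y = \alpha \cdot 0_{X,Y}(x_1) + \beta \cdot 0_{X,Y}(x_2).
\]
The image is by construction contained in $\{0_Y\}$, and since any $x \in X$ satisfies $0_{X,Y}(x) = 0_Y$, the image equals $\{0_Y\}$; equivalently $\Ker(0_{X,Y}) = X$.

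For contractivity, I would use the definition of the norm from \secref{sub:hilb-intro}: for every $x \in X$,
\[
\norm{0_{X,Y}(x)} = \norm{0_Y} = 0 \leq \norm{x},
\]
so $0_{X,Y}$ satisfies the required inequality with $\alpha = 0$ (hence in particular is bounded and contractive in the sense of \secref{sub:hilb-computing}). There is no real obstacle in this proof; the only thing to be careful about is stating the result at the right level of generality (the map is well-defined between any two Hilbert spaces, with no compatibility conditions needed), and making explicit that the image being $\{0\}$ and the kernel being $X$ are two phrasings of the same fact.
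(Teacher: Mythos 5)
Your proof is correct and is exactly the direct verification one expects: the paper itself states this lemma without proof, and your explicit construction of the constant-zero map together with the checks of linearity, image/kernel, and the norm inequality $\norm{0_{X,Y}(x)} = 0 \leq \norm{x}$ is the standard argument. Nothing is missing.
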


\begin{rem}
	Given a Hilbert space $X$, the morphism $0 \colon X \to \set 0$ is unique
	for every $X$ and makes $\set 0$ a terminal object in $\Contr$ and in
	$\Coiso$.
\end{rem}

Contractions are widely used in the literature for the denotational semantics
of quantum programming languages \cite{heunen2022information,
pablo2022universal, carette2023quantum}. Some recent developments expose axioms
for the categories involved in this paper \cite{heunen2022axioms,
heunen2022contractions, dimeglio2024dagger}. This better mathematical understanding
of the category theory behind Hilbert spaces can only be beneficial for the
theory of programming languages.

\paragraph{The $\ell^2$ functor.}
The $\ell^2$ functor~\cite{heunen2013l2} is the closest there is to a free
Hilbert space. Given a set $X$, the following:
\begin{equation}
	\label{eq:l2}
	\ell^2(X) \defeq \set{\phi \colon X \to \C \alt \sum_{x \in X}
	\abs{\phi(x)}^2 < \infty}
\end{equation}
is actually a Hilbert space. Given an element $x \in X$, its counterpart in
$\ell^2(X)$ (the $\phi$ such that $\phi(x) = 1$ and for all $y \neq x$,
$\phi(y) = 0$) is written $\ket x$. The family $(\ket x)_{x \in X}$ is called
the \emph{canonical basis} of $\ell^2(X)$.

%%%%%%%%%%%%%%%%%%%%%%%%%%%%%%%%%%%%%%%%%%%%%%%%%%%%%%%%%%%%%%%%%%%%%%%%%%%%%%
\subsection{Hilbert spaces as a semantics}
\label{sec:qu-simple-maths}
%%%%%%%%%%%%%%%%%%%%%%%%%%%%%%%%%%%%%%%%%%%%%%%%%%%%%%%%%%%%%%%%%%%%%%%%%%%%%%

We build upon the notations and definitions presented in
\secref{sec:back-hilbert}, where we outlined introductory notions on Hilbert
spaces. Our goal is now to detail the tools necessary to interpret the pure
quantum syntax in Hilbert spaces and specific linear maps between them. As a
mean, we mostly work with contractions, that the reader can see as
``subunitary'' linear maps. We eventually show that we can interpret terms of
our language as isometries (the proper maths for quantum states), and
isometries are, in particular, contractions. 

\paragraph{Sums.}
Given two maps $f,g \colon X \to Y$ in $\Contr$, their linear algebraic sum $f
+ g$ is not necessarily a contraction. We introduce the notion of
\emph{compatibility}. This notion is inherited from the one in restriction and
inverse categories \cite{kaarsgaard2017join}. We use, in particular, an
observation that links zero morphisms to compatibility in inverse categories;
but this is adapted to Hilbert spaces in this section. In this context, the
\emph{join} is also different, as we use the algebraic sum inherited from the
vector space structure.

% \begin{defi}[Output Compatible]
% 	\label{def:output-compati}
% 	Given $f,g \colon \pQ_1 \to \pQ_2$ two maps in $\Contr$, $f$ and $g$ are said to be
% 	output compatible if $\iim f \cap \iim g = \set 0$.
% \end{defi}

\begin{defi}[Compatibility]
	\label{def:compati}
	Given $f,g \colon X \to Y$ two maps in $\Contr$, $f$ and $g$ are said to
	be compatible if $(\Ker f)^\bot \cap (\Ker g)^\bot = \set 0$ and
	$\iim f \cap \iim g = \set 0$.
\end{defi}

This definition of compatibility ensures that there is no overlap between the
inputs, and also between the outputs. The next lemma is then direct.

\begin{lem}
	\label{lem:compati}
	Given two compatible contractive maps $f,g \colon X \to Y$, $f+g$ is
	also contractive.
\end{lem}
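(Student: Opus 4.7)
The plan is to verify $\norm{(f+g)(x)} \leq \norm{x}$ pointwise in $X$ by decomposing $x$ along the kernels of $f$ and $g$ and then applying a Pythagorean argument. Since $\Ker f$ and $\Ker g$ are closed subspaces of the Hilbert space $X$, I would first write the orthogonal decompositions $x = a + k_f = b + k_g$ with $a\in(\Ker f)^\bot$, $k_f\in\Ker f$, $b\in(\Ker g)^\bot$, $k_g\in\Ker g$. By construction $f(x) = f(a)$ and $g(x) = g(b)$, so the task reduces to estimating $\norm{f(a)+g(b)}$ in terms of $\norm{x}$.

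The next step is to expand
\[
\norm{f(a)+g(b)}^2 \;=\; \norm{f(a)}^2 + 2\operatorname{Re}\ip{f(a)}{g(b)} + \norm{g(b)}^2,
\]
and kill the cross term using the image-side compatibility condition $\iim f \cap \iim g = \{0\}$ from Definition~\ref{def:compati}. The intended reading (in the spirit of the inverse-category join of \cite{kaarsgaard2017join}) is that this condition really enforces $\iim f \perp \iim g$ in $Y$, making the inner product $\ip{f(a)}{g(b)}$ vanish. Combined with contractivity of $f$ and $g$, this gives $\norm{f(a)+g(b)}^2 \leq \norm{a}^2 + \norm{b}^2$.

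It then remains to bound $\norm{a}^2 + \norm{b}^2$ by $\norm{x}^2$, and this is the role of the input-side condition $(\Ker f)^\bot \cap (\Ker g)^\bot = \{0\}$: again reading this as genuine orthogonality of the two active subspaces, $x$ decomposes orthogonally as $x = a + b + r$ with $r$ perpendicular to $(\Ker f)^\bot + (\Ker g)^\bot$, and Pythagoras yields $\norm{a}^2 + \norm{b}^2 \leq \norm{x}^2$. Chaining the two inequalities gives $\norm{(f+g)(x)}^2 \leq \norm{x}^2$, as required.

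The main obstacle I expect is precisely the proper reading of Definition~\ref{def:compati}: trivial intersection of two closed subspaces of a Hilbert space is strictly weaker than orthogonality in general (consider two distinct lines through the origin in $\mathbb{C}^2$), yet both steps above really need orthogonality of the active subspaces as well as of the images. My plan is therefore to check that in the restriction/inverse-category setting the paper is following, the trivial-intersection conditions are to be interpreted as (or can be upgraded to, using that in this setting $f$ and $g$ factor through partial isometries onto $\iim f$ and $\iim g$) genuine orthogonality conditions; once this is secured, the Pythagorean decomposition above delivers the contractivity of $f+g$ with no further effort.
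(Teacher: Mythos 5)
Your two-step Pythagorean estimate is exactly the argument the paper has in mind: the paper gives no proof at all, saying only that compatibility ``ensures there is no overlap between the inputs and the outputs'' and that the lemma ``is then direct'', and what ``direct'' abbreviates is precisely your computation --- kill the cross term $\ip{f(a)}{g(b)}$ using orthogonality of the images, and bound $\norm{a}^2+\norm{b}^2\leq\norm{x}^2$ using orthogonality of $(\Ker f)^\bot$ and $(\Ker g)^\bot$. Under the orthogonality reading of compatibility your proof is complete and correct.

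The gap is in the step you defer to the end, and it cannot be filled as you propose: trivial intersection really is strictly weaker than orthogonality here, and no ``upgrade'' is available, not even for partial isometries. Concretely, take $X=Y=\C^2$, $f=\ketbra{0}{0}$ and $g=\ketbra{+}{1}$ with $\ket{+}=\frac{1}{\sqrt 2}(\ket 0+\ket 1)$. Both are partial isometries (hence contractions), $(\Ker f)^\bot\cap(\Ker g)^\bot=\set 0$ and $\iim f\cap\iim g=\set 0$, so the hypotheses of Definition~\ref{def:compati} hold literally; yet $(f+g)\ket{+}=\frac{1}{\sqrt 2}(\ket 0+\ket{+})$ has norm $\sqrt{1+\tfrac{1}{\sqrt 2}}>1$, so $f+g$ is not contractive. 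Thus the lemma is false under the literal reading of Definition~\ref{def:compati}, and since your counterexample-proof obstacle is realised by partial isometries, the suggested factorisation through partial isometries onto $\iim f$ and $\iim g$ does not rescue it. The correct repair is to take as hypothesis the genuinely orthogonal version of compatibility, i.e.\ the algebraic conditions $f\inv g=0$ and $fg\inv=0$ of Lemma~\ref{lem:algebra-compati} (equivalently $\iim f\perp\iim g$ and $(\Ker f)^\bot\perp(\Ker g)^\bot$); this is in fact the only form in which compatibility is ever established in the paper's later applications (e.g.\ Corollary~\ref{cor:iso-sem-defined}), and with it your Pythagorean argument goes through verbatim.
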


As said above, the conditions in Def.~\ref{def:compati} can be simplified in
more algebraic expressions.

\begin{lem}
	\label{lem:algebra-compati}
	Given two contractive maps $f,g \colon X \to Y$, $f\inv g = 0$ and $fg\inv = 0$
	implies that $f$ and $g$ are compatible.
\end{lem}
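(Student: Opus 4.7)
The plan is to unpack each hypothesis using standard facts about adjoints in Hilbert space, namely $\Ker f^* = (\iim f)^\bot$ and $\overline{\iim f^*} = (\Ker f)^\bot$, together with the fact that kernels of bounded operators are closed. Each of the two equations $f^* g = 0$ and $f g^* = 0$ should yield exactly one of the two conditions in Definition~\ref{def:compati}.

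First, I would handle the image condition. From $f^* g = 0$, for every $x \in X$ we have $g(x) \in \Ker f^* = (\iim f)^\bot$. Hence $\iim g \subseteq (\iim f)^\bot$, and since a subspace and its orthogonal complement meet only at $0$, this gives $\iim f \cap \iim g = \set{0}$.

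Next, for the kernel condition, I would use $f g^* = 0$. For every $y \in Y$ we have $g^*(y) \in \Ker f$, so $\iim g^* \subseteq \Ker f$. Since $\Ker f$ is closed (as $f$ is bounded/continuous), taking closures yields $\overline{\iim g^*} \subseteq \Ker f$, i.e.\ $(\Ker g)^\bot \subseteq \Ker f$. Intersecting with $(\Ker f)^\bot$ then gives
\[
(\Ker f)^\bot \cap (\Ker g)^\bot \subseteq (\Ker f)^\bot \cap \Ker f = \set{0}.
\]
Combining the two conclusions yields the definition of compatibility.

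The only subtlety is the passage from $\iim g^* \subseteq \Ker f$ to $(\Ker g)^\bot \subseteq \Ker f$, which requires the closure step; this is essentially the only place where infinite-dimensionality matters, and it is handled by continuity of $f$. Everything else is a direct rewriting using the standard adjoint identities, so the proof should be short.
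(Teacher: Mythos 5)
Your proof is correct: each hypothesis is converted into exactly one of the two conditions of Definition~\ref{def:compati} via the standard identities $\Ker f\inv = (\iim f)^\bot$ and $\overline{\iim g\inv} = (\Ker g)^\bot$, with the closure step properly justified by closedness of $\Ker f$. This is the same routine adjoint argument the paper relies on (its proof is deferred to the appendix), so nothing further is needed.
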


\begin{rem}
	It might seem that the conditions introduced above are not symmetric on $f$
	and $g$. But one can observe that $0\inv = 0$ and $(f\inv g)\inv = g\inv
	f\inv{\inv} = g\inv f$, thus $f\inv g = 0$ iff $g\inv f = 0$. Similarly, $fg\inv
	= 0$ iff $gf\inv = 0$.
\end{rem}

Lemma~\ref{lem:algebra-compati} introduces a new point of view on
compatibility, by describing the \emph{orthogonality} between morphisms.

\begin{exa}
	The morphisms $\ketbra 0 0 \colon \C \to \C^2$ and $\ketbra 1 1 \colon \C \to
	\C^2$ are orthogonal in our loose sense, because the vectors $\ket 0$ and
	$\ket 1$ are orthogonal in the linear algebraic sense. This justifies that
	their linear sum $\ketbra 0 0 + \ketbra 1 1$ is a contraction (and, in this
	case, it is also a unitary).
\end{exa}

\paragraph{Additional structure.}
Unsurprisingly, the unit type is to be represented by the one-dimensional
Hilbert space $\C$, the line of complex numbers. In the syntax, orthogonality
and thus pattern-matching, depend on direct sums. The latter are interpreted
as direct sums of Hilbert spaces. We show that this interpretation gives
rise to orthogonality in the sense of contractions.

\begin{defi}
	We write $\iota^{X,Y}_l \colon X \to X \oplus Y$ the isometry such that for
	all $x \in X$, $\iota^{X,Y}_l x = (x,0)$; it is called the left injection.
	Similarly, the right injection is written $\iota^{X,Y}_r \colon Y \to X
	\oplus Y$.
\end{defi}

\begin{lem}
	\label{lem:injection-compati}
	Given two Hilbert spaces $X,Y$, $(\iota^{X,Y}_l)\inv
	\iota^{X,Y}_r = 0$ and $(\iota^{X,Y}_r)\inv \iota^{X,Y}_l = 0$.
\end{lem}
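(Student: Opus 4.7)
The plan is a direct computation using the definition of adjoint in $\cat{Hilb}$ and the inner product on the direct sum $X \oplus Y$ that was recalled earlier in \secref{sub:hilb-additional}, namely $\langle (x_1,y_1),(x_2,y_2)\rangle_{X\oplus Y} = \langle x_1,x_2\rangle_X + \langle y_1,y_2\rangle_Y$.

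First I would identify the adjoints explicitly. For any $x \in X$ and $(x',y') \in X \oplus Y$,
\[
  \langle \iota^{X,Y}_l x, (x',y')\rangle_{X\oplus Y} = \langle (x,0),(x',y')\rangle_{X\oplus Y} = \langle x, x'\rangle_X + \langle 0, y'\rangle_Y = \langle x, x'\rangle_X,
\]
so by the uniqueness of the adjoint guaranteed by the lemma in \secref{sub:hilb-additional}, $(\iota^{X,Y}_l)\inv(x',y') = x'$, i.e.\ $(\iota^{X,Y}_l)\inv$ is the left projection. An identical argument shows that $(\iota^{X,Y}_r)\inv$ is the right projection, sending $(x',y')$ to $y'$.

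Next I would compose. For any $y \in Y$,
\[
  (\iota^{X,Y}_l)\inv \iota^{X,Y}_r\, y = (\iota^{X,Y}_l)\inv (0, y) = 0,
\]
and symmetrically for $x \in X$, $(\iota^{X,Y}_r)\inv \iota^{X,Y}_l\, x = (\iota^{X,Y}_r)\inv (x,0) = 0$. Hence both compositions agree pointwise with the zero morphism, so $(\iota^{X,Y}_l)\inv \iota^{X,Y}_r = 0$ and $(\iota^{X,Y}_r)\inv \iota^{X,Y}_l = 0$.

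There is no real obstacle here: the proof is essentially an unfolding of definitions. The only subtle point is being careful that the adjoint is computed with respect to the direct-sum inner product, but that is immediate once we write the injections as $x \mapsto (x,0)$ and $y \mapsto (0,y)$. Thus the lemma follows by a one-line computation on each side.
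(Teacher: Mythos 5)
Your computation is correct, and it is exactly the intended argument: the paper states this lemma without proof precisely because identifying $(\iota^{X,Y}_l)\inv$ and $(\iota^{X,Y}_r)\inv$ as the projections via the direct-sum inner product and composing is an immediate unfolding of definitions. Nothing is missing.
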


\begin{exa}
	\label{ex:qua-iota}
	The previous lemma ensures that $\iota^{X,Y}_l (\iota^{X,Y}_l)\inv$ and
	$\iota^{X,Y}_r (\iota^{X,Y}_r)\inv$ are compatible. Note that $\iota^{X,Y}_l
	(\iota^{X,Y}_l)\inv + \iota^{X,Y}_r (\iota^{X,Y}_r)\inv = \iid$.
\end{exa}

Note that given a complex number $\alpha$ and a contraction $f \colon X \to Y$,
the outer product $\alpha \cdot f$ is written $\alpha f$ when it is not
ambiguous. Given a set $S$, we write $(\alpha_i)_{i \in S}$ a family of
complex numbers indexed by $S$. Given two sets $S$ and $S'$, we write
$(\alpha_{i,j})_{(i,j) \in S \times S'}$ a matrix of complex numbers indexed by
$S$ and $S'$. The sets of indices can be omitted if there is not ambiguity, as
in Lemma~\ref{lem:normalised-sum}.

\begin{lem}
	\label{lem:normalised-sum}
	Given a family of pairwise output compatible isometries $f_i \colon X \to Y$,
	and a family of complex numbers $\alpha_i$ such that $\sum_i \abs{\alpha_i}^2
	= 1$, $\sum_i \alpha_i f_i$ is an isometry.
\end{lem}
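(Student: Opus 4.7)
The plan is to verify the isometry condition directly by computing $\bigl(\sum_i \alpha_i f_i\bigr)^* \bigl(\sum_j \alpha_j f_j\bigr)$ and showing it equals the identity. Expanding using sesquilinearity of composition with the adjoint, this becomes $\sum_{i,j} \overline{\alpha_i}\alpha_j\, f_i^* f_j$, so everything reduces to understanding the ``diagonal'' and ``off-diagonal'' terms separately.

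For the diagonal terms, since each $f_i$ is an isometry, we have $f_i^* f_i = \iid$, so the diagonal contribution is $\bigl(\sum_i |\alpha_i|^2\bigr)\iid$, which equals $\iid$ by the normalisation hypothesis $\sum_i |\alpha_i|^2 = 1$. For the off-diagonal terms, I need $f_i^* f_j = 0$ whenever $i \neq j$. This is precisely what ``output compatibility'' gives at the algebraic level: by the computation following Lemma~\ref{lem:algebra-compati}, the equation $f_i^* f_j = 0$ expresses that $\iim f_j \subseteq (\iim f_i)^\perp$, i.e., that the images of $f_i$ and $f_j$ are orthogonal subspaces of $Y$. Note that we cannot also demand $f_i f_j^* = 0$ because an isometry has $(\Ker f_i)^\perp = X$, making full (input) compatibility between distinct isometries impossible on a nonzero $X$; this is exactly why the statement restricts to \emph{output} compatibility. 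Putting the two parts together, $\bigl(\sum_i \alpha_i f_i\bigr)^* \bigl(\sum_j \alpha_j f_j\bigr) = \iid$, so $\sum_i \alpha_i f_i$ is an isometry.

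One preliminary point to address before the computation: in the possibly infinite case one should justify that $\sum_i \alpha_i f_i$ is a well-defined bounded operator at all, before checking the isometry identity. This follows from the pairwise output orthogonality together with $\sum_i |\alpha_i|^2 = 1$: for any $x \in X$, the vectors $\alpha_i f_i x$ are pairwise orthogonal in $Y$ with $\sum_i \|\alpha_i f_i x\|^2 = \|x\|^2 \sum_i |\alpha_i|^2 = \|x\|^2 < \infty$, so by completeness of $Y$ the partial sums converge, and the limit has norm exactly $\|x\|$ by the Pythagorean identity. This convergence computation in fact \emph{already} gives that the sum is an isometry, providing an alternative route that avoids manipulating adjoint composition in the infinite-sum setting.

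The main obstacle is really just keeping the conventions straight: matching the paper's asymmetric notion of compatibility (only output, since isometries cannot be input-compatible) with the right algebraic identity $f_i^* f_j = 0$. Once that is settled, the proof is a one-line $*$-calculation plus the normalisation hypothesis.
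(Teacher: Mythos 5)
Your proof is correct and follows essentially the same route as the paper's: expand $\bigl(\sum_i \alpha_i f_i\bigr)\inv \circ \bigl(\sum_j \alpha_j f_j\bigr)$ by linearity, kill the off-diagonal terms using $f_i\inv f_j = 0$ from output compatibility, and reduce the diagonal to $\bigl(\sum_i \abs{\alpha_i}^2\bigr)\iid = \iid$. Your additional remarks on well-definedness of the (possibly infinite) sum and the Pythagorean alternative are sound extras that the paper's one-line computation leaves implicit.
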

\begin{proof}
	\begin{align*} 
		&\ (\sum_i \alpha_i f_i)\inv \circ (\sum_j \alpha_j f_j) & \\
		&= (\sum_i \overline\alpha_i f_i\inv) \circ (\sum_j \alpha_j f_j) 
		& (\text{dagger and sum commute.}) \\
		&= \sum_{i,j} (\overline\alpha_i \alpha_j) f_i\inv \circ f_j
		& (\text{composition and sum commute.}) \\
		&= \sum_i (\overline\alpha_i \alpha_i) f_i\inv \circ f_i
		& (f\text{ are pairwise compatible.}) \\
		&= \left( \sum_i \overline\alpha_i \alpha_i \right) \iid 
		& (f\text{ are isometries.}) \\
		&= \left( \sum_i \abs{\alpha_i}^2 \right) \iid = \iid. &
	\end{align*}
\end{proof}

We recall that given two maps $f \colon X \to Z$ and $g \colon Y \to Z$ in $\Contr$, 
if $f\inv g = 0_{Z,X}$, we say that $f$ and $g$ are orthogonal. We show that this
orthogonality is preserved by postcomposing with an isometry.

\begin{lem}
	\label{lem:ortho-postiso}
	Given two orthogonal maps $f \colon X \to Z$ and $g \colon B \to Z$ in
	$\Contr$, and given an isometry $h \colon Z \to W$, then $h \circ f \colon X
	\to W$ and $h \circ g \colon B \to W$ are also orthogonal.
\end{lem}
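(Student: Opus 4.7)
The plan is to unfold the definition of orthogonality and reduce the claim to the given hypothesis $f\inv g = 0$ using the fact that $h$ is an isometry, together with standard dagger functoriality. Recall that $f$ and $g$ are called orthogonal when $f\inv g = 0_{Z,X}$ (see the discussion just after Lemma~\ref{lem:algebra-compati}), so to show that $hf$ and $hg$ are orthogonal it suffices to verify that $(hf)\inv (hg) = 0$.

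The computation proceeds in three elementary steps. First, contravariance of the dagger on $\Hilbc$ yields $(hf)\inv = f\inv h\inv$. Second, since $h \colon Z \to W$ is an isometry we have $h\inv h = \mathrm{id}_Z$. Third, combining these gives
\[
(h \circ f)\inv \circ (h \circ g) \;=\; f\inv \circ h\inv \circ h \circ g \;=\; f\inv \circ \mathrm{id}_Z \circ g \;=\; f\inv \circ g \;=\; 0,
\]
where the last equality is the assumed orthogonality of $f$ and $g$. This establishes exactly the required relation for $h \circ f$ and $h \circ g$.

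I do not expect any obstacle here: the lemma is essentially a one-line calculation once the definitional unfolding is made, and every ingredient (dagger contravariance, the defining property of an isometry, and the hypothesis $f\inv g = 0$) is either built into the structure of $\Hilbc$ or supplied directly by the statement. The only thing one has to be slightly careful about is typing, i.e.\ that the composite $h\inv h$ appears between $f\inv$ and $g$ at the right object $Z$, which is automatic since both $f$ and $g$ have codomain $Z$.
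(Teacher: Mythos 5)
Your proposal is correct and follows exactly the same computation as the paper's proof: unfold orthogonality as $(h\circ f)\inv\circ(h\circ g)=0$, use contravariance of the dagger, cancel $h\inv h=\mathrm{id}_Z$ by the isometry hypothesis, and conclude from $f\inv g=0$. Nothing further is needed.
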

\begin{proof}
	\begin{align*}
		&\ (h \circ f)\inv \circ h \circ g & \\
		&= f\inv \circ h\inv \circ h \circ g
		& (\text{dagger is contravariant.}) \\
		&= f\inv \circ \iid_C \circ g = f\inv \circ g
		& (\text{isometry.}) \\
		&= 0_{Z,X}
		& (\text{hypothesis.})
	\end{align*}
\end{proof}

In a similar vein, the postcomposition of an isometry with an isometry says an
isometry. This was already observed when we mentioned that Hilbert spaces and
isometries form a category.

The canonical countably-dimensional Hilbert spaces is $\ell^2(\N)$, defined in
\secref{sec:back-hilbert}.

\begin{defi}
	We write $\rmsucc \colon \ell^2(\N) \to \ell^2(\N)$ the linear map
	$\ell^2(\N) \to 1 \oplus \ell^2(\N) \cong \ell^2(\N)$.
\end{defi}

\begin{rem}
	Note that $\rmsucc$ can also be seen as the image of the successor function
	in natural numbers by the function $\ell^2$. The linear map $\rmsucc$ is an
	isometry.
\end{rem}

We recall that we write $\ket n$ for the elements of the canonical basis in
$\ell^2(\N)$. Because of this, $\ket 0$ and $\ket 1$ can either be elements of
$\C^2$ or $\ell^2(\N)$. This is not an issue, since the natural embedding $\C^2
\to \ell^2(\N)$ maps $\ket 0$ to $\ket 0$ and $\ket 1$ to $\ket 1$.

\begin{exa}
	\[
		\rmsucc \ket 7 = \ket 8
		\qquad
		\rmsucc \left(\frac{\sqrt 3}{2}\ket 9 + \frac 1 2 \ket{11} \right)
		= \frac{\sqrt 3}{2}\ket{10} + \frac 1 2 \ket{12}
	\]
\end{exa}

\paragraph{Unitaries.}
The denotational semantics of our programming language involves unitary maps
to interpret the functions. Those maps live in the category $\cat{Uni}$, which is a 
rig category: it has bifunctor $\oplus$ and $\otimes$ inherited from $\cat{Hilb}$. Hence
the next lemma.

\begin{lem}
	Given two maps $f \colon X \to Y$ and $g \colon Z \to W$ in $\cat{Uni}$, $f
	\otimes g \colon X \otimes Z \to Y \otimes W$ is a map in $\cat{Uni}$ and
	$f \oplus g \colon X \oplus Z \to Y \oplus W$ is a map in $\cat{Uni}$.
\end{lem}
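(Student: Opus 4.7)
The plan is to unpack the definition of a unitary in $\cat{Hilb}$ and leverage the dagger-compatibility of the two bifunctors $\otimes$ and $\oplus$, which was recalled earlier (e.g., $(f \otimes g)^* = f^* \otimes g^*$). Recall that a morphism $h$ in $\cat{Hilb}$ is in $\cat{Uni}$ iff $h$ is an isomorphism with $h^{-1} = h^*$, equivalently iff $h^* h = \id$ and $h h^* = \id$ on the appropriate domains.

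First I would handle the tensor product. Given $f \colon X \to Y$ and $g \colon Z \to W$ in $\cat{Uni}$, I would compute
\[
(f \otimes g)^* \circ (f \otimes g) = (f^* \otimes g^*) \circ (f \otimes g) = (f^* \circ f) \otimes (g^* \circ g) = \id_X \otimes \id_Z = \id_{X \otimes Z},
\]
where the first equality uses the dagger-tensor coherence, the second uses bifunctoriality of $\otimes$, and the third uses the hypothesis that $f$ and $g$ are unitaries. The symmetric calculation $(f \otimes g) \circ (f \otimes g)^* = \id_{Y \otimes W}$ is identical, so $f \otimes g$ is unitary.

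Next I would repeat the same argument with $\oplus$ in place of $\otimes$, which requires the analogous coherence $(f \oplus g)^* = f^* \oplus g^*$ and bifunctoriality of $\oplus$ on $\cat{Hilb}$. Both of these are standard facts for the biproduct in $\cat{Hilb}$ and are implicit in the rig-category structure mentioned just before the lemma.

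The only potential obstacle is ensuring that the dagger truly distributes over $\oplus$ as well as over $\otimes$; this is immediate from the explicit formulas $\ip{(x_1,z_1)}{(x_2,z_2)}_{X \oplus Z} = \ip{x_1}{x_2}_X + \ip{z_1}{z_2}_Z$ and $\ip{x_1 \otimes z_1}{x_2 \otimes z_2} = \ip{x_1}{x_2}\ip{z_1}{z_2}$ recalled earlier, so no further work beyond these two short calculations is required.
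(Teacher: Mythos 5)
Your proof is correct and follows exactly the route the paper intends: the lemma is stated as an immediate consequence of the dagger-compatible monoidal structures $(\otimes,\oplus)$ inherited from $\cat{Hilb}$, and your two-line computation $(f \otimes g)^* \circ (f \otimes g) = (f^* \circ f) \otimes (g^* \circ g) = \id$ (and its $\oplus$ and reversed-order analogues) is precisely the verification the paper leaves implicit. No gaps.
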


Finally, we present an operation that is common to quantum computing, and thus
preserves the unitary structure.

\begin{lem}[Controlled unitary]
	Given a unitary map $f \colon X \to X$, there is a unitary map $\rmctrl_A(f)
	\colon (\C \oplus \C) \otimes X \to (\C \oplus \C) \otimes X$ such that
	$\rmctrl_X (f) = \ketbra 0 0 \otimes \iid + \ketbra 1 1 \otimes f$.
\end{lem}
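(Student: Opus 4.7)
The plan is to define $\rmctrl_X(f) \defeq \ketbra 0 0 \otimes \iid_X + \ketbra 1 1 \otimes f$ explicitly and show that (i) this sum makes sense as a contraction via the compatibility machinery of \secref{sec:qu-simple-maths}, and (ii) the resulting map is unitary by a direct algebraic computation using the dagger structure on $\Hilbc$.

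First, I would check that $\ketbra 0 0 \otimes \iid_X$ and $\ketbra 1 1 \otimes f$ are contractions (in fact partial isometries) on $(\C \oplus \C) \otimes X$. Since $(\ketbra 0 0)\inv = \ketbra 0 0$ and $\ketbra 0 0 \cdot \ketbra 1 1 = 0$, and similarly the other way, we get $(\ketbra 0 0 \otimes \iid)\inv(\ketbra 1 1 \otimes f) = (\ketbra 0 0 \ketbra 1 1)\otimes f = 0$ and $(\ketbra 0 0 \otimes \iid)(\ketbra 1 1 \otimes f)\inv = (\ketbra 0 0 \ketbra 1 1) \otimes f\inv = 0$. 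By Lemma~\ref{lem:algebra-compati}, the two summands are compatible, hence by Lemma~\ref{lem:compati} their sum is a contraction, so $\rmctrl_X(f)$ is a well-defined morphism in $\Hilbc$.

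Next, I would verify unitarity directly. Since the dagger commutes with the sum and with $\otimes$, we have $\rmctrl_X(f)\inv = \ketbra 0 0 \otimes \iid_X + \ketbra 1 1 \otimes f\inv$. Expanding
\begin{align*}
\rmctrl_X(f)\inv \circ \rmctrl_X(f) &= \ketbra 0 0 \ketbra 0 0 \otimes \iid + \ketbra 0 0 \ketbra 1 1 \otimes f \\
 &\quad {}+ \ketbra 1 1 \ketbra 0 0 \otimes f\inv + \ketbra 1 1 \ketbra 1 1 \otimes f\inv f \\
 &= \ketbra 0 0 \otimes \iid + \ketbra 1 1 \otimes \iid = \iid_{(\C \oplus \C) \otimes X},
\end{align*}
using $\ketbra i i \ketbra i i = \ketbra i i$, the cross terms vanishing, the unitarity hypothesis $f\inv f = \iid$, and $\ketbra 0 0 + \ketbra 1 1 = \iid_{\C \oplus \C}$ (cf.\ Example~\ref{ex:qua-iota}, since $\ketbra i i = \iota_i \iota_i\inv$). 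A symmetric computation, now using $f f\inv = \iid$, gives $\rmctrl_X(f) \circ \rmctrl_X(f)\inv = \iid$, so $\rmctrl_X(f)$ is unitary.

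The argument is essentially routine once one has the compatibility framework in place; the only mild subtlety is being careful that the ``sum of contractions'' on the right-hand side is literally a morphism in $\Hilbc$ before one starts manipulating it, which is precisely what the compatibility check above secures. The remaining identities are formal consequences of the dagger-monoidal structure and the orthogonality $\braket{0|1} = 0$.
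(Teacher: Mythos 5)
Your proof is correct, and since the paper states this lemma without giving a proof, your argument supplies exactly the intended one: compatibility of the two summands via Lemma~\ref{lem:algebra-compati} and Lemma~\ref{lem:compati} (using $\ketbra 0 0 \ketbra 1 1 = 0$), followed by the routine dagger-monoidal computation with $f\inv f = f f\inv = \iid$ and $\ketbra 0 0 + \ketbra 1 1 = \iid_{\C \oplus \C}$ as in Example~\ref{ex:qua-iota}. No gaps; the only cosmetic remark is that the sum of two bounded maps is automatically a morphism of $\Hilbc$, so the compatibility check is really needed only to place $\rmctrl_X(f)$ among contractions, with unitarity then following from your explicit computation.
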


%%%%%%%%%%%%%%%%%%%%%%%%%%%%%%%%%%%%%%%%%%%%%%%%%%%%%%%%%%%%%%%%%%%%%%%%%%%%%%
\subsection{Denotational Semantics}
\label{sec:very-simple-semantics}
%%%%%%%%%%%%%%%%%%%%%%%%%%%%%%%%%%%%%%%%%%%%%%%%%%%%%%%%%%%%%%%%%%%%%%%%%%%%%%

As usual, we write $\sem -$ for the interpretation of types and term
judgements. As mentioned in the previous section, the presentation makes
extensive use of contractions for the denotational semantics. However, values
and terms are directly announced to be isometries, for clarity. It also help us
highlight the fact that values and terms represent sound quantum states. In the
same vein, the interpretation of unitaries is given as unitary maps between two
Hilbert spaces; but the proof that the semantics of a unitary abstraction is
unitary requires the mathematical development at the level of contractions.

\paragraph{Types.}
The interpretation of a type $\pQ$ is given by a countably-dimensional Hilbert
space. It is given by induction on the grammar of the types. This
interpretation is detailed in
Figure~\ref{fig:-very-simple-type-interpretation}.

\begin{figure}
	\begin{align*}
		\sem{\pQ} &\colon \Hilb \\
		\sem{\pbasic} &= \C \\
		\sem{\pQ_1 \ptimes \pQ_2} &= \sem{\pQ_1} \otimes \sem{\pQ_2} \\
		\sem{\pQ_1 \pplus \pQ_2} &= \sem{\pQ_1} \oplus \sem{\pQ_2} \\
		\sem{\pNat} &= \ell^2(\N)
	\end{align*}
	\caption{Interpretation of types.}
	\label{fig:-very-simple-type-interpretation}
\end{figure}

\paragraph{Expressions.}
Judgements for expressions are interpreted as contractions between Hilbert
spaces. A judgement is of the form $\pGamma \entail \ps e \colon \pQ$, and its
interpretation is written $\sem{\pGamma \entail \ps e \colon \pQ}$. Contexts
$\pGamma = \ps x_1 \colon \pQ_1 \dots \ps x_n \colon \pQ_n$ are given a
denotation $\sem\pGamma = \sem{\pQ_1} \otimes \dots \otimes \sem{\pQ_n}$. When
it is not ambiguous, the interpretation of the judgement $\pGamma \entail \ps e
\colon \pQ$ is written $\sem{\ps e}$.

\begin{figure}
	\begin{align*}
		\sem{\pGamma \entail \ps v \colon \pQ} &\colon \Iso(\sem	\pGamma,\sem \pQ) \\
		\sem{\entail \ps * \colon \pbasic} &= \iid_{\sem \pbasic} \\
		\sem{\ps x \colon \pQ \entail \ps x \colon \pQ} &= \iid_{\sem \pQ} \\
		\sem{\pGamma \entail \inl{\ps e} \colon \pQ_1 \pplus \pQ_2} &= \iota_l^{\sem{\pQ_1}, \sem{\pQ_2}}
		\circ \sem{\pGamma \entail \ps e \colon \pQ_1} \\
		\sem{\pGamma \entail \inr{\ps e} \colon \pQ_1 \pplus \pQ_2} &= \iota_r^{\sem{\pQ_1}, \sem{\pQ_2}}
		\circ \sem{\pGamma \entail \ps e \colon \pQ_2} \\
		\sem{\pGamma_1, \pGamma_2 \entail \pair{\ps e}{\ps e'}\colon \pQ_1 \ptimes \pQ_2} &=
		\sem{\pGamma_1 \entail \ps  e \colon \pQ_1} \otimes \sem{\pGamma_2 \entail
		\ps e' \colon \pQ_2} \\
		\sem{\entail \pzero \colon \pNat} &= \ket 0 \\
		\sem{\pGamma \entail \psucc{\ps e} \colon \pNat} &= \rmsucc \circ \sem{\pGamma
		\entail \ps e \colon \pNat} \\
		\sem{\pGamma \entail \Sigma_{i\leq k} (\palpha_i \pdot \ps e_i) \colon
		\pQ} &= \sum_{i\leq k} \palpha_i \sem{\pGamma \entail \ps e_i \colon
		\pQ}
	\end{align*}
	\caption{Interpretation of value judgements as morphisms in $\Iso$.}
	\label{fig:very-simple-value-interpretation}
\end{figure}

\begin{lem}[Isometry]
	\label{lem:value-isometry}
	If $\pGamma \entail \ps e \colon \pQ$ is a well-formed value judgement,
	then $\sem{\pGamma \entail \ps e \colon \pQ}$ is an isometry.
\end{lem}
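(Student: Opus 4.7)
I would prove the lemma by induction on the derivation of the well-formed judgement $\pGamma \entail \ps e \colon \pQ$ (using the rules of Figure \ref{fig:typing-values-simple}). The base cases and most inductive cases are direct applications of closure properties of $\Iso$: for $\ps{*}$ and variables, the interpretation is an identity; for $\inl{\ps e}$ and $\inr{\ps e}$, it is a composition of the injection isometry $\iota_l$ (resp.\ $\iota_r$) with the inductive hypothesis; for $\pair{\ps e_1}{\ps e_2}$, it is a tensor of two isometries, which is an isometry; for $\pzero$, the map $\ket{0}\colon \C\to \ell^2(\N)$ is plainly an isometry; for $\psucc{\ps e}$, we postcompose with $\rmsucc$, noted in the paper to be an isometry.

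The one genuinely non-trivial case is the superposition rule $\pGamma \entail \Sigma_i (\palpha_i \pdot \ps e_i) \colon \pQ$, whose interpretation is $\sum_i \palpha_i \sem{\pGamma \entail \ps e_i\colon \pQ}$. Here the formation rule provides $\sum_i \abs{\palpha_i}^2 = 1$ and pairwise syntactic orthogonality $\ps e_i ~\bot~ \ps e_j$, and the induction hypothesis says that each $\sem{\ps e_i}$ is an isometry. To conclude using Lemma \ref{lem:normalised-sum}, I still need to check that the family of isometries $(\sem{\ps e_i})_i$ is pairwise output-compatible, i.e.\ $\sem{\ps e_i}\inv \sem{\ps e_j} = 0$ for $i \neq j$. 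The main obstacle is precisely this: bridging the syntactic orthogonality predicate $\bot$ and the operator-theoretic orthogonality $(-)\inv(-)=0$ from Definition \ref{def:compati}.

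To overcome it, I would prove an auxiliary lemma by induction on the derivation of $\ps e_1 ~\bot~ \ps e_2$ (Definition \ref{def:orthogonality}) which states: for well-formed $\pGamma_1 \entail \ps e_1 \colon \pQ$ and $\pGamma_2 \entail \ps e_2 \colon \pQ$ with $\ps e_1 ~\bot~ \ps e_2$, one has $\sem{\ps e_1}\inv \sem{\ps e_2} = 0$. The characteristic cases proceed as follows. For $\inl{\ps t_1} ~\bot~ \inr{\ps t_2}$, we compute $(\iota_l \sem{\ps t_1})\inv(\iota_r \sem{\ps t_2}) = \sem{\ps t_1}\inv\, (\iota_l\inv \iota_r)\, \sem{\ps t_2} = 0$ by Lemma \ref{lem:injection-compati}. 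For $\inl{\ps t_1} ~\bot~ \inl{\ps t_2}$ coming from $\ps t_1 ~\bot~ \ps t_2$, the factor $\iota_l\inv \iota_l = \iid$ (isometry) makes the composite equal $\sem{\ps t_1}\inv \sem{\ps t_2}$, which vanishes by the inductive hypothesis. The $\psucc$ case is analogous, using that $\rmsucc\inv \rmsucc = \iid$ and the orthogonality of $\ket{0}$ with the image of $\rmsucc$. The tensor cases use bifunctoriality of $\otimes$, $(f\otimes g)\inv (f'\otimes g') = (f\inv f')\otimes(g\inv g')$, with one tensor factor vanishing by induction. The reflective closure cases for sums inside $\bot$ reduce bilinearly to the above cases, combined with Lemma \ref{lem:ortho-postiso} and the hypothesis on the scalar coefficients.

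With the auxiliary lemma in hand, the sum case concludes: the $\sem{\ps e_i}$ are pairwise output-compatible isometries and $\sum_i\abs{\palpha_i}^2 = 1$, so Lemma \ref{lem:normalised-sum} yields that $\sum_i \palpha_i \sem{\ps e_i}$ is an isometry. This completes the induction and proves the lemma.
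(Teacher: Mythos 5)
Your overall strategy is the one the paper itself uses: induction on the typing derivation, with the superposition case handled by Lemma~\ref{lem:normalised-sum} once syntactic orthogonality has been transported to semantic orthogonality $\sem{\ps e_i}\inv \circ \sem{\ps e_j} = 0$. The problem is the order in which you propose to establish the two facts. Your auxiliary lemma cannot be proved on its own by induction on the derivation of $\bot$: in the case $\Sigma_{j \in J}(\palpha_j \pdot \ps t_j)~\bot~\Sigma_{k \in K}(\pbeta_k \pdot \ps t_k)$, whose premises are pairwise orthogonality of the $\ps t_i$ together with $\sum_{i \in J \cap K}\bar{\palpha_i}\pbeta_i = 0$, expanding $\sem{\Sigma_j (\palpha_j \pdot \ps t_j)}\inv \circ \sem{\Sigma_k (\pbeta_k \pdot \ps t_k)}$ kills the off-diagonal terms by the inductive hypothesis but leaves the diagonal contribution $\sum_{i \in J\cap K}\bar{\palpha_i}\pbeta_i\, \sem{\ps t_i}\inv \circ \sem{\ps t_i}$. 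To conclude that this vanishes from the scalar condition you must know $\sem{\ps t_i}\inv \circ \sem{\ps t_i} = \iid$, i.e.\ precisely the isometry statement you intend to prove only afterwards; if the operators $\sem{\ps t_i}\inv\circ\sem{\ps t_i}$ were arbitrary distinct positive operators, the vanishing of $\sum_i \bar{\palpha_i}\pbeta_i$ would give nothing. Your phrase ``combined with Lemma~\ref{lem:ortho-postiso} and the hypothesis on the scalar coefficients'' glosses over exactly this step, and Lemma~\ref{lem:ortho-postiso} does not supply it.

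The fix is to make the interdependence explicit: prove the two statements simultaneously, by one induction (on the typing derivation, or on term size) whose statement is the conjunction ``$\sem{\ps t}$ is an isometry, and for every well-formed $\ps t'$ with $\ps t ~\bot~ \ps t'$ one has $\sem{\ps t}\inv\circ\sem{\ps t'} = 0$''. This is exactly how the paper proceeds: Lemma~\ref{lem:orthogonal-semantics-ortho} and Lemma~\ref{lem:term-isometry} are proved in a single interleaved induction because they are interdependent, and the value-level statements (the present lemma and Lemma~\ref{lem:orthogonal-semantics-ortho-value}) are extracted as subproofs of that joint argument. Your individual case computations (injections via Lemma~\ref{lem:injection-compati}, tensors via the monoidal dagger identity, the $\psucc$ case, and the bilinearity argument in the style of Lemma~\ref{lem:sum-ortho}) are correct and coincide with the paper's; only the linear ordering ``auxiliary lemma first, main lemma second'' has to be abandoned in favour of the mutual induction.
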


In quantum physics, the state of a particle is usually described as an
isometry. Showing that our expressions are interpreted as isometries, we can
justify that they are correct \emph{quantum states}. The proof of the previous
lemma is included in one of a larger result, showing that the denotation of all
terms are isometries (see Lemma~\ref{lem:term-isometry}). Moreover, expressions
are used to define the unitary abstractions as a collection of patterns: it is
sensible to prove that these patterns are interpreted with compatible
morphisms, in the sense of Definition~\ref{def:compati}.

\begin{lem}
	\label{lem:orthogonal-semantics-ortho-value}
	Given $i \neq j$, two judgements $\pGamma_1 \entail \ps e_1 \colon \pQ$ and
	$\pGamma_2 \entail \ps e_2 \colon \pQ$, such that $\ps e_1~\bot~\ps e_2$,
	we have $\sem{\ps e_1}\inv \circ \sem{\ps e_2} = 0$.
\end{lem}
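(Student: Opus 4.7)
The plan is to induct on the derivation of $\ps e_1 \bot \ps e_2$ given by Definition~\ref{def:orthogonality}, handling each orthogonality rule in turn. Since we are working with expressions rather than general terms, the rule for unitary application does not apply here, so the relevant cases are: the structural base cases ($\inl{-}\bot\inr{-}$, $\pzero\bot\psucc{-}$), the congruence rules under $\inl{-}$, $\inr{-}$, $\psucc{-}$, and the two pairing rules, plus the two sum-involving rules.

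For the structural base cases I will rely on Lemma~\ref{lem:injection-compati}, which gives $(\iota_l)^*\iota_r = 0$ and vice versa. For the case $\inl{\ps e_1'}\bot\inr{\ps e_2'}$, the denotations are $\iota_l \circ \sem{\ps e_1'}$ and $\iota_r \circ \sem{\ps e_2'}$, so $\sem{\inl{\ps e_1'}}^*\circ\sem{\inr{\ps e_2'}} = \sem{\ps e_1'}^*\circ \iota_l^*\iota_r \circ \sem{\ps e_2'} = 0$. For $\pzero \bot \psucc{\ps e}$, the denotations are $\ket 0$ and $\rmsucc \circ \sem{\ps e}$; since $\rmsucc$ sends basis vectors $\ket n$ to $\ket{n+1}$, we have $\bra 0 \rmsucc = 0$, giving the desired vanishing.

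For the congruence cases, the main ingredient is that $\iota_l$, $\iota_r$ and $\rmsucc$ are isometries, so pre-composition by their adjoints cancels them: e.g.\ $\sem{\inl{\ps e_1'}}^*\sem{\inl{\ps e_2'}} = \sem{\ps e_1'}^*\iota_l^*\iota_l\sem{\ps e_2'} = \sem{\ps e_1'}^*\sem{\ps e_2'}$, which is $0$ by the induction hypothesis. The pairing cases use functoriality of $\otimes$: from $\ps e_1 \bot \ps e_2$ we get $\sem{\pair{\ps e}{\ps e_1}}^*\sem{\pair{\ps e'}{\ps e_2}} = (\sem{\ps e}^*\sem{\ps e'})\otimes(\sem{\ps e_1}^*\sem{\ps e_2})$, and the second factor vanishes by the induction hypothesis (so the whole tensor is $0$), and symmetrically for the other pairing rule. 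The sum case $\ps e \bot \Sigma_i(\palpha_i\pdot \ps e_i)$ follows by linearity: $\sem{\ps e}^*(\sum_i\palpha_i\sem{\ps e_i}) = \sum_i \palpha_i\sem{\ps e}^*\sem{\ps e_i} = 0$ by the induction hypothesis on each summand (the variant with a distinguished index $\ast$ where $\palpha_\ast=0$ is handled identically since the exceptional term contributes $0$ to the sum).

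The main obstacle is the last rule, where two linear combinations $\Sigma_{j\in J}(\palpha_j\pdot \ps e_j)$ and $\Sigma_{k\in K}(\pbeta_k\pdot \ps e_k)$ built from a common orthogonal family $(\ps e_i)_{i\in I}$ are compared. Expanding, $\sem{\Sigma_j\palpha_j\pdot\ps e_j}^*\sem{\Sigma_k\pbeta_k\pdot\ps e_k} = \sum_{j\in J, k\in K}\overline{\palpha_j}\pbeta_k\, \sem{\ps e_j}^*\sem{\ps e_k}$. For $j\neq k$ with $j\in J, k\in K$, the induction hypothesis gives $\sem{\ps e_j}^*\sem{\ps e_k}=0$; this requires that well-typedness of each $\ps e_i$ be preserved under the sub-derivations of $\bot$, which is immediate from the premise and the formation rule for sums in Figure~\ref{fig:typing-values-simple}. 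For $j=k\in J\cap K$, each $\sem{\ps e_j}$ is an isometry by Lemma~\ref{lem:value-isometry}, so $\sem{\ps e_j}^*\sem{\ps e_j}=\iid$, and the sum collapses to $\bigl(\sum_{i\in J\cap K}\overline{\palpha_i}\pbeta_i\bigr)\iid = 0$ by the hypothesis on the scalars in the orthogonality rule. This concludes the induction.
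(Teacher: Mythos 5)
Your proof is correct, and its case-by-case content (injection compatibility, cancellation of the isometries $\iota_l,\iota_r,\rmsucc$, tensoring with a zero morphism, linearity for sums, and the final expansion of two linear combinations collapsing to $\bigl(\sum_{i\in J\cap K}\overline{\palpha_i}\pbeta_i\bigr)\iid = 0$) matches the paper's calculations almost step for step; your observation that the unitary-application rule cannot occur for expressions is also right. The difference is organizational: you run a single induction on the derivation of $\bot$ and import the isometry of expression denotations as a black box via Lemma~\ref{lem:value-isometry}, whereas the paper folds this lemma into a \emph{simultaneous} induction (the joint proof of Lemma~\ref{lem:orthogonal-semantics-ortho} and Lemma~\ref{lem:term-isometry}) precisely because the two statements are interdependent: proving that $\sem{\Sigma_i(\palpha_i\pdot\ps e_i)}$ is an isometry already requires denotational orthogonality of the components. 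So be aware that, as the paper is set up, the proof of Lemma~\ref{lem:value-isometry} is itself deferred to that mutual induction; citing it here is not circular in substance, because the isometry you need is only for the components $\ps e_j$, which are strict subterms, so the mutual dependence is well founded (e.g.\ by induction on term size), but a fully self-contained write-up should either say this explicitly or prove isometry and orthogonality together as the paper does. One further small point, shared with the paper's own proof: in the diagonal terms $j=k\in J\cap K$ you implicitly use that the two occurrences of $\ps e_j$ are typed in the same context, so that $\sem{\ps e_j}\inv\circ\sem{\ps e_j}=\iid$ is the identity on a common object; this is harmless but worth stating.
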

\begin{proof}
	The proof is done by induction on the derivation of $\bot$. It is
	a subproof of the one for Lemma~\ref{lem:orthogonal-semantics-ortho}.
\end{proof}

This result can also be stated for the predicate ${\rm ONB}$, with the help of
Lemma~\ref{lem:od-imp-bot}, where it is shown that two values in an orthogonal
decomposition are orthogonal.

\begin{lem}
	\label{lem:orthogonal-semantics}
	Given $i \neq j$, two judgements $\pGamma_1 \entail \ps e_1 \colon \pQ$ and
	$\pGamma_2 \entail \ps e_2 \colon \pQ_1$, a set of $S$ that contains $\ps
	e_1$ and $\ps e_2$ and such that $\ODe{\pQ}{S}$, we have $\sem{\ps e_1}\inv
	\circ \sem{\ps e_2} = 0$.
\end{lem}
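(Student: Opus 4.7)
The plan is to chain together two preceding results, with essentially no new work required. First, I would invoke Lemma~\ref{lem:od-imp-bot}, which states that any two distinct elements of an orthogonal decomposition are syntactically orthogonal in the $\bot$ sense. Since $\ps e_1$ and $\ps e_2$ both belong to the set $S$ with $\ODe{\pQ}{S}$, and the indexing $i \neq j$ guarantees they are distinct, this yields $\ps e_1 \bot \ps e_2$ immediately.

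Second, I would apply Lemma~\ref{lem:orthogonal-semantics-ortho-value}, which transports syntactic orthogonality to the semantic orthogonality condition $\sem{\ps e_1}\inv \circ \sem{\ps e_2} = 0$. Composing the two invocations yields exactly the conclusion of the lemma.

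The only thing to be careful about is bookkeeping of the contexts: $\sem{\ps e_2}$ is a morphism $\sem{\pGamma_2} \to \sem{\pQ}$ (I would read the $\pQ_1$ in the statement as a typo for $\pQ$, since otherwise the composition would be ill-typed), and $\sem{\ps e_1}\inv$ is a morphism $\sem{\pQ} \to \sem{\pGamma_1}$, so the composite sits in $\Hom(\sem{\pGamma_2}, \sem{\pGamma_1})$ and the zero morphism there is what we mean by $0$. There is no real obstacle — the substantive work has already been done in the two earlier lemmas, and this statement is just the natural bridge from the notion of orthogonal decomposition to the denotational orthogonality that will later be needed when interpreting unitary pattern-matching.
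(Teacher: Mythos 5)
Your proposal is correct and matches the paper's own (implicit) argument: the paper derives this lemma exactly by combining Lemma~\ref{lem:od-imp-bot} (distinct elements of an orthogonal decomposition are syntactically orthogonal) with Lemma~\ref{lem:orthogonal-semantics-ortho-value} (syntactic orthogonality implies $\sem{\ps e_1}\inv \circ \sem{\ps e_2} = 0$). Your reading of $\pQ_1$ as a typo for $\pQ$ and the bookkeeping of contexts are also consistent with the paper's intent.
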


The predicate ${\rm ONB}$ provides more information on the \emph{basis} $S$;
namely, that it is exhaustive. The next lemma shows that a linear function
that maps each element of the basis to itself is the identity.

\begin{lem}
	\label{lem:towards-unitary}
	Given $\ODe{\pQ}{\set{\ps e_i}}_{i \leq n}$, and $\pGamma_i \entail \ps e_i
	\colon \pQ$ for all $i$, we have 
	\[
		\sum_{i \leq n} \sem{\ps e_i} \circ \sem{\ps e_i}\inv =
		\iid_{\sem{\pQ_1}}.
	\]
\end{lem}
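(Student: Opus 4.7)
The plan is to proceed by induction on the derivation of $\ODe{\pQ}{S}$ given in Definition~\ref{def:od-ext}. The two base cases $\ODe{\pQ}{\set{\ps x}}$ and $\ODe{\pbasic}{\set{\ps *}}$ are immediate, since the single morphism in each sum is the identity by Figure~\ref{fig:very-simple-value-interpretation}. The direct sum case $\ODe{\pQ_1 \pplus \pQ_2}{S \boxplus T}$ reduces to the two induction hypotheses: since $\sem{\inl{\ps e}} = \iota_l \circ \sem{\ps e}$ and $\sem{\inr{\ps e}} = \iota_r \circ \sem{\ps e}$, the sum factors as $\iota_l (\sum_{\ps e \in S} \sem{\ps e} \sem{\ps e}\inv) \iota_l\inv + \iota_r (\sum_{\ps e \in T} \sem{\ps e} \sem{\ps e}\inv) \iota_r\inv$, which by IH equals $\iota_l \iota_l\inv + \iota_r \iota_r\inv$, and this is $\iid_{\sem{\pQ_1 \pplus \pQ_2}}$ by Example~\ref{ex:qua-iota}. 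The case $\ODe{\pNat}{S^{\oplus 0}}$ is analogous, replacing $\iota_l,\iota_r$ with $\ket 0$ and $\rmsucc$, using that these two isometries have orthogonal images with total sum the identity on $\ell^2(\N)$.

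For the tensor case $\ODe{\pQ_1 \ptimes \pQ_2}{S}$, assume WLOG we are in the sub-case where $\ODe{\pQ_2}{\pi_2(S)}$ and $\ODe{\pQ_1}{S^2_{\ps e}}$ for each $\ps e \in \pi_2(S)$. Re-indexing the sum over $S$ as a double sum over pairs $(\ps e_1,\ps e_2)$ with $\ps e_2 \in \pi_2(S)$ and $\ps e_1 \in S^2_{\ps e_2}$, and using bifunctoriality of $\otimes$ together with the compatibility of $\otimes$ with $(-)\inv$, we obtain
\[
\sum_{\ps e_2 \in \pi_2(S)} \Bigl(\sum_{\ps e_1 \in S^2_{\ps e_2}} \sem{\ps e_1}\sem{\ps e_1}\inv\Bigr) \otimes \sem{\ps e_2}\sem{\ps e_2}\inv,
\]
and by two applications of the IH this collapses to $\iid_{\sem{\pQ_1}} \otimes \iid_{\sem{\pQ_2}} = \iid_{\sem{\pQ_1 \ptimes \pQ_2}}$.

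The main obstacle is the unitary-matrix case $\ODe{\pQ}{S^\alpha}$, where a priori no orthogonality is available to cancel off-diagonal terms. Writing $f_{\ps e} \defeq \sum_{\ps e' \in S} \palpha_{\ps e,\ps e'}\sem{\ps e'}$, expanding $\sum_{\ps e \in S} f_{\ps e} f_{\ps e}\inv$ and pulling the sum over $\ps e$ inside gives
\[
\sum_{\ps e',\ps e'' \in S} \Bigl(\sum_{\ps e \in S} \palpha_{\ps e,\ps e'}\overline{\palpha_{\ps e,\ps e''}}\Bigr)\, \sem{\ps e'}\sem{\ps e''}\inv.
\]
The cancellation now comes not from semantic orthogonality of the $\sem{\ps e'}\sem{\ps e''}\inv$ (which does not hold in general), but purely from unitarity of the scalar matrix $(\palpha_{\ps e,\ps e'})$: orthonormality of its columns gives $\sum_{\ps e} \palpha_{\ps e,\ps e'}\overline{\palpha_{\ps e,\ps e''}} = \delta_{\ps e',\ps e''}$, so the double sum reduces to $\sum_{\ps e'\in S} \sem{\ps e'}\sem{\ps e'}\inv$, which equals $\iid_{\sem \pQ}$ by the induction hypothesis on $\ODe{\pQ}{S}$. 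This completes the induction.
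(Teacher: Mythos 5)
Your proof is correct and follows essentially the same route as the paper: induction on the $\ODe{}{}$ derivation, with the injection/successor cases handled via factoring through isometries, the tensor case via re-indexing and bifunctoriality, and the unitary-matrix case resolved by column orthonormality of $(\palpha_{\ps e,\ps e'})$ giving a Kronecker delta before applying the induction hypothesis. No substantive differences from the paper's argument.
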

\begin{proof}
	The proof is done by induction on ${\rm ONB}$.
	\begin{itemize}
		\item $\ODe{\pQ}{\set{\ps x}}$.
			$\sem{\ps x} \circ \sem{\ps x}\inv = \iid_{\sem \pQ} \circ \iid_{\sem \pQ} = \iid_{\sem \pQ}$.
		\item $\ODe{\pbasic}{\set{\ps *}}$. 
			$\sem{\ps *} \circ \sem{\ps *}\inv = \iid_{\sem \pbasic} \circ
			\iid_{\sem \pbasic} = \iid_{\sem \pbasic}$.
		\item $\ODe{\pQ_1 \pplus \pQ_2}{S \boxplus T}$. 
			\begin{align*}
				&\ \sum_{\ps e \in S \boxplus T} \sem{\ps e} \circ \sem{\ps e}\inv & \\
				&= \sum_{\ps s \in S} \sem{\inl{\ps s}} \circ \sem{\inl{\ps s}}\inv
				+
				\sum_{\ps t \in T} \sem{\inr{\ps t}} \circ \sem{\inr{\ps t}}\inv
				&\text{(by definition.)} \\
				&= \iota_l \circ \left( \sum_{\ps s \in S} \sem{\ps s} \circ
				\sem{\ps s}\inv \right) \circ \iota_l\inv
				+
				\iota_r \circ \left( \sum_{\ps t \in T} \sem{\ps t} \circ
				\sem{\ps t}\inv \right) \circ \iota_r\inv
				&\text{(by linearity.)} \\
				&= \iota_l \circ \iid_{\sem{\pQ_1}} \circ \iota_l\inv
				+
				\iota_r \circ \iid_{\sem{\pQ_2}} \circ \iota_r\inv
				&\text{(by IH.)} \\
				&= \iota_l \iota_l\inv + \iota_r \iota_r\inv = \iid_{\sem{\pQ_1 \oplus \pQ_2}}.
				&\text{(Ex.~\ref{ex:qua-iota}.)} 
			\end{align*}
		\item $\ODe{\pQ_1 \otimes \pQ_2}{S}$. Suppose that $\ODe{\pQ_1}{\pi_1(S)}$ and
			$\ODe{\pQ_2}{S^1_{\ps e}}$ for all $\ps e \in \pi_1(S)$. 
			\begin{align*}
				&\ \sum_{(\ps e \ptimes \ps e') \in S} \sem{\ps e \ptimes \ps
				e'} \circ \sem{\ps e \ptimes \ps e'}\inv & \\
				&= \sum_{(\ps e \otimes \ps e') \in S} (\sem{\ps e} \otimes
				\sem{\ps e'}) \circ (\sem{\ps e} \otimes \sem{\ps e'}\inv)
 				&\text{(by definition.)} \\
				&= \sum_{(\ps e \ptimes \ps e') \in S} (\sem{\ps e} \circ \sem{\ps e}\inv) \otimes 
 				(\sem{\ps e'} \circ \sem{\ps e'}\inv )
 				&\text{(by monoidal } \dagger\text{-category.)} \\
				&= \sum_{\ps e \in \pi_1(S)} (\sem{\ps e} \circ \sem{\ps
				e}\inv) \otimes \left( \sum_{\ps e' \in S^1_{\ps b}} \sem{\ps
				e'} \circ \sem{\ps e'}\inv \right) & \\
				&= \sum_{\ps e \in \pi_1(S)} (\sem{\ps e} \circ \sem{\ps
				e}\inv) \otimes \iid_{\sem{\pQ_2}}
 				&\text{(by IH.)} \\
				&= \left( \sum_{\ps e \in \pi_1(S)} \sem{\ps e} \circ \sem{\ps e}\inv \right)
 				\otimes \iid_{\sem{\pQ_2}}
 				&\text{(by linearity.)} \\
 				&= \iid_{\sem{\pQ_1}}
 				\otimes \iid_{\sem{\pQ_2}}
 				= \iid_{\sem{\pQ_1 \ptimes \pQ_2}}. 
 				&\text{(by IH.)} 
			\end{align*}
		\item $\ODe{\pNat}{S^{\oplus 0}}$. 
			\begin{align*}
				&\ \sum_{\ps e \in S^{\oplus 0}} \sem{\ps e} \circ \sem{\ps e} \inv & \\
				&= \sem\pzero \circ \sem\pzero +
				\sum_{\ps s \in S} \sem{\psucc{\ps s}} \circ \sem{\psucc{\ps s}}\inv
 				&\text{(by definition.)} \\
				&= \sem\pzero \circ \sem\pzero\inv +
				\rmsucc \circ \left( \sum_{\ps s \in S} \sem{\ps s} \circ \sem{\ps s} \inv \right)
				\circ \rmsucc \inv
 				&\text{(by linearity.)} \\
				&= \sem\pzero \circ \sem\pzero\inv +
				\rmsucc \circ \iid_{\sem\pNat} \circ \rmsucc \inv
				= \iid_{\sem\pNat}.
 				&\text{(by IH.)} 
			\end{align*}
		\item $\ODe{\pQ}{S^\alpha}$. 
			\begin{align*}
				&\ \sum_{\ps e \in S^\alpha} \sem{\ps e} \circ \sem{\ps e} \inv & \\
				&= \sum_{\ps b \in S}
				\sem{\Sigma_{\ps s' \in S} (\palpha_{\ps s,\ps s'} \pdot \ps s')}
				\circ
				\sem{\Sigma_{\ps s' \in S} (\palpha_{\ps s,\ps s'} \cdot \ps s')} \inv
 				&\text{(by definition.)} \\
				&= \sum_{\ps s \in S}
				\left(\sum_{\ps s' \in S} \palpha_{\ps s,\ps s'} \sem{\ps s'} \right)
				\circ
				\left(\sum_{\ps s'' \in S} \palpha_{\ps s,\ps s''} \sem{\ps s''} \right) \inv
 				&\text{(by definition.)} \\
				&= \sum_{\ps s \in S}
				\sum_{\ps s',\ps s'' \in S}
				\palpha_{\ps s,\ps s'} \overline{\palpha_{\ps s,\ps s''}}
				\sem{\ps s'} \circ \sem{\ps s''} \inv
 				&\text{(by linearity.)} \\
				&= \sum_{\ps s',\ps s'' \in S} \left( \sum_{\ps s \in S}
				\palpha_{\ps s,\ps s'} \overline{\palpha_{\ps s,\ps s''}}
				\right) \sem{\ps s'} \circ \sem{\ps s''} \inv & \\
				&= \sum_{\ps s',\ps s'' \in S}
				\delta_{\ps s' = \ps s''}
				\sem{\ps s'} \circ \sem{\ps s''} \inv 
				= \sum_{\ps s' \in S}
				\sem{\ps s'} \circ \sem{\ps s'} \inv 
				&\text{(by unitarity.)} \\
				&= \iid_{\sem \pQ}.
 				&\text{(by IH.)}
			\end{align*}
	\end{itemize}
\end{proof}

One final development on the interpretation of values is the link
with substitutions, detailed in the next proposition.

\begin{prop}
	\label{prop:substitution-interpretation}
	Given a well-formed term $\pGamma \entail \ps t \colon \pQ$ and for all $(\ps x_i
	\colon \pQ_i) \in \pGamma$, a well-formed expression $~\entail \ps e_i
	\colon \pQ_i$; if $\sigma = \set{\ps x_i \mapsto \ps e_i}_i$, then:
	\[
		\sem{\entail \sigma(\ps t) \colon \pQ} = \sem{\pGamma \entail t \colon
		\pQ} \circ \left(\bigotimes_i \sem{\entail \ps e_i \colon
		\pQ_i}\right).
	\]
	We then define $\sem\sigma = \bigotimes_i \sem{\entail \ps e_i \colon \pQ_i}$.
\end{prop}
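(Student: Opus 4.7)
The plan is to proceed by structural induction on the derivation of $\pGamma \entail \ps t \colon \pQ$, following the shape of the typing rules in Figure~\ref{fig:typterms}. Each clause of the definition of $\sigma(\ps t)$ commutes with the corresponding semantic constructor, so once the right splitting of $\sigma$ is identified in each case, the induction hypothesis closes the proof. Define $\sem\sigma \defeq \bigotimes_i \sem{\entail \ps e_i \colon \pQ_i}$ once and for all; the goal then reduces to showing $\sem{\sigma(\ps t)} = \sem{\ps t} \circ \sem\sigma$ for each form of $\ps t$.

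The base cases are immediate. For $\entail \ps * \colon \pbasic$ and $\entail \pzero \colon \pNat$, the context is empty so $\sem\sigma$ is the identity on $\C$ and the equality is trivial. For $\ps x \colon \pQ \entail \ps x \colon \pQ$, we have $\sigma = \{\ps x \mapsto \ps e\}$, $\sigma(\ps x) = \ps e$, $\sem{\ps x} = \id_{\sem \pQ}$, and $\sem\sigma = \sem{\ps e}$, so both sides equal $\sem{\ps e}$. For unary constructors $\inl{-}, \inr{-}, \psucc{-}$, substitution commutes with the constructor and the semantics is postcomposition by a fixed isometry ($\iota_l$, $\iota_r$, or $\rmsucc$), so the induction hypothesis applied to the subterm gives the claim directly.

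The more delicate case is the pair rule, $\pGamma_1, \pGamma_2 \entail \ps t_1 \ptimes \ps t_2 \colon \pQ_1 \ptimes \pQ_2$. Linearity of the context forces the split $\sigma = \sigma_1 \cup \sigma_2$ with $\sigma_k$ supported on the variables of $\pGamma_k$, whence $\sigma(\ps t_1 \ptimes \ps t_2) = \sigma_1(\ps t_1) \ptimes \sigma_2(\ps t_2)$ and $\sem\sigma = \sem{\sigma_1} \otimes \sem{\sigma_2}$ up to reindexing. Applying the induction hypothesis to each subterm and then using bifunctoriality of $\otimes$, namely $(\sem{\ps t_1} \circ \sem{\sigma_1}) \otimes (\sem{\ps t_2} \circ \sem{\sigma_2}) = (\sem{\ps t_1} \otimes \sem{\ps t_2}) \circ (\sem{\sigma_1} \otimes \sem{\sigma_2})$, yields the desired equality. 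The linear combination case $\Sigma_i (\palpha_i \pdot \ps t_i)$ and the unitary application case $\isoterm~\ps t$ are handled by linearity of composition on the left and by the definition $\sem{\isoterm~\ps t} = \sem{\isoterm} \circ \sem{\ps t}$, respectively: in both cases the induction hypothesis applies to the subterms and the constant outer morphism factors out.

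The main technical obstacle, therefore, is the pair case and specifically making the split $\sigma = \sigma_1 \cup \sigma_2$ rigorous: this relies on the strict linearity of contexts (so the supports of $\sigma_1$ and $\sigma_2$ are disjoint and together exhaust $\pGamma$) and on a permutation of tensor factors which is implicit in the notation $\bigotimes_i \sem{\ps e_i}$. Once this bookkeeping is fixed, no additional mathematical content is needed beyond functoriality of $\otimes$ and bilinearity of composition; in particular, the well-formedness of $\sigma(\ps t)$ itself has already been established in Lemma~\ref{lem:qua-subst-type}, so we do not need to re-verify orthogonality or normalisation conditions along the way.
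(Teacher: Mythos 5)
Your proposal is correct and follows exactly the route the paper takes: the paper dispatches this proposition with "straightforward by induction on the formation rules for $\ps t$," and your case analysis is precisely that induction spelled out, with the tensor-factor permutation bookkeeping in the pair case handled the same way the paper's subsequent remark does (working up to symmetric monoidal coherence). No gaps; the appeal to Lemma~\ref{lem:qua-subst-type} for well-formedness of $\sigma(\ps t)$ and to bifunctoriality of $\otimes$ is all that is needed.
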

\begin{proof}
	The proof is straightforward by induction on the formation rules for $\ps t$.
\end{proof}

\begin{rem}
	The definition of the interpretation of a substitution above is somewhat
	informal. It would require a lot of care and unnecessary details to make
	the denotation of $\sigma$ fit the denotation of a particular context
	$\pGamma$. Since we are working in symmetric monoidal categories, those
	details will be overlooked when working with substitutions. We assume that we
	work up to permutations, and that when an interpretation of a substitution is
	involved, it is with the right permutation.
\end{rem}

Substitutions $\sigma$ emerge from the matching of two basis values, thus
we can prove that the interpretation of the matching gives the interpretation
of the substitution, as stated in the next lemma.

\begin{lem}
	\label{lem:matching-semantics}
	Given two well-formed basis values $\pGamma \entail \ps b \colon \pQ$ and
	$~\entail \ps b' \colon \pQ$, and a substitution $\sigma$, if
	$\match{\sigma}{\ps b}{\ps b'}$ then $\sem{\ps b}\inv \circ \sem{\ps b'} =
	\sem\sigma$.
\end{lem}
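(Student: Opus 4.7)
The plan is to proceed by induction on the derivation of $\match{\sigma}{\ps b}{\ps b'}$, which has one case for each basis-value constructor. In every case, the desired identity reduces to computing $\sem{\ps b}\inv \circ \sem{\ps b'}$ using the inductive clauses of Figure~\ref{fig:very-simple-value-interpretation} and exploiting the fact that the various ``shape constructors'' ($\iota_l,\iota_r,\rmsucc$, and the tensor of isometries) are themselves isometries, so that their adjoints cancel under composition.

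The base cases are almost immediate. For $\match{\sigma}{\ps *}{\ps *}$ both sides are $\iid_\C$, and $\sem{\sigma}$ is the empty tensor, i.e.\ $\iid_\C$. For $\match{\sigma}{\ps x}{\ps b'}$ with $\sigma = \{\ps x \mapsto \ps b'\}$, we have $\sem{\ps x} = \iid$, so $\sem{\ps x}\inv \circ \sem{\ps b'} = \sem{\ps b'} = \sem{\sigma}$. The $\pzero$ case is handled analogously, using $\sem{\pzero} = \ket 0$ so that $\bra 0 \ket 0 = 1$.

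For the inductive cases, the computation is driven by the isometry cancellations. In the $\inl$ case, $\sem{\inl{\ps b}}\inv \circ \sem{\inl{\ps b'}} = \sem{\ps b}\inv \circ \iota_l\inv \circ \iota_l \circ \sem{\ps b'} = \sem{\ps b}\inv \circ \sem{\ps b'}$, which is $\sem{\sigma}$ by the induction hypothesis; the $\inr$ and $\psucc$ cases are symmetric, using that $\iota_r$ and $\rmsucc$ are isometries. In the product case $\match{\sigma}{\ps b_1 \ptimes \ps b_2}{\ps b'_1 \ptimes \ps b'_2}$ with $\sigma = \sigma_1 \cup \sigma_2$ and disjoint supports, we apply the tensor clause and the dagger--monoidal compatibility (already used in Lemma~\ref{lem:towards-unitary}):
\begin{equation*}
(\sem{\ps b_1} \otimes \sem{\ps b_2})\inv \circ (\sem{\ps b'_1} \otimes \sem{\ps b'_2}) = (\sem{\ps b_1}\inv \circ \sem{\ps b'_1}) \otimes (\sem{\ps b_2}\inv \circ \sem{\ps b'_2}) = \sem{\sigma_1} \otimes \sem{\sigma_2}.
\end{equation*}
The disjointness of supports guarantees that this tensor is exactly $\sem{\sigma}$, modulo the coherent permutations alluded to in the remark following Proposition~\ref{prop:substitution-interpretation}.

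The only real subtlety is in the tensor case, where one must be careful that the ``on the nose'' equality $\sem{\sigma_1}\otimes\sem{\sigma_2} = \sem{\sigma_1\cup\sigma_2}$ depends on matching the order of variables in $\pGamma$ with the order of components in the tensor denotation of $\ps b$; as in the remark after Proposition~\ref{prop:substitution-interpretation}, this is handled silently by the symmetric monoidal structure. All other cases are routine isometry-cancellation computations, so no further obstacle arises.
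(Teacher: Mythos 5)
Your proof is correct and takes essentially the same route as the paper, which simply states that the result follows "by induction on $\match{\sigma}{\ps b}{\ps b'}$"; your case analysis (isometry cancellation of $\iota_l,\iota_r,\rmsucc$, the dagger-monoidal split in the tensor case, and the up-to-permutation identification $\sem{\sigma_1}\otimes\sem{\sigma_2}=\sem{\sigma}$) is exactly the intended elaboration of that induction.
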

\begin{proof}
	The proof is straightforward by induction on $\match{\sigma}{\ps b}{\ps b'}$.
\end{proof}

\paragraph{Unitaries.}
The type of unitaries are given as $\isotypeonetwo$, and they are first interpreted
as morphisms $\sem{\pQ_1} \to \sem{\pQ_2}$ in $\Contr$, before showing that their
interpretation actually lies in $\cat{Uni}$. We also show that the ${\rm ONB}$
conditions ensure that the denotation of (syntactic) unitaries is not only a
contractive map, but a unitary between Hilbert spaces. Working with
contractions is necessary to use the notion of compatibility: given a unitary
$\unibasique \colon \isotypeonetwo$, we provide an interpretation to each clause
$b_i \iso \ps e_i$ as a contraction $\sem{\pQ_1} \to \sem{\pQ_2}$, and prove that all
the contractions thus obtained are compatible, and can be summed. Unitary
judgments are of the form $\entailiso u \colon \isotypeonetwo$, and their
semantics is given by a morphism in $\cat{Uni}$:
\[ \sem{\entailiso u \colon \isotypeonetwo} \colon
\cat{Uni}(\sem{\pQ_1},\sem{\pQ_2}). \]

Given $\entailiso \unibasique \colon \isotypeonetwo$, the interpretation of a clause
$b_i \iso \ps e_i$ is the following contraction: $\sem{\pGamma_i \entail \ps e_i \colon
B} \circ \sem{\pGamma_i \entail b_i \colon \pQ_1}\inv$. It show be read as follows:
if the input of type $\pQ_1$ matches with $b_i$, it provides a substitution with 
$\pGamma_i$, that is applied to $\ps e_i$. This is better understood with a
diagram:
\[
	\sem{\pQ_1}
	\stackrel{\sem{\pGamma_i \entail b_i \colon \pQ_1}\inv}{\xrightarrow{\hspace*{2.5cm}}}
	\sem{\pGamma_i}
	\stackrel{\sem{\pGamma_i \entail \ps e_i \colon \pQ_2}}{\xrightarrow{\hspace*{2.5cm}}}
	\sem{\pQ_2}
\]

Lemma~\ref{lem:orthogonal-semantics} ensures that the semantics of clauses in a
well-formed unitary are pairwise compatible. The interpretation of a unitary
abstraction is then: \[\sem{\entailiso \unibasique \colon \isotypeonetwo} =
\sum_{i\leq n} \sem{\pGamma_i \entail \ps e_i \colon \pQ_2} \circ \sem{\pGamma_i \entail
b_i \colon \pQ_1}\inv.\]

This new contraction is the \emph{union}, or the \emph{join}, of the
interpretations of all the clauses. One can only take the \emph{join} of
compatible morphisms. It is left to prove that it is well-defined, and then
that it is a proper unitary operation.

\begin{cor}
	\label{cor:iso-sem-defined}
	Given $~\entailiso \unibasique \colon \isotypeonetwo$,
	its interpretation
	\[ \sem{\entailiso \unibasique \colon \isotypeonetwo} \]
	is a well-defined morphism in $\Contr$.
\end{cor}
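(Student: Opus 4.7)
The plan is to show that the defined sum
\[ \sem{\entailiso \unibasique \colon \isotypeonetwo} \;=\; \sum_{i\leq n} \sem{\pGamma_i \entail \ps e_i \colon \pQ_2} \circ \sem{\pGamma_i \entail \ps b_i \colon \pQ_1}\inv \]
is a well-defined morphism in $\Contr$ by verifying that each summand is a contraction and that the summands are pairwise compatible, so that Lemma~\ref{lem:compati} applies.

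First, I would observe that each summand $f_i \defeq \sem{\ps e_i} \circ \sem{\ps b_i}\inv$ is a contraction: by Lemma~\ref{lem:value-isometry} both $\sem{\pGamma_i \entail \ps b_i \colon \pQ_1}$ and $\sem{\pGamma_i \entail \ps e_i \colon \pQ_2}$ are isometries, so $\sem{\ps b_i}\inv$ is a coisometry, and the composite of a coisometry with an isometry is a contraction (as norms can only decrease). Hence $f_i \in \Contr(\sem{\pQ_1}, \sem{\pQ_2})$ for every $i$.

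Next, I would verify pairwise compatibility using the algebraic criterion of Lemma~\ref{lem:algebra-compati}: it suffices to show $f_i\inv f_j = 0$ and $f_i f_j\inv = 0$ for all $i \neq j$. Unfolding, $f_i\inv f_j = \sem{\ps b_i} \circ \sem{\ps e_i}\inv \circ \sem{\ps e_j} \circ \sem{\ps b_j}\inv$ and $f_i f_j\inv = \sem{\ps e_i} \circ \sem{\ps b_i}\inv \circ \sem{\ps b_j} \circ \sem{\ps e_j}\inv$. The formation rule for the unitary abstraction in Figure~\ref{fig:typisos} gives us $\OD{\pQ_1}{\{\ps b_1, \dots, \ps b_n\}}$ and $\ODe{\pQ_2}{\{\ps e_1, \dots, \ps e_n\}}$. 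Applying Lemma~\ref{lem:orthogonal-semantics} to the first decomposition yields $\sem{\ps b_i}\inv \circ \sem{\ps b_j} = 0$ for $i \neq j$, and applying it to the second gives $\sem{\ps e_i}\inv \circ \sem{\ps e_j} = 0$. Plugging these into the two expressions above shows $f_i\inv f_j = 0$ and $f_i f_j\inv = 0$, as required.

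Finally, I would conclude by iterated application of Lemma~\ref{lem:compati}: since the family $(f_i)_{i\le n}$ consists of pairwise compatible contractions, the sum $\sum_{i\le n} f_i$ is itself a contraction from $\sem{\pQ_1}$ to $\sem{\pQ_2}$, and therefore the interpretation is a well-defined morphism in $\Contr$. The only mild subtlety is ensuring the compatibility lemma extends from pairs to finite sums, which is immediate by induction since the sum of contractions compatible with another morphism remains compatible with it (the relevant kernels and images behave additively under joins). Establishing that this morphism is in fact unitary, not merely contractive, will be carried out later once this corollary is available.
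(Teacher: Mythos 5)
Your proof is correct and takes essentially the same route as the paper's: the formation rule supplies the two $\mathrm{ONB}$ predicates, Lemma~\ref{lem:orthogonal-semantics} gives $\sem{\ps b_i}\inv \circ \sem{\ps b_j} = 0$ and $\sem{\ps e_i}\inv \circ \sem{\ps e_j} = 0$ for $i \neq j$, Lemma~\ref{lem:algebra-compati} converts these into pairwise compatibility of the clause interpretations $\sem{\ps e_i} \circ \sem{\ps b_i}\inv$, and Lemma~\ref{lem:compati} then yields that the sum is a contraction. Your extra explicit checks (each summand being a contraction as an isometry precomposed with a coisometry, and the extension of pairwise compatibility to finite sums via the algebraic criterion) merely spell out points the paper leaves implicit.
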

\begin{proof}
	Given $~\entailiso \unibasique \colon \isotypeonetwo$, we know that
	$\ODe{\pQ_2}{(\set{\ps e_i}_{i\leq n})}$ and $\OD{\pQ_1}{(\set{\ps
	b_i}_{i\leq n})}$ hold, and for all $i \leq n$, $\pGamma_i \entail \ps b_i
	\colon \pQ_1$ and $\pGamma_i \entail \ps e_i \colon \pQ_2$.

	Since $\OD{\pQ_1}{(\set{\ps b_i}_{i\leq n})}$ holds,
	Lemma~\ref{lem:orthogonal-semantics} ensures that for all $i \neq j\leq k$,
	$\sem{\ps b_i}\inv \circ \sem{\ps b_j} =
	0_{\sem{\pGamma_j},\sem{\pGamma_i}}$. The same lemma with
	$\ODe{\pQ_2}{(\set{\ps e_i}_{i\leq n})}$ ensures that for all $i\neq j \leq
	n$, $\sem{\ps e_i}\inv \circ \sem{\ps e_j} =
	0_{\sem{\pGamma_j},\sem{\pGamma_i}}$. This proves that, for all $i \neq j
	\leq n$, $(\sem{\ps e_i} \circ \sem{\ps b_i}\inv)\inv \circ \sem{\ps e_j}
	\circ \sem{\ps b_j}\inv = 0_{\sem{\pGamma_j},\sem{\pGamma_i}}$ and
	$\sem{\ps v_i} \circ \sem{\ps b_i}\inv \circ (\sem{\ps e_j} \circ \sem{\ps
	b_j}\inv)\inv = 0_{\sem{\pGamma_j},\sem{\pGamma_i}}$. This proves that for
	all $i \neq j \leq n$, $\sem{\ps e_i} \circ \sem{\ps b_i}\inv$ and
	$\sem{\ps e_j} \circ \sem{\ps b_j}\inv$ are compatible, thanks to
	Lemma~\ref{lem:algebra-compati}. Then, Lemma~\ref{lem:compati} ensures that
	$\sum_{i\leq n} \sem{\ps e_i} \circ \sem{\ps b_i}\inv$ is a contraction.
\end{proof}

\begin{thm}
	Given $~\entailiso \unibasique \colon \isotypeonetwo$, its interpretation 
	\[ \sem{\entailiso \unibasique \colon \isotypeonetwo} \] is unitary.
\end{thm}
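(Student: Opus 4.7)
The plan is to compute $f f^*$ and $f^* f$ for $f \defeq \sem{\entailiso \unibasique \colon \isotypeonetwo} = \sum_{i \leq n} \sem{\ps e_i} \circ \sem{\ps b_i}\inv$ and show that both equal the identity, which is the definition of a unitary in $\cat{Hilb}$. Corollary~\ref{cor:iso-sem-defined} already guarantees that $f$ is a well-defined contraction, so only the two identity calculations remain.

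First, since dagger is contravariant and distributes over sums, $f^* = \sum_{i \leq n} \sem{\ps b_i} \circ \sem{\ps e_i}\inv$. I would then expand
\[
f^* \circ f \;=\; \sum_{i,j \leq n} \sem{\ps b_i} \circ \sem{\ps e_i}\inv \circ \sem{\ps e_j} \circ \sem{\ps b_j}\inv.
\]
For the cross terms ($i \neq j$), Lemma~\ref{lem:od-imp-bot} applied to $\ODe{\pQ_2}{\{\ps e_i\}}$ gives $\ps e_i \bot \ps e_j$, and then Lemma~\ref{lem:orthogonal-semantics-ortho-value} yields $\sem{\ps e_i}\inv \circ \sem{\ps e_j} = 0$, killing the cross terms. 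For the diagonal terms ($i = j$), Lemma~\ref{lem:value-isometry} tells us that $\sem{\ps e_i}$ is an isometry, so $\sem{\ps e_i}\inv \circ \sem{\ps e_i} = \iid_{\sem{\pGamma_i}}$. Thus $f^* \circ f = \sum_{i \leq n} \sem{\ps b_i} \circ \sem{\ps b_i}\inv$, and Lemma~\ref{lem:towards-unitary} applied to $\OD{\pQ_1}{\{\ps b_i\}_{i \leq n}}$ gives this sum equal to $\iid_{\sem{\pQ_1}}$.

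The computation of $f \circ f^*$ is symmetric: expand
\[
f \circ f^* \;=\; \sum_{i,j \leq n} \sem{\ps e_i} \circ \sem{\ps b_i}\inv \circ \sem{\ps b_j} \circ \sem{\ps e_j}\inv,
\]
use Lemmas~\ref{lem:od-imp-bot} and \ref{lem:orthogonal-semantics-ortho-value} on the basis terms $\ps b_i$ (which are pairwise orthogonal by $\OD{\pQ_1}{\{\ps b_i\}}$) to eliminate cross terms, use that each $\sem{\ps b_i}$ is an isometry to reduce the diagonal, and then invoke Lemma~\ref{lem:towards-unitary} once more, this time on $\ODe{\pQ_2}{\{\ps e_i\}_{i \leq n}}$, to obtain $\iid_{\sem{\pQ_2}}$.

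The main potential obstacle is making sure the orthogonality argument for the diagonal reduction really holds on the nose: we need $\sem{\ps e_i}\inv \circ \sem{\ps e_i}$ to be exactly $\iid_{\sem{\pGamma_i}}$ and not merely an isometry from a permutation of the context, and similarly for $\ps b_i$. This is a matter of the symmetric monoidal bookkeeping flagged in the remark after Proposition~\ref{prop:substitution-interpretation}; working up to the canonical permutation isomorphisms, everything fits. Once that is in place, combining the two identities $f^* f = \iid_{\sem{\pQ_1}}$ and $f f^* = \iid_{\sem{\pQ_2}}$ shows that $f$ is an isomorphism with $f^{-1} = f^*$, i.e., a unitary.
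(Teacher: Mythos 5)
Your proposal is correct and follows essentially the same route as the paper's proof: expand $\sem{u}\inv \circ \sem{u}$ (resp. $\sem{u} \circ \sem{u}\inv$) bilinearly, kill the cross terms via semantic orthogonality of the $\ps e_i$ (resp. $\ps b_i$), reduce the diagonal using that each clause component is an isometry, and conclude with Lemma~\ref{lem:towards-unitary}; your use of Lemma~\ref{lem:od-imp-bot} together with Lemma~\ref{lem:orthogonal-semantics-ortho-value} is just the unpacked form of Lemma~\ref{lem:orthogonal-semantics} that the paper cites. The paper only spells out the $\sem{u}\inv \circ \sem{u}$ direction and declares the other ``similar,'' which is exactly the symmetric computation you wrote out.
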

\begin{proof}
	Given $~\entailiso \unibasique \colon \isotypeonetwo$, we know that
	$\ODe{\pQ_2}{(\set{\ps e_i}_{i\leq n})}$ and $\OD{\pQ_1}{(\set{\ps
	b_i}_{i\leq n})}$ hold, and for all $i \leq n$, $\pGamma_i \entail \ps b_i
	\colon \pQ_1$ and $\pGamma_i \entail \ps e_i \colon \pQ_2$.

	First step, prove that $\sem u \inv \circ \sem u = \iid_{\sem{\pQ_1}}$,
	with $u = \unibasique$.
	\begin{align*}
		&\ \sem u \inv \circ \sem u & \\
		&= \left( \sum_{i\leq n} \sem{\ps e_i} \circ \sem{\ps b_i}\inv \right)\inv \circ
		\sum_{j\leq n} \sem{\ps e_j} \circ \sem{\ps b_j}\inv
		& \text{(by definition.)} \\
		&= \sum_{i\leq n} (\sem{\ps e_i} \circ \sem{\ps b_i}\inv)\inv \circ
		\sum_{j\leq n} \sem{\ps e_j} \circ \sem{\ps b_j}\inv
		& \text{(dagger distributes over sum.)} \\
		&= \sum_{i\leq n} \sem{\ps b_i} \circ \sem{\ps e_i}\inv \circ
		\sum_{j\leq n} \sem{\ps e_j} \circ \sem{\ps b_j}\inv
		& \text{(dagger is contravariant.)} \\
		&= \sum_{i,j \leq n} \sem{\ps b_i} \circ \sem{\ps e_i}\inv \circ \sem{\ps e_j} \circ
		\sem{\ps b_j}\inv 
		&\text{(linearity.)} \\
		&= \sum_{i \leq n} \sem{\ps b_i} \circ \sem{\ps e_i}\inv \circ \sem{\ps e_i} \circ
		\sem{\ps b_i}\inv 
		&\text{(lemma~\ref{lem:orthogonal-semantics}.)} \\
		&= \sum_{i \leq n} \sem{\ps b_i} \circ
		\sem{\ps b_i}\inv 
		&\text{(lemma~\ref{lem:value-isometry}.)} \\
		&= \iid_{\sem{\pQ_1}}
		&\text{(lemma~\ref{lem:towards-unitary}.)}
	\end{align*}
	The other direction $\sem u \circ \sem u \inv = \iid_{\sem{\pQ_2}}$ is similar.
\end{proof}

Note that we have only proven so far that unitary abstractions have a sound
denotational semantics in $\cat{Uni}$. The interpretation of operations on unitaries
is given in Figure~\ref{fig:unitary-semantics}. It is explained in
\secref{sec:qu-simple-maths} why this interpretation is in $\cat{Uni}$, although
this does not come as a surprise.

\begin{figure}
	\begin{align*}
		\sem{\entailiso u \colon \pQ_1 \iso \pQ_2} &\colon \cat{Uni}(\sem{\pQ_1}, \sem{\pQ_2}) \\
		\sem{\unibreduit} &= \sum_{i \in I} \sem{\ps e_i} \circ \sem{\ps b_i}\inv \\
		\sem{u_2 \circ u_1} &= \sem{u_2} \circ \sem{u_1} \\
		\sem{u_1 \otimes u_2} &= \sem{u_1} \otimes \sem{u_2} \\
		\sem{u_1 \oplus u_2} &= \sem{u_1} \oplus \sem{u_2} \\
		\sem{u\inv} &= \sem{u}\inv \\
		\sem{\uctrl u} &= {\rm ctrl}_{\sem{\pQ_1}} (\sem u )
	\end{align*}
	\caption{Interpretation of unitaries in $\cat{Uni}$.}
	\label{fig:unitary-semantics}
\end{figure}

Moreover, any canonical symmetric monoidal isomorphism can be represented by
the unitaries of our language. The canonical symmetric monoidal isomorphisms
are simply those that can be defined using the structure of a symmetric
monoidal category only, \emph{i.e.}~isomorphisms that are given by: left and
right unitors, symmetric swaps, associators, closed under composition and
tensor products. We use this result later for quantum configurations.

\begin{prop}
    \label{prop:monoidal-unitary}
    Let $\pQ_1$ and $\pQ_2$ be two pure types, and let
    $f \colon \sem{\pQ_1} \to \sem{\pQ_2}$ be a canonical symmetric monoidal isomorphism. Then, there exists
    a unitary $\udash u \colon U(\pQ_1,\pQ_2)$ such that $\sem u = f$.
\end{prop}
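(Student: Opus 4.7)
The plan is to proceed by induction on the construction of $f$ as a canonical symmetric monoidal isomorphism. Since every such $f$ is built from a finite set of generators (identities, left/right unitors, symmetries, associators) closed under composition $\circ$ and tensor product $\otimes$, it suffices to exhibit, for each generator, a syntactic unitary whose denotation matches the generator, and then to use the unitary constructors $u_2 \circ u_1$ and $u_1 \otimes u_2$ from Figure~\ref{fig:typisos} to handle the inductive cases.

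First I would handle the generators by giving explicit pattern-matching unitaries. For the identity on any type $\pQ$, take $\mathrm{id}_\pQ \defeq \{ \ps x \iso \ps x \} \colon U(\pQ,\pQ)$; this is well-formed because $\OD{\pQ}{\{\ps x\}}$ and $\ODe{\pQ}{\{\ps x\}}$ hold, and by Figure~\ref{fig:unitary-semantics} its denotation is $\sem{\ps x} \circ \sem{\ps x}\inv = \iid_{\sem \pQ}$. For the left unitor I take $\{ \ps * \ptimes \ps x \iso \ps x \} \colon U(\pbasic \ptimes \pQ, \pQ)$ and dually for the right unitor. For the symmetry $\pQ_1 \ptimes \pQ_2 \to \pQ_2 \ptimes \pQ_1$ I take $\{ \ps x \ptimes \ps y \iso \ps y \ptimes \ps x \}$, and for the associator I take $\{ (\ps x \ptimes \ps y) \ptimes \ps z \iso \ps x \ptimes (\ps y \ptimes \ps z) \}$. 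In each case the left-hand patterns form an ONB of the source type by the tensor product rule of Definition~\ref{def:OD} applied to singleton ${\rm ONB}$s of variables, and the right-hand sides are already values whose ${\rm ONB}^e$ predicate holds by the symmetric application of the same rule. A short calculation using Figure~\ref{fig:very-simple-value-interpretation} shows that the denotation of each such abstraction equals the corresponding structural iso in $\cat{Hilb}$ (e.g.\ the symmetry case gives the swap $\sem{\pQ_1} \otimes \sem{\pQ_2} \to \sem{\pQ_2} \otimes \sem{\pQ_1}$ because the only summand in $\sem{\{\ps x \ptimes \ps y \iso \ps y \ptimes \ps x\}}$ is exactly that swap).

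For the inductive steps, if $f = f_2 \circ f_1$ and inductively we have $\udash u_1 \colon U(\pQ_1,\pQ_2)$ and $\udash u_2 \colon U(\pQ_2,\pQ_3)$ with $\sem{u_i} = f_i$, then $\udash u_2 \circ u_1 \colon U(\pQ_1,\pQ_3)$ and by Figure~\ref{fig:unitary-semantics} $\sem{u_2 \circ u_1} = \sem{u_2} \circ \sem{u_1} = f$. Likewise, if $f = f_1 \otimes f_2$ with $\udash u_i \colon U(\pQ_i, \pQ_i')$ and $\sem{u_i} = f_i$, then $u_1 \otimes u_2 \colon U(\pQ_1 \ptimes \pQ_2, \pQ_1' \ptimes \pQ_2')$ satisfies $\sem{u_1 \otimes u_2} = \sem{u_1} \otimes \sem{u_2} = f$.

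The main obstacle I anticipate is essentially bookkeeping rather than a genuine mathematical difficulty: verifying that each pattern on the left is a legal basis pattern in the sense of Definition~\ref{def:OD} and that each right-hand side satisfies the expression formation rules of Figure~\ref{fig:typing-values-simple}. In particular, one must be a bit careful to track variable contexts when the generators are instantiated at compound types and to use the correct tensor projection cases of Definition~\ref{def:OD}; however, because every pattern we use is linear in distinct variables, the ${\rm ONB}$ derivations reduce directly to the singleton variable case $\OD{\pQ}{\{\ps x\}}$, so no orthogonality side conditions beyond those already handled by the structural rules appear.
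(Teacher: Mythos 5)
Your proposal is correct and follows what is essentially the only natural route, and the one the paper takes: exhibit pattern-matching unitaries for the structural generators (identity, unitors, symmetry, associator), check the $\mathrm{ONB}$/$\mathrm{ONB^e}$ side conditions via the singleton-variable and tensor rules, and close under the syntactic constructors $\circ$ and $\otimes$ using Figure~\ref{fig:unitary-semantics}. The only small omission is that canonical isomorphisms also include the \emph{inverses} of the unitors and associators, but these are immediately covered either by the $u\inv$ constructor (whose denotation is $\sem u \inv = \sem u ^{-1}$) or by writing the reversed clauses such as $\{\ps x \iso \ps * \ptimes \ps x\}$.
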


\paragraph{Terms.}
One remaining term is the application of a unitary. 
\[\sem{\pGamma \entail u~\ps t \colon \pQ_2} = \sem{\entailiso u \colon
\isotypeonetwo} \circ \sem{\pGamma \entail \ps t \colon \pQ_1}.\] 
The interpretation of all term judgements is found in
Figure~\ref{fig:simple-term-interpretation}. 

\begin{figure}
	\begin{align*}
		\sem{\pGamma \entail \ps t \colon \pQ} &\colon \Iso(\sem\pGamma,\sem{\pQ}) \\
		\sem{\entail \ps * \colon \pbasic} &= \iid_{\sem \pbasic} \\
		\sem{\ps x \colon \pQ \entail \ps x \colon \pQ} &= \iid_{\sem{\pQ}} \\
		\sem{\pGamma \entail \inl {\ps t} \colon \pQ_1 \pplus \pQ_2} &=
		\iota_l^{\sem{\pQ_1}, \sem{\pQ_2}} \circ \sem{\pGamma \entail {\ps
		t}\colon \pQ_1} \\
		\sem{\pGamma \entail \inr {\ps t} \colon \pQ_1\oplus \pQ_2} &=
		\iota_r^{\sem{\pQ_1}, \sem{\pQ_2}} \circ \sem{\pGamma \entail {\ps t}
		\colon \pQ_1} \\
		\sem{\pGamma_1, \pGamma_2 \entail \pair{\ps t}{\ps t'}\colon \pQ_1 \ptimes
		\pQ_2} &= \sem{\pGamma_1 \entail {\ps t} \colon \pQ_1} \otimes
		\sem{\pGamma_2 \entail \ps t' \colon \pQ_2} \\
		\sem{\entail \pzero \colon \pNat} &= \ket 0 \\
		\sem{\pGamma \entail \psucc {\ps t} \colon \pNat} &= \rmsucc \circ \sem{\pGamma
		\entail {\ps t} \colon \pNat} \\
		\sem{\pGamma \entail \Sigma_{i\leq k} (\palpha_i \pdot {\ps t}_i) \colon \pQ} &=
		\sum_{i\leq k} \palpha_i \sem{\pGamma \entail {\ps t_i} \colon \pQ} \\
		\sem{\pGamma \entail u~\ps t \colon \pQ_2} &= \sem{\entailiso u \colon
		\isotypeonetwo} \circ \sem{\pGamma \entail \ps t \colon \pQ_1}
	\end{align*}
	\caption{Interpretation of term judgements as morphisms in $\Iso$.}
	\label{fig:simple-term-interpretation}
\end{figure}

We can already show that this interpretation of terms is sound with the sketch
of operational semantics given in the previous section.

\begin{prop}[Operational Soundness]
	\label{prop:qua-op-soundness}
	Given a well-formed unitary abstraction $~\entailiso \unibasique \colon
	\isotypeonetwo$ and a well-formed basis value $~\entail \ps b' \colon
	\pQ_1$, if $\match{\sigma}{\ps b_i}{\ps b'}$, then 
	\[
		\sem{\entail \unibasique~\ps b' \colon \pQ_2} 
		= \sem{\entail \sigma(\ps e_i) \colon \pQ_2}.
	\]
\end{prop}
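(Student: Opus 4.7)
The plan is to unfold both sides of the target equation using the denotational rules from Figures~\ref{fig:unitary-semantics} and~\ref{fig:simple-term-interpretation}, and then show that all summands from the unitary pattern-matching vanish except the one corresponding to the matching clause $\ps b_i \iso \ps e_i$. Concretely, by the interpretation of application and of unitary abstractions, I would first rewrite
\[
    \sem{\entail \unibasique~\ps b' \colon \pQ_2}
    = \left(\sum_{j \leq n} \sem{\ps e_j} \circ \sem{\ps b_j}\inv\right)
      \circ \sem{\ps b'}
    = \sum_{j \leq n} \sem{\ps e_j} \circ \left(\sem{\ps b_j}\inv \circ \sem{\ps b'}\right).
\]

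The key step is to isolate the $j=i$ summand. For $j=i$, Lemma~\ref{lem:matching-semantics} applied to $\match{\sigma}{\ps b_i}{\ps b'}$ gives $\sem{\ps b_i}\inv \circ \sem{\ps b'} = \sem\sigma$, so Proposition~\ref{prop:substitution-interpretation} yields $\sem{\ps e_i} \circ \sem\sigma = \sem{\entail \sigma(\ps e_i) \colon \pQ_2}$, which is exactly the right-hand side we want. For $j \neq i$, I would argue that $\sem{\ps b_j}\inv \circ \sem{\ps b'} = 0$. To see this, note that Lemma~\ref{lem:od-substitution} (applied to the basis $\OD{\pQ_1}{\{\ps b_1,\dots,\ps b_n\}}$ and the closed $\entail \ps b' \colon \pQ_1$) tells us that the matching clause is unique, so $\ps b_j$ does not match $\ps b'$; then by Proposition~\ref{prop:patterns-are-orthogonal} we obtain $\ps b_j~\bot~\ps b'$, and finally Lemma~\ref{lem:orthogonal-semantics-ortho-value} gives the desired vanishing.

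The main obstacle I anticipate is not any single calculation but rather keeping track of all the side conditions needed to invoke the orthogonality-of-semantics lemma uniformly across $j \neq i$: one has to check that Lemma~\ref{lem:orthogonal-semantics-ortho-value} is applicable in the asymmetric situation where $\ps b_j$ has free variables (typed under $\pGamma_j$) while $\ps b'$ is closed. This is handled precisely by Proposition~\ref{prop:patterns-are-orthogonal}, which already accommodates the closed/open asymmetry in its statement. Once the vanishing of the off-diagonal summands is established, the remaining step is a routine chain of equalities combining Lemma~\ref{lem:matching-semantics} and Proposition~\ref{prop:substitution-interpretation}.
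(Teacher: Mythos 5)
Your proposal is correct and follows essentially the same route as the paper: unfold the interpretation of the application and of the unitary abstraction, kill the off-diagonal summands via $\ps b_j~\bot~\ps b'$ and Lemma~\ref{lem:orthogonal-semantics-ortho-value}, and conclude with Lemma~\ref{lem:matching-semantics} and Proposition~\ref{prop:substitution-interpretation}. The only difference is that you spell out (via Lemma~\ref{lem:od-substitution} and Proposition~\ref{prop:patterns-are-orthogonal}) the justification for the orthogonality of the non-matching clauses, which the paper's proof states without detail.
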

\begin{proof}
	First, we deduce from the assumption $\match{\sigma}{\ps b_i}{\ps b'}$ that
	\begin{itemize}
		\item $\neg(\ps b_i~\bot~\ps b')$, and thus $\sem{\ps b_i}\inv \circ \sem{\ps b'} =
			\sem\sigma$, thanks to Lemma~\ref{lem:matching-semantics}.
		\item for all $j \neq i$, $\ps b_j~\bot~\ps b'$, and thus $\sem{\ps b_j}\inv \circ
			\sem{\ps b'} = 0$, thanks to
			Lemma~\ref{lem:orthogonal-semantics-ortho-value}.
	\end{itemize}
	We can then compute the semantics, with $u \defeq \unibasique$:
	\begin{align*}
		&\ \sem{u~\ps b'} & \\
		&= \sem u \circ \sem{\ps b'} & \text{(by definition.)} \\
		&= \left( \sum_j \sem{\ps e_j} \circ \sem{\ps b_j}\inv \right) \circ \sem{\ps b'}
		& \text{(by definition.)} \\
		&= \sum_j \sem{\ps e_j} \circ \sem{\ps b_j}\inv \circ \sem{\ps b'}
		& \text{(linearity.)} \\
		&= \sem{\ps e_i} \circ \sem{\ps b_i}\inv \circ \sem{\ps b'}
		& \text{(lemma~\ref{lem:orthogonal-semantics-ortho-value}.)} \\
		&= \sem{\ps e_i} \circ \sem\sigma
		& \text{(lemma~\ref{lem:matching-semantics}.)} \\
		&= \sem{\sigma(\ps e_i)}
		& \text{(prop.~\ref{prop:substitution-interpretation}.)}
	\end{align*}
\end{proof}

We prove that, like expressions, terms are indeed interpreted as isometries,
reinforcing the link with quantum physics. This requires several lemmas. The
first lemma shows that the denotational orthogonality is preserved by linear
combinations.

\begin{lem}
	\label{lem:sum-ortho}
	Given $\pGamma_1 \entail t \colon \pQ$ and $\pGamma_2 \entail \Sigma_i (\palpha_i
	\pdot \ps t_i) \colon \pQ$ such that for all $i$, $\sem{\pGamma_1 \entail \ps t \colon
	\pQ}\inv \circ \sem{\pGamma_2 \entail \ps t_i \colon \pQ} =
	0_{\sem{\pGamma_2},\sem{\pGamma_1}}$; then $\sem{\pGamma_1 \entail \ps t \colon
	A}\inv \circ \sem{\pGamma_2 \entail \Sigma_i (\palpha_i \pdot \ps t_i)
	\colon \pQ} = 0_{\sem{\pGamma_2},\sem{\pGamma_1}}$.
\end{lem}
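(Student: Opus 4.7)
The plan is to prove this by direct computation using the definition of the denotation of a linear combination together with the bilinearity of composition in $\Contr$. By the interpretation rule for sums given in Figure~\ref{fig:simple-term-interpretation}, we have $\sem{\pGamma_2 \entail \Sigma_i (\palpha_i \pdot \ps t_i) \colon \pQ} = \sum_i \palpha_i \sem{\pGamma_2 \entail \ps t_i \colon \pQ}$, so the claim reduces to showing that $\sem{\ps t}\inv$ commutes past the finite linear combination.

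Concretely, I would compute $\sem{\ps t}\inv \circ \sum_i \palpha_i \sem{\ps t_i} = \sum_i \palpha_i (\sem{\ps t}\inv \circ \sem{\ps t_i})$ by linearity of precomposition. The hypothesis gives that each summand $\sem{\ps t}\inv \circ \sem{\ps t_i}$ equals $0_{\sem{\pGamma_2},\sem{\pGamma_1}}$, and a finite sum of scalar multiples of the zero morphism is again the zero morphism, which yields the desired equality.

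There is essentially no obstacle here: the argument is a one-line calculation relying only on the distributivity of composition over addition and scalar multiplication in the vector-space enrichment of $\Contr$. The only mild care needed is to ensure that the sum in question is finite (which it is, as the index set $I$ in any well-formed linear combination is finite by the definition in Figure~\ref{fig:qu-syntax} and the formation rules in Figure~\ref{fig:typterms}), so that linearity applies without subtleties about limits in the Hilbert-space norm.
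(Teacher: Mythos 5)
Your proposal is correct and follows essentially the same computation as the paper's own proof: unfold the denotation of the linear combination, pull $\sem{\ps t}\inv$ through the finite sum by linearity of composition, apply the hypothesis term-by-term, and conclude that the sum of zero morphisms is zero. The remark about finiteness of the index set is a reasonable added precaution but not something the paper dwells on.
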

\begin{proof}
	The proof involves few steps, without surprises.
	\begin{align*}
		&\ \sem{\pGamma_1 \entail \ps t \colon \pQ}\inv
		\circ \sem{\pGamma_2 \entail \Sigma_i (\palpha_i \cdot \ps t_i) \colon \pQ}
		& \\
		&= \sem{\pGamma_1 \entail \ps t \colon \pQ}\inv \circ \sum_i \palpha_i
		\sem{\pGamma_2 \entail \ps t_i \colon \pQ}
		& \text{(by definition.)} \\
		&= \sum_i \palpha_i \sem{\pGamma_1 \entail \ps t \colon \pQ}\inv \circ
		\sem{\pGamma_2 \entail \ps t_i \colon \pQ}
		& \text{(by linearity.)} \\
		&= \sum_i \palpha_i\, 0_{\sem{\pGamma_2},\sem{\pGamma_1}}
		& \text{(hypothesis.)} \\
		&= 0_{\sem{\pGamma_2},\sem{\pGamma_1}}. &
	\end{align*}
\end{proof}

The next lemma states that syntactic orthogonality (defined in
Definition~\ref{def:orthogonality}) implies denotational orthogonality.

\begin{lem}
	\label{lem:orthogonal-semantics-ortho}
	Given two judgements $\pGamma_1 \entail \ps t_1 \colon \pQ$ and $\pGamma_2
	\entail \ps t_2 \colon \pQ$, such that $\ps t_1~\bot~\ps t_2$, we have
	$\sem{\ps t_1}\inv \circ \sem{\ps t_2} =
	0_{\sem{\pGamma_2},\sem{\pGamma_1}}$.
\end{lem}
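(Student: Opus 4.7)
The plan is to proceed by structural induction on the derivation of $\ps t_1 \bot \ps t_2$, establishing at each step that the denotational composition vanishes using the corresponding semantic fact about the clause generating orthogonality.

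For the coproduct base case $\inl{\ps t_1}~\bot~\inr{\ps t_2}$, I would unfold the denotations as $\iota_l \circ \sem{\ps t_1}$ and $\iota_r \circ \sem{\ps t_2}$, then use Lemma~\ref{lem:injection-compati} to conclude that $\sem{\ps t_1}\inv \circ \iota_l\inv \circ \iota_r \circ \sem{\ps t_2} = 0$. The natural-number base case $\pzero~\bot~\psucc{\ps t}$ is analogous, using that the image of $\ket 0$ is orthogonal to the image of $\rmsucc$ in $\ell^2(\mathbb N)$. For the same-injector inductive cases $\inl{\ps t_1}~\bot~\inl{\ps t_2}$, $\inr{\ps t_1}~\bot~\inr{\ps t_2}$, and $\psucc{\ps t_1}~\bot~\psucc{\ps t_2}$, one discharges the outer isometry by $\iota\inv \iota = \iid$ and applies the inductive hypothesis to the subterms.

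For the pair rules $\pair{\ps t}{\ps t_1}~\bot~\pair{\ps t'}{\ps t_2}$ (and symmetrically on the other component), I would use the monoidal dagger identity $(f \otimes g)\inv = f\inv \otimes g\inv$ to rewrite the composition as a tensor of two pieces, invoke the inductive hypothesis on the component pair that is orthogonal to obtain a $0$ factor, and conclude by $f \otimes 0 = 0$. The unitary-application case $u~\ps t_1~\bot~u~\ps t_2$ reduces using that $\sem u$ is unitary (so $\sem u\inv \sem u = \iid$) to the IH on $\ps t_1 \bot \ps t_2$.

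The sum cases are where the real work lies. When $\ps t~\bot~\Sigma_i (\palpha_i \pdot \ps t_i)$ holds because each $\ps t~\bot~\ps t_i$, I would simply appeal to Lemma~\ref{lem:sum-ortho} together with the IH applied to each summand; the variant with a zero-coefficient term matching $\ps t$ adds a term $\overline{1} \cdot 0 \cdot \sem{\ps t}\inv \sem{\ps t}$ which vanishes by the zero scalar. The final rule, for two sums $\Sigma_{j \in J}(\palpha_j \pdot \ps t_j) \bot \Sigma_{k \in K}(\pbeta_k \pdot \ps t_k)$ with a common pairwise-orthogonal basis $(\ps t_i)_{i\in I}$ and $\sum_{i \in J \cap K} \overline{\palpha_i}\pbeta_i = 0$, is the main obstacle: I would expand by linearity of $\sem{-}$ and of the dagger, giving a double sum $\sum_{j \in J, k \in K} \overline{\palpha_j}\pbeta_k \sem{\ps t_j}\inv \sem{\ps t_k}$, use the IH to kill all cross terms where $j \neq k$ (those satisfy $\ps t_j \bot \ps t_k$), and reduce to $\bigl(\sum_{i \in J \cap K}\overline{\palpha_i}\pbeta_i\bigr) \sem{\ps t_i}\inv \sem{\ps t_i}$, which vanishes by hypothesis on the coefficients. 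Care is needed to ensure the index bookkeeping between $J$, $K$, and their intersection lines up with the semantic orthogonality obtained from the IH.
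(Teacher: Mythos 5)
Your induction is set up on the derivation of $\ps t_1~\bot~\ps t_2$, whereas the paper proves this lemma \emph{simultaneously} with Lemma~\ref{lem:term-isometry} by a single induction on the formation rules of terms, precisely because the two statements are interdependent. That interdependence is where your proposal has a genuine gap: in the final two-sums case you expand to $\sum_{j,k} \overline{\palpha_j}\pbeta_k\, \sem{\ps t_j}\inv \circ \sem{\ps t_k}$, kill the cross terms by the IH, and then claim the remainder is $\bigl(\sum_{i \in J \cap K}\overline{\palpha_i}\pbeta_i\bigr)\sem{\ps t_i}\inv \circ \sem{\ps t_i}$, vanishing by the coefficient hypothesis. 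But the diagonal terms are $\sum_{i \in J \cap K} \overline{\palpha_i}\pbeta_i\, \sem{\ps t_i}\inv \circ \sem{\ps t_i}$ with possibly \emph{different} operators $\sem{\ps t_i}\inv \circ \sem{\ps t_i}$ for different $i$; the scalar sum can only be factored out once you know each $\sem{\ps t_i}$ is an isometry, so that $\sem{\ps t_i}\inv \circ \sem{\ps t_i} = \iid_{\sem{\pGamma}}$ uniformly. That isometry fact is exactly Lemma~\ref{lem:term-isometry}, and it is not available from an induction on the orthogonality derivation alone. Nor can you simply cite it: its own proof (sum case, via Lemma~\ref{lem:normalised-sum}) needs the semantic orthogonality of the summands, i.e.\ the very lemma you are proving, so a naive appeal is circular.

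The fix is the one the paper adopts: strengthen the statement and prove ``$\sem{\ps t}$ is an isometry'' and ``syntactic orthogonality implies $\sem{\ps t_1}\inv \circ \sem{\ps t_2} = 0$'' together, by mutual induction on term formation (with the expression-level fragment, which is unitary-free, effectively handled first so that unitarity of $\sem u$ is available for the application case). Your treatment of the other cases (injection compatibility via Lemma~\ref{lem:injection-compati}, the monoidal dagger identity for pairs, unitarity of $\sem u$, and Lemma~\ref{lem:sum-ortho} for the one-sided sum rules) matches the paper and is fine; only the diagonal step in the two-sums case needs the isometry ingredient, and with the mutual-induction restructuring your argument goes through.
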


The proof of this lemma is interdependent with the proof of the next lemma,
where it is proven that the interpretation of term judgement are sound quantum
states, namely, isometries.

\begin{lem}[Isometry]
	\label{lem:term-isometry}
	If $\pGamma \entail \ps t \colon \pQ$ is a well-formed judgement, then
	$\sem{\pGamma \entail \ps t \colon \pQ}$ is an isometry.
\end{lem}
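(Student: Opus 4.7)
The plan is to prove Lemma~\ref{lem:term-isometry} simultaneously with Lemma~\ref{lem:orthogonal-semantics-ortho} by a joint induction, since the linear-combination case of the isometry statement needs denotational orthogonality of the summands, and the orthogonality statement in turn is proved by structural induction on the derivation of $\bot$, in which case analysis some branches (the constructor cases) reduce to the isometry property. I would organise the argument as a single induction on the structure of the typing derivation $\pGamma \entail \ps t \colon \pQ$, proving the isometry statement, and invoke Lemma~\ref{lem:orthogonal-semantics-ortho} as needed for the sum case; the latter lemma is then proved by a parallel induction on $\bot$ that uses only the induction hypothesis of the main lemma on strictly smaller subterms.

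For the base cases $\entail \ps * \colon \pbasic$, $\ps x \colon \pQ \entail \ps x \colon \pQ$, and $\entail \pzero \colon \pnat$, the denotations are respectively $\iid_{\sem \pbasic}$, $\iid_{\sem \pQ}$ and the unit vector $\ket 0 \colon \mathbb C \to \ell^2(\mathbb N)$, which are all isometries. For the constructor cases $\inl{\ps t}$, $\inr{\ps t}$ and $\psucc{\ps t}$, the denotation is the postcomposition of the inductive hypothesis with the isometries $\iota_l$, $\iota_r$, or $\rmsucc$, so it is itself an isometry since isometries are closed under composition. For $\pair{\ps t_1}{\ps t_2}$, the denotation is a tensor product of two isometries (by IH), and the tensor product of isometries in $\Hilbc$ is again an isometry.

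The sum case $\pGamma \entail \Sigma_{i \in I}(\palpha_i \pdot \ps t_i) \colon \pQ$ is the main obstacle, and it is where the inductive interaction happens. By the IH, each $\sem{\ps t_i}$ is an isometry. The typing rule ensures $\ps t_i \bot \ps t_j$ for $i \neq j$, and Lemma~\ref{lem:orthogonal-semantics-ortho} (which I prove in parallel) gives $\sem{\ps t_i}\inv \circ \sem{\ps t_j} = 0$, i.e., the $\sem{\ps t_i}$ are pairwise output-compatible in the sense of Lemma~\ref{lem:algebra-compati}. Combined with $\sum_i |\palpha_i|^2 = 1$ from the typing rule, Lemma~\ref{lem:normalised-sum} yields that $\sum_i \palpha_i \sem{\ps t_i} = \sem{\Sigma_i(\palpha_i \pdot \ps t_i)}$ is an isometry. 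Finally, for $\pGamma \entail u~\ps t \colon \pQ_2$, the denotation is $\sem u \circ \sem{\ps t}$; by the previously established theorem $\sem u$ is a unitary and hence an isometry, and $\sem{\ps t}$ is an isometry by IH, so the composite is an isometry.

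The parallel proof of Lemma~\ref{lem:orthogonal-semantics-ortho} proceeds by induction on the derivation of $\ps t_1 \bot \ps t_2$. The constructor rules for $\bot$ (for $\inl/\inr$, $\ptimes$, $\psucc$) are handled using the orthogonality of $\iota_l$ and $\iota_r$ (Lemma~\ref{lem:injection-compati}), the compatibility of $\rmsucc$ with the coproduct decomposition, the monoidal dagger structure, and Lemma~\ref{lem:ortho-postiso} together with the IH on strictly smaller subterms; the case $u~\ps t_1 \bot u~\ps t_2$ uses Lemma~\ref{lem:ortho-postiso} since $\sem u$ is unitary hence an isometry. The sum cases of $\bot$ reduce via bilinearity of composition exactly as in Lemma~\ref{lem:sum-ortho}, using the IH of Lemma~\ref{lem:orthogonal-semantics-ortho} on the individual summands together with the scalar condition $\sum_{i \in J \cap K} \overline{\palpha_i}\pbeta_i = 0$ that appears in the last inference rule of Definition~\ref{def:orthogonality}. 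Because each recursive call is on a strictly smaller term or $\bot$-derivation, the joint induction is well-founded.
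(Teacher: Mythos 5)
Your proposal is correct and follows essentially the same route as the paper: the paper likewise proves Lemma~\ref{lem:term-isometry} and Lemma~\ref{lem:orthogonal-semantics-ortho} by a single interdependent induction, handling the base and constructor cases by closure of isometries under composition and tensor, the sum case via Lemma~\ref{lem:normalised-sum} together with denotational orthogonality of the summands, and the orthogonality cases via Lemmas~\ref{lem:injection-compati}, \ref{lem:ortho-postiso} and \ref{lem:sum-ortho}. Your explicit remark on well-foundedness of the mutual recursion (each call being on strictly smaller terms or $\bot$-derivations) is a fair elaboration of what the paper leaves implicit, but it is not a different method.
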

%breakpoint
\begin{proof}[Proof of Lemma~\ref{lem:orthogonal-semantics-ortho} and 
	Lemma~\ref{lem:term-isometry}]
	We prove both theorems in a row, because they are interdependent. We remind
	the whole statement:
	Given a well-formed judgement $\pGamma_1 \entail \ps t_1 \colon \pQ$,
	\begin{itemize}
		\item $\sem{\pGamma_1 \entail \ps t_1 \colon \pQ}\inv \circ
			\sem{\pGamma_1 \entail \ps t_1 \colon \pQ} =
			\iid_{\sem{\pGamma_1}}$;
		\item for all $\pGamma_2 \entail \ps t_2 \colon \pQ$ such that $\ps
			t_1~\bot~\ps t_2$, $\sem{\pGamma_1 \entail \ps t_1 \colon \pQ}\inv
			\circ \sem{\pGamma_2 \entail \ps t_2 \colon \pQ} =
			0_{\sem{\pGamma_2},\sem{\pGamma_1}}$.
	\end{itemize}
	We prove by induction on the formation rules of values that $\sem{\ps t}
	\inv \circ \sem{\ps t} = \iid_{\sem\pGamma}$.
	\begin{itemize}
		\item $~\entail \ps * \colon \pbasic$. $\sem{\ps *} \inv \circ \sem{\ps
			*} = \iid_{\sem \pbasic} \circ \iid_{\sem \pbasic} = \iid_{\sem
			\pbasic}$. This term is not orthogonal to any other well-formed
			term.
		\item $\ps x \colon \pQ \entail \ps x \colon \pQ$. 
			$\sem{\ps x}\inv \circ \sem{\ps x} = \iid_{\sem{\pQ}} \circ
			\iid_{\sem{\pQ}} = \iid_{\sem{\pQ}}$. This term is not orthogonal to any other.
		\item $\pGamma_1, \pGamma_2 \entail \ps t_1 \ptimes \ps t_2 \colon
			\pQ_1 \ptimes \pQ_2$.
			We first prove that its denotation is an isometry.
			\begin{align*}
				\sem{\ps t_1 \ptimes \ps t_2}\inv \circ \sem{\ps t_1 \ptimes
				\ps t_2} 
				&= (\sem{\ps t_1} \otimes \sem{\ps t_2})\inv \circ (\sem{\ps
				t_1} \otimes \sem{\ps t_2})
				& \text{(by definition.)} \\
				&= (\sem{\ps t_1}\inv \otimes \sem{\ps t_2}\inv) \circ
				(\sem{\ps t_1} \otimes \sem{\ps t_2})
				& \text{(* is monoidal.)} \\
				&= (\sem{\ps t_1}\inv \circ \sem{\ps t_1}) \otimes (\sem{\ps
				t_2}\inv \circ \sem{\ps t_2}) 
				& (\otimes \text{ is monoidal.)} \\
				&= \iid_{\sem{\pGamma_1}} \otimes \iid_{\sem{\pGamma_2}}
				& \text{(IH.)} \\
				&= \iid_{\sem{\pGamma_1} \otimes \sem{\pGamma_2}} =
				\iid_{\sem{\pGamma_1 , \pGamma_2}} &
			\end{align*}
			Then we show that, if it is orthogonal to any other well-formed
			term, say $\pGamma_3 \entail \ps t_3 \colon \pQ_1 \ptimes \pQ_2$,
			then their interpretations are also orthogonal. We reason on a case
			by case basis. The term $\ps t_3$ can be of the form $\ps t'_3
			\otimes \ps t''_3$, in which case, either $\ps t_1~\bot~\ps t'_3$
			or $\ps t_2~\bot~\ps t''_3$.  In both cases, the result is direct,
			because the zero morphism tensored with any other morphism is still
			zero. The other cases is $\ps t_3$ being a linear combination; this
			case is covered by Lemma~\ref{lem:sum-ortho}.
		\item $\pGamma \entail \ini{\ps t} \colon \pQ_1 \pplus \pQ_2$.
			The denotation of this term is an isometry because the injections $\iota$
			also are and a composition of isometries keeps being an isometry. Now,
			given another term $\pGamma_1 \entail \ps t_2 \colon \pQ_2$ such that
			$\ini{\ps t}~\bot~\ps t_2$, we have several cases:
			\begin{itemize}
				\item either $\ps t_2$ is of the form $\ini{\ps t'_2}$ with the
					same injection as $\ini{\ps t}$, and the result is given by
					Lemma~\ref{lem:ortho-postiso};
				\item or $\ini{\ps t} = \inl{\ps t}$ and $t_2 = \inr{\ps
					t'_2}$, in which case the conclusion is direct thanks to
					Lemma~\ref{lem:injection-compati} and the induction
					hypothesis;
				\item or $\ps t_2$ is a linear combination, and
					Lemma~\ref{lem:sum-ortho} concludes.
			\end{itemize}
		\item $~\entail \pzero \colon \pNat$. $\sem\pzero\inv \circ \sem\pzero = \braket
			0 0 = 1 = \iid_{\sem \pbasic}$. Moreover, given any natural number $n$,
			$\braket{0}{n+1} = 0$. Lemma~\ref{lem:sum-ortho} concludes for linear
			combinations.
		\item $\pGamma \entail \psucc{\ps t} \colon \pNat$. The interpretation is an
			isometry by composition of isometries. The orthogonality part is either
			ensured with the last point, or by Lemma~\ref{lem:ortho-postiso} and
			the induction hypothesis.
		\item $\pGamma \entail \Sigma_i (\palpha_i \pdot \ps t_i) \colon \pQ$.
			The interpretation is an isometry by induction hypothesis and 
			Lemma~\ref{lem:normalised-sum}. Moreover, the only case of orthogonality
			not yet covered: given that for all $i\neq j \in I$, $\ps t_i~\bot~\ps t_j$, $J,K
			\subseteq I$, and $\sum_{i \in J \cap K} \bar{\palpha_i}\pbeta_i = 0$:
			\begin{align*}
				&\ \sem{\Sigma_j (\palpha_j \pdot \ps t_j)}\inv 
				\circ \sem{\Sigma_k (\pbeta_k \pdot \ps t_k)} & \\
				&= \left( \sum_j \palpha_j \sem{\ps t_j} \right)\inv \circ
				\sum_k \pbeta_k \sem{\ps t_k}
				& \text{(by definition.)} \\
				&= \left( \sum_j \overline\palpha_j \sem{\ps t_j}\inv \right) \circ
				\sum_k \pbeta_k \sem{\ps t_k}
				& \text{(dagger distributes over the sum.)} \\
				&= \sum_{j,k} \overline\palpha_j \pbeta_k \sem{\ps t_j}\inv \circ \sem{\ps t_k}
				& \text{(by linearity.)} \\
				&= \sum_{i \in J \cap K} \overline\palpha_i \pbeta_i \sem{\ps
				t_i}\inv \circ \sem{\ps t_i}
				& \text{(by induction hypothesis.)} \\
				&= \sum_{i \in J \cap K} \overline\palpha_i \pbeta_i \, \iid_{\sem\pGamma}
				& \text{(by induction hypothesis.)} \\
				&= \left( \sum_{i \in J \cap K} \overline\palpha_i \pbeta_i
				\right) \iid_{\sem\pGamma} =
				0_{\sem{\pGamma_2},\sem{\pGamma_1}} &
			\end{align*}
		\item $\pGamma \entail u~\ps t \colon \pQ_2$. The interpretation is an isometry
			by composition of isometries. The orthogonality is covered by either
			Lemma~\ref{lem:ortho-postiso}, because a unitary is in particular an
			isometry, or Lemma~\ref{lem:sum-ortho}, similarly to the previous points.
	\end{itemize}
\end{proof}

Providing an interpretation that fits the expectations from quantum physics is
meaningful, but not completely satisfying. We thus prove a stronger link between
the syntax and the semantics through an equational theory.

%%%%%%%%%%%%%%%%%%%%%%%%%%%%%%%%%%%%%%%%%%%%%%%%%%%%%%%%%%%%%%%%%%%%%%%%%%%%%%
\subsection{Equational Theory}
\label{sub:quantum-equational}
%%%%%%%%%%%%%%%%%%%%%%%%%%%%%%%%%%%%%%%%%%%%%%%%%%%%%%%%%%%%%%%%%%%%%%%%%%%%%%

The main equational theory of our language is given in
Figure~\ref{fig:eq-rules-qu-control}, with the rules of reflexivity, symmetry
and transitivity given in Figure~\ref{fig:eq-rules-base}, linear algebraic
identities are given in Figure~\ref{fig:eq-rules-vector-space}, and congruence
identities are given in Figure~\ref{fig:eq-rules-qu-control-cong}.

\begin{rem}
	Among the equational rules presented in this section, only the equations in
	Figure~\ref{fig:eq-rules-qu-control} provide an operational account of the
	language. They can be seen as a reduction system from left to right. On the
	other hand, the equations in Figure~\ref{fig:eq-rules-vector-space} show that
	the algebraic power of Hilbert spaces and isometries can be captured within
	the type system.
\end{rem}

\begin{figure}
	\[ \begin{array}{c}
		\infer[(refl)]{
			\pGamma \entail \ps t = \ps t \colon A
		}{
			\pGamma \entail \ps t \colon A
		}
		\qquad
		\infer[(symm)]{
			\pGamma \entail \ps t_2 = \ps t_1 \colon A
		}{
			\pGamma \entail \ps t_1 = \ps t_2 \colon A
		}
		\mynl
		\infer[(trans)]{
			\pGamma \entail \ps t_1 = \ps t_3 \colon A
		}{
			\pGamma \entail \ps t_1 = \ps t_2 \colon A
			& \pGamma \entail \ps t_2 = \ps t_3 \colon A
		}
	\end{array}
	\]
	\caption{Basic equational rules.}
	\label{fig:eq-rules-base}
\end{figure}

\begin{figure}
	\[ \begin{array}{c}
		\infer[(perm)]{\pGamma \entail \Sigma_i (\palpha_i \pdot \ps t_i) = \Sigma_i
		(\palpha_{\ell(i)} \pdot \ps t_{\ell(i)}) \colon \pQ}{\ell \colon \set{1,\dots,n}
		\to \set{1,\dots,n}\text{ is a bijection} & \pGamma \entail \Sigma_i
		(\palpha_i \pdot \ps t_i) \colon \pQ}
		\mynl
		\infer[(0.scal)]{\pGamma \entail \Sigma_{i=1}^n (\palpha_i \pdot \ps t_i) =
		\Sigma_{i=1}^{n-1} (\palpha_i \pdot \ps t_i) \colon \pQ}{\pGamma \entail
		\Sigma_{i=1}^n (\palpha_i \pdot \ps t_i) \colon \pQ & \palpha_n = 0}
		\qquad
		\infer[(1.scal)]{\pGamma \entail \Sigma_{i=1}^1 (1 \pdot \ps t) = \ps t \colon
		\pQ}{\pGamma \entail \ps t \colon \pQ}
		\mynl
		\infer[(fubini)]{\pGamma \entail \Sigma_i \left( \palpha_i \pdot \Sigma_j (\beta_{ij}
		\pdot \ps t_{j}) \right) = \Sigma_j ( \left( \sum_i \palpha_i \beta_{ij} \right) \pdot \ps t_{j})
		\colon \pQ_1}{
			\pGamma \entail \Sigma_i \left( \palpha_i \pdot \Sigma_j (\beta_{ij} \pdot \ps t_{j})
			\right) \colon \pQ
		}
		\mynl
		\infer[(double)]{\pGamma \entail \Sigma_i \left( \palpha_i \pdot \Sigma_j (\beta_{ij}
		\pdot \ps t_{ij}) \right) = \Sigma_{ij} (\palpha_i \beta_{ij}\pdot \ps t_{ij})
		\colon \pQ_1}{
			\pGamma \entail \Sigma_{ij} (\palpha_i \beta_{ij}\pdot \ps t_{ij})
			\colon \pQ
		}
		\mynl
		\infer[(u.linear)]{\pGamma \entail u~\Sigma_i (\palpha_i \pdot
		\ps t_i) = \Sigma_i (\palpha_i \pdot u~\ps t_i) \colon \pQ_2}
		{
			\entailiso u \colon \isotypeonetwo
			&
			\pGamma \entail \Sigma_i (\palpha_i \pdot \ps t_i) \colon \pQ_1
		}
		\mynl
		\infer[(\iota.linear_1)]{\pGamma \entail \inl{} \Sigma_i (\palpha_i \pdot
		\ps t_i) = \Sigma_i (\palpha_i \pdot \inl{\ps t_i}) \colon \pQ_1 \oplus \pQ_2}{\pGamma \entail
		\Sigma_i (\palpha_i \pdot \ps t_i) \colon \pQ_1}
		\mynl
		\infer[(\iota.linear_2)]{\pGamma \entail \inr{} \Sigma_i (\palpha_i \pdot
		\ps t_i) = \Sigma_i (\palpha_i \pdot \inr{\ps t_i}) \colon \pQ_1 \oplus \pQ_2}{\pGamma \entail
		\Sigma_i (\palpha_i \pdot \ps t_i) \colon \pQ_2}
		\mynl
		\infer[(\otimes.linear_1)]{\pGamma_1,\pGamma_2 \entail t \otimes (\Sigma_i
		(\palpha_i \pdot \ps t_i)) = \Sigma_i (\palpha_i \pdot t \otimes \ps t_i) \colon A
		\otimes \pQ_2}{\pGamma_1 \entail t \colon \pQ_1 & \pGamma_2 \entail \Sigma_i
		(\palpha_i \pdot \ps t_i) \colon \pQ_2}
		\mynl
		\infer[(\otimes.linear_2)]{\pGamma_1,\pGamma_2 \entail (\Sigma_i (\palpha_i
		\pdot \ps t_i)) \otimes t = \Sigma_i (\palpha_i \pdot \ps t_i \otimes t) \colon A
		\otimes \pQ_2}{\pGamma_1 \entail t \colon \pQ_2 & \pGamma \entail \Sigma_i (\palpha_i
		\pdot \ps t_i) \colon \pQ_1}
		\mynl
		\infer[(S.linear)]{\pGamma \entail \psucc \Sigma_i (\palpha_i \pdot \ps t_i) =
		\Sigma_i (\palpha_i \pdot \psucc{\ps t_i}) \colon \pNat}{\pGamma \entail \Sigma_i
		(\palpha_i \pdot \ps t_i) \colon \pNat}
	\end{array}
	\]
	\caption{Vector space and linear applications equational rules.}
	\label{fig:eq-rules-vector-space}
\end{figure}

\begin{figure}%[!h]
	\[ \begin{array}{c}
		\infer[(\iota.eq_1)]{
			\pGamma \entail \inl{\ps t_1} = \inl{\ps t_2} \colon \pQ_1 \oplus \pQ_2
		}{
			\pGamma \entail \ps t_1 = \ps t_2 \colon \pQ_1
		}
		\qquad
		\infer[(\iota.eq_2)]{
			\pGamma \entail \inr{\ps t_1} = \inr{\ps t_2} \colon \pQ_1 \oplus \pQ_2
		}{
			\pGamma \entail \ps t_1 = \ps t_2 \colon \pQ_2
		}
		\mynl
		\infer[(\otimes.eq_1)]{
			\pGamma, \pGamma' \entail \pair{\ps t_1}{t} = \pair{\ps t_2}{t} \colon \pQ_1 \otimes \pQ_2
		}{
			\pGamma \entail \ps t_1 = \ps t_2 \colon \pQ_1 & \pGamma' \entail t \colon \pQ_2
		}
		\qquad
		\infer[(\otimes.eq_2)]{
			\pGamma, \pGamma' \entail \pair{t}{\ps t_1} = \pair{t}{\ps t_2} \colon \pQ_1 \otimes \pQ_2
		}{
			\pGamma \entail \ps t_1 = \ps t_2 \colon \pQ_2 & \pGamma' \entail t \colon \pQ_1
		}
		\mynl
		\infer[(S.eq)]{
			\pGamma \entail \psucc{\ps t_1} = \psucc{\ps t_2} \colon \pNat
		}{
			\pGamma \entail \ps t_1 = \ps t_2 \colon \pNat
		}
		\qquad
		\infer[(u.eq)]{
			\pGamma \entail u~\ps t_1 = u~\ps t_2 \colon \pQ_2
		}{
			\pGamma \entail \ps t_1 = \ps t_2 \colon \pQ_1 & \entailiso u \colon \pQ_1 \iso \pQ_2
		}
		\mynl
		\infer[(\Sigma.eq)]{
			\pGamma \entail \Sigma_i (\palpha_i \pdot \ps t_i) 
			= \Sigma_i (\palpha_i \pdot t'_i) \colon \pQ
		}{
			\pGamma \entail \Sigma_i (\palpha_i \pdot \ps t_i) \colon \pQ
			& \forall i, \pGamma \entail \ps t_i = t'_i \colon \pQ
		}
	\end{array}
	\]
	\caption{Congruence equational rules of quantum control.}
	\label{fig:eq-rules-qu-control-cong}
\end{figure}

\begin{figure}%[!h]
	\[ \begin{array}{c}
		\infer[(u.\beta)]{\entail \unibasique~\ps b' = \sigma(\ps e_i) \colon \pQ_2
		}{
			\entail \ps b' \colon \pQ_1 
			& \entailiso \unibasique \colon \isotype{\pQ_1}{\pQ_2} 
			& \match{\sigma}{\ps b_i}{\ps b'} 
		}
		\mynl
		\infer[(u.inv)]{
			\entail u\inv~\ps v = \ps b \colon \pQ_1
		}{
			\entail u~\ps b = \ps v \colon \pQ_2
		}
		\qquad
		\infer[(u.comp)]{
			\pGamma \entail (u_2 \circ u_1)~\ps b = u_2~(u_1~\ps b) \colon \pQ_3
		}{
			\entailiso u_1 \colon \isotype{\pQ_1}{\pQ_2} 
			& \entailiso u_2 \colon \isotype{\pQ_2}{\pQ_3} 
			& \pGamma \entail \ps b \colon \pQ_1
		}
		\mynl
		\infer[(u.\otimes)]{
			\entail (u_1 \otimes u_2)~(\ps b_1 \ptimes \ps b_2) = (u_1~\ps b_1)
			\ptimes (u_2~\ps b_2) \colon \pQ_3 \otimes \pQ_4
		}{
			\entailiso u_1 \colon \isotype{\pQ_1}{\pQ_3} 
			& \entailiso u_2 \colon \isotype{\pQ_2}{\pQ_4} 
			& \entail \ps b_1 \colon \pQ_1 
			& \entail \ps b_2 \colon \pQ_2
		}
		\mynl
		\infer[(u.\oplus_1)]{
			\entail (u_1 \oplus u_2)~(\inl{\ps b}) = \inl (u_1~\ps b) \colon \pQ_3 \oplus \pQ_4
		}{
			\entailiso u_1 \colon \isotype{\pQ_1}{\pQ_3} 
			& \entailiso u_2 \colon \isotype{\pQ_2}{\pQ_4} 
			& \entail b \colon \pQ_1
		}
		\mynl
		\infer[(u.\oplus_2)]{
			\entail (u_1 \oplus u_2)~(\inr{\ps b}) = \inl (u_2~\ps b) \colon \pQ_3 \pplus \pQ_4
		}{
			\entailiso u_1 \colon \isotype{\pQ_1}{\pQ_3} 
			& \entailiso u_2 \colon \isotype{\pQ_2}{\pQ_4} 
			& \entail b \colon \pQ_2
		}
		\mynl
		\infer[(u.ctrl_1)]{
			\entail (\uctrl u)~((\inl{\ps *}) \ptimes \ps b) = (\inl{\ps *})
			\ptimes \ps b \colon \pqbit \otimes \pQ
		}{
			\entailiso u \colon \isotype{\pQ}{\pQ}
			& \entail b \colon \pQ
		}
		\mynl
		\infer[(u.ctrl_2)]{
			\entail (\uctrl u)~((\inr{\ps *}) \ptimes \ps b) = (\inr{\ps *})
			\otimes (u~\ps b) \colon \pqbit \otimes \pQ
		}{
			\entailiso u \colon \isotype{\pQ}{\pQ}
			& \entail b \colon \pQ
		}
	\end{array}
	\]
	\caption{Computational equational rules of quantum control.}
	\label{fig:eq-rules-qu-control}
\end{figure}

\paragraph{Equations and formation rules.}
We start by proving that the equational theory presented is sound with the
formation rules of the language. In other words, we show that if two terms are
equal in our theory, they are both well-formed. To do so, we need to show that
equality between terms preserve orthogonality.

\begin{lem}
	\label{lem:equal-ortho}
	Given two terms $\ps t_1$ and $\ps t_2$ such that $\ps t_1~\bot~\ps t_2$
	and $\pGamma \entail \ps t_1 = \ps t'_1 \colon \pQ$, then $\ps
	t'_1~\bot~\ps t_2$.
\end{lem}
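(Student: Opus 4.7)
The plan is to proceed by induction on the derivation of $\pGamma \entail \ps t_1 = \ps t'_1 \colon \pQ$, treating the rules of Figures~\ref{fig:eq-rules-base}, \ref{fig:eq-rules-vector-space}, \ref{fig:eq-rules-qu-control-cong}, and \ref{fig:eq-rules-qu-control} in turn. The base rules (refl), (symm), (trans) are immediate (for (symm) we would prove the two directions simultaneously). For the congruence rules of Figure~\ref{fig:eq-rules-qu-control-cong}, each case directly mirrors an orthogonality rule of Def.~\ref{def:orthogonality}: for instance, when $\ps t_1 = \inl{\ps s_1}$ and $\ps t'_1 = \inl{\ps s'_1}$ by ($\iota.eq_1$), any derivation of $\inl{\ps s_1} \bot \ps t_2$ must use the $\inl/\inr$ rule, the $\inl/\inl$ rule (requiring $\ps s_1 \bot \ps{s}''$ for some $\ps{s}''$, to which the inductive hypothesis applies), or one of the sum rules; in each subcase we reconstruct a derivation for $\inl{\ps s'_1} \bot \ps t_2$. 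The cases for $(\otimes.eq_k)$, (S.eq), (u.eq), and $(\Sigma.eq)$ are analogous.

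For the algebraic rearrangements in Figure~\ref{fig:eq-rules-vector-space}, namely (perm), (0.scal), (1.scal), (fubini), (double), when the LHS is a sum the orthogonality $\ps t_1 \bot \ps t_2$ must have been obtained from one of the three sum-rules at the bottom of Def.~\ref{def:orthogonality}. These rules depend on (i) pairwise orthogonality of the summands and (ii) the scalar inner-product condition $\sum_{i \in J\cap K}\overline{\palpha_i}\pbeta_i = 0$. Both data are invariant under reindexing by a bijection (perm), removal of a zero-coefficient summand (0.scal), trivial collapse (1.scal), and the redistribution of scalars in (fubini) and (double). The linearity rules ($u.linear$, $\iota.linear_k$, $\otimes.linear_k$, $S.linear$) are handled by a case split on the top-level shape of $\ps t_2$: the structural $\bot$-rules for the constructor and the sum rules are compatible with pulling the constructor in or out of the $\Sigma$, allowing us to translate the derivation directly.

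For the computational rules of Figure~\ref{fig:eq-rules-qu-control}, the core tool is Lemma~\ref{lem:ortho-subst-equiv}, which guarantees that substitution preserves orthogonality. For ($u.\beta$), given $\unibasique~\ps b' \bot \ps t_2$ and $\match{\sigma}{\ps b_i}{\ps b'}$, we case-analyse $\ps t_2$. When $\ps t_2$ is a sum, we recurse summand by summand via the sum rules. When $\ps t_2 = \unibasique~\ps b''$, the derivation forces $\ps b' \bot \ps b''$, and by Prop.~\ref{prop:patterns-are-orthogonal} together with the uniqueness part of Lemma~\ref{lem:od-substitution}, the unique match $\match{\sigma'}{\ps b_j}{\ps b''}$ must have $i \neq j$; by Lemma~\ref{lem:od-imp-bot} we have $\ps e_i \bot \ps e_j$, and then Lemma~\ref{lem:ortho-subst-equiv} yields $\sigma(\ps e_i) \bot \sigma'(\ps e_j)$, which is what we need. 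The rules (u.inv), (u.comp), ($u.\otimes$), ($u.\oplus_k$), ($u.ctrl_k$) are treated similarly, exploiting that their RHS is obtained from the LHS by applying/unfolding the structural operators on unitaries, so the derivation of $\bot$ on the LHS transports along the inductive unfolding.

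\textbf{Main obstacle.} The hard part is the ($u.\beta$) case. The LHS $\unibasique~\ps b'$ and the RHS $\sigma(\ps e_i)$ are syntactically very different shapes, and the $\bot$ rules do not directly relate a unitary application to an arbitrary expression. Making the argument go through relies crucially on the global invariants carried by the typing of $\unibasique~\ps b'$: that $\{\ps b_i\}$ forms an $\mathrm{ONB}$ of $\pQ_1$ and $\{\ps e_i\}$ an $\mathrm{ONB}^{e}$ of $\pQ_2$, so that the matching data and the orthogonality of the images are both forced. A secondary difficulty appears in the subcase where $\ps t_2$ is itself a unitary application with the same head $u$: here we may need to invoke the induction backwards on $\ps t_2 = \unibasique~\ps b'' = \sigma'(\ps e_j)$ in order to read off a well-formed derivation of $\sigma(\ps e_i) \bot \ps t_2$ from $\sigma(\ps e_i) \bot \sigma'(\ps e_j)$, which is the delicate bookkeeping step that the rest of the proof hinges on.
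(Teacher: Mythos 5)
Your top-level strategy --- induction on the derivation of $\pGamma \entail \ps t_1 = \ps t'_1 \colon \pQ$ --- is exactly the paper's (its proof consists of that single line), and your handling of the base, congruence, vector-space and linearity rules is fine. The genuine problem is the $(u.\beta)$ case, precisely the one you flag as the main obstacle, and your resolution of it does not go through, for two concrete reasons. First, the claim that $\ps b'~\bot~\ps b''$ forces the two matches onto \emph{distinct} clauses $i \neq j$ is false: the abstraction $\clauses{\clause{\ps x}{\ps x}}$ is well formed (since $\OD{\pqbit}{\set{\ps x}}$ holds), and both $\pket 0$ and $\pket 1$ match its unique clause even though $\pket 0~\bot~\pket 1$; with $i = j$ there is no $\ps e_i~\bot~\ps e_j$ to feed into Lemma~\ref{lem:ortho-subst-equiv}. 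Second, even when $i \neq j$, the step from $\sigma(\ps e_i)~\bot~\sigma'(\ps e_j)$ to the required $\sigma(\ps e_i)~\bot~u~\ps b''$ is not available: the equation $\entail u~\ps b'' = \sigma'(\ps e_j) \colon \pQ_2$ is not a premise of the $(u.\beta)$ instance you are inducting on, so ``invoking the induction backwards on $\ps t_2$'' is not an induction hypothesis but another, equally hard, instance of the very lemma being proved --- the argument is circular. Worse, that transfer can be outright underivable: take the paper's swap unitary $u \defeq \clauses{\clause{\inl{\ps x}}{\inr{\ps x}}\ \clause{\inr{\ps y}}{\inl{\ps y}}} \colon \isotype{\pqbit}{\pqbit}$. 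Then $u~(\inl{\ps *})~\bot~u~(\inr{\ps *})$ and $\entail u~(\inl{\ps *}) = \inr{\ps *} \colon \pqbit$, yet $\inr{\ps *}~\bot~u~(\inr{\ps *})$ has \emph{no} derivation: every rule of Definition~\ref{def:orthogonality} whose conclusion could involve a unitary application requires either both sides to be applications of the same unitary or one side to be a linear combination (this is the same phenomenon as the paper's own remark that $u~(\inl{\ps *})$ and $\inl{\ps *}$ cannot be derived orthogonal). Note that in this instance the matched clauses are distinct and $\sigma(\ps e_i)~\bot~\sigma'(\ps e_j)$ does hold, so the failure is located exactly at your final transfer step.

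Because the required conclusion is not even derivable in that instance, the gap cannot be closed by more careful bookkeeping with Prop.~\ref{prop:patterns-are-orthogonal}, Lemma~\ref{lem:od-substitution}, Lemma~\ref{lem:od-imp-bot} and Lemma~\ref{lem:ortho-subst-equiv}; under the literal reading of $\bot$ the instance even contradicts the statement being proved, so any correct argument must treat the computational rules differently --- e.g.\ by first analysing how $u~\ps b'~\bot~\ps t_2$ can be derived at all (which constrains $\ps t_2$ to be built from applications of the same $u$ and linear combinations thereof) and proving a correspondingly strengthened or restricted statement. The paper's one-line proof gives no guidance on this, so your diagnosis of where the difficulty lies is accurate, but the proposal as written has a real gap at exactly that point, and the same issue recurs in your treatment of $(u.inv)$, $(u.comp)$, $(u.\otimes)$, $(u.\oplus_k)$ and $(u.ctrl_k)$, where a unitary application is equated with a term of a different syntactic shape.
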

\begin{proof}
	By induction on the rules of the equational theory. 
\end{proof}

The proof of the next proposition heavily relies on the previous lemma. Indeed,
to prove that a linear combination is well-formed, one needs to prove that all
the terms involved in the linear combination are pairwise orthogonal. 

\begin{prop}
	\label{prop:equal-to-typed}
	If $\pGamma \entail \ps t_1 = \ps t_2 \colon \pQ$ is well-formed, then
	$\pGamma \entail \ps t_1 \colon \pQ$ and $\pGamma \entail \ps t_2 \colon \pQ$
	also are.
\end{prop}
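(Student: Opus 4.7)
The plan is to proceed by induction on the derivation of $\pGamma \entail \ps t_1 = \ps t_2 \colon \pQ$, with a case analysis on the last rule applied. The base case $(refl)$ has the well-formedness built in as a premise, and the cases $(symm)$ and $(trans)$ are immediate from the induction hypothesis, so the interesting work is in the remaining three families of rules.

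For the congruence rules of Figure~\ref{fig:eq-rules-qu-control-cong}, the induction hypothesis gives well-formedness of both sides of the inner equality, and the formation rules of Figures~\ref{fig:typing-values-simple} and~\ref{fig:typterms} immediately reassemble $\inl{\cdot}$, $\inr{\cdot}$, $\pair{\cdot}{\cdot}$, $\psucc{\cdot}$, and $u~(\cdot)$. The subtle rule is $(\Sigma.eq)$: after replacing the $\ps t_i$ by the $\ps t'_i$, one must re-check pairwise orthogonality of the new summands. This is exactly the content of Lemma~\ref{lem:equal-ortho}, applied once per pair $(i,j)$; the scalars are unchanged so normalisation is preserved.

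For the linear algebraic rules of Figure~\ref{fig:eq-rules-vector-space}, each rule rearranges a normalised linear combination without altering the underlying vector. $(perm)$, $(0.scal)$, and $(1.scal)$ trivially preserve orthogonality and the $\sum_i \abs{\palpha_i}^2 = 1$ condition. For $(fubini)$ and $(double)$, the new scalars $\sum_i \palpha_i \beta_{ij}$ (resp.\ $\palpha_i \beta_{ij}$) arise from bilinear expansion so the normalisation condition transports; pairwise orthogonality of the new summands follows from that of the $\ps t_j$ (resp.\ $\ps t_{ij}$) using the last clause of Definition~\ref{def:orthogonality}. For the linearity rules $(u.linear)$, $(\iota.linear_k)$, $(\otimes.linear_k)$, $(S.linear)$, orthogonality of the $\ps t_i$ propagates directly through the respective inductive clauses of Definition~\ref{def:orthogonality} to $u~\ps t_i$, $\inl{\ps t_i}$, $\inr{\ps t_i}$, $\ps t \ptimes \ps t_i$, $\ps t_i \ptimes \ps t$, and $\psucc{\ps t_i}$; the scalars are unchanged.

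The computational rules of Figure~\ref{fig:eq-rules-qu-control} are the ones that actually perform reduction. The rule $(u.\beta)$ requires showing that $\sigma(\ps e_i)$ is well-formed of type $\pQ_2$: the premise $\match{\sigma}{\ps b_i}{\ps b'}$ together with $\vdash \ps b' \colon \pQ_1$ yields, via the remark following Definition~\ref{def:well-valuation}, that $\pGamma_i \Vdash \sigma$, and then Lemma~\ref{lem:qua-subst-type} concludes. The remaining rules $(u.inv)$, $(u.comp)$, $(u.\otimes)$, $(u.\oplus_k)$, $(u.ctrl_k)$ each decompose a single unitary application into smaller well-formed pieces whose typings are direct from Figure~\ref{fig:typisos} and the formation rules of terms. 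The main obstacle throughout is the tight coupling between rewriting and the orthogonality/normalisation side-conditions in the formation rules for sums: the whole argument hinges on having Lemma~\ref{lem:equal-ortho} at hand and on checking, rule by rule in Figure~\ref{fig:eq-rules-vector-space}, that the recombined scalars stay $\ell^2$-normalised and the recombined summands stay pairwise orthogonal.
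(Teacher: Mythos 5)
Your proposal is correct and follows essentially the same route as the paper: the paper's proof is exactly an induction on the derivation in the equational theory, leaning on Lemma~\ref{lem:equal-ortho} to re-establish the pairwise-orthogonality side conditions for sums (and on the substitution lemma, Lemma~\ref{lem:qua-subst-type}, for the $(u.\beta)$ case). Your write-up simply spells out the case analysis that the paper leaves implicit.
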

\begin{proof}
	By induction on the rules of the equational theory. 
\end{proof}

\paragraph{Bases.}
As expected, and thanks to the equational theory, an orthogonal decomposition
has properties similar to those of linear algebraic bases: in a Hilbert space,
any vector can be written as a combination of elements of a basis. We start by
proving that any expression is equal to a combination of basis values.

\begin{lem}
	\label{lem:values-are-combinations}
	Given a well-formed closed expression $~\entail \ps e \colon \pQ$, there
	exists $I$ a set of indices, $(\ps b_i)_{i \in I}$ a family of basis values,
	$(\palpha_i)_{i \in I}$ a family of complex numbers, such that $~\entail
	\ps e = \Sigma_i (\palpha_i \pdot \ps b_i) \colon \pQ$.
\end{lem}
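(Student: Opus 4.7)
The plan is to proceed by induction on the derivation of $\entail \ps e \colon \pQ$, using the linearity equations of Figure~\ref{fig:eq-rules-vector-space} to push every layer of sums past each term constructor. Since $\ps e$ is closed, the variable case cannot arise; the base cases $\entail \ps * \colon \pbasic$ and $\entail \pzero \colon \pNat$ are already basis values, so the single-index sum $\Sigma_{i=1}^{1}(1 \pdot \ps *)$ (resp.\ $\Sigma_{i=1}^{1}(1 \pdot \pzero)$) equals the term by the inverse of $(1.scal)$. For the unary cases $\inl{\ps e'}$, $\inr{\ps e'}$, $\psucc{\ps e'}$, I first apply the induction hypothesis to $\ps e'$ to obtain $\ps e' = \Sigma_i (\palpha_i \pdot \ps b_i)$; combining the corresponding congruence rule of Figure~\ref{fig:eq-rules-qu-control-cong} with $(\iota.linear_1)$, $(\iota.linear_2)$, or $(S.linear)$ then rewrites the whole expression as $\Sigma_i (\palpha_i \pdot \inl{\ps b_i})$ (resp.\ with $\inr{\ps b_i}$, $\psucc{\ps b_i}$). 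The new summands are basis values, and the clauses for $\inl$, $\inr$, $\psucc$ in Definition~\ref{def:orthogonality} lift pairwise orthogonality from the $\ps b_i$ to them, so the resulting combination is well-formed.

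For the pair case $\pair{\ps e_1}{\ps e_2}$, I apply the induction hypothesis to each component to get $\ps e_1 = \Sigma_i (\palpha_i \pdot \ps b_i)$ and $\ps e_2 = \Sigma_j (\pbeta_j \pdot \ps b'_j)$. Congruence together with $(\otimes.linear_1)$ and $(\otimes.linear_2)$ rewrites $\pair{\ps e_1}{\ps e_2}$ as the nested sum $\Sigma_i (\palpha_i \pdot \Sigma_j (\pbeta_j \pdot \pair{\ps b_i}{\ps b'_j}))$, which $(double)$ collapses into $\Sigma_{i,j}(\palpha_i \pbeta_j \pdot \pair{\ps b_i}{\ps b'_j})$. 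The pair basis values are pairwise orthogonal whenever the index pairs differ, since the pair clauses of Definition~\ref{def:orthogonality} apply as soon as one coordinate differs, and the coefficients satisfy $\sum_{i,j}|\palpha_i \pbeta_j|^2 = \bigl(\sum_i|\palpha_i|^2\bigr)\bigl(\sum_j|\pbeta_j|^2\bigr) = 1$, so the combination is well-formed.

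The delicate case, and the main obstacle, is the outer sum $\ps e = \Sigma_i(\palpha_i \pdot \ps e_i)$: applying the induction hypothesis inside gives $\ps e_i = \Sigma_j (\pbeta_{i,j} \pdot \ps b_{i,j})$, but the basis values appearing across different indices $i$ may coincide, so a direct appeal to $(double)$ would yield an ill-formed sum containing several occurrences of the same basis value. The plan is to first unify bases: let $B = \bigcup_i \{\ps b_{i,j}\}$ be their (finite) union, and for each $i$ pad the decomposition of $\ps e_i$ with zero-coefficient summands $0 \pdot \ps b$ for every $\ps b \in B$ not already present, using $(0.scal)$ in the reverse direction. Padding preserves well-formedness because any two distinct closed basis values of the same type are orthogonal, which follows by a straightforward structural induction on the rules of Definition~\ref{def:orthogonality}. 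Once every $\ps e_i$ is expressed over the common basis $B$ as $\Sigma_{\ps b \in B}(\gamma_{i,\ps b} \pdot \ps b)$, a single application of $(fubini)$ produces $\ps e = \Sigma_{\ps b \in B}\bigl(\bigl(\sum_i \palpha_i \gamma_{i,\ps b}\bigr) \pdot \ps b\bigr)$, a sum over pairwise distinct (hence pairwise orthogonal) basis values whose squared coefficients still add up to $1$ by the orthonormality of the original $\ps e_i$; dropping any remaining zero-coefficient terms via $(0.scal)$ yields the desired form. The core difficulty is precisely this unification step, which hinges on the stability of orthogonality for closed basis values and on using $(0.scal)$ in both directions.
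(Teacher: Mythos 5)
Your proposal is correct and follows essentially the same route as the paper: induction on the formation of $\entail \ps e \colon \pQ$, pushing sums through the constructors with the linearity rules and $(double)$/$(fubini)$. Your explicit common-basis padding with $0 \pdot \ps b$ in the outer-sum case is exactly the step the paper handles with its parenthetical ``without loss of generality'' remark, so you have just spelled out what the paper leaves implicit.
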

\begin{proof}
	The proof is done by induction on $~\entail \ps e \colon \pQ$.
	\begin{itemize}
		\item $~\entail \ps * \colon \pbasic$. Nothing to do.
		\item $~\entail \ps e_1 \ptimes \ps e_2 \colon \pQ_1 \ptimes \pQ_2$.
			The induction hypothesis gives $I$, $(\ps b^1_i)$ and
			$(\palpha_i)$, and $J$, $(\ps b^2_j)$ and $(\pbeta_j)$, such that
			$~\entail \ps e_1 = \Sigma_i (\palpha_i \pdot \ps b^1_i) \colon \pQ_1$
			and $~\entail \ps e_2 = \Sigma_j (\pbeta_j \pdot \ps b^2_j) \colon
			\pQ_2$. Thus, we have that 
 			\begin{align*}
 				&\ \entail \ps e_1 \ptimes \ps e_2 & \\
				&= \left( \Sigma_i (\palpha_i \pdot \ps b^1_i) \right) \ptimes
				\left( \Sigma_j (\pbeta_j \pdot \ps b^2_j) \right) \colon \pQ_1
				\ptimes \pQ_2 
				&\text{(induction hypothesis.)} \\
 				&= \Sigma_i \left( \palpha_i \pdot \ps b^1_i \ptimes \left( \Sigma_j
 				(\pbeta_j \pdot b^2_j) \right) \right) \colon \pQ_1 \ptimes \pQ_2 
 				&(\otimes.linear_2) \\
 				&= \Sigma_i \left( \palpha_i \pdot \Sigma_j ( \pbeta_j \pdot 
 				\ps b^1_i \ptimes \ps b^2_j ) \right) \colon \pQ_1 \ptimes \pQ_2 
 				& (\otimes.linear_1) \\
				&= \Sigma_{ij} (\palpha_i \pbeta_j \pdot \ps b^1_i \ptimes
				b^2_j) \colon \pQ_1 \ptimes \pQ_2
 				& (double)
 			\end{align*}
		\item $~\entail \inl{\ps e} \colon \pQ_1 \pplus \pQ_2$. The induction hypothesis
			gives $~\entail \ps e = \Sigma_i (\palpha_i \pdot \ps b_i) \colon \pQ_1$,
			and observe that $~\entail \inl{} \Sigma_i (\palpha_i \pdot \ps b_i)
			= \Sigma_i (\palpha_i \pdot \inl{} \ps b_i) \colon \pQ_1 \pplus \pQ_2$.
		\item $~\entail \inr{} \ps e \colon \pQ_1 \pplus \pQ_2$ has the same conclusion.
		\item $~\entail \psucc{} \ps e \colon \pNat$ is similar to the previous point.
		\item $~\entail \Sigma_i (\palpha_i \pdot \ps e_i) \colon \pQ$. The
			induction hypothesis gives $(\pbeta_{ij})$ and $(\ps b_j)$ (the
			$\ps b$ does not depend in $i$ without loss of generality, because
			$0 \pdot \ps b$ can be added to any sum term $\ps t$, as long as
			$\ps b$ is orthogonal to $\ps t$). Finally, we have $~\entail
			\Sigma_i (\palpha_i \pdot \Sigma_j (\pbeta_{ij} \pdot \ps b_j)) =
			\Sigma_j ((\sum_i \palpha_i \pbeta_{ij}) \pdot \ps b_j) \colon
			\pQ$.
	\end{itemize}
\end{proof}

\begin{rem}
	The resulting term in the previous lemma, written $\Sigma_i (\palpha_i
	\pdot \ps b_i)$ is a value if the basis values are correctly ordered and if
	all the scalar are non zero. Thanks to the (perm) and ($0$.scal) rules in
	Fig.~\ref{fig:eq-rules-vector-space}, we can assume so. Thus, the lemma
	above shows that expressions have a unique normal form.
\end{rem}

We can then generalise to any orthogonal decomposition, with the help of
substitutions.

\begin{lem}
	\label{lem:od-output-exhaustive}
	Given $\ODe{\pQ}{S}$, where all elements of $S$ are well-formed, and a
	well-formed closed expression $~\entail \ps e \colon \pQ$, there exists $I$
	a set of indices, $(\ps s_i)_{i \in I}$ a family of elements of $S$,
	$(\palpha_i)_{i \in I}$ a family of complex numbers and $(\sigma_i)_{i \in
	I}$ a family of valuations such that $~\entail \ps e = \Sigma_i (\palpha_i
	\pdot \sigma_i(\ps s_i)) \colon \pQ$.
\end{lem}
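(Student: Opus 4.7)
The plan is to proceed by induction on the derivation of $\ODe{\pQ}{S}$, using Lemma~\ref{lem:values-are-combinations} at each step to first reduce $\ps e$ to a linear combination of closed basis values, and then match each basis value against elements of $S$.

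For the base cases, $\ODe{\pQ}{\set{\ps x}}$ is handled by taking $I = \set{*}$ with $\palpha_* = 1$, $\ps s_* = \ps x$ and $\sigma_* = \set{\ps x \mapsto \ps e}$; and $\ODe{\pbasic}{\set{\ps *}}$ follows because the canonical form given by Lemma~\ref{lem:values-are-combinations} already has the desired shape, the only closed basis value of type $\pbasic$ being $\ps *$. For the direct sum case $\ODe{\pQ_1 \pplus \pQ_2}{S \boxplus T}$, the canonical form expresses $\ps e$ as a combination of basis values of the form $\inl{\ps b'}$ or $\inr{\ps b''}$; I apply the induction hypothesis to each $\ps b'$ against $\ODe{\pQ_1}{S}$ (symmetrically to $\ps b''$ against $\ODe{\pQ_2}{T}$), then push the injections through the sums via $(\iota.linear_1)$ and $(\iota.linear_2)$, using that substitution commutes with $\inl$ and $\inr$ so that each summand becomes a substituted element of $S \boxplus T$. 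The case $\ODe{\pNat}{S^{\oplus 0}}$ proceeds analogously, splitting on $\pzero$ versus $\psucc{-}$ and invoking $(S.linear)$.

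For the tensor case $\ODe{\pQ_1 \ptimes \pQ_2}{S}$, say via $\ODe{\pQ_2}{\pi_2(S)}$ together with $\ODe{\pQ_1}{S^2_{\ps e'}}$ for every $\ps e' \in \pi_2(S)$, Lemma~\ref{lem:values-are-combinations} writes $\ps e$ as a combination of pairs $\ps b^1_i \ptimes \ps b^2_i$. The induction hypothesis applied to each $\ps b^2_i$ expresses it as $\Sigma_j (\gamma_{i,j} \pdot \tau_{i,j}(\ps e^{(2)}_{i,j}))$ with $\ps e^{(2)}_{i,j} \in \pi_2(S)$, and then for each $j$ I apply the induction hypothesis to $\ps b^1_i$ using $\ODe{\pQ_1}{S^2_{\ps e^{(2)}_{i,j}}}$ to obtain $\Sigma_k (\delta_{i,j,k} \pdot \rho_{i,j,k}(\ps e^{(1)}_{i,j,k}))$ with $\ps e^{(1)}_{i,j,k} \ptimes \ps e^{(2)}_{i,j} \in S$. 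Distributing $\ptimes$ over sums via $(\otimes.linear_1)$ and $(\otimes.linear_2)$ and merging the valuations $\rho_{i,j,k}$ and $\tau_{i,j}$, whose supports are disjoint thanks to the linearity of the typing contexts used to form the pair, yields the required presentation.

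The final case, the unitary change of basis $\ODe{\pQ}{S^\alpha}$ derived from $\ODe{\pQ}{S}$, is the most delicate. The induction hypothesis on $\ODe{\pQ}{S}$ gives $\ps e = \Sigma_j (\gamma_j \pdot \sigma_j(\ps s_j))$ with $\ps s_j \in S$. Since $\palpha$ is unitary, $\sum_{\ps s \in S} \overline{\palpha_{\ps s, \ps s^*}}\, \palpha_{\ps s, \ps s'} = \delta_{\ps s^*, \ps s'}$, and therefore the identity
\[
  \ps s^* = \Sigma_{\ps s \in S} \left(\overline{\palpha_{\ps s, \ps s^*}} \pdot \Sigma_{\ps s' \in S} (\palpha_{\ps s, \ps s'} \pdot \ps s')\right)
\]
holds in the equational theory via (fubini), $(0.scal)$ and $(1.scal)$, with the inner sums being precisely the elements of $S^\alpha$. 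All elements of $S$ share the same typing context (they are summed to form the elements of $S^\alpha$, which requires a common context), so each $\sigma_j$ acts coherently on every $\ps s \in S$. Applying $\sigma_j$ to both sides of the identity above and distributing via the linearity rules rewrites $\sigma_j(\ps s_j)$ as a combination of $\sigma_j$ applied to elements of $S^\alpha$, completing the induction. The main obstacle will be twofold: managing variable disjointness of valuations in the tensor case, and verifying that the change-of-basis step stays entirely within the syntactic equational theory of Figures~\ref{fig:eq-rules-base}–\ref{fig:eq-rules-qu-control-cong}, rather than appealing to semantic reasoning in Hilbert spaces.
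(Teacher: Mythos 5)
Your proposal is correct and follows essentially the same route as the paper's proof: induction on the derivation of $\ODe{\pQ}{S}$, first normalising $\ps e$ into a sum of closed basis values via Lemma~\ref{lem:values-are-combinations}, matching each basis value case by case (with the union of valuations in the tensor case), and using exactly the same unitarity identity $\ps s_i = \Sigma_{\ps s \in S} (\overline{\palpha_{\ps s,\ps s_i}} \pdot \Sigma_{\ps s' \in S} (\palpha_{\ps s,\ps s'} \pdot \ps s'))$ in the change-of-basis case. The extra care you take about valuation supports and about staying within the syntactic equational theory only makes explicit what the paper leaves implicit.
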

\begin{proof}
	This is proven by induction on ${\rm ONB}$. The previous lemma gives a term
	equal to $\ps e$ in the equational theory written as a finite sum of basis
	values $\Sigma_i (\palpha_i \pdot \ps b_i)$.
	\begin{itemize}
		\item $\ODe{\pQ}{(\set{\ps x})}$. The substitution $\sigma = \set{\ps x
			\mapsto \ps e}$ is suitable with $\ps e = \sigma(\ps x)$.
		\item $\ODe{\pbasic}{(\set{\ps *})}$. Nothing to do.
		\item $\ODe{\pQ_1 \pplus \pQ_2}{S \boxplus T}$. Each $\ps b_i$ is
			either $\inl{} \ps b'_i$, with gives a suitable substitution
			$\sigma_i$ and $\ps s_i \in S$, thus $\inl{} \ps s_i \in S \boxplus
			T$, or $\ps b_i$ is $\inr{} \ps b'_i$, giving suitable substitution
			$\sigma_i$ and $\ps s_i \in T$, thus $\inr{} \ps s_i \in S \boxplus
			T$; all this by induction hypothesis.
		\item $\ODe{\pQ_1 \ptimes \pQ_2}{S}$. Each $\ps b_i$ is of the form
			$\ps b'_i \ptimes \ps b''_i$, the induction hypothesis gives
			suitable $\sigma'_i$, $\ps s'_i$, $\sigma''_i$, $\ps s''_i$, that
			can be assembled into $\sigma_i = \sigma'_i
			\cup \sigma''_i$ and $\ps s_i = \ps s'_i \ptimes \ps s''_i$.
		\item $\ODe{\pNat}{S^{\oplus 0}}$. Each $\ps b_i$ is either $\pzero$,
			for which there is nothing to do, or $\psucc{} \ps b'_i$, in which
			case the induction hypothesis concludes.
		\item $\ODe{\pQ_2}{S^\beta}$. First, we show that each $\ps s_i \in S$ can
			be written as a linear combination of elements of $S^\beta$. Indeed,
			in the equational theory: 
			\begin{align*}
				\Sigma_{\ps s \in S} (\overline\pbeta_{\ps s,\ps s_i} \pdot
				\Sigma_{\ps s' \in S} (\pbeta_{\ps s,\ps s'} \pdot \ps s')) 
				&= \Sigma_{\ps s' \in S} \left(\sum_{\ps s \in S} \overline\pbeta_{\ps s,\ps s_i}
				\pbeta_{\ps s,\ps s'} \right) \pdot \ps s' \\
				&= \Sigma_{\ps s' \in S} (\delta_{\ps s' = \ps s_i} \pdot \ps s') = \ps s_i
			\end{align*}
			the conclusion is then direct.
	\end{itemize}
\end{proof}

\paragraph{Progress.}
The following lemma is our equivalent to progress, and involves both
directions: the application of a unitary to a value reduces to a value, and
given a unitary and a value, there exists a value that is the inverse image of
the latter. 

\begin{lem}
	\label{lem:equational-progress}
	Given $~\entailiso u \colon \isotypeonetwo$, we have the following:
	\begin{itemize}
		\item for all $~\entail \ps e \colon \pQ_1$, there exists $~\entail \ps
			v \colon \pQ_2$ such that $~\entail u~\ps e = \ps v \colon \pQ_2$;
		\item for all $~\entail \ps e \colon \pQ_2$, there exists $~\entail \ps
			v \colon \pQ_1$ such that $~\entail u~\ps v = \ps e \colon \pQ_2$.
	\end{itemize}
\end{lem}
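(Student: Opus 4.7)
The plan is to prove both statements simultaneously by structural induction on the unitary $u$, relying on Lemma~\ref{lem:values-are-combinations} to rewrite any well-formed closed expression as a sum of basis values, on Lemma~\ref{lem:od-substitution} and Lemma~\ref{lem:od-output-exhaustive} for the exhaustiveness of the $\mathrm{ONB}$ predicates, and on the computational rules of Figure~\ref{fig:eq-rules-qu-control} propagated through sums via $(u.linear)$.

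For the base case $u = \unibasique$, part~(1) will proceed as follows: first rewrite $\ps e = \Sigma_j (\palpha_j \pdot \ps b'_j)$ using Lemma~\ref{lem:values-are-combinations}, then apply $(u.linear)$ so that $u~\ps e = \Sigma_j (\palpha_j \pdot u~\ps b'_j)$, and for each closed basis value $\ps b'_j$ invoke Lemma~\ref{lem:od-substitution} to obtain a unique clause index $i_j$ and basis-value valuation $\sigma_j$ with $\match{\sigma_j}{\ps b_{i_j}}{\ps b'_j}$; then $(u.\beta)$ rewrites each $u~\ps b'_j$ to $\sigma_j(\ps e_{i_j})$. A further application of Lemma~\ref{lem:values-are-combinations} to each $\sigma_j(\ps e_{i_j})$ followed by clean-up with $(double)$, $(perm)$ and $(0.scal)$ produces a value. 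For part~(2), Lemma~\ref{lem:od-output-exhaustive} applied to $\ODe{\pQ_2}{\set{\ps e_i}_i}$ will give a decomposition $\ps e = \Sigma_j (\palpha_j \pdot \sigma_j(\ps e_{i_j}))$; setting $\ps v \defeq \Sigma_j (\palpha_j \pdot \sigma_j(\ps b_{i_j}))$ and applying $(u.linear)$ together with $(u.\beta)$ pointwise should yield $u~\ps v = \ps e$.

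The inductive cases $u = u_2 \circ u_1$, $u_1 \otimes u_2$, $u_1 \oplus u_2$ and $\uctrl u'$ will all follow the same schema: decompose $\ps e$ into a sum of basis values of the appropriate shape via Lemma~\ref{lem:values-are-combinations}, apply the relevant rule $(u.comp)$, $(u.\otimes)$, $(u.\oplus_{1,2})$ or $(u.ctrl_{1,2})$ clause-wise, and invoke the appropriate inductive hypothesis on the sub-unitaries. The only non-routine case is $u = v\inv$: for part~(1), IH~(2) for $v$ will yield a value $\ps v' \colon \pQ_1$ with $v~\ps v' = \ps e$; expanding $\ps v' = \Sigma_k (\pbeta_k \pdot \ps c_k)$ and applying IH~(1) for $v$ to each basis value $\ps c_k$ will produce $v~\ps c_k = \ps w_k$, and $(u.inv)$ will then give $v\inv~\ps w_k = \ps c_k$; combining via $(u.linear)$ and transporting the equality $\ps e = \Sigma_k (\pbeta_k \pdot \ps w_k)$ (also derivable from $v~\ps v' = \ps e$ via $(u.linear)$) is expected to yield $v\inv~\ps e = \ps v'$. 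Part~(2) for $v\inv$ is symmetric, exchanging the roles of the two inductive hypotheses.

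The principal obstacle will be part~(2) of the base case: the valuations $\sigma_j$ returned by Lemma~\ref{lem:od-output-exhaustive} may send variables to arbitrary expressions rather than basis values, so $(u.\beta)$ does not apply directly. This will be handled by pre-expanding each image expression into a sum of basis values via Lemma~\ref{lem:values-are-combinations}, so that $\sigma_j$ decomposes as a linear combination of basis-value substitutions, after which $(u.\beta)$ applies pointwise under $(u.linear)$. Verifying that the constructed $\ps v$ is genuinely a well-formed value will additionally require transferring normalisation from $\ps e$ and deducing pairwise orthogonality of its summands from Lemma~\ref{lem:od-imp-bot} applied to the domain ONB $\set{\ps b_i}_i$ together with the stability of orthogonality under substitution (Lemma~\ref{lem:ortho-subst-equiv}).
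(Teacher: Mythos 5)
Your proposal follows essentially the same route as the paper's proof: reduce without loss of generality to basis values via linearity (Lemma~\ref{lem:values-are-combinations}), induct on the structure of $u$, use Lemma~\ref{lem:od-substitution} for the forward direction and Lemma~\ref{lem:od-output-exhaustive} for the backward direction in the pattern-matching base case, dispatch the remaining constructors with the corresponding computational rules, and handle $u\inv$ by swapping the two induction hypotheses and applying $(u.inv)$. Your explicit treatment of valuations whose images are general expressions (pre-expanding them into sums of basis values before applying $(u.\beta)$ under $(u.linear)$) is, if anything, slightly more careful than the paper's own write-up, which glosses over this point.
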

\begin{proof}
	First, observe that terms are taken up to vector space identities, and up
	to linearity of unitaries. Thus, if $e = \Sigma_i (\palpha_i \pdot \ps
	b_i)$, we have $u~\ps e = \Sigma_i (\palpha_i \cdot u~\ps b_i)$, and this
	shows that we can work with $u~\ps b$ with $\ps b$ a basis value without
	loss of generality. The equivalent statement to the one we prove is: given
	$~\entailiso u \colon \isotypeonetwo$ and $~\entail \ps b \colon \pQ_1$ a
	basis value, there exists $~\entail \ps v \colon \pQ_2$ such that $~\entail
	u~\ps b = \ps v \colon \pQ_1$. A similar observation can be done for $u$.
	This is proven by induction on the judgement $~\entailiso u \colon \isotypeonetwo$.
	\begin{itemize}
		\item If $~\entailiso \unibasique \colon \isotypeonetwo$.
			Lemma~\ref{lem:od-substitution} provides a $\ps b_i$ that matches with $\ps b$
			and a substitution $\sigma$ that fits this match. Thus, we have $~\entail
			u~\ps b = \sigma(\ps v_i) \colon \pQ_2$ with $~\entail \sigma(\ps v_i) \colon \pQ_2$. On
			the other hand, Lemma~\ref{lem:od-output-exhaustive} provides three
			families $(\ps v'_j)$ that can have multiple copies of the $(\ps v_i)$,
			$(\palpha_j)$ and $\sigma_j$ such that $u = \Sigma_j (\palpha_j \pdot
			\sigma_j(\ps v'_j))$, and note that $~\entail \unibasique~\Sigma_j (\palpha_j
			\pdot \sigma_j(\ps b'_j)) = \Sigma_j (\palpha_j \pdot \sigma_j(\ps v_j)) \colon \pQ_2$
			where $\ps b'_j$ is a $\ps b_i$ when $\ps v'_j$ is $\ps v_i$; and this is because given a
			substitution $\sigma$ that fits a $\ps b_i$, $~\entail
			\unibasique~\sigma(\ps b_i) = \sigma(\ps v_i) \colon \pQ_2$.
		\item If $~\entailiso u\inv \colon \isotypetwoone$. Given $~\entail \ps v \colon
			\pQ_2$, the induction hypothesis gives $~\entail \ps v' \colon \pQ_1$ such that
			$~\entail u~\ps v = \ps v' \colon \pQ_2$, and thus $~\entail u\inv~\ps v' = \ps v
			\colon \pQ_1$. Moreover, given $~\entail \ps w \colon \pQ_1$, the induction
			hypothesis gives $~\entail \ps w' \colon \pQ_2$ such that $~\entail u~\ps w' = \ps w
			\colon \pQ_2$, and thus $~\entail u\inv~\ps w = \ps w' \colon \pQ_1$.
		\item If $~\entailiso u_2 \circ u_1 \colon \pQ_1 \iso \pQ_3$. The
			induction hypothesis gives us $\ps v_1$ such that $~\entail u_1~\ps
			b = \ps v_1 \colon \pQ_2$ and then $\ps v_2$ such that $~\entail
			u_2~\ps v_1 = \ps v_2 \colon \pQ_3$, which ensures the result. A
			related reasoning proves the second point.
		\item If $~\entailiso u_1 \otimes u_2 \isotype{\pQ_1 \ptimes
			\pQ'_1}{\pQ_2 \ptimes \pQ'_2}$. There are two similar cases, namely
			$\inl{} \ps b$ and $\inr{} \ps b$. In the first case, the induction
			hypothesis gives $\ps v_1$ such that $~\entail u_1~\ps b = \ps v_1
			\colon \pQ_2$, thus $~\entail (u_1 \oplus u_2)~(\inl{} \ps b) =
			\inl{} \ps v_1 \colon \pQ_2 \ptimes \pQ'_2$. The other case is
			similar. A related reasoning proves the second point.
		\item If $~\entailiso u_1 \oplus u_2 \colon \isotype{\pQ_1 \ptimes
			\pQ'_1}{\pQ_2 \ptimes \pQ'_2}$. We write $\ps b_1 \ptimes \ps b_2$
			for $\ps b$, and the induction hypothesis provides $\ps v_1$ and
			$\ps v_2$ such that $~\entail u_1~\ps b_1 = \ps v_1 \colon \pQ_2$
			and $~\entail u_2~\ps b_2 = \ps v_2 \colon \pQ'_2$, ensuring that
			$~\entail (u_1 \otimes u_2)~(\ps b_1 \ptimes \ps b_2) = \ps v_1
			\ptimes \ps v_2 \colon \pQ_2 \ptimes \pQ'_2$. A related reasoning
			proves the second point.
		\item If $~\entailiso \uctrl u \colon \isotype{\pqbit \ptimes \pQ}{\pqbit
			\ptimes \pQ}$. In the case $(\inl{} \ps *) \ptimes \ps b$, there is
			nothing to do. The other case is $(\inr{} \ps *) \ptimes \ps b$,
			and the induction hypothesis gives $\ps v$ such that $~\entail
			u~\ps b = \ps v \colon \pQ$, and then $~\entail (\uctrl u)~((\inr{}
			\ps *) \ptimes \ps b) = (\inr{} \ps *) \ptimes \ps v \colon \pqbit
			\ptimes \pQ$. A related reasoning proves the second point.
	\end{itemize}
\end{proof}

\paragraph{Normalisation.}
We have proven that unitary applications progress and reduce to values, if one
wishes to have an operational point of view. This allows use to prove, with the
same operational view, that the system strongly normalises; this means that any
term $\ps t$ eventually reduces to a value.

\begin{lem}
	\label{lem:equational-sn}
	Given $~\entail \ps t \colon \pQ$, there exists $~\entail \ps v \colon \pQ$
	such that $~\entail \ps t = \ps v \colon \pQ$.
\end{lem}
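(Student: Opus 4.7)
The plan is to prove this by induction on the derivation of $\vdash \ps t \colon \pQ$. The base cases $\ps *$ and $\pzero$ are already values, so there is nothing to do. For the constructors $\inl{-}$, $\inr{-}$, $\psucc{-}$, and $\pair{-}{-}$, the inductive hypothesis yields values $\ps v_i$ equal to the subterms, and we must then push each constructor past a linear combination to obtain a single sum of basis values. This is exactly what the rules $(\iota.linear_j)$, $(S.linear)$, and $(\otimes.linear_j)$ from Figure~\ref{fig:eq-rules-vector-space} allow us to do (combined with the congruence rules of Figure~\ref{fig:eq-rules-qu-control-cong}). For the pair case we apply both linearity rules in sequence, then use $(double)$ to flatten the nested sum into a single sum indexed by pairs.

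For the linear combination case $\Sigma_i (\palpha_i \pdot \ps t_i)$, the inductive hypothesis gives values $\ps v_i$ with $\vdash \ps t_i = \ps v_i \colon \pQ$, and the congruence rule $(\Sigma.eq)$ yields $\vdash \Sigma_i (\palpha_i \pdot \ps t_i) = \Sigma_i (\palpha_i \pdot \ps v_i) \colon \pQ$. Each $\ps v_i$ is itself a sum of basis values, so using $(fubini)$ (or equivalently $(double)$ with a trivial index) we obtain a single linear combination of basis values. To make it into a value in the strict syntactic sense, we then use $(perm)$ to order the basis values, $(0.scal)$ to remove vanishing scalars, and collect coefficients attached to equal basis values via Fubini. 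Well-formedness of the resulting sum follows from Lemma~\ref{lem:equal-ortho} combined with Proposition~\ref{prop:equal-to-typed}.

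The main obstacle is the unitary application case $\vdash u~\ps t' \colon \pQ_2$, where $\udash u \colon U(\pQ_1, \pQ_2)$ and $\vdash \ps t' \colon \pQ_1$. The inductive hypothesis gives a value $\ps v' = \Sigma_i (\palpha_i \pdot \ps b_i)$ with $\vdash \ps t' = \ps v' \colon \pQ_1$. The congruence rule $(u.eq)$ then yields $\vdash u~\ps t' = u~\Sigma_i (\palpha_i \pdot \ps b_i) \colon \pQ_2$, and applying $(u.linear)$ we obtain $\vdash u~\ps t' = \Sigma_i (\palpha_i \pdot u~\ps b_i) \colon \pQ_2$. At this point each $u~\ps b_i$ is a unitary applied to a closed basis value, so Lemma~\ref{lem:equational-progress} applies and provides a value $\ps w_i$ such that $\vdash u~\ps b_i = \ps w_i \colon \pQ_2$. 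Using $(\Sigma.eq)$ once more, we get $\vdash u~\ps t' = \Sigma_i (\palpha_i \pdot \ps w_i) \colon \pQ_2$, and we finish as in the previous paragraph by rearranging the sum into a normal-form value.

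The only delicate point throughout is ensuring that each intermediate sum is well-formed, i.e.\ that the orthogonality and normalisation conditions of the formation rule for $\Sigma$ remain satisfied after each step. This is guaranteed by Lemma~\ref{lem:equal-ortho} (orthogonality is stable under provable equality) and Proposition~\ref{prop:equal-to-typed} (well-formedness of both sides of an equation), so every intermediate judgement used above is indeed derivable. This completes the induction and yields the desired value $\ps v$.
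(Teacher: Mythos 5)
Your proof is correct and follows essentially the same route as the paper: induction on the formation rules of $\entail \ps t \colon \pQ$, with Lemma~\ref{lem:equational-progress} handling the unitary application case. The only difference is that you spell out (via the linearity, $(fubini)$/$(double)$, $(perm)$ and $(0.scal)$ rules, together with Lemma~\ref{lem:equal-ortho} and Proposition~\ref{prop:equal-to-typed}) the flattening of constructor-applied values into literal normal-form linear combinations of basis values, a step the paper leaves implicit.
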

\begin{proof}
	This is proven by induction on the formation rules of $~\entail \ps t \colon \pQ$.
	\begin{itemize}
		\item The cases $\ps *$, $\ps x$, $\pzero$ and sum are straightforward.
		\item In the case $\ps t_1 \ptimes \ps t_2$, the induction hypothesis
			gives corresponding $\ps v_1$ and $\ps v_2$, that ensure $~\entail
			\ps t_1 \ptimes \ps t_2 = \ps v_1 \ptimes \ps v_2 \colon \pQ_1
			\ptimes \pQ_2$.
		\item In the case $\ini{} \ps t$, the induction hypothesis provides
			$\ps v$ such that $~\entail \ps t = \ps v \colon \pQ_i$, thus
			$~\entail \inx{} \ps t = \inx{} \ps v \colon \pQ_1 \pplus \pQ_2$.
		\item In the case $\psucc{} \ps t$, the induction hypothesis provides
			$\ps v$ such that $~\entail \ps t = \ps v \colon \pNat$, thus
			$~\entail \psucc{} \ps t = \psucc{} \ps v \colon \pNat$.
		\item In the case $u~\ps t$, the induction hypothesis gives $\ps v$
			such that $~\entail \ps t = \ps v \colon \pQ_1$. The previous
			lemma, Lemma~\ref{lem:equational-progress}, provides $\ps v'$ such
			that $~\entail u~\ps v = \ps v' \colon \pQ_2$, thus $~\entail \ps t
			= \ps v' \colon \pQ_2$.
	\end{itemize}
\end{proof}

\paragraph{Completeness.}
Finally, we prove a strong link between the denotational semantics and the
equational theory, namely that an equality statement in one is also an equality
statement in the other. We start by showing soundness, meaning that two terms
equal in the equational theory have the same denotational interpretation.

\begin{prop}
	\label{prop:qua-soundness}
	Given $\pGamma \entail \ps t_1 = \ps t_2 \colon \pQ$, then $\sem{\pGamma
	\entail \ps t_1 \colon \pQ} = \sem{\pGamma \entail \ps t_2 \colon \pQ}$.
\end{prop}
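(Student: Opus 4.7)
The plan is to proceed by straightforward induction on the derivation of $\pGamma \entail \ps t_1 = \ps t_2 \colon \pQ$, treating each of the four rule groups (basic, vector-space/linear, congruence, computational) in turn and verifying that the denotations agree. By Proposition~\ref{prop:equal-to-typed} both $\pGamma \entail \ps t_1 \colon \pQ$ and $\pGamma \entail \ps t_2 \colon \pQ$ are well-formed, so the denotations $\sem{\ps t_1}$ and $\sem{\ps t_2}$ exist as isometries by Lemma~\ref{lem:term-isometry}.

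The reflexivity, symmetry and transitivity rules (Figure~\ref{fig:eq-rules-base}) are immediate from the fact that equality of morphisms in $\Iso$ is an equivalence. The congruence rules (Figure~\ref{fig:eq-rules-qu-control-cong}) each permit replacing a subterm by an equal one under a term former; since the clauses of Figure~\ref{fig:simple-term-interpretation} define $\sem{-}$ compositionally from the constructor's denotational counterpart (left/right injection, tensor, $\rmsucc$, a unitary, or a scaled sum), each of these cases reduces to applying the constructor's semantics to the two equal denotations supplied by the induction hypothesis. The vector-space and linear rules (Figure~\ref{fig:eq-rules-vector-space}) all follow from elementary properties of sums and scalars of morphisms in $\Contr$: (perm), ($0$.scal), ($1$.scal), (fubini), (double) are commutativity, zero, unit and bilinearity of the complex-linear structure, while (u.linear), ($\iota$.linear$_{1,2}$), ($\otimes$.linear$_{1,2}$) and (S.linear) follow because pre- and post-composition, tensoring, the injections $\iota_l, \iota_r$ and $\rmsucc$ are all (complex-)linear maps.

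The interesting cases are the computational rules of Figure~\ref{fig:eq-rules-qu-control}. The key case is (u.$\beta$): this is exactly the content of Proposition~\ref{prop:qua-op-soundness}, whose proof combines Lemma~\ref{lem:orthogonal-semantics-ortho-value} (non-matching clauses contribute $0$) with Lemma~\ref{lem:matching-semantics} ($\sem{\ps b_i}\inv \circ \sem{\ps b'} = \sem \sigma$) and Proposition~\ref{prop:substitution-interpretation} ($\sem{\sigma(\ps e_i)} = \sem{\ps e_i} \circ \sem \sigma$). For (u.comp) the equation $\sem{u_2 \circ u_1} = \sem{u_2} \circ \sem{u_1}$ is given directly by Figure~\ref{fig:unitary-semantics}, and composing with $\sem{\ps b}$ settles the rule. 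Similarly, (u.$\otimes$), (u.$\oplus_1$), (u.$\oplus_2$), (u.ctrl$_1$), (u.ctrl$_2$) are handled by unfolding $\sem{u_1 \otimes u_2}$, $\sem{u_1 \oplus u_2}$ and $\sem{\uctrl u}$ from Figure~\ref{fig:unitary-semantics} and applying monoidal-functor bookkeeping together with $\iota_l \iota_l\inv + \iota_r \iota_r\inv = \iid$ (Example~\ref{ex:qua-iota}). For (u.inv), reading the hypothesis denotationally gives $\sem u \circ \sem{\ps b} = \sem{\ps v}$, so applying $\sem u \inv = \sem{u\inv}$ on both sides and using that $\sem u$ is unitary yields $\sem{u\inv} \circ \sem{\ps v} = \sem{\ps b}$.

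The main obstacle is the (u.$\beta$) rule, which is the only place where the orthogonality structure of $\mathrm{ONB}_{\pQ_1}$ and the semantics of substitution must be brought together in a nontrivial way; however, this work has already been packaged as Proposition~\ref{prop:qua-op-soundness}, so in the present proof it reduces to citing that result. All other cases are bookkeeping driven by the compositional clauses in Figures~\ref{fig:simple-term-interpretation} and~\ref{fig:unitary-semantics}.
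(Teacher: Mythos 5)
Your proposal is correct and follows essentially the same route as the paper: induction on the rules of the equational theory, with every case routine except (u.$\beta$), which is discharged by citing Proposition~\ref{prop:qua-op-soundness}. The extra detail you give for the linear, congruence and remaining computational rules is exactly the bookkeeping the paper leaves implicit.
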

\begin{proof}
	By induction on the rules of the equational theory. The only non trivial
	case was done within Proposition~\ref{prop:qua-op-soundness}.
\end{proof}

We prove then completeness, starting with terms without unitary functions.

\begin{lem}[Completeness of values]
	\label{lem:value-completeness}
	Given $~\entail \ps v_1 \colon \pQ$ and $~\entail \ps v_2 \colon \pQ$, if
	$\sem{\entail \ps v_1 \colon \pQ} = \sem{\entail \ps v_2 \colon \pQ}$, then
	$~\entail \ps v_1 = \ps v_2 \colon \pQ$.
\end{lem}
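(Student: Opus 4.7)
The plan is to reduce both values to normal forms as linear combinations of closed basis values, and then exploit the fact that closed basis values of a given type form an orthonormal family both syntactically and denotationally, so the coefficients of the decomposition can be read off by inner products.

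More precisely, by the definition of values we may already write $\vdash \ps v_1 \colon \pQ$ and $\vdash \ps v_2 \colon \pQ$ in the form $\ps v_1 = \Sigma_{i \in I}(\palpha_i \pdot \ps b_i)$ and $\ps v_2 = \Sigma_{j \in J}(\pbeta_j \pdot \ps b'_j)$, where all $\ps b_i$ and $\ps b'_j$ are \emph{closed} basis values, pairwise distinct within each sum, and all scalars are nonzero. By Proposition~\ref{prop:patterns-are-orthogonal} applied to closed basis values (where the only possible matching substitution is the empty one), two closed basis values of type $\pQ$ are either syntactically equal or orthogonal in the sense of Definition~\ref{def:orthogonality}. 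Consequently the multiset $\{\ps b_i\}_i \cup \{\ps b'_j\}_j$ can be arranged as a list $\ps c_1, \dots, \ps c_n$ of pairwise distinct closed basis values that are pairwise orthogonal.

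By Lemma~\ref{lem:orthogonal-semantics-ortho-value}, the denotations $\sem{\ps c_k}$ satisfy $\sem{\ps c_k}\inv \circ \sem{\ps c_\ell} = 0$ for $k \neq \ell$, and by Lemma~\ref{lem:value-isometry} each $\sem{\ps c_k}$ is an isometry $\C = \sem{\emptyset} \to \sem{\pQ}$, hence a unit vector. Thus $(\sem{\ps c_k})_{k=1,\dots,n}$ is an orthonormal family in $\sem{\pQ}$. Writing $\sem{\ps v_1} = \sum_k \gamma_k \sem{\ps c_k}$ and $\sem{\ps v_2} = \sum_k \delta_k \sem{\ps c_k}$ where $\gamma_k$ is $\palpha_i$ if $\ps c_k = \ps b_i$ and $0$ otherwise, and similarly for $\delta_k$, the hypothesis $\sem{\ps v_1} = \sem{\ps v_2}$ together with linear independence forces $\gamma_k = \delta_k$ for every $k$.

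Since all scalars appearing in the normal forms of $\ps v_1$ and $\ps v_2$ are nonzero by assumption, the coincidence of coefficients implies $I = J$ as indexing sets of basis values appearing, with matching scalars and matching basis terms. Applying the $(perm)$ rule to reorder the sums into the same canonical ordering and then $(refl)$ yields $\vdash \ps v_1 = \ps v_2 \colon \pQ$. The only real subtlety is reducing the problem to a shared orthonormal family of closed basis values; once that is in place, the argument is pure linear independence in a Hilbert space.
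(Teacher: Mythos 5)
Your proof is correct, but it follows a genuinely different route from the paper. The paper proves the lemma by structural induction on $~\entail \ps v_1 \colon \pQ$: it peels off constructors ($\ps b_1 \ptimes \ps b_1'$, $\inx{\ps b_1}$, $\pzero$, $\ps *$), cancels the corresponding isometries ($\iota_l, \iota_r$, tensor factors) to invoke the induction hypothesis componentwise, and in the linear-combination case probes both sides with $\sem{\ps b}\inv$ for arbitrary closed basis values $\ps b$ to match coefficients, again finishing by induction on basis values. You instead avoid the induction entirely: you use the fact that a closed value is by definition a combination of \emph{closed} basis values with distinct entries and nonzero scalars, observe via Proposition~\ref{prop:patterns-are-orthogonal} (plus the fact that matching against a closed pattern forces syntactic equality) that distinct closed basis values of the same type are orthogonal, and then Lemma~\ref{lem:orthogonal-semantics-ortho-value} and Lemma~\ref{lem:value-isometry} turn the union of the two supports into an orthonormal family, so the hypothesis $\sem{\ps v_1} = \sem{\ps v_2}$ reduces to linear independence and coefficient matching, after which $(perm)$ (with symmetry/transitivity) closes the argument. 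What your approach buys is a shorter, case-free argument that sidesteps the slightly delicate tensor-cancellation step of the paper's induction ($\sem{\ps b_1}\otimes\sem{\ps b_1'} = \sem{\ps b_2}\otimes\sem{\ps b_2'}$ implying equality of the factors); what the paper's induction buys is that it establishes the statement for basis values directly as sub-cases, without needing the equal-or-orthogonal dichotomy for closed basis values as a separate ingredient. Both hinge on the same underlying fact that syntactic orthogonality of basis values becomes denotational orthogonality.
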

\begin{proof}
	This is proven by induction on $~\entail \ps v_1 \colon \pQ$.
	\begin{itemize}
		\item The cases for $\ps *$ and $\pzero$ are straightforward.
		\item If $\ps v_1 = \ps b_1 \ptimes \ps b'_1$ with type $\pQ_2 \otimes
			\pQ_2$, then also $\ps v_2 = \ps b_2 \otimes \ps b'_2$, and
			$\sem{\ps b_1} \otimes \sem{b'_1} = \sem{\ps b_2} \otimes
			\sem{b'_2}$, thus $\sem{\ps b_1} = \sem{\ps b_2}$ and $\sem{b'_1} =
			\sem{b'_2}$, the induction hypothesis ensures that $~\entail \ps
			b_1 = \ps b_2 \colon \pQ_1$ and $~\entail \ps b'_1 = \ps b'_2
			\colon \pQ_2$, and thus $~\entail \ps b_1 \ptimes \ps b'_1 = \ps
			b_2 \ptimes \ps b'_2 \colon \pQ_1 \ptimes \pQ_2$.
		\item If $\ps v_1 = \ini{} \ps b_1$ of type $\pQ_1 \pplus \pQ_2$, with
			$\sem{\ps v_1} = \sem{\ps v_2}$, necessarily $\ps v_2 = \ini{} \ps
			b_2$, and $\iota^{\pQ_1,\pQ_2}_l \circ \sem{\ps b_1} =
			\iota^{\pQ_1,\pQ_2}_l \circ \sem{\ps b_2}$, thus
			$(\iota^{\pQ_1,\pQ_2}_l)\inv \circ \iota^{\pQ_1,\pQ_2}_l \circ
			\sem{\ps b_1} = (\iota^{\pQ_1,\pQ_2}_l)\inv \circ
			\iota^{\pQ_1,\pQ_2}_l \circ \sem{\ps b_2}$ which ends with
			$\sem{\ps b_1} = \sem{\ps b_2}$, and the induction hypothesis gives
			that $~\entail \ps b_1 = \ps b_2 \colon \pQ_i$, and thus $~\entail
			\ini b_1 = \ini b_2 \colon \pQ_1 \pplus \pQ_2$.
		\item Else, with type $\pQ$, $\ps v_1 = \Sigma_i (\palpha_i \pdot \ps b^1_i)$ with the
			$(\ps b^1_i)$ that are pairwise orthogonal, and $\ps v_2 = \Sigma_j (\pbeta_j
			\pdot \ps b^2_j)$ with the $(\ps b^2_j)$ that are also pairwise orthogonal. We
			know that \[ \sem{\ps v_1} = \sum_i \palpha_i \sem{\ps b^1_i} = \sum_j \pbeta_j
			\sem{\ps b^2_j} = \sem{\ps v_2}. \]
			Thus, for all $~\entail \ps b \colon \pQ$, $\sem{\ps b}\inv \circ \sum_i \palpha_i
			\sem{b^1_i} = \sem{\ps b} \inv \circ \sum_j \pbeta_j \sem{\ps b^2_j}$; by
			orthogonality, we have $\sem{\ps b}\inv \circ \sum_i \palpha_i \sem{\ps b^1_i} =
			\palpha_k \sem{\ps b^1_k}$ for some $k$, and $\sem{\ps b}\inv \circ \sum_j \pbeta_j
			\sem{\ps b^2_j} = \pbeta_{k'} \sem{\ps b^2_{k'}}$ for some $k'$. Thus, $\palpha_k
			\sem{\ps b^1_k} = \pbeta_{k'} \sem{\ps b^2_{k'}}$, and because they are basis
			values, we have $\sem{\ps b^1_k} = \sem{\ps b^2_{k'}}$, and the induction
			hypothesis ensures that $~\entail \ps b^1_k = \ps b^2_{k'} \colon \pQ$. Note that
			this is done for all $~\entail \ps b \colon \pQ$, and thus $~\entail \ps v_1 = \ps v_2
			\colon \pQ$.
	\end{itemize}
\end{proof}

We then prove the final result of this section, namely, completeness on closed
terms, meaning that the equality statement of the equational theory and of the
denotational semantics on closed terms are equivalent.

\begin{thm}[Completeness]
	\label{th:complete}
	\(
		~\entail \ps t_1 = \ps t_2 \colon \pQ 
		\text{ iff } 
		\sem{\entail \ps t_1 \colon \pQ} = \sem{\entail \ps t_2 \colon \pQ}.
	\)
\end{thm}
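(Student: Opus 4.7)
The plan is to split the biconditional into two directions, one of which is essentially already in hand. The forward direction, namely that $\vdash \ps t_1 = \ps t_2 \colon \pQ$ implies $\sem{\vdash \ps t_1 \colon \pQ} = \sem{\vdash \ps t_2 \colon \pQ}$, is precisely the soundness statement recorded in Proposition~\ref{prop:qua-soundness}, so there is nothing new to do for this direction.

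For the reverse direction, I would exploit the fact that closed terms always reduce (up to the equational theory) to values, so we can reduce the problem to completeness on values, which has already been proven. Concretely: given closed $\vdash \ps t_1 \colon \pQ$ and $\vdash \ps t_2 \colon \pQ$ with $\sem{\ps t_1} = \sem{\ps t_2}$, first apply Lemma~\ref{lem:equational-sn} (normalisation) to obtain values $\vdash \ps v_1 \colon \pQ$ and $\vdash \ps v_2 \colon \pQ$ such that $\vdash \ps t_1 = \ps v_1 \colon \pQ$ and $\vdash \ps t_2 = \ps v_2 \colon \pQ$. Then apply the forward direction (Proposition~\ref{prop:qua-soundness}) to these two equalities, obtaining $\sem{\ps t_1} = \sem{\ps v_1}$ and $\sem{\ps t_2} = \sem{\ps v_2}$.

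Combining this with the hypothesis $\sem{\ps t_1} = \sem{\ps t_2}$ gives $\sem{\ps v_1} = \sem{\ps v_2}$. Now invoke Lemma~\ref{lem:value-completeness} (completeness for values) to conclude $\vdash \ps v_1 = \ps v_2 \colon \pQ$. Finally, chain the three equalities $\vdash \ps t_1 = \ps v_1 \colon \pQ$, $\vdash \ps v_1 = \ps v_2 \colon \pQ$, and $\vdash \ps v_2 = \ps t_2 \colon \pQ$ by applying symmetry and transitivity (Figure~\ref{fig:eq-rules-base}) to obtain the desired $\vdash \ps t_1 = \ps t_2 \colon \pQ$.

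There is no real obstacle here: the heavy lifting was already done in the normalisation lemma (which handles unitary application via progress, Lemma~\ref{lem:equational-progress}) and in the completeness lemma for values (which handles the structural induction, including the delicate case of linear combinations of basis values relying on orthogonality of their interpretations). The theorem is essentially a clean assembly of those two results with soundness acting as a bridge between the syntactic and semantic sides.
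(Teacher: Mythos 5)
Your proposal is correct and follows essentially the same route as the paper: soundness (Proposition~\ref{prop:qua-soundness}) for the forward direction, then normalisation (Lemma~\ref{lem:equational-sn}) plus value completeness (Lemma~\ref{lem:value-completeness}) and transitivity for the converse. No gaps to report.
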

\begin{proof}
	We prove both directions.
	\begin{itemize}
		\item For the implication $~\entail \ps t_1 = \ps t_2 \colon \pQ$ to $\sem{\entail
			\ps t_1 \colon \pQ} = \sem{\entail \ps t_2 \colon \pQ}$, see
			Proposition~\ref{prop:qua-soundness}.
		\item The other direction uses Lemma~\ref{lem:equational-sn}, the
			equivalent of strong normalisation, that gives $\ps v_1$ and $\ps
			v_2$ such that $~\entail \ps t_1 = \ps v_2 \colon \pQ$ and
			$~\entail \ps t_2 = \ps v_2 \colon \pQ$. Our hypothesis is that
			$\sem{\ps t_1} = \sem{\ps t_2}$, and thus $\sem{\ps v_1} = \sem{\ps
			t_1} = \sem{\ps t_2} = \sem{\ps v_2}$.
			Lemma~\ref{lem:value-completeness} gives then that $~\entail \ps
			v_1 = \ps v_2 \colon \pQ$ and transitivity ensures that $~\entail
			\ps t_1 = \ps t_2 \colon \pQ$.
	\end{itemize}
\end{proof}

%%%%%%%%%%%%%%%%%%%%%%%%%%%%%%%%%%%%%%%%%%%%%%
\section{Combining Classical and Quantum Control}\label{sec:combining}
%%%%%%%%%%%%%%%%%%%%%%%%%%%%%%%%%%%%%%%%%%%%%%

This section lays down the syntax and semantics of the main calculus, which
combines the quantum control fragment with a classically controlled layer, a
variant of a call-by-value linear lambda-calculus.  The subsystem which we
presented in the previous section is designed to represent pure quantum
information and computation, whereas in this section, we are concerned with
mixed quantum information and computation.  Our denotational approach towards
this uses von Neumann algebras, which provide an appropriate mathematical
setting to achieve this treatment. Operationally, this is achieved by adapting the
configurations (also called closures) $(\ket \psi, \ell, M)$ from an effectful quantum lambda-calculus, e.g., \cite{kenta-bram,selinger2009quantum},
by replacing the quantum state $\ket \psi$ with a pure term $\pt$, replacing
the linking function $\ell$ with a unitary $u$ that can be described via our
syntax and which represents a suitable permutation, and finally by defining suitable
formation conditions for the configuration. 

We refer to this
system as the ``main calculus'', because it allows us to integrate the pure
quantum subsystem within it, but it also allows us to represent classical
information, quantum measurements and mixed-state quantum computation.

%%%%%%%%%%%%%%%%%%%%%%%%%%%%%%%%%%%%%%
\subsection{Syntax and Typing Rules}
\label{sec:syntax}
%%%%%%%%%%%%%%%%%%%%%%%%%%%%%%%%%%%%%%
\paragraph{Types.}
The \emph{types} of the main calculus, ranged over by capital Latin letters
(e.g., $\mathrm A, \mathrm B$) are given in Figure~\ref{fig:syntax-c}.
%and we now explain how they
%should be understood.  
We write $I$ for the unit type, $\mathrm A + \mathrm B$ for sum types
(which allow us to handle classical control), $\mathrm A \otimes \mathrm B$ for pair types, $\mathrm A
\multimap \mathrm B$ for higher-order linear function types, and $!\mathrm A$ for exponential types
(in the sense of linear logic).
All types except $\Nat$ and $\mathcal B(\ptype)$ are standard in a system for a classical call-by-value
linear lambda-calculus. $\Nat$ is a ground type representing classical natural
numbers and $\mathcal B(\ptype)$ is the type that represents
\emph{mixed state quantum} computation on the Hilbert space determined by the type
$\pQ$. As we see later, this type plays an important role by
introducing quantum and probabilistic effects into the system. 
The modality $\mathcal B$ is inspired from the denotational model, where $\mathcal B(H)$ 
represents the von Neumann algebra of bounded operators on the Hilbert space $H$. 

%
% Syntax
%
\begin{figure*}[t]
	\begin{tabular}{l l l l}
		(\cvariables) & \multicolumn{3}{l}{$x,y,z$}\\
		(\ctypes)  & $\mathrm{A, B}$ & ::= & $\mathrm{I} \alt \mathrm{A + B} \alt
		\mathrm{A\otimes B} \alt \ ! \mathrm{A} \alt \mathrm{A \multimap B} \alt
		\Nat \alt \mathcal B(\ptype)$\\
		(\cterms)  & $L,M,N$ & ::= & $ \ast \alt x \alt \tinl{M} \alt \tinr{N} \alt \tcase{L}{M}{N} $\\
		& & & $\alt M \otimes N \alt \tletpair{x}{y}{M}{N} \alt \tlift{M} \alt \tforce{M}$ \\
		& & & $\alt \lambda x.M \alt M\, N $\\
		& & & $\alt\tzero \alt \tsucc{M} \alt \tmatch{L}{M}{N}$ \\
		& & & $\alt  \tpure{\pt} \alt \tmeas{M} \alt \tun{\isoterm}{M}$\\
		& & & $\alt\tletb{z}{M}{N} \alt \tletpairb{x}{y}{M}{N}$ \\
		(\cvalues)  & $V, W$ & ::= & $\ast \alt x \alt \tinl{V} \alt \tinr{W} \alt V \otimes W \alt \tlift{M} \alt \lambda x.M$ \\
		& & & $\alt \tzero \alt \tsucc{V}$%\\
	\end{tabular}
\caption{Syntax of the classically controlled system.}
\label{fig:syntax-c}
\end{figure*}

\paragraph{Terms.}
\emph{Terms} (and values) of the calculus are described in Figure~\ref{fig:syntax-c}.
They are either standard linear lambda-calculus terms, terms representing natural numbers, 
or terms that we introduce to model the interaction of the classical control fragment with the quantum control fragment. 
 We refer the reader to \cite{selinger2009quantum} for a detailed exposition on linear lambda-calculus.
$zero$ and $succ$ are constructors for natural numbers and $\tmatch{L}{M}{N}$ is the corresponding pattern-match.
The term $\tpure{\pt}$ models the preparation of a pure quantum state represented by the quantum term $\pt$. 
Measurement is modelled by the term $meas(M)$. 
The term $\tun{\isoterm}{M}$ represents the application of the unitary $\isoterm$ to a term $M$.
Additionally, we have two syntactic constructs to model the isomorphism between the types $\mathcal B(\ptype_1 \ps \otimes \ptype_2)$ and $\mathcal B(\ptype_1) \otimes \mathcal B(\ptype_2)$:
the term $\tletb{z}{M}{N}$ allows one to construct a term of type $\mathcal B(\ptype_1 \ps \otimes \ptype_2)$ from a term of type $\mathcal B(\ptype_1) \otimes \mathcal B(\ptype_2)$ and
the term $\tletpairb{x}{y}{M}{N}$ deals with the opposite direction. 
This isomorphism arises because the functor $\mathcal B(\cdot)$ used in the denotational model is \emph{strict monoidal}. % (see \secref{subsec:cat-str}).
Both constructs are necessary to allow the language to manipulate composed quantum systems and also terms possibly representing \emph{entangled states}.

\paragraph{Type system.}
\begin{figure*}[t]
%\advance\leftskip-0.5cm
\begin{center}
% \[
% %(row 0)
% \bottomAlignProof
% \scalebox{1}{
% \begin{bprooftree}
% \AxiomC{$!\Delta , x : \mathrm{A} , y : \mathrm{B} , \Sigma \vdash M : \mathrm{C}$}
% \UnaryInfC{$!\Delta , y : \mathrm{B} , \textit{x} : \mathrm{A} , \Sigma \vdash M : \mathrm{C}$}
% \end{bprooftree}
% }
%   \bottomAlignProof
% \scalebox{1}{
%    \begin{bprooftree}
%     \AxiomC{}
%     \UnaryInfC{$!\Delta \vdash * : \mathrm{I}$}
%    \end{bprooftree}
%    }
%     \bottomAlignProof
%    \scalebox{1}{
%    \begin{bprooftree}
%    \AxiomC{}
%    \UnaryInfC{$!\Delta , x : \mathrm{A} \vdash$ $x : \mathrm{A}$}
%    \end{bprooftree}
%    }
% \]
% %\\[0.05cm]
% \[
%   %(row 1)
%    \bottomAlignProof
%      \scalebox{1}{
%     \begin{bprooftree}
%     \AxiomC{$ \Delta \vdash M : \mathrm{A}$}
%     \UnaryInfC{$\Delta \vdash \tinl{M} : \mathrm{A + B}$}
%     \end{bprooftree}
%     }
%     \quad
%      \bottomAlignProof
%     \scalebox{1}{
%     \begin{bprooftree}
%     \AxiomC{$ \Delta \vdash N : \mathrm{B}$}
%     \UnaryInfC{$\Delta \vdash \tinr{N} : \mathrm{A + B}$}
%     \end{bprooftree}
%     }
%     \]
% \\[0.05cm]
  \[
%(row 3)
\bottomAlignProof
\scalebox{0.95}{
     \begin{bprooftree}
    \AxiomC{$!\Delta , \Sigma_1 \vdash L : \mathrm{A+B}$}
    \AxiomC{$!\Delta , \Sigma_2 , x : \mathrm{A} \vdash M : \mathrm{C}$}
    \AxiomC{$!\Delta , \Sigma_2 , y : \mathrm{B} \vdash N : \mathrm{C}$}
    \TrinaryInfC{$!\Delta , \Sigma_1 , \Sigma_2 \vdash \tcase{L}{M}{N} : \mathrm{C} $}
    \end{bprooftree}
    }
    \]
% \\[0.05cm]
%   \[
% %(row 2)
% \scalebox{1}{
%     % (product-I)
%     \begin{bprooftree}
%     \AxiomC{$!\Delta, \Sigma_1 \vdash M : \mathrm{A}$ \quad $!\Delta, \Sigma_2 \vdash N : \mathrm{B}$}
%     \UnaryInfC{$!\Delta, \Sigma_1 , \Sigma_2 \vdash  M \otimes N : \mathrm{A \otimes B}$}
%     \end{bprooftree}
%     }
% \scalebox{1}{
%     % (product-E)
%     \begin{bprooftree}
%     \AxiomC{$!\Delta, \Sigma_1 \vdash M : \mathrm{A \otimes B}$ \ \ $!\Delta, \Sigma_2, x: \mathrm{A}, \textit{y} :\mathrm{B} \vdash N : \mathrm{C}$}
%     \UnaryInfC{$!\Delta, \Sigma_1 , \Sigma_2 \vdash \tletpair{x}{y}{M}{N} : \mathrm{C}$}
%     \end{bprooftree}
%     }
%     \]
%\\[0.05cm]
\[
%(row 3)
% \scalebox{1}{
%     % (bang-I)
%     \begin{bprooftree}
%     \AxiomC{$!\Delta \vdash M : \mathrm{A}$}
%     \UnaryInfC{$!\Delta \vdash\tlift{M} :\ !\mathrm{A}$}
%     \end{bprooftree}
%     }
%     \quad
%      \scalebox{1}{
%     % (bang-E)
%     \begin{bprooftree}
%     \AxiomC{$\Delta \vdash M :\ !\mathrm{A}$}
%     \UnaryInfC{$\Delta \vdash \tforce{M} : \mathrm{A}$}
%     \end{bprooftree}
%     }
%(row 4)
  \quad
 \scalebox{0.95}{
    % (lambda-I)
    \begin{bprooftree}
    \AxiomC{$\Delta ,x : \mathrm{A} \vdash M : \mathrm{B}$}
    \UnaryInfC{$\Delta \vdash \lambda x.M : \mathrm{A \multimap B}$}
    \end{bprooftree}
    }
%\]
%\\[0.05cm]
%\[
%(row 5)
    \scalebox{0.95}{
    % (lambda-E)
    \begin{bprooftree}
    \AxiomC{$!\Delta , \Sigma_1 \vdash M : \mathrm{A \multimap B}$ \ \ $!\Delta , \Sigma_2 \vdash N : \mathrm{A} $}
    \UnaryInfC{$!\Delta , \Sigma_1 , \Sigma_2 \vdash M\, N : \mathrm{B}$}
    \end{bprooftree}
    }
        \ 
% \scalebox{1}{
%     % (zero)
%     \begin{bprooftree}
%     \AxiomC{$\vphantom\Sigma$}
%     \UnaryInfC{$!\Delta \vdash \tzero : \Nat$}
%     \end{bprooftree}
%     }
%       \scalebox{1}{
%     % (succ)
%     \begin{bprooftree}
%     \AxiomC{$\Delta \vdash M : \Nat$}
%     \UnaryInfC{$\Delta \vdash \tsucc{M} : \Nat$}
%     \end{bprooftree}
%     }
    \]
%(row 6)
%\\[0.05cm]
\[
    % (match)
    \scalebox{0.95}{
    \begin{bprooftree}
    \AxiomC{$!\Delta , \Sigma_1 \vdash L : \Nat \ \ !\Delta , \Sigma_2 \vdash M : \mathrm{A}$ \ \ $!\Delta , \Sigma_2 , x: \Nat \vdash N : \mathrm{A}$}
    \UnaryInfC{$!\Delta , \Sigma_1 , \Sigma_2 \vdash \tmatch{L}{M}{N} : \mathrm{A}$}
    \end{bprooftree}
    }
    \scalebox{0.95}{
    %(intro-B)
    \begin{bprooftree}
    \AxiomC{$ \cdot \entailpure \ps t : \ptype $}
    \UnaryInfC{$!\Delta \vdash \tpure{\ps t} : \mathcal B(\ptype)$}
    \end{bprooftree}
   }
    \]
%(row 7)
\[
   \scalebox{0.95}{
   %(elim - B)
    \begin{bprooftree}
    \AxiomC{$\Delta \vdash M : \mathcal B(\ptype)$}
    \UnaryInfC{$\Delta \vdash \tmeas{M} : \ov \ptype$}
    \end{bprooftree}
    }
    \quad
       \scalebox{0.95}{
    %(unitary application)
    \begin{bprooftree}
    \AxiomC{$ \udash \isoterm : \isotype{\ptype_1}{\ptype_2}$}
    \AxiomC{ $!\Delta , \Sigma_1 \vdash M : \mathcal B (\ptype_1)$}
    \BinaryInfC{$!\Delta , \Sigma_1 \vdash \tun{\isoterm}{M} : \mathcal B (\ptype_2)$}
    \end{bprooftree}
    }
    \]
%(row 8)
%\\[0.05cm]
\[\scalebox{0.95}{
   %(quantum pair introduction)
   \begin{bprooftree}
   \AxiomC{$!\Delta , \Sigma_1 \vdash M : \mathcal B(\ptype_1) \otimes \mathcal B(\ptype_2)$}
   \AxiomC{$!\Delta , \Sigma_2 , z: \mathcal B(\ptype_1 \ps \otimes \ptype_2) \vdash N : \mathrm{A}$}
   \BinaryInfC{$!\Delta , \Sigma_1 , \Sigma_2 \vdash \tletb{z}{M}{N} : \mathrm{A}$}
   \end{bprooftree}
    }
    \]
%(row 9)
%\\[0.05cm]
\[
\scalebox{0.95}{
    %(quantum pair elimination)
   \begin{bprooftree}
   \AxiomC{$!\Delta , \Sigma_1 \vdash M : \mathcal B(\ptype_1 \ps \otimes \ptype_2)$}
   \AxiomC{$!\Delta , \Sigma_2 , x: \mathcal B(\ptype_1), y : \mathcal B(\ptype_2) \vdash N : \mathrm{A}$}
   \BinaryInfC{$!\Delta , \Sigma_1 , \Sigma_2 \vdash \tletpairb{x}{y}{M}{N} : \mathrm{A}$}
   \end{bprooftree}
    }
\]
\end{center}
\caption{Typing rules for terms (excerpt).}
\label{fig:typing-rules-excerpt}
\end{figure*} 

\emph{Typing contexts} (ranged over by symbols $\Delta,\Sigma_1, \Sigma_2$) are finite sequences $\Delta \defeq x_1: \mathrm A_1, \ldots, x_n : \mathrm A_n$ mapping variables to types. 
We use the notation $!\Delta$ for contexts of the form $x_1:\ !\mathrm A_1, \ldots, x_n:\ !\mathrm A_n.$ 
\emph{Typing judgements} have the form $\Delta \vdash M : \mathrm A$ and follow a linear typing discipline to deal with quantum data. 
The typing rules are given in Figure~\ref{fig:typing-rules-excerpt} where, as usual, $\Sigma_1,\Sigma_2$ denotes the disjoint union of $\Sigma_1$ and $\Sigma_2$. 
Classical bits correspond to the type $\textit{bit} \eqdef I + I$ and mixed state qubits have type $\qbit \eqdef \mathcal B(\pqbit)$.
Note that we interchangeably use
contexts of the form $\Delta$ and $!\Delta, \Sigma_1$. The former is used to lay down rules which do not require a distinction between a \textit{linear} typing context and a non-linear one.
We use the latter when a rule modifies the context, and it is necessary to differentiate between the \emph{linear} part of the typing context and the \emph{non-linear} part. 
The rule for the term $\tpure{\ps t}$ introduces a state $\ps t$ from the quantum control fragment as a term of 
type $\mathcal B(\ps Q)$ into the main calculus. The typing rule for measurement maps a term $M$ of quantum type $\mathcal B(\ptype)$ to a term $\tmeas{M}$ of classical type $\ov \ptype$, where
the type $\ov \ptype$ is defined inductively on the structure of the pure quantum type $\ptype$: 
%$\ov \pbasic \defeq I,\qquad  \ov {\ptypeone \ptimes \ptypetwo} \defeq \ov \ptypeone \otimes \ov \ptypetwo,\qquad \ov{\ptypeone \pplus \ptypetwo} \defeq \ov \ptypeone + \ov \ptypetwo, \qquad \ov \pnat \defeq \Nat .$
\begin{center}
	$\ov \pbasic \defeq I,\quad  \ov {\ptypeone \ptimes \ptypetwo} \defeq \ov
	\ptypeone \otimes \ov \ptypetwo,\quad \ov{\ptypeone \pplus \ptypetwo}
	\defeq \ov \ptypeone + \ov \ptypetwo, \quad \ov \pnat \defeq \Nat .$ 
\end{center}
This gives the classical analogue of the type $\pQ,$ e.g., $\ov{\pqbit} = \textit{bit}.$
For a quantum basis term $\pb$ of type $\ptype$, $\ov{\pb}$ (defined below) denotes its translation to
a classical value of type $\ov \ptype$, which we use to represent measurement outcomes.
\begin{center}
	$\ov{ \ps \ast} \defeq \ast,\quad \ov{\ps b_1 \ps \otimes \ps b_2} \defeq
	\ov{\ps b_1} \otimes \ov{\ps b_2},\quad \ov{\inl {\ps b}} \defeq
	\tinl{\ov{\ps b}},\quad \ov \pzero  \defeq \tzero,\quad \ov{\psucc {\ps b}}
	\defeq \tsucc{\ov{\ps b}}.$
\end{center}

%%%%%%%%%%%%%%%%%%%%%%%%%%%%%%%%%%%%%%%%%%%%%%
\subsection{Operational Semantics}\label{sub:operational-c}
%%%%%%%%%%%%%%%%%%%%%%%%%%%%%%%%%%%%%%%%%%%%%%
\paragraph{Quantum configurations.}
We lay down the operational semantics of the calculus using an adaptation of the notion of a \emph{quantum configuration}
from quantum lambda-calculi (\cite{kenta-bram,selinger2009quantum}).

\begin{defi}[Quantum Configuration]
\label{def:quantum-config}
	Let $\confs$ be the set of  \emph{quantum configurations} $\mathcal C \triangleq (\pt , \isoterm_{\sigma} , M),$ where:
	\begin{itemize}
		\item $\pt \in \pterms$ represents a \emph{pure quantum state};
		\item $\isoterm_{\sigma} \in \punitaries$ represents a symmetric monoidal isomorphism;
		\item $M \in \cterms$ is a term from the main calculus.
	\end{itemize}

\end{defi}
In the above definition, we can think of the quantum term $\pt$ as representing a quantum state, because the
quantum control fragment ensures that $\ps t$ can be rewritten (in its equational theory) to a quantum value~(Lemma~\ref{lem:equational-sn}). 
Next, $\isoterm_{\sigma}$ permutes the components of $\pt$ so that they appear in the same order as variables that represent
them in $M$. We slightly abuse terminology and refer to $u_{\sigma}$ as a permutation, but it is meant to be understood as a canonical symmetric
monoidal isomorphism (see Proposition~\ref{prop:monoidal-unitary}). If we fix the domain and codomain of $\isoterm_\sigma$, then the essential data is given by a choice of permutation $\sigma$ and this
makes it easier to understand how each unitary acts on the coordinates, and the contextual rules formulation becomes precise. Note that, thanks to Proposition \ref{prop:monoidal-unitary}, the
pure quantum syntax is expressive enough to represent all the monoidal
permutations that we need in the sequel.
\begin{exa}
% Consider the quantum term $\ketz \ptimes \keto$. We have a quantum configuration whose quantum
  Consider the quantum configuration $(\ps t , \isoterm_{id}, x \otimes y)$, where we have
  $\ps t \eqdef \ps \ast \ptimes  ( \sfrac{1}{\sqrt2} \psdot \pket 0 \ptimes \pket 0 + \sfrac{1}{\sqrt 2} \psdot \pket 1 \ptimes \pket 1)$ and
  $\isoterm_{id} \eqdef \{ \mid \ps{\ast \otimes w \otimes z} \iso \ps{w \otimes z} \}$ 
  is a unitary with type $\udash \isoterm_{id} \colon U(\ps I \ptimes \pqbit \ptimes \pqbit, \pqbit \ptimes \pqbit)$. It conveys the information 
  that the variable $x$ in $x \otimes y$ keeps track of the first qubit in $\ps t$ and that the variable $y$ keeps track of the second one. Whereas in 
  the configuration $(\ps t , \isoterm_{\emph{swap}}, x \otimes y)$ with $\isoterm_{\emph{swap}} \eqdef \{ \mid \ps{\ast \otimes w \otimes z} \iso \ps{z \otimes w} \}$, 
  the variable $x$ keeps track of the second qubit and $y$ of the first one. Note that, due to quantum entanglement, the term $\ps t$ cannot 
  be decomposed into a non-trivial tensor product, and the variables $x$ and $y$ should be thought of as identifying qubit components of $\ps t$, 
  instead of storing quantum data themselves.
\end{exa}
A quantum configuration $\mathcal V = (\pv , \isoterm_{\sigma} , V)$, when both the quantum term $\pv$ and term $V$ are values, is called a \emph{value configuration}. 
For example, $(\ket 0 , \isoterm_{id} , x)$ is a value configuration.
A quantum configuration $\mathcal{SV}= (\pv , \isoterm_{\sigma} , M)$, where $\pv$ is a quantum value, is a \emph{semi-value configuration}.  We write $\vconfs$ (resp. $\svconfs$) for the set of (resp. semi-)value configurations.
The reason we introduce semi-value configurations, is because they allow us to define our operational semantics for the main calculus modulo equality of the pure quantum terms $\pt$ (analogous to the quantum lambda-calculus). We already proved that values $\pv$ are normal forms for the pure terms~(Lemma~\ref{lem:equational-sn}), so this makes them a natural choice.

\begin{defi}
\label{def:well-formed-configuration}
A quantum configuration $(\pt , \isoterm_{\sigma} , M)$ is \emph{well-formed} with type $\mathrm{A}$, written as $ ( \pt , \isoterm_{\sigma} , M ) \colon \mathrm{A}$
if the following judgments can be derived:
\begin{itemize}
	\item $\cdot \pdash \ps t \colon \ptype_1 \ptimes {\ps \cdots} \ptimes \ptype_n$;
       \item $\cdot \udash \isoterm_{\sigma} \colon U(\ptype_1 \ptimes {\ps \cdots} \ptimes \ptype_n , \ptype_1' \ptimes {\ps \cdots} \ptimes \ptype_m')$;
       \item 
             $x_1 : \mathcal B(\ptype_1') , \ldots , 
               x_m : \mathcal B(\ptype_m') \vdash M : \mathrm{A}$.
\end{itemize}
%The set of well-formed total configurations is defined by $\wfconfs \defeq \cup_{A} \wfconfs(A)$. 
\end{defi}
 
We write $\wfconfs{X}{A}$ for the set of well-formed configurations  in $X$ with type $\mathrm{A}$, where $X \in \{ \confs, \vconfs, \svconfs \}$. 
%Intuitively, a well-formed total configuration is \emph{total} when the entire pure state represented by $\ps t$ is ``covered'' by variables in $M$.
%Non-total configurations are only needed for some technical reasons in a few proofs (Appendix \ref{asec:main_calculus}).
%Remark: there is no reason to talk about non-total configurations in the main text as they are only treated in the Appendix.

\begin{exa}
The configuration $((\ps{\sfrac{1}{\sqrt{2}} \psdot \ket{00} + \sfrac{1}{\sqrt{2}} \psdot  \ket{11}})\ptimes \ps{\ket{0}} , \isoterm_{\id} , x\otimes y)$ with:
\begin{itemize}
	%%\item $\cdot \pdash (\ps \ast \ptimes (\ps{\frac{1}{\sqrt{2}} \psdot\ket{00} + \frac{1}{\sqrt{2}} \psdot \ket{11}})) \ptimes \ps{\ket{0}} \ :\ps I \ptimes \pqbit \ptimes \pqbit \ptimes \pqbit$;
	\item $\cdot \pdash (\ps{\sfrac{1}{\sqrt{2}} \psdot\ket{00} + \sfrac{1}{\sqrt{2}} \psdot \ket{11}}) \ptimes \pket{0} \ : \ptqbit{3}$;
	\item $\cdot \udash \isoterm_{\id} \colon U(\ptqbit{3}, \ptqbit{3})$;
	\item $x: \mathcal B(\ptqbit{2}) , y : \qbit \vdash x \otimes y: \mathcal B(\ptqbit{2}) \otimes \qbit$.
\end{itemize}
In this configuration, $u_\id$ is just the identity, $x$ points to the Bell state in $\pt$ whereas $y$ points to $\ps{\ket 0}$.
We have $n=3,m=2$, with $\ptype_1 = \ptype_2 = \ptype_3 = \pqbit$, $\ptype_1'= \ptqbit{2}$ and $ \ptype_2'= \pqbit$.
The configuration is well-formed.
\end{exa}

The above example highlights the need for $m$ and $n$
(from Definition \ref{def:well-formed-configuration}) to be different values.
The intuition behind this difference is that the unitary $u_{\sigma}$ partitions the $n$ quantum types into $m$ blocks,
where each block is represented by a variable in $M$.
Furthermore, in situations where we wish to combine two such
blocks which are not next to each other in the tensor expression of the
$n$ types, we allow the possibility of permuting these blocks and merging
them so that now two blocks that were initially represented by two variables are represented by one. This is formally expressed in
the reduction rules we define next. The role of $u_{\sigma}$ in a well-formed configuration is illustrated in Figure~\ref{fig:well-formed}. The  
\emph{small-step reduction} $\cdot \rightarrow_{\cdot} \cdot \subseteq \confs \times [0,1] \times \confs$ is the relation
defined by the rules of Figure~\ref{fig:red_uni}% and Figure~\ref{fig:operational-small-step}
: $\mathcal C \rightarrow_p  \mathcal C'$ holds when the
quantum configuration $\mathcal C$ reduces to $\mathcal C'$ with probability $p  \in [0,1]$.
\usetikzlibrary{shapes.geometric}
%\newtcbox{\rombox}[1][gray]{on line,
%arc=7pt,
%colback=#1!35!white,
%colframe=black,
%before upper={\rule[-3pt]{0pt}{10pt}},boxrule=0pt,
%boxsep=0pt,left=4pt,right=4pt,top=2pt,bottom=2pt}

\begin{figure}{h}
	\begin{center}
		\scalebox{0.5}{
			\begin{quantikz}[background color=gray!10, row sep=0.25cm]
				\gate[7, nwires=4 ,style={tri, fill=white,scale=0.9,xshift=.2cm}]{\rombox{\scalebox{2.5}{\textbf{t}}}\qquad  \qquad} & & \push{\ \ptype_1\ }
				& &\gategroup[7,steps=3,style={dashed,rounded
				corners,fill=gray!10, inner
				xsep=2pt, scale=1.4},background, label style={label
				position=center,anchor=south, yshift=-0.2cm}]{\rombox{\scalebox{2.5}{$\isoterm_{\sigma}$}}} & \permute{3,1,2} & & & & &\gate[7, nwires=4,style={fill=white,scale=1.5,xshift=0.1cm},label style={label
				position=north,anchor=mid, xshift=0.2cm, yshift=-0.2cm}][3.8cm]{\hspace*{0.5cm}\rombox{\scalebox{2.5}{$M : \mathrm{A}$}}}\ \gateinput[2]{\scalebox{1.8}{$x_1 : \mathcal B(\ptype_2 \otimes \ptype_3)$}} \\
				& &\push{\ \ptype_2\ } &  & &   & & & &&\\
				& &\push{\ \ptype_3\ }& &&&&& &&\gateinput{\scalebox{1.8}{$x_2: \mathcal B(\ptype_1)$}}\\
				\setwiretype{n}  & & \wvdots & &  & \wvdots &  & \wvdots & \wvdots & & &&\\
				\setwiretype{n}  & & \wvdots & &  & \wvdots &  & \wvdots & \wvdots & & &&\\
				& &\push{\ \ptype_{n-2}\ } & & & \permute{3,1,2} &&& &&\gateinput[2]{\scalebox{1.8}{$x_{m-1}: \mathcal B(\ptype_{n-1} \otimes \ptype_n)$}}\\
				& &\push{\ \ptype_{n-1}\ } & &&&&& &&\\
				& &\push{\ \ptype_n\ }& &&&&&& &\gateinput{\scalebox{1.8}{$x_m : \mathcal B(\ptype_{n-2})$}}
			\end{quantikz}
			}
			\caption{Graphical representation of a well-formed quantum configuration.}
			\label{fig:well-formed}
	\end{center}
\end{figure}

Additionally, we need to consider configurations where the entire state $\ps t$ is not covered by the variables appearing
in $M$. This is a matter of convenience and technicality for proofs of some of the results that follow.
Hence, we define a \emph{well-formed} quantum configuration with \emph{auxiliary types} to account for such configurations.
Consequently, the configurations introduced above should be seen as being \emph{total} in a certain sense, whereas the configurations 
with auxiliary types that we introduce next are more general in that the quantum data which is represented by the first 
component of the configuration does not have to be entirely covered by the free variables of the term from the main calculus.

\begin{defi}
\label{def:well-formed-auxiliary}
A quantum configuration $ (\pt , \isoterm_{\sigma} , M)$
is said to be \emph{well-formed with type $\mathrm{A}$ and auxiliary quantum data} of types
$[\ptype_1',\ldots,\ptype_k']$, written  $ ( \pt , \isoterm_{\sigma} , M ) \colon \mathrm{A}[\ptype_1', \ldots, \ptype_{\textit{k}}']$, if the following hold:
\begin{itemize}
       \item $\cdot \pdash \ps t \colon \ptype_1 \ptimes \ps{\cdots} \ptimes \ptype_n$ can be derived;
       \item $\cdot \udash \isoterm_{\sigma} \colon U(\ptype_1 \ptimes \ps{\cdots} \ptimes \ptype_n , \ptype_1'' \ptimes \ps{\cdots} \ptimes \ptype_{m+k}'')$
             can be derived;
       \item 
             $x_1 : \mathcal B(\ptype_1'') , \ldots , 
               x_m : \mathcal B(\ptype_m'') \vdash M : \mathrm{A}$
             can be derived;
       \item $[\ptype_1'', \dots, \ptype_{m+k}''] = [\ptype_1'', \dots, \ptype_m'',\ptype_1', \dots, \ptype_k']$ as lists of pure quantum types.
\end{itemize}
\end{defi}
$\wfconfs{\confs}{\mathrm{A}[\ptype_1',\ldots,\ptype_{\textit{k}}']}$ denotes the set of well-formed configurations wrt type $\mathrm{A}$ and auxiliary quantum types $[\ptype_1',\ldots,\ptype_k']$.

The reduction rules along with the unitaries required to state them are given in Figure~\ref{fig:red_uni}.
$\{\pb_{ij} \}_{i,j}$ stand for basis terms and $s$ in the reduction rule for $\tmeas{x}$ corresponds to the
position of $\ps b_i$ in the tensor product appearing in the quantum term. All sums are finite. 
%The complete set of reduction rules is given in 
%Figure~\ref{fig:reduction-rules-comp} in Appendix (\secref{asec:reduction}).
The unitary $u_{\sigma_{gather}}$ 
appearing in the reduction rule for the term $\tletb{z}{x \otimes y}{N}$ merges two consecutive variables into one block, so that now a single
%\input{well-formed.tex}
%\noindent 
variable can represent it. Similarly, in the reduction rule for the term $\tletpairb{x}{y}{z}{N}$, 
the unitary $u_{\sigma_{divide}}$ partitions a block into two, so that two variables can represent them.
The two rules are symmetric as expected, since these terms only allow switching between
types $\mathcal B(\ptype_1 \ptimes \ptype_2)$ and $\mathcal B(\ptype_1) \otimes \mathcal B(\ptype_2)$. The last rule of 
Figure~\ref{fig:red_uni} is a \emph{contextual rule} which holds for any \emph{evaluation context} $E$. Evaluation contexts are defined by the following grammar:
%Unitaries applied to the reducts of the rules in Figure~\ref{fig:reduction-rules} are given in Figure~\ref{fig:unitaries}.
%As is evident in Figure~\ref{fig:unitaries}, we consider more than $m$ variables, to account for auxiliary wires (Appendix \secref{asec:aux}).
%Intuitively, the grouping and separation of variables in unitaries $u_{\sigma_{gather}}, u_{\sigma_{divide}}$ happens for the last few of the $m$ variables which
%appear in the term $M$, and the wires after these $m$ variables in the graphical representation (see Figure~\ref{fig:unitaries}) are auxiliary wires.
%This is done so that the unitaries are also applicable for configurations which are not total.
%The last rule of Figure~\ref{fig:red_uni} is a \emph{contextual rule} which holds for any \emph{evaluation context} $E$. Evaluation contexts are defined by the following grammar:
%\begin{align*}
%E & :=  [.] \alt \tinl{E} \alt \tinr{E} \alt \tcase{E}{M}{N} \alt E \otimes N \alt V \otimes E \\
%&\quad \alt \tletpair{x}{y}{E}{N} \alt \tforce{E} \alt E\, N \alt V\, E \alt \tsucc{E}  \alt meas(E) \alt \tun{\isoterm}{E} \\
%&\quad   \alt \tletb{z}{E}{N} \alt \tletpairb{x}{y}{E}{N}
%\end{align*}
\[
\scalebox{0.9}{
\begin{tabular}{l l l}
%      (Terms) $M \quad N$ & &\\
%      (Values) $V \quad W$ & &\\
       $E$ & := & $[.] \alt E \otimes N \alt V \otimes E \alt \tletpair{x}{y}{E}{N} \alt \tinl{E} \alt \tinr{E}$\\
        & \multicolumn{2}{l}{$\alt \tcase{E}{M}{N} \alt \tforce{E} \alt \tsucc{E} \alt E\, N$}\\
        & \multicolumn{2}{l}{$\alt V\, E \alt meas(E) \alt \tun{\isoterm}{E} \alt \tletb{z}{E}{N} \alt \tletpairb{x}{y}{E}{N}$}
\end{tabular}
}
\]

%In Figure~\ref{fig:congruence-rules}, 
One of the premises of the contextual rule states that if $\sigma$ is a permutation which does not
act on the same set of indices as $\sigma_1,\sigma_2$, then we can form a ``union'' of these  permutations 
by composing an \emph{extension} of the two. For a permutation $\sigma : S \rightarrow S$, we can define its extension $\sigma^{ext} : S \sqcup S' \rightarrow S \sqcup S'$  by $\sigma^{ext}(s) \defeq  \sigma(s)$, if $ s \in S$, $\sigma^{ext}(s') \defeq  s'$, if $s' \in S'$.
We now introduce the reduction relation we use to define the operational semantics. 
%This relation can be intuitively understood as a wrapper over the small-step reduction defined above. It facilitates rewriting of the quantum state to a desirable form when needed to perform a small-step reduction.

\begin{figure*}[!t]
	\centering
	\begin{subfigure}[b]{0.33\textwidth}
		\[                                                                                                   
		\scalebox{0.68}{                                                                                      
		\begin{quantikz}[background color=gray!10]
			\lstick{$\ps x_1$}&
			&\gategroup[10,steps=3,style={dashed,rounded
			corners,fill=gray!10, inner
			xsep=2pt},background,label style={label
			position=below,anchor=north,yshift=-0.2cm}]{$\isoterm_{\sigma_{gather}}$} & & && \rstick{$\ps x_1$}\\
			&\setwiretype{n} \wvdots & & \wvdots & &\wvdots& \\
			\lstick[3]{$\ps x$} & & & & & &\rstick[6]{$\ps{x \ptimes y}$}\\
			&\setwiretype{n} \wvdots & & \wvdots & &\wvdots &\\
			&&&&&&\\
			\lstick[3]{$\ps y$}&&&&&&\\
			&\setwiretype{n} \wvdots & & \wvdots &   &\wvdots & \\
			&&&&&&\\
			&\setwiretype{n} \wvdots &   & \wvdots &  &\wvdots &\\
			\lstick{$\ps x_n$} &&&&& &\rstick{$\ps x_n$}
		\end{quantikz}
		}
		\]
		\[
			\scalebox{0.6}{
				$u_{\sigma_{gather}} \defeq \{ \mid\ps{x_1 \ptimes \dots \ptimes x \ptimes y \ptimes \dots  \ptimes x_n}$}
			\]
			\[
				\scalebox{0.6}{
					$ \quad \iso \ps{x_1 \ptimes \dots \ptimes (x \ptimes y) \ptimes \dots \ptimes x_n} \}$
				}
			\]
			\label{fig:unitary_li}
			%\caption{$\isoterm_{\sigma_{gather}}$}
	\end{subfigure}
%	\hfill
	\begin{subfigure}[b]{0.25\textwidth}
		\[
			\scalebox{0.68}{                                                                                      
			\begin{quantikz}[background color=gray!10]
				\lstick{$\ps x_1$} 
				&\gategroup[4,steps=3,style={dashed,rounded
				corners,fill=gray!10, inner
				xsep=2pt},background,label style={label
				position=below,anchor=north,yshift=-0.2cm}]{$\isoterm_{\sigma_s}$} &\permute{2,1} & & \rstick{$\ps x_s$}\\
				\lstick{$\ps x_s$} &&&& \rstick{$\ps x_1$}\\
				\wvdots &\setwiretype{n}& \wvdots & & \wvdots &\\
				\lstick{$\ps x_n$} &&&& \rstick{$\ps x_n$}
			\end{quantikz}
			}
		\]
		\[
			\scalebox{0.6}{
				$u_{\sigma_{s}} \defeq \{\mid\ps{x_1 \ptimes \cdots \ptimes x_s \ptimes \cdots \ptimes x_n} $}
			\]
			\[
				\scalebox{0.6}{
					$\quad \iso\ps{x_s \ptimes x_1 \ptimes \cdots \ptimes x_n} \}$
				}
			\]
			\\[0.1cm]
			\[
				\scalebox{0.68}{                                                                                      
				\begin{quantikz}[background color=gray!10]
					\lstick{$\ps x$} 
					&\gategroup[2,steps=3,style={dashed,rounded
					corners,fill=gray!10, inner
					xsep=2pt},background,label style={label
					position=below,anchor=north,yshift=-0.2cm}]{$\isoterm_{\sigma_{swap}}$} &\permute{2,1} & & \rstick{$\ps y$}\\
					\lstick{$\ps y$} &&&& \rstick{$\ps x$}
				\end{quantikz}
				}
			\]
			\[
				\scalebox{0.6}{$u_{\sigma_{swap}} \  \defeq \  \{ \mid \ps{x \ptimes y} \iso \ps{y \ptimes x} \}$}
			\]
			\\[0.05cm]
	\end{subfigure}
%	\hfill
	\begin{subfigure}[b]{0.3\textwidth}
		\[
			\scalebox{0.68}{                                                                                      
			\begin{quantikz}[background color=gray!10]
				\lstick{$\ps x_1$}&
				&\gategroup[10,steps=3,style={dashed,rounded
				corners,fill=gray!10, inner
				xsep=2pt},background,label style={label
				position=below,anchor=north,yshift=-0.2cm}]{$\isoterm_{\sigma_{divide}}$} & & & &\rstick{$\ps x_1$}\\ 
				&\setwiretype{n} \wvdots & & \wvdots &  &\wvdots &\\
				\lstick[6]{$\ps x \ptimes \ps y$}& & & & & &\rstick[3]{$\ps x$}\\
				&\setwiretype{n} \wvdots &   & \wvdots &  &\wvdots &\\
				&&&&&&\\
				&&&&&&\rstick[3]{$\ps y$}\\
				&\setwiretype{n} \wvdots &  & \wvdots &   &\wvdots &\\
				&&&&&&\\
				&\setwiretype{n} \wvdots &  & \wvdots &  &\wvdots &\\
				\lstick{$\ps x_n$} &&&&& &\rstick{$\ps x_n$}
			\end{quantikz}
			}
		\]
		\[
			\scalebox{0.6}{
				$u_{\sigma_{divide}} \defeq \{ \mid\ps{x_1 \otimes  \cdots \otimes (x \otimes y)\otimes  \cdots\otimes x_n} $}
			\]
			\[
				\scalebox{0.6}{
					$\quad \iso\ps{x_1 \otimes \cdots \otimes x \otimes y \otimes \cdots  \otimes x_n} \}$
				}
			\]
			\label{fig:unitary_i}
			%\caption{$\isoterm_{\sigma_{divide}}$}
	\end{subfigure}
	%\caption{Unitary permutations for reduction rules.}
	\label{fig:unitaries}
	%\\[0.2cm]
	\vfill
	\begin{subfigure}{\textwidth}
		\begin{align*}
			(\pt , \isoterm_{\sigma} , \tcase{\tinl{V}}{M}{N}) &\rightarrow_1 ( \pt , \isoterm_{\sigma} , M[V/x])\\
			(\pt , \isoterm_{\sigma} ,  \tcase{\tinr{V}}{M}{N}) & \rightarrow_1 ( \pt , \isoterm_{\sigma} , N[V/y])\\
			(\pt , \isoterm_{\sigma} , \tletpair{x}{y}{V \otimes W}{M}) &\rightarrow_1 ( \pt , \isoterm_{\sigma} , M[V/x,W/y])\\
			(\pt , \isoterm_{\sigma} , \tforce{\tlift{M}}) &\rightarrow_1 ( \pt , \isoterm_{\sigma} , M)\\
			(\pt , \isoterm_{\sigma} , (\lambda x.M)\, V) &\rightarrow_1 ( \pt , \isoterm_{\sigma} , M[V/x])\\
			(\pt , \isoterm_{\sigma} , \tmatch{\tzero}{M}{N}) &\rightarrow_1 (\pt , \isoterm_{\sigma} , M)\\
			(\pt , \isoterm_{\sigma} , \tmatch{\tsucc{V}}{M}{N}) &\rightarrow_1 (\pt , \isoterm_{\sigma} , N[V/x])\\
			(\pt , \isoterm_{\sigma} , \tpure{\pt'}) &\rightarrow_1 (\pt \otimes \pt', \isoterm_{\sigma_{swap}} \circ (\isoterm_{\sigma}\otimes u_{id}) , x)\\
			( {\Sigma}_i p_i {\cdot} \Sigma_j {\ps \alpha_{ij}} {\cdot}  \pb'_{ij} \ptimes {\ps \cdots} \ptimes \pb_i \ptimes {\ps \cdots}, \isoterm_{\sigma_s}  , \tmeas{x}) & \rightarrow_{|p_k|^2} ( \Sigma_j \ps \alpha_{kj} {\cdot} \pb'_{kj} \ptimes {\ps \cdots}, \isoterm_{id} , \ov{\pb_k})\\
			(\pt , \isoterm_{\sigma} , \tun{\isoterm}{z})& \rightarrow_1    ((\isoterm_{\sigma}^* \circ (\isoterm \otimes id) \circ \isoterm_{\sigma})\, \pt , \isoterm_{\sigma} ,  z)\\
			(\pt , \isoterm_{\sigma} , \tletb{z}{x \otimes y}{N}) &  \rightarrow_1  (\pt ,  \isoterm_{\sigma_{gather}} \circ \isoterm_{\sigma} , N)\\
			(\pt , \isoterm_{\sigma}, \tletpairb{x}{y}{z}{N}) & \rightarrow_1       (\pt , \isoterm_{\sigma_{divide}} \circ \isoterm_{\sigma}, N)
		\end{align*}
		\[
			\begin{bprooftree}
				\AxiomC{$(\ps t_1, u_{\sigma_1^{ext}}, M_1) \rightarrow_p (\ps t_2 , u_{\sigma_2^{ext}}, M_2)$}
				\AxiomC{$ dom(\sigma) \cap (dom(\sigma_1) \cup dom(\sigma_2))= \emptyset$}
				\BinaryInfC{$(\ps t_1, u_{\sigma^{ext} \circ \sigma_1^{ext}} , E[M_1]) \rightarrow_p 
				(\ps t_2, u_{\sigma^{ext} \circ \sigma_2^{ext}} , E[M_2])$}
			\end{bprooftree}
		\]
		%\end{align*}
		\label{fig:reduction-rules}
\end{subfigure}
\caption{Reduction rules and their unitary permutations.}
\label{fig:red_uni}
\end{figure*}

\begin{defi}\label{def:reduction}
The \emph{reduction relation} $\cdot \leadsto_{\cdot} \cdot \subseteq \svconfs \times [0,1] \times \svconfs $ is defined in the following way: we write $(\ps v , u_{\sigma} , M) \leadsto_p (\ps v' , u_{\sigma'} , M')$ whenever
\[
\scalebox{0.9}{
\begin{bprooftree}
\AxiomC{$ \cdot \pdash \ps v = \pt : \pQ$}
\AxiomC{$(\pt , u_{\sigma}, M) \rightarrow_p (\pt' , u_{\sigma'} , M')$}
\AxiomC{$ \cdot \pdash \pt' = \ps v' : \pQ'$}
\TrinaryInfC{$(\ps v , u_{\sigma}, M) \leadsto_p (\ps v' , u_{\sigma'} , M')$}
\end{bprooftree}
}
\]
\end{defi}

The intuition behind the above definition is that it allows us to reason modulo equality of the pure terms $\pt$, as a consequence of Lemma~\ref{lem:equational-sn}.
%Next we state the standard results that a coherent operational semantics for a programming language is expected to satisfy.
%Proofs and other related results are relegated to Appendix $\secref{asec:aux}.

\subsection{Main properties}
We show that well-formedness is preserved by the reduction relation $\cdot \leadsto_{\cdot} \cdot$.

\begin{lem}[Substitution]
\label{lem:subst-o}
For a value $V$, if $!\Delta,\Sigma_1 \vdash V : \mathrm{A}$ and for a term $M$, $!\Delta, \Sigma_2, x:\mathrm{A} \vdash M : \mathrm{B}$ then $!\Delta, \Sigma_1 , \Sigma_2 \vdash M[V/x] : \mathrm{B}$.
\end{lem}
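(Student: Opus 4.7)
The plan is to proceed by induction on the derivation of $!\Delta, \Sigma_2, x:\mathrm{A} \vdash M : \mathrm{B}$, relying on the linear typing discipline to track where the substitution takes effect. Since $x:\mathrm{A}$ appears in the linear portion of the context, by linearity the free variable $x$ occurs exactly once in $M$, so in each multi-premise rule (application, $\textit{case}$, tensor introduction, pattern-match on $\Nat$, the two $let$-bindings for the modality, etc.) we case-split on which subderivation actually consumes $x$, and invoke the induction hypothesis on that subderivation. The non-linear part $!\Delta$ is shared unchanged across all premises and across the value $V$, so the resulting derivation can be reassembled with the same rule but using the combined linear context $\Sigma_1, \Sigma_2$.

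For the base cases, the only genuinely relevant one is $M = x$: the variable rule then forces $\Sigma_2 = \emptyset$, the substitution yields $M[V/x] = V$, and the hypothesis $!\Delta, \Sigma_1 \vdash V : \mathrm{A}$ gives the result directly. The case $M = y$ with $y \neq x$ is ruled out by the same linearity argument (a linearly declared $x$ must occur in $M$). Cases like $M = \ast$, $M = \tzero$, or $M = \tpure{\pt}$ do not arise either, because their typing rules admit only a context of the shape $!\Delta$ with no linear variables, which is incompatible with the presence of $x:\mathrm{A}$ in the linear portion.

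For each inductive case, I would apply the induction hypothesis to the subderivation that contains $x$ and leave the other subderivations untouched, then reapply the typing rule. For instance, for $M = L\, N$ with $\Sigma_2 = \Sigma_2', \Sigma_2''$ and $x$ occurring in $L$, the IH yields $!\Delta, \Sigma_1, \Sigma_2' \vdash L[V/x] : \mathrm{A}' \multimap \mathrm{B}$ and combined with $!\Delta, \Sigma_2'' \vdash N : \mathrm{A}'$ the application rule gives $!\Delta, \Sigma_1, \Sigma_2 \vdash (L\, N)[V/x] : \mathrm{B}$; the symmetric subcase where $x \in \Sigma_2''$ is analogous. The distinguished quantum-handling constructs $\tmeas{M}$ and $\tun{\isoterm}{M}$ reduce immediately to an IH on their single subterm $M$, while $\tletb{z}{M}{N}$ and $\tletpairb{x'}{y'}{M}{N}$ behave like ordinary $\textit{let}$-bindings: one case-splits on whether $x$ occurs in $M$ or in $N$, $\alpha$-renames the bound variables to avoid clashes with the free variables of $V$, and reassembles the rule.

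The main point of care is purely bookkeeping around the non-linear context $!\Delta$ in multi-premise rules (application, $\textit{case}$, $\textit{match}$, and the two $let$-bindings for $\mathcal B(-)$), where $!\Delta$ is shared across premises; the typing for $V$ with context $!\Delta, \Sigma_1$ fits seamlessly into whichever branch $x$ belongs to. No new ideas beyond routine linear-substitution reasoning are needed, and the principal obstacle is simply the number of syntactic cases together with the discipline of $\alpha$-renaming in all the binders ($\lambda x.M$, $\tletpair{x}{y}{\cdot}{\cdot}$, $\tmatch{\cdot}{\cdot}{x.\cdot}$, $\tletb{z}{\cdot}{\cdot}$, $\tletpairb{x}{y}{\cdot}{\cdot}$).
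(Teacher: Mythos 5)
Your induction-on-the-typing-derivation strategy is the right skeleton, but the argument as written breaks down when $\mathrm A$ is an exponential type $!\mathrm A'$, and that case cannot be excluded: the lemma is needed precisely to justify $\beta$-reduction $(\lambda x.M)\,V \to M[V/x]$ (and the branch bindings of $\textit{case}$/$\textit{match}$) when the bound variable carries a $!$-type. In that situation $x$ need \emph{not} occur exactly once in $M$: since the partition of a context into $!\Delta$ and $\Sigma$ is not recorded in the judgement, a derivation of $!\Delta,\Sigma_2,x:\,!\mathrm A' \vdash M:\mathrm B$ may place $x$ in the shared non-linear part of a multi-premise rule (so $x$ occurs in several premises), may discard $x$ altogether (so $M[V/x]=M$, and the base cases $M=\ast$, $M=\tzero$, $M=\tpure{\pt}$, $M=y\neq x$ that you rule out do arise), and may put $x$ under the promotion rule for $\tlift{M'}$ --- a case your sketch never mentions, which is impossible for a strictly linear $x$ but perfectly possible for $x:\,!\mathrm A'$. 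So the case split ``which single subderivation consumes $x$'' is not exhaustive, and the claim that the listed base cases cannot occur is unjustified.

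The repair is a separate analysis of the exponential case, resting on an auxiliary observation your proposal lacks: a \emph{value} of type $!\mathrm A'$ is either a variable or $\tlift{N}$, and both typing rules force the ambient context to be non-linear, so in the hypothesis $!\Delta,\Sigma_1 \vdash V : \,!\mathrm A'$ the context $\Sigma_1$ contains no linear variables. With that in hand, the zero-occurrence case follows from weakening of non-linear variables ($M[V/x]=M$ and $\Sigma_1$ may be added harmlessly), and the shared/multiple-occurrence case (including the occurrence under $\tlift{\cdot}$) follows by applying the induction hypothesis to \emph{every} premise in which $x$ occurs and recombining through the shared duplicable context $!\Delta,\Sigma_1$ --- contraction being legitimate only because $\Sigma_1$ is duplicable there. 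For genuinely linear $\mathrm A$ your argument is fine; the missing exponential analysis is the gap.
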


\begin{lem}[Substitution for Contexts]
\label{lem:contexts_subst}
If $!\Delta , \Sigma_1 \vdash M : A$ and $!\Delta , \Sigma_1 , \Sigma_2 \vdash E[M] : B$, then for $!\Delta , \Sigma_1' \vdash M' : A$,
we have $!\Delta , \Sigma_1' , \Sigma_2 \vdash E[M'] : B$.
\end{lem}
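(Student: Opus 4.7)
The plan is to proceed by induction on the structure of the evaluation context $E$, following the grammar given just before Definition~\ref{def:reduction}. The base case $E = [\cdot]$ is immediate: the hypothesis $!\Delta, \Sigma_1' \vdash M' : A$ is exactly the required conclusion, since $E[M'] = M'$ and the additional linear context $\Sigma_2$ is empty in this case (its absence follows from taking $\Sigma_2$ to be whatever linear context is needed on top of $\Sigma_1$, which is nothing when $E$ is the hole).

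For each inductive case, the idea is the same: the typing derivation for $!\Delta, \Sigma_1, \Sigma_2 \vdash E[M] : B$ must end with the typing rule corresponding to the outermost constructor of $E$. By inversion on that rule, we obtain a sub-derivation of the form $!\Delta, \Sigma_1, \Sigma_2'' \vdash E'[M] : B'$ for some strictly smaller evaluation context $E'$ and some splitting of $\Sigma_2$ into $\Sigma_2''$ (the linear part used by the hole's branch) and the remainder used by the sibling subterms. Applying the induction hypothesis to $E'$ with the same hypothesis $!\Delta, \Sigma_1' \vdash M' : A$ yields $!\Delta, \Sigma_1', \Sigma_2'' \vdash E'[M'] : B'$. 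Re-applying the same typing rule (with the unchanged sub-derivations for sibling subterms, which only involve $!\Delta$ and the remaining portion of $\Sigma_2$) then gives $!\Delta, \Sigma_1', \Sigma_2 \vdash E[M'] : B$, as required.

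The only subtlety, which is worth highlighting in a few representative cases, is that linearity forces the context to be split correctly among the different subterms of the rule. For example, for $E = E_1\, N$, the typing derivation of $(E_1\, N)[M]$ ends with the application rule and splits the linear part as $\Sigma_1, \Sigma_2 = \Sigma_a, \Sigma_b$ where $\Sigma_a$ types $E_1[M]$ and $\Sigma_b$ types $N$. Since $M$ occurs inside $E_1$, the linear variables of $M$ (namely $\Sigma_1$) lie entirely inside $\Sigma_a$. We may thus write $\Sigma_a = \Sigma_1, \Sigma_2^{(a)}$ and apply the induction hypothesis to obtain $!\Delta, \Sigma_1', \Sigma_2^{(a)} \vdash E_1[M'] : A' \multimap B$, after which the application rule reassembles the conclusion. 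The cases $V \otimes E'$, $\mathbf{case}$, $\mathbf{match}$, $\mathbf{let}$-style binders, and the two quantum-flavoured binders $\mathbf{let}~\mathcal B(z) = E' \ldots$ and $\mathbf{let}~\mathcal B(x \otimes y) = E' \ldots$ all follow the same pattern, making use of Lemma~\ref{lem:subst-o} (the term substitution lemma) nowhere — the present lemma is purely a recursive reconstruction of typing derivations.

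I do not expect a genuine obstacle: the evaluation contexts are all \emph{left-leaning} in the sense that the hole always sits in a position where the linear context feeding the hole can be cleanly identified, so no commuting or exchange argument is needed beyond routine context splitting. The only mildly delicate rules are those with a $!\Delta$ duplication (application, pair, case, match, both $\mathbf{let}$'s), where one must observe that $!\Delta$ is shared across all premises and therefore is preserved unchanged when we swap in the new derivation for the hole.
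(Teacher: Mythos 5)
Your proposal is correct and follows the same route as the paper's (the paper states this lemma without an inline proof, deferring the routine argument): structural induction on the evaluation context $E$, inversion on the last typing rule, the observation that the hole of an evaluation context never sits under a binder so that $\Sigma_1$ lies entirely within the linear context of the sub-derivation containing the hole, and reassembly of the derivation with the shared $!\Delta$. The only point to state more carefully is the base case $E = [\cdot]$: it is strict linearity (no weakening on the linear part of the context) that forces $\Sigma_2 = \emptyset$, and one must read the hypotheses as saying that the occurrence of $M$ in the given derivation of $E[M]$ is typed at $A$ (so that $B = A$ there); with that intended reading your argument goes through.
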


\begin{lem}[Preservation for translation]
\label{lem:pres_t}
If $\cdot \pdash \ps v : \pQ$ then $\cdot \vdash \ov{\ps v} : \ov{\ps Q}$.
\end{lem}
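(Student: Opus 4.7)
The plan is to proceed by a straightforward structural induction on the derivation of $\cdot \pdash \ps v : \pQ$. Since the translation $\ov{(-)}$ is defined only on basis terms, the relevant values are closed basis terms $\ps b$, and since the context is empty the variable rule does not apply. Hence only five cases need to be considered, matching the five clauses in the definition of $\ov{(-)}$.

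For the base cases $\ps v = \ps\ast$ and $\ps v = \pzero$ the result is immediate: $\ov{\ps\ast} = \ast$ has type $I = \ov{\pbasic}$ by the unit-introduction rule, and $\ov{\pzero} = \tzero$ has type $\Nat = \ov{\pnat}$ by the zero-introduction rule. For $\ps v = \inl{\ps b}$ of type $\pQ_1 \pplus \pQ_2$, inversion on the formation rule from Figure~\ref{fig:typing-values-simple} gives $\cdot \pdash \ps b : \pQ_1$; the induction hypothesis yields $\cdot \vdash \ov{\ps b} : \ov{\pQ_1}$, and the classical left-injection rule then gives $\cdot \vdash \tinl{\ov{\ps b}} : \ov{\pQ_1} + \ov{\pQ_2}$, which by definition equals $\ov{\inl{\ps b}}$ of type $\ov{\pQ_1 \pplus \pQ_2}$. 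The case $\ps v = \inr{\ps b}$ is symmetric, and the case $\ps v = \psucc{\ps b}$ is analogous using the successor rule.

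The only case that requires a minor observation is the pair $\ps v = \ps b_1 \ps\otimes \ps b_2 : \pQ_1 \ptimes \pQ_2$. Inversion on the tensor formation rule would in general split the typing context as $\pGamma_1,\pGamma_2$, but since our context is empty both subderivations must also be over the empty context, so we obtain $\cdot \pdash \ps b_1 : \pQ_1$ and $\cdot \pdash \ps b_2 : \pQ_2$. Two applications of the induction hypothesis give $\cdot \vdash \ov{\ps b_1} : \ov{\pQ_1}$ and $\cdot \vdash \ov{\ps b_2} : \ov{\pQ_2}$, and the classical tensor-introduction rule (with the trivial context split $\cdot,\cdot = \cdot$) produces $\cdot \vdash \ov{\ps b_1} \otimes \ov{\ps b_2} : \ov{\pQ_1} \otimes \ov{\pQ_2}$, which is exactly $\ov{\ps b_1 \ps\otimes \ps b_2}$ at type $\ov{\pQ_1 \ptimes \pQ_2}$.

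There is no real obstacle here: the definitions of $\ov{(-)}$ on terms and on types mirror each other clause by clause, and the classical typing rules for $\ast$, $\tzero$, $\tinl{}$, $\tinr{}$, $\tsucc{}$ and $\otimes$ correspond one-to-one to the pure-quantum formation rules for $\ps\ast$, $\pzero$, $\inl{}$, $\inr{}$, $\psucc{}$ and $\ps\otimes$. The only minor subtlety is the empty-context splitting in the tensor case noted above.
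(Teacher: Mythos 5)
Your proposal is correct and follows essentially the same route as the paper, which simply proves the lemma by induction on the (structure of the) translation $\ov{(-)}$; your case analysis on closed basis terms, with the observation about the empty-context split for tensors, is the expected fleshing-out of that induction.
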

\begin{proof}
Proof is by induction on the rules for the translation.
\end{proof}

\begin{thm}
\label{thm:subred_strong}
For any $\mathcal C \in \wfconfs{\confs}{\mathrm{A}[\ptype_1', \ldots, \ptype_{\textit{k}}']}$,  if $\mathcal C \rightarrow_p \mathcal C'$, for some probability $p \in [0,1]$, then 
$\mathcal C' \in \wfconfs{\confs}{\mathrm{A}[\ptype_1', \ldots, \ptype_{\textit{k}}']}$.
\end{thm}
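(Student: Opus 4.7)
The plan is a proof by case analysis on the reduction rule used to derive $\mathcal C \rightarrow_p \mathcal C'$, relying on the contextual rule being handled inductively at the end. In each case, I must exhibit decompositions $\pQ_1 \ptimes \cdots \ptimes \pQ_n$ for the new quantum term, a well-typed unitary permutation witnessing the linking, and a typing derivation for the resulting main-calculus term, all agreeing on the auxiliary types $[\pQ_1', \ldots, \pQ_k']$ and producing the same output type $\mathrm A$.

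For the purely classical rules (case, let-pair, force/lift, application, match), neither $\pt$ nor $u_\sigma$ changes, so well-formedness of the quantum component is inherited verbatim. For the typing of the reduct, I invoke Lemma~\ref{lem:subst-o} (substitution) on the appropriate typing rule of Figure~\ref{fig:typing-rules-excerpt}, using the fact that the scrutinee (or argument) was well-typed by inversion. For the rules involving the $\mathcal B(-)$ modality, I argue as follows. For $\tpure{\pt'}$: by inversion, $\cdot \entailpure \pt' : \pQ'$, so the new quantum term $\pt \ptimes \pt'$ is well-typed at $(\pQ_1 \ptimes \cdots \ptimes \pQ_n) \ptimes \pQ'$; the new permutation $u_{\sigma_{swap}} \circ (u_\sigma \otimes u_{id})$ is well-typed by the formation rules of Figure~\ref{fig:typisos}, and places $\pQ'$ in the position to be pointed at by the fresh variable $x : \mathcal B(\pQ')$ in the new term. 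For $\tun{u}{z}$ with $\udash u : U(\pQ_1', \pQ_2)$ and $z : \mathcal B(\pQ_1')$: the quantum component $(u_\sigma^* \circ (u \otimes id) \circ u_\sigma)\,\pt$ remains a well-formed pure term at the same overall type because unitary application preserves types by Figure~\ref{fig:typterms}; the linking $u_\sigma$ and the reindexing of $z$ to $\mathcal B(\pQ_2)$ then match up with the output type of $u$. For $\tmeas{x}$: the ``collapsed'' quantum term is well-typed, the new unitary is the identity on the remaining components, and $\ov{\pb_k}$ has type $\ov{\pQ}$ by Lemma~\ref{lem:pres_t}. Finally, for $\tletb{z}{x \otimes y}{N}$ and $\tletpairb{x}{y}{z}{N}$, the quantum term $\pt$ is unchanged; only the linking unitary is precomposed with $u_{\sigma_{gather}}$ or $u_{\sigma_{divide}}$, both of which have the right domain/codomain to realise the isomorphism between $\mathcal B(\pQ_1) \otimes \mathcal B(\pQ_2)$ and $\mathcal B(\pQ_1 \ptimes \pQ_2)$, and the typing of $N$ matches by inversion.

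The contextual rule is treated by induction. Assuming the inner step $(\pt_1, u_{\sigma_1^{ext}}, M_1) \rightarrow_p (\pt_2, u_{\sigma_2^{ext}}, M_2)$ preserves well-formedness, I use Lemma~\ref{lem:contexts_subst} (substitution for contexts) to extend the typing of $M_2$ into $E[M_2]$ along with the same permutation $\sigma^{ext}$ acting on disjoint indices. Disjointness of the index sets, imposed by the side condition $\mathrm{dom}(\sigma) \cap (\mathrm{dom}(\sigma_1) \cup \mathrm{dom}(\sigma_2)) = \emptyset$, ensures the composed permutations $u_{\sigma^{ext} \circ \sigma_2^{ext}}$ remain well-typed with unchanged overall domain and codomain, so auxiliary types are preserved.

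The main obstacle will be the bookkeeping around the quantum types in the $\tpure{\pt'}$ and $\tmeas{x}$ rules, since in both cases the list of pure quantum types carried by $\pt$ genuinely changes shape (one appends a new factor, the other drops one that becomes classical), and one has to thread the auxiliary types $[\pQ_1', \ldots, \pQ_k']$ through the new permutation unchanged. Proposition~\ref{prop:monoidal-unitary} will be essential to guarantee that the canonical symmetric monoidal isomorphisms used as linkings are expressible in the unitary syntax; everything else reduces to inversion on the typing rules together with the substitution lemmas.
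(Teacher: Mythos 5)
Your proposal is correct and follows essentially the same route as the paper's proof: induction/case analysis on the reduction relation, with Lemma~\ref{lem:subst-o} for the classical redexes, Lemma~\ref{lem:contexts_subst} together with the induction hypothesis for the contextual rule, Lemma~\ref{lem:pres_t} for measurement, and inspection of the quantum-control formation rules and permutation typings (with the auxiliary type list unchanged) for the $\tpure{\cdot}$, $\tun{\cdot}{\cdot}$, and $\mathcal B$-let cases. The only cosmetic difference is that you make the reliance on Proposition~\ref{prop:monoidal-unitary} and the gather/divide cases explicit, which the paper's sketch leaves implicit.
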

\begin{proof}
%We prove the first statement of the theorem, the second is a corollary of that.
We provide a sketch of the proof here.
Proof is by induction on the reduction relation $\rightarrow$.
If $\mathcal C$ is a value configuration, then there is no $\mathcal C'$ s.t. $\mathcal C \rightarrow_p \mathcal C'$.
Hence, we consider the cases when $\mathcal C$ is not a value configuration. Note that without loss of generality, we can assume that
$\mathcal C$ is of the form $(\ps t, u_{\sigma}, M)$. %(because of ~\ref{fig:congruence-rules}).
We give the proof for a few important cases,
others follow similarly.
	\begin{itemize}
		\item $\mathcal C = (\ps t , u_{\sigma}, (\lambda x.M)\, V)$ for term M and value V:\\
		      According to the rules in ~\ref{fig:red_uni}, this configuration reduces to $\mathcal C' = (\ps t , u_{\sigma}, M[V/x])$.
                      Since the rule doesn't modify the quantum state, the auxiliary quantum data associated with $\mathcal C'$ is the same as that for 
		      $\mathcal C$.
		      Since $\mathcal C$ is well-formed, we have that $!\Delta , \Sigma_1 , \Sigma_2 \vdash (\lambda x.M)V : B$ is a valid typing judgment.
		      The formation rules for such a judgment imply that $!\Delta , x: A , \Sigma_1 \vdash M : B$ and $!\Delta , \Sigma_2 \vdash V : A$ are valid 
		      typing judgments. Hence, substitution lemma gives us that $!\Delta , \Sigma_1 , \Sigma_2 \vdash M[V/x] : B$ is a valid typing judgment.
		      For a more general configuration $\mathcal C = (\ps t , u_{\sigma}, M\, N)$ we apply the induction hypothesis along with Lemma~\ref{lem:contexts_subst}. 
	      \item $\mathcal C = (\ps t , u_{\sigma} , \tpure{\ps t'})$:\\
		      This configuration reduces to $(\ps{t \otimes t'} , u_{\sigma_{swap}} \circ (u_{\sigma} \otimes u_{id}),x)$.
		      Formation rules of $\tpure{\ps t'}$, require $\ps t'$ to be a closed term of type $\pQ$ from the quantum control fragment. Moreover,
		      $\ps t$ is a closed term of type $\ps {Q_1 \otimes \cdots \otimes Q_{m+k}}$. Hence, the formation rules from the quantum control fragment
		      ensure that $\cdot \pdash \ps {t \otimes t' : Q_1 \ptimes   \ps{\cdots} \ptimes Q_{m+k} \otimes Q}$ is a valid typing judgment. $x$ is the only free variable
		      in the new term , which corresponds to $\ps t'$. Hence applying $u_{\sigma_{swap}}$ on $\ps{t \otimes t'}$ ensures that the reduct is
		      well-formed. The auxiliary quantum data in $\mathcal C'$ remains unchanged, since $x$ covers $\ps t'$. Hence the list equation for types
		      holds. Therefore, $(\ps{t \otimes t'} , u_{\sigma_{swap}} \circ (u_{\sigma} \otimes u_{id}) , x)$ 
		      is a well-formed configuration of the same type as $\mathcal C$.
	      \item $\mathcal C = (\ps t , u_{\sigma}, \tun{\isoterm}{z})$ for a unitary $u$:\\
		      This configuration reduces to $((u_{\sigma}^* \circ (\isoterm \otimes id) \circ u_{\sigma}) \ps t , u_{\sigma} ,  z)$. If $.\udash \isoterm : U(\pQ_1, \pQ_2)$,
		      from the typing rules for unitary application it follows that $z : \mathcal B(\pQ_1) \vdash \tun{\isoterm}{z} : \mathcal B(\pQ_2)$. Since,
		      $(u_{\sigma}^* \circ (\isoterm \otimes id) \circ u_{\sigma})$ applies $\isoterm$ to the component of $\ps t$ that $z$ corresponds to, after reduction,
		      we have $z : \mathcal B(\pQ_2) \vdash z : \mathcal B(\pQ_2)$. Moreover,
		      everything else in the quantum term remains unchanged, and hence the term remains well-typed after reduction. Therefore, the quantum
		      configuration obtained after reduction is well-formed.
	      \item $\mathcal C = (\Sigma_i p_i \cdot \Sigma_j {\ps \alpha_{ij}} {\ps \cdot} \ps b'_{ij} \ptimes {\ps \cdots} \ptimes \ps b_i \ptimes {\ps \cdots} \ptimes \ps b''_{ij} , 
		      u_{\sigma_s} , \tmeas{x})$:\\
		      This configuration reduces to $\mathcal C' = ( \Sigma_j \ps \alpha_{mj}  \ps{\cdot} \ps b'_{mj} \ptimes \ps{\cdots} \ptimes \ps b''_{mj} , u_{id} , \ov{\ps b_m})$ 
		      with probability $p_m$. Note that since $\mathcal C$ is well-formed, we have that 
		      $\cdot \pdash \Sigma_i p_i \ps{\cdot} \Sigma_j {\ps \alpha_{ij}} \ps{\cdot} \ps b'_{ij} \ptimes {\ps \cdots} \ptimes \ps b_i \ptimes {\ps \cdots} \ptimes \ps b''_{ij} :
		      \ps {Q_1 \otimes\ps{\cdots} \ptimes Q_m \ptimes \ps{\cdots}\ptimes Q_k}$ is
		      a valid typing judgment for some quantum types ${\pQ_{\ps 1} }, \ldots , \pQ_{\ps k}$. Hence, formation rules of the quantum control fragment ensure that 
		      $\cdot \pdash \Sigma_j \ps \alpha_{mj} \cdot \ps b'_{mj} \ptimes \ldots \ptimes \ps b''_{mj} : \ps {Q_1 \otimes \ldots \otimes Q_k}$ is a valid typing judgment.
		      Furthermore, since we have $ x: \mathcal B(\ps Q_s) \vdash \tmeas{x} : \ov{\ps Q_s}$, and ~\ref{lem:pres_t} gives us $. \vdash \ov{\ps b_m} : \ov{\ps Q_s}$,
		      type is preserved for the term. Finally, since the auxiliary quantum data types remain unchanged, we have that $\mathcal C'$ is well-formed with the same
		      type as that of $\mathcal C$.\qedhere
	\end{itemize}
\end{proof}

\begin{thm}[Subject Reduction]
\label{lem:sub_red}
For a configuration $\mathcal C_1 \in \wfconfs{\svconfs}{A} $, 
if $\mathcal C_1 \leadsto_p \mathcal C_2$ for some probability $p \in [0,1]$, then $\mathcal C_2 \in \wfconfs{\svconfs}{A}$.
\end{thm}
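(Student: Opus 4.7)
The plan is to reduce the statement to Theorem~\ref{thm:subred_strong} (which handles the relation $\rightarrow$) by peeling off the two equational-theory steps that frame it in Definition~\ref{def:reduction}. Concretely, suppose $\mathcal C_1 = (\ps v, u_\sigma, M) \leadsto_p (\ps v', u_{\sigma'}, M') = \mathcal C_2$. By unfolding the definition we obtain intermediate pure terms $\pt, \pt'$ with $\cdot \pdash \ps v = \pt : \pQ$, $(\pt, u_\sigma, M) \rightarrow_p (\pt', u_{\sigma'}, M')$, and $\cdot \pdash \pt' = \ps v' : \pQ'$. My first step is to observe that well-formedness of configurations depends on the pure component only through its typing judgement, so I need a lemma (or an invocation of Proposition~\ref{prop:equal-to-typed}) saying that if $\cdot \pdash \ps v = \pt : \pQ$ and $\ps v$ is typable at $\pQ$, then $\pt$ is too (at the same type). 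This transfers well-formedness from $\mathcal C_1$ to $(\pt, u_\sigma, M)$, placing it in $\wfconfs{\confs}{A}$ (with an empty list of auxiliary types, so also in $\wfconfs{\confs}{A[\,]}$).

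Second, I apply Theorem~\ref{thm:subred_strong} to $(\pt, u_\sigma, M) \rightarrow_p (\pt', u_{\sigma'}, M')$ to conclude that $(\pt', u_{\sigma'}, M') \in \wfconfs{\confs}{A[\,]} = \wfconfs{\confs}{A}$. In particular the pure component $\pt'$ has some type $\pQ''$ matching the signature required by the context of $M'$ through $u_{\sigma'}$. Third, I use the symmetric direction of Proposition~\ref{prop:equal-to-typed} together with the hypothesis $\cdot \pdash \pt' = \ps v' : \pQ'$ to conclude that $\ps v'$ is typable at the same pure type as $\pt'$, so the configuration $(\ps v', u_{\sigma'}, M')$ inherits the four conditions of Definition~\ref{def:well-formed-configuration}. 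Since $\ps v'$ is a value by construction (the equational theory on the right-hand side is witnessed by a value), $\mathcal C_2 \in \wfconfs{\svconfs}{A}$.

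The routine parts are the type bookkeeping, and the only real subtlety is ensuring that when $\ps v$ and $\pt$ are equated at type $\pQ$, the pure type assigned to $\pt$ coincides (up to the shape $\pQ_1 \ptimes \cdots \ptimes \pQ_n$) with the one used to justify $\mathcal C_1$'s well-formedness; this is immediate from the uniqueness of types for well-formed closed terms in the pure system and from the fact that the $=$ judgement is typed. The main obstacle I anticipate is a minor one: the pure types $\pQ$ and $\pQ'$ that appear in Definition~\ref{def:reduction} need not be literally identical to the tensor decompositions used in Definition~\ref{def:well-formed-configuration}, but since the pure system assigns a unique type to each term and equality of terms requires equal types, we can always identify them up to associativity/unit laws, which are themselves canonical symmetric monoidal isomorphisms representable as unitaries by Proposition~\ref{prop:monoidal-unitary} and can be absorbed into $u_\sigma$. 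With this identification the proof is a one-line application of Theorem~\ref{thm:subred_strong} sandwiched between two applications of Proposition~\ref{prop:equal-to-typed}.
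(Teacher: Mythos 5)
Your proposal is correct and matches the paper's route: the paper's proof of this theorem is literally ``follows from Theorem~\ref{thm:subred_strong}'', and your elaboration---transferring typability across the two equational steps via Proposition~\ref{prop:equal-to-typed}, applying the strong subject-reduction theorem (with an empty list of auxiliary types), and noting that $\ps v'$ is a value because $\leadsto$ is a relation on $\svconfs$---is exactly the implicit content of that one-liner. One minor caveat: your side appeal to ``uniqueness of types'' in the pure system is not literally true (e.g.\ $\inl{\ps *}$ inhabits $\pbasic \pplus \pQ$ for every $\pQ$), but it is also not needed, since the equality judgements in Definition~\ref{def:reduction} can simply be read at the tensor type occurring in the configuration's well-formedness conditions.
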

\begin{proof}
Follows from Theorem~\ref{thm:subred_strong}.
\end{proof}

Next, we show that progress holds. More specifically, a
well-formed semi-value configuration $\mathcal C$ is either a value
configuration or it reduces to a finite number of semi-value configurations
with total probability $1$, \emph{i.e.}~if we have not reached a normal form then
the probability to be stuck is $0$.

\begin{thm}
\label{thm:progress-strong}
	For a quantum configuration $\mathcal C \in \wfconfs{\svconfs}{\mathrm{A}[\ptype_1',\ldots,\ptype_k']}$, there are two possibilities:
\begin{itemize}
\item either $\mathcal C \in \vconfs$,
\item or there exists $\mathcal C_1 \in  \wfconfs{\svconfs}{\mathrm{A}[\ptype_1',\ldots,\ptype_k']}$ such that $\mathcal C \leadsto_{p_1} \mathcal C_1$.
 \end{itemize}
Let $\{ \mathcal C_i \}_{i \in I}$ be the set  of all such distinct (i.e., not $\alpha$-equivalent) configurations.
Then, the set $I$ is finite, and $\Sigma_{i \in I}p_i = 1$.
\end{thm}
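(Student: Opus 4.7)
My plan is to proceed by induction on the typing derivation of the term $M$ in the semi-value configuration $\mathcal C = (\ps v, u_\sigma, M)$. Since $\ps v$ is already a quantum value (by definition of semi-value configuration), the analysis reduces to a case split on the structure of $M$: if $M$ itself is a value then $\mathcal C \in \vconfs$ and we are done; otherwise we must exhibit at least one reduction. To organise the second case, I would first establish (or invoke, as a standard lemma about evaluation contexts) the \emph{unique decomposition property}: every well-typed non-value term can be written uniquely as $E[M']$ where $E$ is an evaluation context and $M'$ is a \emph{redex}, i.e.~the head of one of the reduction rules of Figure~\ref{fig:red_uni}. Well-formedness of the configuration, together with the typing invariants of the main calculus, rules out any ill-formed "stuck" shapes (e.g.~a function position occupied by a non-$\lambda$ value, or a $\mathbf{case}$ applied to a value that is neither $\mathbf{inl}$ nor $\mathbf{inr}$); these exclusions follow by inversion on the typing rules of Figure~\ref{fig:typing-rules-excerpt}.

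Once the head redex is identified, I would go through the reduction rules of Figure~\ref{fig:red_uni} case by case. All rules except measurement are deterministic and fire with probability $1$, so they immediately produce a single reduct $\mathcal C_1$; by Theorem~\ref{lem:sub_red} (Subject Reduction) this reduct lies in $\wfconfs{\svconfs}{\mathrm A[\ptype_1',\dots,\ptype_k']}$ after normalising the quantum component via Lemma~\ref{lem:equational-sn} to obtain again a semi-value configuration. The contextual rule is handled by the induction hypothesis applied to the inner redex, combined with the observation that extending $\sigma_1,\sigma_2$ by an outer permutation $\sigma^{\mathit{ext}}$ preserves well-formedness. The auxiliary-type bookkeeping of Definition~\ref{def:well-formed-auxiliary} is exactly what is required to make the induction on evaluation contexts go through cleanly.

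The genuinely interesting case is measurement. When $M' = \tmeas{x}$, the typing forces $x \colon \mathcal B(\ptype)$ for some pure type $\ptype$, and well-formedness together with the shape of $u_\sigma$ forces the value $\ps v$ to be (equivalent to) a sum $\Sigma_i\, p_i \psdot \Sigma_j\, \ps\alpha_{ij} \psdot \ps b'_{ij} \ptimes \cdots \ptimes \ps b_i \ptimes \cdots$ where the basis terms $\ps b_i$ at position $s$ pointed to by $x$ are pairwise orthogonal. By the formation rule for values (Figure~\ref{fig:typing-values-simple}), $\Sigma_i |p_i|^2 = 1$, which is exactly the total probability condition. The resulting reducts $\mathcal C_i$ are indexed by the (finitely many) $\ps b_i$, giving finiteness of $I$; they are pairwise non-$\alpha$-equivalent since the classical translations $\ov{\ps b_i}$ are distinct closed values of type $\ov\ptype$.

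The main obstacle I anticipate is the unique-decomposition / inversion reasoning needed to guarantee that a well-typed, non-value $M$ always exposes a reducible head: care is required for the constructs specific to this calculus ($\tpure{\pt}$, $\tun{\isoterm}{M}$, $\tletb{z}{M}{N}$, $\tletpairb{x}{y}{M}{N}$), where one has to check that when the sub-term is a value of the appropriate quantum-modal type the relevant reduction rule applies (in particular, noting that the head positions are themselves evaluation contexts, so the induction on $E$ yields a redex even when the value reached is e.g.~$x \otimes y$ for the $\tletb{}{}{}$ rule). Modulo this inversion infrastructure, the probability-one accounting is immediate for all non-measurement rules, and the measurement case reduces to the normalisation condition built into the pure fragment.
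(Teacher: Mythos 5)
Your proposal is correct and follows essentially the same route as the paper: structural induction on the typing of $M$, with inversion/canonical-forms reasoning to expose a head redex, Lemma~\ref{lem:equational-sn} to renormalise the quantum component into a value so that $\leadsto_p$ applies, subject reduction for well-formedness of the reducts, and the normalisation condition $\Sigma_i |\palpha_i|^2 = 1$ from the value formation rules to account for total probability $1$ in the measurement case. The only presentational difference is that you factor the contextual reasoning through an explicit unique-decomposition lemma for evaluation contexts, whereas the paper handles it directly through the induction hypothesis and the contextual rule; the content is the same.
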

\begin{proof}
Without loss of generality we assume $\mathcal C = (\ps v, u_{\sigma}, M)$ for some quantum value $\ps v$ and term $M$.
Proof proceeds by induction on the typing rule for $M$. We prove a few cases here, the other cases follow similarly.
	\begin{itemize}
	       \item $M = M'\, N'$:\\
		      By the induction hypothesis and rules in Figure~\ref{fig:red_uni}, we can assume that both $M',N'$ are values.
		      In this case $M = (\lambda x.M'')\, V$. From the rules in Figure~\ref{fig:red_uni}, we have that this configuration
		      reduces to $\mathcal C_2 = (\ps v , u_{\sigma}, M''[V/x])$.
	       \item $M = \tmeas{M'}$:\\
		      We rewrite the quantum value $\ps v$ in the first component of the configuration using the rule for $\leadsto_p $,
		      and reduce using the rule for measurement in Figure~\ref{fig:red_uni}. The resulting configuration is a value
		      configuration.
	       \item $M = \tun{\isoterm}{M}$:\\
		      From the rules in Figure~\ref{fig:red_uni} and the induction hypothesis, we can assume that $M$ is a value
		      of type $\mathcal B(\pQ)$ for some quantum type $\pQ$. By inspection, it follows that it has to be a variable.
		      Now, we can apply the rule for unitary application in Figure~\ref{fig:red_uni}, to obtain a configuration of the 
		      form $(u_{\sigma}^* \circ (\isoterm \otimes id) \circ u_{\sigma}\ps t , u_{\sigma} ,  z)$ for appropriate quantum
		      state $\ps t$ and unitaries $u_{\sigma}, \isoterm$. From the equational theory~(Lemma~\ref{lem:equational-sn}) it follows
              that the quantum state can be rewritten to a value, and thus applying the reduction rule for $\leadsto_p$, gives us that
		      the resulting configuration is a value configuration.
	       \item $M = \tpure{\ps t'}$:\\
		      We apply the rule in Figure~\ref{fig:red_uni}, the reduction rule for $\leadsto_p$ and Lemma~\ref{lem:equational-sn},
		      to obtain a value configuration. \qedhere
	\end{itemize}
\end{proof}

\begin{thm}[Progress]
	If $\mathcal C \in \wfconfs{\svconfs}{A}$, then either $\mathcal C \in
	\vconfs$, or there exists $\mathcal C_i \in  \wfconfs{\svconfs}{A}$ such
	that $\mathcal C \leadsto_{p_i} \mathcal C_i$.  Moreover, if $\{ \mathcal
	C_i \}_{i \in I}$ is the set of all such distinct (not
	$\alpha$-equivalent) configurations, then $I$ is finite, and
	$\sum_i p_i = 1$. 
\end{thm}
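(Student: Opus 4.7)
The plan is to derive this Progress theorem as a direct corollary of the stronger Theorem~\ref{thm:progress-strong}, which has already established the result for configurations with arbitrary lists of auxiliary quantum types. The key observation is that the notion of well-formedness with auxiliary types $\mathrm{A}[\ptype_1', \ldots, \ptype_k']$ reduces to the ordinary notion $\mathrm{A}$ of Definition~\ref{def:well-formed-configuration} when $k = 0$, since the list condition in Definition~\ref{def:well-formed-auxiliary} degenerates into the equality $[\ptype_1'', \dots, \ptype_m''] = [\ptype_1'', \dots, \ptype_m'']$. Hence $\wfconfs{\svconfs}{\mathrm{A}} = \wfconfs{\svconfs}{\mathrm{A}[\,]}$.

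From there, the argument proceeds in three short steps. First, I would take an arbitrary $\mathcal{C} \in \wfconfs{\svconfs}{\mathrm{A}}$ and reinterpret it as an element of $\wfconfs{\svconfs}{\mathrm{A}[\,]}$ via the identification above. Second, I would apply Theorem~\ref{thm:progress-strong}: either $\mathcal{C} \in \vconfs$, in which case we are done, or $\mathcal{C}$ reduces to a finite collection $\{ \mathcal{C}_i \}_{i \in I}$ of pairwise non-$\alpha$-equivalent semi-value configurations, each lying in $\wfconfs{\svconfs}{\mathrm{A}[\,]}$, with $\sum_{i \in I} p_i = 1$. Third, I would transport each $\mathcal{C}_i$ back into $\wfconfs{\svconfs}{\mathrm{A}}$ using the same identification, yielding the desired conclusion.

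There is essentially no obstacle to overcome, since all the combinatorial content — the case analysis on typing derivations, the treatment of the measurement rule that contributes the probabilities $p_i$, and the invocation of the equational theory (Lemma~\ref{lem:equational-sn}) to ensure the reducts are semi-value configurations — is already carried out in the proof of Theorem~\ref{thm:progress-strong}. The only point worth explicitly noting, for the reader, is that subject reduction (Theorem~\ref{lem:sub_red}) guarantees that starting from an empty list of auxiliary types we remain in the empty-auxiliary-list case after reduction, so that the conversion back to $\wfconfs{\svconfs}{\mathrm{A}}$ is legitimate. With these observations the proof reduces to a single line invoking Theorem~\ref{thm:progress-strong} with $k = 0$.
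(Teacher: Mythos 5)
Your proposal is correct and is essentially the paper's own argument: the paper proves the Progress theorem precisely by invoking Theorem~\ref{thm:progress-strong}, which amounts to specialising to the case of an empty auxiliary type list, exactly as you spell out. Your additional check that Definition~\ref{def:well-formed-auxiliary} with $k=0$ collapses to Definition~\ref{def:well-formed-configuration} is the right (and only) observation needed to justify the corollary.
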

%\begin{theorem}[Progress]
%For a well-formed quantum configuration $\mathcal C : \mathrm{A} \in \mathrm{W_{VT}}$, either $\mathcal C$ is a value
%configuration, or there exists $\mathcal C_1' \in \mathrm{W_{VT}}$ such that $\mathcal C \leadsto_{p_1} \mathcal C_1'$.
%Let $\{ \mathcal C_i' \}_{i \in I}$ be the set  of all such distinct(i.e not $\alpha$-equivalent) configurations.
%Then, the set $I$ is finite, and $\Sigma_{i \in I}p_i = 1$.
%\end{theorem}
\begin{proof}
	Follows from Theorem~\ref{thm:progress-strong}.
\end{proof}

\subsubsection{Strong Normalisation}

Next, we show that any reduction issued from a well-formed configuration under $\leadsto$ 
always terminates in finitely many steps.

In order to prove strong normalisation for the relation $\leadsto$, we define
another relation $\brr$ on $\cterms$ of the main calculus, and prove that it strongly
normalises. We also show that strong normalisation of $\brr$ implies strong
normalisation of $\leadsto$.

\begin{defi}[Term reduction]
$ \brr\  \in \cterms \times [0,1] \times \cterms$ is defined as follows:
\[
\begin{bprooftree}
\AxiomC{$(\ps v_1, u_{\sigma_1}, M_1) \leadsto_p (\ps v_2, u_{\sigma_2}, M_2)$}
\UnaryInfC{$M_1 \brr_p M_2$}
\end{bprooftree}
\]
\end{defi}

We additionally have a contextual reduction rule for the relation $ \brr $ defined as:
\[
\begin{bprooftree}
\AxiomC{$M \brr_p M'$}
\UnaryInfC{$E[M] \brr_p E[M']$}
\end{bprooftree}
\]

\begin{defi}[Well-typed term]
We say a term $M \in \cterms$ is well-typed of type $A[\ptype_1',\ldots, \ptype_k']$ if
there exist $\ps t, u_{\sigma}$ such that $(\ps t, u_{\sigma}, M)$ is a well-formed
configuration of type $A[\ptype_1',\ldots, \ptype_k']$. We denote such a term by
$M:A[\ptype_1',\ldots, \ptype_k']$.
\end{defi}

Let $\mathcal{SN}_C$ denote the set of well-formed strongly normalizing quantum configurations
under the relation $\leadsto$. Let $\mathcal{SN}_T$ denote the set of strongly normalizing
terms from the main calculus under the relation $\brr$. We denote a reduction path 
$M \brr_{p_0} M_1 \brr_{p_1}\ldots \brr_{p_n} V$ by $M \brr_{\ast} V$.

\begin{lem}
\label{lem:st-conf}
For $M \in \mathcal{SN}_T$, if $\ps v, u_{\sigma}$ are such that $\mathcal C = (\ps v, u_{\sigma}, M)$ is well-formed
then $\mathcal C \in \mathcal{SN}_C$.
\end{lem}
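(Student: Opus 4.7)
The plan is to argue by contraposition. Suppose, for contradiction, that $\mathcal C = (\ps v, u_\sigma, M)$ is well-formed but $\mathcal C \notin \mathcal{SN}_C$. Then there exists an infinite reduction sequence
\[
	\mathcal C = \mathcal C_0 \leadsto_{p_0} \mathcal C_1 \leadsto_{p_1} \mathcal C_2 \leadsto_{p_2} \cdots
\]
I will project this sequence onto the term component and obtain an infinite $\brr$-reduction starting from $M$, which contradicts $M \in \mathcal{SN}_T$.

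First, I would observe that each $\mathcal C_i$ in such a $\leadsto$-sequence is a well-formed semi-value configuration. This is a direct consequence of Subject Reduction (Theorem~\ref{lem:sub_red}) together with the fact that the relation $\leadsto$ is defined on $\svconfs$ (see Definition~\ref{def:reduction}), so the right-hand side of every step is again a semi-value configuration. Hence each $\mathcal C_i$ can be written as $\mathcal C_i = (\ps v_i, u_{\sigma_i}, M_i)$ for some quantum value $\ps v_i$, permutation $u_{\sigma_i}$, and term $M_i$, with $\mathcal C_0 = (\ps v, u_\sigma, M)$.

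Next, I would appeal directly to the definition of $\brr$: the rule
\[
	\begin{bprooftree}
		\AxiomC{$(\ps v_1, u_{\sigma_1}, M_1) \leadsto_p (\ps v_2, u_{\sigma_2}, M_2)$}
		\UnaryInfC{$M_1 \brr_p M_2$}
	\end{bprooftree}
\]
immediately gives, for every $i$, that $M_i \brr_{p_i} M_{i+1}$. Stringing these together yields the infinite reduction
\[
	M = M_0 \brr_{p_0} M_1 \brr_{p_1} M_2 \brr_{p_2} \cdots
\]
in the term calculus, contradicting the assumption $M \in \mathcal{SN}_T$. Therefore no infinite $\leadsto$-sequence from $\mathcal C$ exists, i.e., $\mathcal C \in \mathcal{SN}_C$.

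The only step that requires a small amount of care is the first one: verifying that $\leadsto$-reductions out of $\mathcal C$ stay within well-formed semi-value configurations so that the projection onto terms is well-defined at every index. This is the one place where we invoke Theorem~\ref{lem:sub_red}; everything else is a direct unfolding of the definition of $\brr$. I do not expect any serious obstacle, since no subtle interaction between the quantum state, the permutation $u_{\sigma_i}$, and the term $M_i$ needs to be analysed: the reduction $\brr$ is designed precisely so that it is the image of $\leadsto$ under the term projection.
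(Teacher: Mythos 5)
Your argument is correct and is essentially the intended proof: an infinite $\leadsto$-sequence out of $\mathcal C$ projects, step by step via the defining rule of $\brr$, to an infinite $\brr$-sequence out of $M$, contradicting $M \in \mathcal{SN}_T$. Note that the appeal to Subject Reduction is not even needed, since $\leadsto$ is by definition a relation on semi-value configurations and the rule defining $\brr$ quantifies over arbitrary such configurations, so the projection is automatically well-defined.
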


To prove strong normalisation of $\brr$, corresponding to each type we define the following sets:
\begin{itemize}
	\item $\mathcal S_T[I] \defeq \{M | M \in \mathcal{SN}_T$ and
		$M \brr_{\ast} V \implies V = \ast:I[\ptype_1',\ldots,\ptype_k'] $ for some $k \}$
	\item $\mathcal S_T[A+B] \defeq \{M | M \in \mathcal{SN}_T$ and
                $M \brr_{\ast} V \implies V = \tinl{V}:A+B[\ptype_1',\ldots,\ptype_k']$ or
                $V = \tinr{W}:A+B[\ptype_1',\ldots,\ptype_k']$
                for some $k\}$
	\item  $ \mathcal S_T[A\otimes B] \defeq \{M | M \in \mathcal{SN}_T$ and
                $M \brr_{\ast} V \implies V = V \otimes W:A\otimes B[\ptype_1',\ldots,\ptype_k']$
                for some $k\}$
	\item $\mathcal  S_T[!A] \defeq \{M | M \in \mathcal{SN}_T$ and
		$M \brr_{\ast} V \implies V = \tlift{M'}:!A[\ptype_1',\ldots,\ptype_k']$
		for some $k$ and $\tforce{V} \in \mathcal S_T[A] \}$
       \item  $\mathcal S_T[\Nat] \defeq \{M | M \in \mathcal{SN}_T$ and
               $M \brr_{\ast} V \implies V = \tzero:\Nat[\ptype_1',\ldots,\ptype_k']$  or
               $V= \tsucc{V}:\Nat[\ptype_1',\ldots,\ptype_k']$
               for some $k \}$
       \item  $\mathcal S_T[A \multimap B] \defeq \{M | M \in \mathcal{SN}_T$
              and $ \forall N  \in \mathcal S_T[A]$ ,
              $MN \in \mathcal S_T[B] \}$
      \item $\mathcal S_T[\mathcal B(\ptype)] \defeq \{M | M \in \mathcal{SN}_T$ and
              $M \brr_{\ast} V \implies  V = x:\mathcal B(\ptype)[\ptype_1',\ldots,\ptype_k']$
              for some $k\}$
\end{itemize}

\begin{defi}[Coherent configurations]
        A quantum configuration $(\ps t, u_{\sigma}, M)$ is said to be \textit{coherent} with respect to a typing context $\Gamma$ with type $A[\ptype'_1, \ldots, \ptype'_k]$
        if the following hold:
        \begin{itemize}
         \item $\cdot \pdash \ps t \colon \ptype_1 \ptimes {\ps \cdots} \ptimes \ptype_n$ can be derived;
       \item $\cdot \udash \isoterm_{\sigma} \colon U(\ptype_1 \ptimes {\ps \cdots} \ptimes \ptype_n , \ptype_1'' \ptimes {\ps \cdots} \ptimes \ptype_{m+k}'')$
             can be derived;
       \item $\Gamma ,
             x_1 : \mathcal B(\ptype_1'') , \ldots , 
               x_m : \mathcal B(\ptype_m'') \vdash M : \mathrm{A}$
             can be derived;
        \item $[\ptype_1'', \ldots, \ptype_{m+k}''] = [\ptype_1'', \ldots, \ptype_m'',\ptype_1', \ldots, \ptype_k']$ as lists of pure quantum types.
        \end{itemize}
\end{defi}

\begin{defi}[Coherent term]
A term $M$ is said to be \textit{coherent} with respect to a typing context $\Gamma$ with type $A[\ptype'_1, \ldots, \ptype'_k]$ if
there exist $\ps t, u_{\sigma}$ such that $(\ps t, u_{\sigma}, M)$ is a coherent configuration with respect to $\Gamma$.
\end{defi}

\begin{rem}
Note that when $\Gamma = \emptyset$ a coherent configuration is just a well-typed configuration. Similarly, in this case
a coherent term is just a well-typed term.
\end{rem}

\begin{defi}[Substitution function]
\label{def:subst-f}
A substitution function is a map $l : \cvariables \rightarrow \cvalues$. For a term $M$, $Ml$ denotes the term obtained after
substituting the free variables in $M$ with their image under $l$.
\end{defi}

For a typing context $\Gamma$,
we define a set $\mathcal S_C[\Gamma] \defeq \{l | dom(l) = dom(\Gamma) \text{ and } x:A \in \Gamma
\implies lx \in S_T[A] \}$.

Now we prove a series of results, which lead to Theorem~\ref{thm:st-norm-strong}.

\begin{thm}[Progress]
\label{thm:progress-t}
For $M:A[\ptype_1',\ldots,\ptype_k']$, there are two possibilities:
\begin{itemize}
\item either $M \in \cvalues$ ,
\item or there exists $M_1 \in A[\ptype_1',\ldots,\ptype_k']$ such that $M \brr_{p_1} M_1$.
 \end{itemize}
Let $\{ M_i \}_{i \in I}$ be the set  of all such distinct (i.e., not $\alpha$-equivalent) terms.
Then, the set $I$ is finite, and $\Sigma_{i \in I}p_i = 1$.
\end{thm}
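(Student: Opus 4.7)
The plan is to reduce this theorem to its configuration-level counterpart, Theorem~\ref{thm:progress-strong}, which has already been established. First, I would unfold the definition of well-typed term: given $M:A[\ptype_1',\ldots,\ptype_k']$, there exist $\ps t$ and $u_\sigma$ such that $\mathcal C=(\ps t,u_\sigma,M)$ is a well-formed quantum configuration of the same type. By Lemma~\ref{lem:equational-sn} applied to the pure subsystem, $\ps t$ is equal (in the equational theory) to some quantum value $\ps v$, so $(\ps v, u_\sigma, M)$ is a well-formed semi-value configuration of the same type; this uses soundness of the equational theory to preserve well-formedness.

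Next, I would apply Theorem~\ref{thm:progress-strong} to $(\ps v,u_\sigma,M)$. In the first case the configuration is a value configuration, which forces $M$ to be a value, giving the first disjunct of the conclusion. In the second case we obtain a finite family of pairwise distinct (up to $\alpha$-equivalence) semi-value configurations $\mathcal C_i=(\ps v_i,u_{\sigma_i},M_i)$ with $(\ps v,u_\sigma,M)\leadsto_{p_i}\mathcal C_i$ and $\sum_i p_i=1$. By the defining rule of $\brr$, each such step descends to $M\brr_{p_i}M_i$, and subject reduction (Theorem~\ref{lem:sub_red}) guarantees that each $M_i$ is well-typed with the same type $A[\ptype_1',\ldots,\ptype_k']$.

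The main obstacle is verifying that the bijection between configuration reducts and term reducts is probability-preserving, i.e.\ that distinct reducts $\mathcal C_i,\mathcal C_j$ give rise to distinct terms $M_i,M_j$ (otherwise the probabilities of collapsed terms would need to be summed, and one would still need to check that the result equals $1$). I would handle this by inspection of the rules of Figure~\ref{fig:red_uni}. All rules but one are deterministic in the term component once the redex is fixed; the only branching rule is the measurement rule, whose $k$-th branch yields the classical translation $\overline{\ps b_k}$ of a basis term, and distinct basis terms yield distinct translations by definition of $\overline{(-)}$. Lifting via the contextual rule preserves distinctness since the enclosing evaluation context is shared. Therefore the term reducts of $M$ are in bijection with the configuration reducts of $\mathcal C$, the family $\{M_i\}_{i\in I}$ is finite, and $\sum_{i\in I}p_i=1$, completing the argument.
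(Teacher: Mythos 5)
Your proof is correct and follows the route the paper evidently intends (Theorem~\ref{thm:progress-t} is stated there without an explicit proof): unfold the definition of a well-typed term to get a witnessing configuration, normalise its pure component via Lemma~\ref{lem:equational-sn} (typing being preserved by the equational theory) so that Theorem~\ref{thm:progress-strong} applies, and project the resulting reduction family to terms through the defining rule of $\brr$. Your additional check that distinct configuration reducts yield distinct term reducts (only the measurement rule branches, and $\ov{(-)}$ is injective on closed basis terms of a given type) is precisely what is needed for $\Sigma_{i \in I} p_i = 1$; the only small imprecision is that the subject-reduction statement carrying the auxiliary types is Theorem~\ref{thm:subred_strong} (or the term-level Theorem~\ref{thm:sub-red-t}) rather than Theorem~\ref{lem:sub_red}, which is stated without them.
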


\begin{thm}[Subject Reduction]
\label{thm:sub-red-t}
For any $M:A[\ptype_1', \ldots, \ptype_{\textit{k}}']$,  if $M \brr_p M'$, for some probability $p \in [0,1]$, then
$M':A[\ptype_1', \ldots, \ptype_{\textit{k}}']$.
\end{thm}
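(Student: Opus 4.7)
The plan is to prove the theorem by induction on the derivation of $M \brr_p M'$, with Theorem~\ref{thm:subred_strong} (subject reduction for configurations with auxiliary types) serving as the main tool.

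For the base case, $M \brr_p M'$ is derived from a reduction $(\ps v_1, u_{\sigma_1}, M) \leadsto_p (\ps v_2, u_{\sigma_2}, M')$ on some witness configurations. Because $M : A[\ptype_1', \ldots, \ptype_k']$, by the definition of well-typed terms there exist $\ps t, u_\sigma$ such that $(\ps t, u_\sigma, M)$ is a well-formed configuration of precisely that type. By Lemma~\ref{lem:equational-sn}, $\ps t$ is equal in the equational theory to some quantum value $\ps v$, and since the equational theory preserves types (Proposition~\ref{prop:equal-to-typed}), $(\ps v, u_\sigma, M)$ is a well-formed semi-value configuration of type $A[\ptype_1', \ldots, \ptype_k']$. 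The reduction rule applied at the configuration level is determined by the syntactic form of $M$'s head redex, so the same rule is triggered on $(\ps v, u_\sigma, M)$, producing some reduction $(\ps v, u_\sigma, M) \leadsto_{p'} (\ps v'', u_{\sigma''}, M'')$. Theorem~\ref{thm:subred_strong} then guarantees $(\ps v'', u_{\sigma''}, M'')$ is well-formed of type $A[\ptype_1', \ldots, \ptype_k']$, so $M'' : A[\ptype_1', \ldots, \ptype_k']$. For rules whose output term depends only on $M$'s structure (beta, case, let, etc.), $M'' = M'$ and we are done; for rules where the resulting term genuinely depends on the quantum state (namely measurement and $\mathit{pure}$), the type of the output term is still uniformly determined by the typing of the redex, with Lemma~\ref{lem:pres_t} guaranteeing that every basis-value translation $\ov{\ps b_k}$ lives in $\ov{\pQ}$.

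For the contextual case, $M = E[M_0]$ and $M' = E[M_0']$ with $M_0 \brr_p M_0'$, I would induct on the structure of $E$. The hypothesis $M : A[\ptype_1', \ldots, \ptype_k']$ together with the typing rule for $E[-]$ yields that $M_0$ has some type $B[\ptype_1', \ldots, \ptype_k']$ for a suitable $B$; the induction hypothesis then gives $M_0' : B[\ptype_1', \ldots, \ptype_k']$, and Lemma~\ref{lem:contexts_subst} transfers this back to conclude $E[M_0'] : A[\ptype_1', \ldots, \ptype_k']$.

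The main obstacle is the disconnect between the existential witnesses justifying $M : A[\ptype_1', \ldots, \ptype_k']$ and the specific witness configurations appearing in the derivation of $M \brr_p M'$: we cannot invoke Theorem~\ref{thm:subred_strong} directly on the latter, since the auxiliary-data component might differ. The trick is to sidestep this by constructing a parallel reduction from the typing witness (after normalising its quantum term via Lemma~\ref{lem:equational-sn}) and exploiting the fact that the applicable reduction rule is determined by $M$'s syntactic shape, so both reductions reach the same class of resulting term and Lemma~\ref{lem:pres_t} collapses any remaining freedom in the measurement case.
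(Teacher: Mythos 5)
Your overall strategy---reduce the term-level statement to Theorem~\ref{thm:subred_strong} by producing a well-formed configuration around $M$---is the natural one, but the bridge you build between the typing witness and the reduction witness breaks exactly at the case you try to wave away: measurement. First, the claim that ``the same rule is triggered'' on the typing witness $(\ps v, u_\sigma, \tmeas{x})$ is not justified: the measurement rule only fires when the configuration's unitary is literally of the form $u_{\sigma_s}$ and the quantum term is written in the prescribed nested-sum shape; the relation $\leadsto$ lets you rewrite the quantum term via the equational theory, but it never changes the unitary, and the typing witness's unitary need not have that shape (getting some reduction out of the typing witness is a progress-style argument, Theorem~\ref{thm:progress-strong}, not a syntactic triviality). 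Second, and more seriously, even if a parallel reduction did fire, its reduct would be $\ov{\pb'_j}$ for a basis value of the type that the \emph{typing} witness assigns to $x$, whereas the actual reduct $M' = \ov{\pb_k}$ is dictated by the \emph{reduction} witness. Nothing in your argument ties the two witnesses together: term typing is existential over configurations, so the reduction witness may let $x$ track a component of type $\pnat$ (making $M' = \tsucc{\tzero}$) while your typing witness assigns it $\pqbit$ and hence $A = \bit$. Lemma~\ref{lem:pres_t} only gives $\cdot \vdash \ov{\pb_k} : \ov{\pQ}$ for the $\pQ$ coming from the reduction witness; it does not ``collapse the remaining freedom'' because it never mentions $A$. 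To close this case you must either argue that the reduction witness can be taken to be (or replaced by) a well-formed configuration of the same type $A[\ptype_1', \ldots, \ptype_k']$---at which point Theorem~\ref{thm:subred_strong} applies to it directly and the detour through a parallel reduction is unnecessary---or show that all configurations witnessing a reduction of $\tmeas{x}$ force the same type on $x$, which the definitions do not give you.

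A smaller issue is the contextual case: you keep the auxiliary list $[\ptype_1', \ldots, \ptype_k']$ unchanged when passing from $E[M_0]$ to $M_0$, but the state components tracked by the free variables occurring in $E$ (and not in $M_0$) become auxiliary data for $M_0$, so the induction hypothesis must be invoked at an enlarged auxiliary list and then transported back through the context; Lemma~\ref{lem:contexts_subst} is stated for plain typing judgements and does not by itself perform this bookkeeping, nor does it extract the intermediate type $B$ of the hole, which needs a separate typed-decomposition argument.
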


\begin{lem}
\label{lem:st-set}
For a type $A$, if $M \in \mathcal S_T[A]$ then $M \in \mathcal{SN}_T$.
\end{lem}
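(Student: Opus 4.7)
The plan is to proceed by a simple case analysis on the type $A$, observing that the condition $M \in \mathcal{SN}_T$ appears explicitly as a conjunct in every clause of the definition of $\mathcal S_T[A]$ given above.

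Concretely, I would unfold the definition of $\mathcal S_T[A]$ in each of the seven cases: $A = I$, $A = A_1 + A_2$, $A = A_1 \otimes A_2$, $A = \mathop{!}A_1$, $A = \Nat$, $A = A_1 \multimap A_2$, and $A = \mathcal B(\ptype)$. In the first six cases (the ``data'' types), the definition is of the shape $\{M \mid M \in \mathcal{SN}_T \text{ and } (\text{some condition on normal forms})\}$, so membership in $\mathcal S_T[A]$ immediately yields $M \in \mathcal{SN}_T$ by projecting onto the first conjunct. The case $A = A_1 \multimap A_2$ is formally identical, since the definition is $\{M \mid M \in \mathcal{SN}_T \text{ and } \forall N \in \mathcal S_T[A_1],\ MN \in \mathcal S_T[A_2]\}$, which again has $M \in \mathcal{SN}_T$ as its first conjunct. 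The case $A = \mathcal B(\ptype)$ is handled in exactly the same way.

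There is no real obstacle here: the statement is essentially a bookkeeping lemma that extracts one conjunct from the definition. It is stated separately precisely so that later arguments (in particular the adequacy/normalisation proofs via the reducibility sets $\mathcal S_T[A]$, which will typically establish membership in $\mathcal S_T[A]$ first and then conclude strong normalisation) can cite it as a one-line fact, and so that the pattern ``$\mathcal S_T[A]$ contains only strongly normalising terms'' is uniform across the type-indexed family even though the second conjunct varies with $A$. The proof is therefore a one-paragraph argument: unfold the definition, project the first conjunct, done.
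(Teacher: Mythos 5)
Your proof is correct and matches the paper's treatment: the paper states this lemma without proof precisely because every clause of the definition of $\mathcal S_T[A]$ — including the $\multimap$ and $\mathcal B(\ptype)$ cases — explicitly includes $M \in \mathcal{SN}_T$ as a conjunct, so membership follows by projection exactly as you argue.
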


\begin{lem}
\label{forward-step}
For terms $M, M'$, if $M \in \mathcal S_T[A]$ for some $A$, and $M \brr_p M'$ then $M' \in S_T[A]$.
\end{lem}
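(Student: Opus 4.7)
The plan is to proceed by structural induction on the type $A$. The underlying idea is standard for reducibility-style arguments: each clause defining $\mathcal S_T[A]$ is closed under forward reduction because (i) strong normalisation is preserved by $\brr$ (any one-step reduct of an SN term is SN), and (ii) every reduction sequence $M' \brr_\ast V$ can be prepended with the step $M \brr_p M'$ to yield a reduction $M \brr_\ast V$, so the set of values reachable from $M'$ is a subset of those reachable from $M$, and in particular the shape-of-value constraints in each clause transfer from $M$ to $M'$. Throughout, Theorem~\ref{thm:sub-red-t} (subject reduction for $\brr$) ensures that the typing annotation $A[\ptype_1', \ldots, \ptype_k']$ survives each $\brr_p$ step, which is needed in order even to talk about $M'$ as a term of that same type.

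For the base and data-type cases $A \in \{ I,\ B+C,\ B \otimes C,\ \Nat,\ \mathcal B(\pQ) \}$, the argument is direct: membership of $M$ in $\mathcal S_T[A]$ guarantees both SN and a constraint on the shape of any value $V$ with $M \brr_\ast V$. Observation (i) gives $M' \in \mathcal{SN}_T$, and observation (ii) transfers the value-shape constraint from $M$ to $M'$. The $!A$ case is handled the same way: the extra clause ``$\tforce{V} \in \mathcal S_T[A]$'' is a condition on values $V$ reachable from $M$, hence automatically inherited by $M'$, so no induction hypothesis on $A$ is actually needed here.

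The only case that uses the induction hypothesis is the function type $A \multimap B$. Suppose $M \in \mathcal S_T[A \multimap B]$ and $M \brr_p M'$. Strong normalisation of $M'$ is immediate from (i). For the functional clause, take any $N \in \mathcal S_T[A]$; I need $M'N \in \mathcal S_T[B]$. Since $[\cdot]\,N$ is an evaluation context (see the grammar of $E$), the contextual rule for $\brr$ yields $MN \brr_p M'N$. By the hypothesis on $M$, $MN \in \mathcal S_T[B]$, and by the induction hypothesis applied at the smaller type $B$, $M'N \in \mathcal S_T[B]$, as required.

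The main (minor) obstacle is simply bookkeeping the auxiliary pure-type annotation $[\ptype_1', \ldots, \ptype_k']$ along the reduction: this is why Theorem~\ref{thm:sub-red-t} is invoked up front, so that $M'$ and $M'N$ live in the right $\mathcal S_T[\cdot]$-sets and the induction hypothesis can be applied with a matching type. Beyond that, the argument is entirely routine forward-closure for reducibility candidates.
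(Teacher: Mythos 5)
Your proof is correct and is exactly the standard forward-closure argument one expects here (the paper itself defers the proof of this lemma to its appendix): prepending the step $M \brr_p M'$ shows every value reachable from $M'$ is reachable from $M$, one-step reducts of strongly normalising terms are strongly normalising, and the only case needing the induction hypothesis is $A \multimap B$, handled via the contextual rule with the context $[\cdot]\,N$. No gaps; the invocation of subject reduction is harmless bookkeeping.
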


\begin{lem}
\label{lem:beta-red}
        If $M[V/x] \in \mathcal S_T[A]$, then $(\lambda x.M)\, V \in \mathcal S_T[A]$.
\end{lem}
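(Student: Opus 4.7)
The plan is to prove Lemma~\ref{lem:beta-red} by induction on the structure of the type $A$, leveraging a crucial observation about the reduction behaviour of $(\lambda x.M)\,V$: it admits exactly one single-step reduction under $\brr$, namely the beta step to $M[V/x]$. This holds because $V$ is a value (hence irreducible), $\lambda x.M$ is itself a value, and the grammar of evaluation contexts never places the hole under a binder, so the only applicable rule is the beta rule on the outer redex. Combined with the hypothesis $M[V/x] \in \mathcal{S}_T[A]$ and Lemma~\ref{lem:st-set}, this immediately gives $(\lambda x.M)\,V \in \mathcal{SN}_T$.

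For each non-arrow type case ($I$, $B + C$, $B \otimes C$, $!B$, $\Nat$, $\mathcal{B}(\pQ)$), membership in $\mathcal{S}_T[A]$ only adds a constraint on the syntactic shape (and, in the $!$ case, on the reducibility of $\tforce{V'}$) of any value eventually reached along $\brr_\ast$. By uniqueness of the first step, any path $(\lambda x.M)\,V \brr_\ast V'$ factors as $(\lambda x.M)\,V \brr_1 M[V/x] \brr_\ast V'$, so the hypothesis $M[V/x] \in \mathcal{S}_T[A]$ directly transfers the required property to $V'$; in the $!$ case, the extra condition on $\tforce{V'}$ is likewise inherited from $M[V/x] \in \mathcal{S}_T[!B]$.

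For the arrow case $A = B \multimap C$, given $N \in \mathcal{S}_T[B]$, the same uniqueness observation applied under the evaluation context $[\cdot]\,N$ shows that $((\lambda x.M)\,V)\,N$ admits only the reduction $((\lambda x.M)\,V)\,N \brr_1 (M[V/x])\,N$, and the latter lies in $\mathcal{S}_T[C]$ because $M[V/x] \in \mathcal{S}_T[B \multimap C]$ and $N \in \mathcal{S}_T[B]$. To close the loop I will first strengthen the lemma to a general head-expansion principle, \emph{if $L$ has $L' \in \mathcal{S}_T[A]$ as its unique one-step reduct, then $L \in \mathcal{S}_T[A]$}, and prove this by induction on $A$; the original statement is recovered as the instance $L = (\lambda x.M)\,V$. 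The main obstacle is precisely this arrow case: without the strengthening one has to control an unbounded cascade of further applications $((\lambda x.M)\,V)\,N_1 \cdots N_k$ during the induction, while the generalized formulation lets the induction on $C$ go through uniformly, since each application step again produces a term with a unique head redex, reducing the problem back to the inductive hypothesis at a strictly smaller type.
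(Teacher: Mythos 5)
Your proposal is correct, and in fact it supplies an argument the paper omits: Lemma~\ref{lem:beta-red} is stated without proof and used as a black box inside the proof of Theorem~\ref{thm:l-rel}. Your key observations are exactly the ones needed in this call-by-value setting: $\lambda x.M$ and $V$ are values, values are irreducible under $\brr$, evaluation contexts never enter a binder, and none of the configuration rules other than the $\beta$-rule matches an application of an abstraction to a value, so $(\lambda x.M)\,V$ has the single reduct $M[V/x]$, reached with probability $1$. From this, strong normalisation and the shape conditions on reached values transfer directly from $M[V/x]$ for all the non-arrow types (including the extra $\tforce{}$ condition at $!\mathrm A$), via Lemma~\ref{lem:st-set}. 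Your decision to strengthen the statement to a general head-expansion principle (``if the unique one-step reduct of a non-value $L$ lies in $\mathcal S_T[A]$, then so does $L$'') is the right move for the arrow case: the naive induction gets stuck because $((\lambda x.M)\,V)\,N$ is not itself a $\beta$-redex of the original form, whereas the generalized statement applies to it at the strictly smaller type, since a reducible $L$ is not a value, hence not an abstraction, so $L\,N$ again has the unique reduct $L'\,N$, which is in $\mathcal S_T[C]$ by the definition of $\mathcal S_T[B \multimap C]$. This is the standard Tait-style closure-under-head-expansion argument, and it is compatible with the paper's definitions (in particular with Lemma~\ref{forward-step}, which gives the forward closure).

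Two small points worth recording explicitly if you write this up: (i) the claim that values are irreducible needs a one-line induction (no redex pattern is a value, and an evaluation context filled with a reducible, hence non-value, term is never a value); and (ii) in the generalized principle the hypothesis should state that $L$ is not a value and has exactly one reduct, so that the factorisation of every path $L \brr_{\ast} V'$ through $L'$ is literal; probabilities play no role since all the steps involved here carry probability $1$ and strong normalisation quantifies over all paths regardless of their labels.
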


\begin{thm}
\label{thm:l-rel}
If $M$ is a coherent term with respect to $\Gamma$ of type $A[\ptype_1',\ldots,\ptype_k']$
then $\forall l \in \mathcal S_C[\Gamma], Ml \in \mathcal S_T[A]$.
\end{thm}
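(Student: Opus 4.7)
The plan is to proceed by induction on the typing derivation of $M$ relative to the context $\Gamma, x_1:\mathcal B(\ptype_1''),\ldots,x_m:\mathcal B(\ptype_m'')$. For each typing rule we must show that, whenever $l \in \mathcal S_C[\Gamma]$, the substituted term $Ml$ belongs to the reducibility candidate $\mathcal S_T[A]$. The overall strategy is the classical reducibility argument, but adapted to the call-by-value linear discipline, to the probabilistic step produced by $\tmeas{-}$, and to the auxiliary quantum data list $[\ptype_1',\ldots,\ptype_k']$ that each coherent configuration must track.

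The positive introduction forms ($\ast$, $\tinl M$, $\tinr N$, $M \otimes N$, $\tzero$, $\tsucc{M}$, $\tlift M$) are discharged structurally: the induction hypothesis on each subterm plus subject reduction (Theorem~\ref{thm:sub-red-t}) and the contextual rule for $\brr$ show that every reduction sequence from $Ml$ is built from reduction sequences on subterms, hence finite, and produces a value of the required shape. The variable case is immediate from $l \in \mathcal S_C[\Gamma]$. For $\lambda x.M$ we must exhibit membership in $\mathcal S_T[A \multimap B]$: given $N \in \mathcal S_T[A]$, we first reduce $N$ (SN by Lemma~\ref{lem:st-set}) to each possible value $V$, apply forward closure (Lemma~\ref{forward-step}) to obtain $V \in \mathcal S_T[A]$, invoke Lemma~\ref{lem:beta-red} to get $(\lambda x.Ml)\,V \in \mathcal S_T[B]$, and then conclude via a contextual backward-closure step. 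This backward closure under the congruence rule for application is the one auxiliary lemma that must be proved inline; it is an easy dual to Lemma~\ref{forward-step}.

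The elimination forms ($\tcase$, $\tletpair$, $\tmatch$, $\tforce$, $\tletb$, $\tletpairb$) all follow the same template: reduce the scrutinee to a value using the induction hypothesis, fire the appropriate rule from Figure~\ref{fig:red_uni}, and apply a case-specific backward-closure lemma analogous to Lemma~\ref{lem:beta-red}. The quantum primitives $\tpure{\pt}$, $\tun{\isoterm}{M}$, $\tmeas{M}$ are treated in the same spirit: the induction reduces any proper subterms to values, and a single $\leadsto$-step then yields a value configuration whose underlying term has exactly the shape demanded by the corresponding reducibility predicate (a variable for the type $\mathcal B(\pQ)$, or a translated basis value $\ov{\ps b_k}$ for the classical type $\ov \pQ$). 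Well-formedness of the intermediate configurations and preservation of the auxiliary type list are guaranteed throughout by Theorem~\ref{thm:subred_strong}, and the equational rewriting of the quantum component is legitimised by Lemma~\ref{lem:equational-sn}.

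The main obstacle is the rule for $\tmeas{x}$, which introduces genuine probabilistic branching: one must check that \emph{every} branch lands in $\mathcal S_T[\ov \pQ]$. This succeeds because the sets $\mathcal S_T[-]$ are defined by a universal quantification over reduction paths, so they commute with finite probabilistic choice, and Theorem~\ref{thm:progress-t} supplies the fact that $\sum_i p_i = 1$ so no branch is forgotten. A secondary nuisance is administrative: substituting with $l$ can shift the auxiliary-type list (since variables in $\Gamma$ need not be quantum), so the list equation from the definition of coherent configurations must be chased through each reduction step. Once these bookkeeping details and the backward-closure lemmas for each elimination rule are in place, the induction goes through mechanically, and Theorem~\ref{thm:l-rel} follows.
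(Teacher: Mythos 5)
Your plan follows essentially the same route as the paper's proof: induction on the typing derivation using the reducibility sets $\mathcal S_T[-]$ and the substitution sets $\mathcal S_C[\Gamma]$, with Lemmas~\ref{lem:st-set}, \ref{forward-step} and \ref{lem:beta-red} doing the same work, and with the quantum primitives $\tpure{\pt}$, $\tun{\isoterm}{M}$, $\tmeas{M}$ discharged exactly as in the paper by reducing to a variable (resp.\ a translated basis value $\ov{\ps b}$) via the rules of Figure~\ref{fig:red_uni} and Lemma~\ref{lem:equational-sn}. The only step you leave implicit is the heart of the $\lambda$-case --- after obtaining $V \in \mathcal S_T[A]$ one must apply the induction hypothesis to the body with the extended substitution $l' = l \cup \{x \mapsto V\} \in \mathcal S_C[\Gamma, x : A]$, giving $(Ml)[V/x] = Ml' \in \mathcal S_T[B]$, which is the hypothesis Lemma~\ref{lem:beta-red} needs --- but this is precisely what your generalisation over $\Gamma$ and $l$ is set up to deliver, so the argument goes through as in the paper.
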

\begin{proof}
 Proof is by induction on the formation rule for the term $M$. We present some important cases here:
        \begin{itemize}
                \item $\lambda x.M$:\\
                     We need to show that $\forall l \in \mathcal S_C[\Gamma], \ Ml \in \mathcal S_T[A \multimap B]$.\\
                     Note that $(\lambda x. M)l = \lambda x. Ml$ upto $\alpha-$equivalence.
                     Suffices to check that for $N \in \mathcal S_T[A]$
		     $(\lambda x.Ml)\, N \in \mathcal S_T[B]$.
                     %Let $\tilde{\sigma} = \sigma^{ext} \circ \sigma'^{ext}$.
                     Since $N \in \mathcal S_T[A]$, all reductions from this term terminate in finitely many steps
                     to a value of the form $V$.
                     For each such reduction, we can apply the contexutal rule for the relation $\brr$ to obtain
                     that $(\lambda x.Ml)\, N$ reduces in each case to a term of the form
                     $(\lambda x.Ml)V$ in finitely many steps.
                     Note that from Lemma~\ref{lem:beta-red}, it suffices to
                     show that $(Ml)[V/x] \in \mathcal S_T[B]$.
                     We know by subject reduction that $(Ml)[V/x]$
                     is a well-typed term. It follows that $l' = l \cup \{x \mapsto V\} \in \mathcal S_C[\Gamma, x: A]$.
                     Hence from IH, it follows that $Ml' \in \mathcal S_T[B]$.
                     But $Ml'= M(l\cup \{x \mapsto V \} = Ml[V/x]$. Hence, $Ml[V/x] \in \mathcal S_T[B]$.
                \item $MN$:\\
                      We need to show that for any $l \in \mathcal S_C[\Gamma], (MN)l \in \mathcal S_T[A]$.\\
                      By the induction hypothesis, we know that $Ml \in \mathcal S_T[B \multimap A]$, and
                      $Nl \in \mathcal S_T[B]$. By the definition of $\mathcal S_T[B \multimap A]$ it follows that
                      $(Ml)(Nl) \in \mathcal S_T[A]$. However, note that $(Ml)(Nl) = (M\, N)l$.
                      Thus, we have shown that $(M\, N)l \in \mathcal S_T[A]$ as desired.
                \item $\tpure{\ps t'}$:\\
                      We need to show that for any $l \in \mathcal S_C[\Gamma]$,
                      $\tpure{\ps t'}l \in \mathcal S_T[\mathcal B(\ptype)]$.\\
                      This follows in a straightforward manner since
                      $(\tpure{\ps t'}l = \tpure{\ps t'})$, and from the reduction rules in Figure~\ref{fig:red_uni}, it follows that
		      there are $\ps v, u_{\sigma}$ such that $(\ps v, u_{\sigma}, \tpure{\ps t'})$
                      reduces to a value configuration of the form $(\ps v', u_{\sigma}, x)$. Hence $M \brr_1 x$ and we have that
                      $\tpure{\ps t'} \in \mathcal S_T[\mathcal B(\ptype)]$.
                \item $\tmeas{M}$:\\
                      We need to show that for any $l \in \mathcal S_C[\Gamma]$,
                      $\tmeas{M}l \in \mathcal S_T[\ov{\ptype}]$.\\
                      Note that $\tmeas{M}l = \tmeas{Ml}$. Now, applying
                      the induction hypothesis, we have that
                      $ Ml \in \mathcal S_T[\mathcal B(\ptype)]$. Hence, there exists a $k$ such that $Ml$
                      reduces to a term of the form $x$ in atmost $k$ reduction steps. Applying the contextual reduction rule above
                      we get that $\tmeas{M}$ reduces to a term of the form $\tmeas{x}$
                      in atmost $k$ steps. Finally, applying the reduction rule for measurement from Figure~\ref{fig:red_uni} gives us a
                      value configuration of the desired form. Hence, we have that $\tmeas{M}l \in \mathcal S_T[\ov{\ptype}]$.
                \item $\tun{u}{M}$:\\
                      This case follows by an argument similar to the case for measurement. \qedhere
        \end{itemize}
\end{proof}

\begin{thm}
\label{thm:st-norm-t}
%For a well-formed quantum configuration $\mathcal C \in \wfconfs{\confs}{\mathrm{A}[\ptype_1',\ldots,\ptype_k']}$, there is no infinite
%sequence of reductions $\mathcal C \leadsto_{p_0} \mathcal C_1 \leadsto_{p_1} \mathcal C_2 \leadsto_{p_2}\ldots$.
$M:\mathrm{A}[\ptype_1',\ldots,\ptype_k'] \implies M \in \mathcal{SN}_T$.
\end{thm}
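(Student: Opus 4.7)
The plan is to derive this as an immediate corollary of the reducibility argument already set up. A well-typed term $M : \mathrm{A}[\ptype_1',\ldots,\ptype_k']$ is, by definition, a term such that there exist $\ps t, u_\sigma$ with $(\ps t, u_\sigma, M)$ a well-formed quantum configuration of type $\mathrm{A}[\ptype_1',\ldots,\ptype_k']$. As noted in the remark preceding Definition~\ref{def:subst-f}, a well-typed term is simply a coherent term with respect to the empty typing context $\Gamma = \emptyset$.

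First I would instantiate Theorem~\ref{thm:l-rel} at $\Gamma = \emptyset$. This yields that for every substitution function $l \in \mathcal S_C[\emptyset]$, we have $Ml \in \mathcal S_T[\mathrm{A}]$. The set $\mathcal S_C[\emptyset]$ contains (trivially) the empty substitution $l_\emptyset$ whose domain is empty, and by Definition~\ref{def:subst-f} applying $l_\emptyset$ to $M$ leaves $M$ unchanged, i.e.\ $Ml_\emptyset = M$. Consequently $M \in \mathcal S_T[\mathrm{A}]$.

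Finally, I would invoke Lemma~\ref{lem:st-set}, which states that for any type $\mathrm{A}$, membership in $\mathcal S_T[\mathrm{A}]$ implies membership in $\mathcal{SN}_T$. Applying this to $M \in \mathcal S_T[\mathrm{A}]$ gives $M \in \mathcal{SN}_T$, as required.

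No genuine obstacle arises at this step: all the heavy lifting has been done in Theorem~\ref{thm:l-rel} (the logical relations argument on coherent terms, handled case-by-case over the typing rules) and in Lemma~\ref{lem:st-set} (unfolding the reducibility sets). The only mild subtlety is the identification of ``well-typed'' with ``coherent under the empty context'', which is justified by the remark immediately following the definition of coherent terms. Thus the proof of Theorem~\ref{thm:st-norm-t} is a one-line combination of these two previously established results.
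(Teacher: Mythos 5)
Your proposal is correct and follows exactly the paper's own argument: instantiate Theorem~\ref{thm:l-rel} at the empty context, use that the empty substitution leaves $M$ unchanged so $M \in \mathcal S_T[\mathrm{A}]$, and conclude with Lemma~\ref{lem:st-set}. No differences worth noting.
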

\begin{proof}
We know from Theorem~\ref{thm:l-rel} , that for all
$l \in \mathcal S_C[\emptyset], Ml \in \mathcal S_T[A]$. But if $\Gamma= \emptyset, Ml = M$.
Hence $M \in \mathcal S_T[A]$. From Lemma~\ref{lem:st-set} it follows that $M \in \mathcal{SN}_T$.
\end{proof}

\begin{thm}
\label{thm:st-norm-strong}
%For a well-formed quantum configuration $\mathcal C \in \wfconfs{\confs}{\mathrm{A}[\ptype_1',\ldots,\ptype_k']}$, there is no infinite
%sequence of reductions $\mathcal C \leadsto_{p_0} \mathcal C_1 \leadsto_{p_1} \mathcal C_2 \leadsto_{p_2}\ldots$.
$\mathcal C \in \wfconfs{\svconfs}{\mathrm{A}[\ptype_1',\ldots,\ptype_k']} \implies \mathcal C \in \mathcal{SN}_C$.
\end{thm}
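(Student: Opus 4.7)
The plan is to chain together the strong normalisation result for terms (Theorem~\ref{thm:st-norm-t}) with Lemma~\ref{lem:st-conf}, which lifts term-level strong normalisation back up to configuration-level strong normalisation.

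Concretely, I would start by unpacking the well-formed semi-value configuration as $\mathcal C = (\ps v, u_\sigma, M)$ for some quantum value $\ps v$, permutation $u_\sigma$ and term $M$ from the main calculus. Since $\mathcal C \in \wfconfs{\svconfs}{\mathrm{A}[\ptype_1',\ldots,\ptype_k']}$, by the very definition of a well-typed term, $M$ is itself a well-typed term of type $\mathrm{A}[\ptype_1',\ldots,\ptype_k']$, witnessed precisely by $\ps v$ and $u_\sigma$.

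Next I would invoke Theorem~\ref{thm:st-norm-t} directly on $M$ to conclude $M \in \mathcal{SN}_T$, i.e. every $\brr$-reduction sequence issued from $M$ terminates. Finally, since $(\ps v, u_\sigma, M)$ is well-formed, Lemma~\ref{lem:st-conf} applies and yields $\mathcal C \in \mathcal{SN}_C$, as required.

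There is no real obstacle here: the heavy lifting (the reducibility candidates argument establishing $\mathcal S_T[\mathrm{A}]$-membership, Theorem~\ref{thm:l-rel}, and the definition of the simulating relation $\brr$) has already been done. The only subtlety to double-check is that Lemma~\ref{lem:st-conf} really does transport $\mathcal{SN}_T$-membership to $\mathcal{SN}_C$-membership without any side condition beyond well-formedness of the configuration, which is exactly the hypothesis we have. So the proof is essentially a one-liner composing these two facts.
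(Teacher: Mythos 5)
Your proposal is correct and is exactly the paper's argument: the paper also derives the theorem directly by combining Theorem~\ref{thm:st-norm-t} (strong normalisation of well-typed terms under $\brr$) with Lemma~\ref{lem:st-conf} (lifting $\mathcal{SN}_T$-membership to $\mathcal{SN}_C$-membership for well-formed configurations). The unpacking of $\mathcal C$ as $(\ps v, u_\sigma, M)$ and the observation that $M$ is then well-typed is precisely the implicit glue the paper relies on.
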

\begin{proof}
Follows from Lemma~\ref{lem:st-conf} and Theorem~\ref{thm:st-norm-t}.
\end{proof}

\begin{thm}[Strong Normalisation]
	\label{thm:st-norm}
	For a configuration $\mathcal C \in \wfconfs{\svconfs}{A}$, there is no
	infinite sequence of reductions $\mathcal C \leadsto_{p_0} \mathcal C_1
	\leadsto_{p_1} \mathcal C_2 \leadsto_{p_2} \cdots$.
\end{thm}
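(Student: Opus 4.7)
The statement is essentially a specialisation of Theorem~\ref{thm:st-norm-strong} to the case $k=0$, i.e.\ where the list of auxiliary pure quantum types is empty. My plan is therefore to observe that every $\mathcal C \in \wfconfs{\svconfs}{A}$ fits Definition~\ref{def:well-formed-auxiliary} with an empty auxiliary list (the two definitions coincide in that case, since taking $k=0$ makes the list equation $[\ptype_1'', \dots, \ptype_{m+k}''] = [\ptype_1'', \dots, \ptype_m'']$ trivial and the term typing condition identical). Consequently $\mathcal C \in \wfconfs{\svconfs}{A[\,]}$, and applying Theorem~\ref{thm:st-norm-strong} yields $\mathcal C \in \mathcal{SN}_C$, which unfolds to the desired non-existence of an infinite $\leadsto$-reduction.

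All the heavy lifting is done upstream. The real work sits inside Theorem~\ref{thm:st-norm-t}, which is proved by the classical reducibility candidates / logical relations technique: one defines type-indexed sets $\mathcal S_T[A]$ of strongly normalising terms, and then proves the fundamental lemma (Theorem~\ref{thm:l-rel}) that every coherent term is in its corresponding $\mathcal S_T[A]$ whenever its free variables are substituted using $l \in \mathcal S_C[\Gamma]$. The transfer from term-level normalisation $\brr$ to configuration-level normalisation $\leadsto$ is then provided by Lemma~\ref{lem:st-conf}; none of this needs to be redone here.

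Thus the proof itself is a short two-line corollary argument: widen $A$ to $A[\,]$, invoke Theorem~\ref{thm:st-norm-strong}, and read off the definition of $\mathcal{SN}_C$. There is no genuine obstacle at this stage; the only thing that deserves a sentence is to spell out that the $k=0$ instance of Definition~\ref{def:well-formed-auxiliary} really does collapse to Definition~\ref{def:well-formed-configuration}, so that the hypothesis of Theorem~\ref{thm:st-norm-strong} is met. Once that identification is explicit, strong normalisation of $\leadsto$ on well-formed semi-value configurations follows immediately.
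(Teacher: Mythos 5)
Your proposal matches the paper's own proof, which simply derives Theorem~\ref{thm:st-norm} as an immediate consequence of Theorem~\ref{thm:st-norm-strong}; your extra remark spelling out that the $k=0$ instance of Definition~\ref{def:well-formed-auxiliary} coincides with Definition~\ref{def:well-formed-configuration} is a harmless (and welcome) clarification of the same argument. Nothing further is needed.
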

\begin{proof}
Follows from Theorem~\ref{thm:st-norm-strong}.
\end{proof}

%%%%%%%%%%%%%%%%%%%%%%%%%%%%%%%%%%%%%%%%%%%%%%
\section{Illustrating Examples}
%: Quantum Walk Search}\label{sub:examples-c}
%%%%%%%%%%%%%%%%%%%%%%%%%%%%%%%%%%%%%%%%%%%%%%

Here we give several examples to illustrate the utility of our calculus.
We start with some seminal quantum algorithms such as Teleportation expressed
in our syntax and how it evolves in our semantics. To further highlight the
need to combine quantum and classical control, we illustrate a commonplace
quantum algorithm requiring a langauge which supports both.

\begin{exa}
Let $\ps{ \ket{\phi}} \defeq  \ps{\sfrac{1}{\sqrt{3}} \psdot \ket{000} + \sfrac{1}{\sqrt{3}} \psdot \ket{010} + \sfrac{1}{\sqrt{3}} \psdot \ket{011}}$ and consider the following term:
\begin{gather}
\scalebox{0.9}{$M : \mathcal B (\ptqbit{2}) \otimes \textit{bit}$} \notag \\
\scalebox{0.9}{$M \defeq \tletpairb{x}{y}{\tpure{\ps{ \ket{\phi}}}}{x \otimes \tmeas{y}}$} \notag
\end{gather}

 Since, by Figure~\ref{fig:red_uni}, the following reduction holds
\[ (\ps \ast , u_{id} , \tpure{\ps{ \ket{\phi}}}) \rightarrow_1
                (\ps{ \ket{\phi}} , \isoterm_{\sigma_{swap}} , z),
                \]
we obtain the following small-step reduction using an application of the contextual rule of Figure~\ref{fig:red_uni}
\[
(\ps \ast , \isoterm_{id} , M) \rightarrow_1 ( \ps{ \ket{\phi}} , \isoterm_{\sigma_{swap}} , \tletpairb{x}{y}{z}{x \otimes \tmeas{y}}).
                     \]
Consequently, the reductions below hold
\begin{gather}
\scalebox{0.9}{$(\ps{\ket{\phi}} , u_{\sigma_{swap}} , \tletpairb{x}{y}{z}{x \otimes \tmeas{y}})$} \notag  \\
\scalebox{0.9}{$\leadsto_1(\ps{\sfrac{\sqrt{2}}{\sqrt{3}}\psdot (\sfrac{1}{\sqrt{2}} \psdot \ket{000} + \sfrac{1}{\sqrt{2}} \psdot \ket{010}) +
                \sfrac{1}{\sqrt{3}} \psdot \ket{011}} ,
                u_{\sigma_{swap}} , x \otimes \tmeas{y})$} \notag \\
\scalebox{0.9}{$\begin{cases}
                                  \leadsto_{\sfrac{2}{3}}(\ps{\sfrac{1}{\sqrt{2}}\psdot \ket{00} + \sfrac{1}{\sqrt{2}}\psdot \ket{01}} , u_{\sigma_{swap}} , x \otimes \tinl{\ast}) \\
                                                  \leadsto_{\sfrac{1}{3}}(\ps{\ket{01}} , u_{\sigma_{swap}} , x \otimes \tinr{\ast})
\end{cases}$} \notag
\end{gather}
        where $x$ points to the first two qubits, and $y$ points to the last qubit. Note that the unitary $u_{\sigma_{swap}}$ acts on the state $\ps {\ast \otimes \ket{\phi}}$, and $\ps{\ket{\phi}}$
        is syntactic sugar representing it, as clarified in the beginning of this section.
\end{exa}

\begin{exa}[\textit{Bell State}]
We show how a \emph{Bell state} $\textbf{Bell} \defeq  \ps{\sfrac{1}{\sqrt{2}} \psdot \ket{00} + \sfrac{1}{\sqrt{2}} \psdot \ket{11}}$ can be prepared by a term of our calculus. We do this by
preparing two qubits and applying the necessary unitary operations on them:
%\[
%\scalebox{0.9}{
\begin{gather}
\scalebox{0.9}{$\textrm{Bell}_S : \mathcal B (\ptqbit{2})$} \notag  \\
\scalebox{0.9}{$\textrm{Bell}_S \defeq  \tletb{z}{\tun{H}{\tpure{\ketz}} \otimes \tpure{\ketz}}{\tun{CNOT}{z}}$} \notag
\end{gather}
By the rules given in Figure~\ref{fig:red_uni}, we have the following:
\[
\scalebox{0.9}{$(\ps \ast , u_{id} , \tun{H}{\tpure{\ketz}} \otimes \tpure{\ketz}) \rightarrow_1
                (\textbf{Bell} , u_{\sigma_{swap}} , x \otimes y)$}
        \]
Hence, $\scalebox{0.9}{$(\ps \ast , u_{id}, \textrm{Bell}_S) \rightarrow_1 (\textbf{Bell} , u_{\sigma_{swap}} , \tletb{z}{x \otimes y}{\tun{CNOT}{z}})$}$

Thus, applying rules from Figure~\ref{fig:red_uni} and Definition~\ref{def:reduction} we have,
\begin{gather}
\scalebox{0.9}{$(\textbf{Bell} , u_{\sigma_{swap}} , \tletb{z}{x \otimes y}{\tun{CNOT}{z}}) \rightarrow_1
(\textbf{Bell} , u_{\sigma_{swap}} ,\tun{CNOT}{z})$} \notag \\
\scalebox{0.9}{$(\textbf{Bell} , u_{\sigma_{swap}} ,\tun{CNOT}{z}) \leadsto_1
(\textbf{Bell} , u_{\sigma_{swap}} , z)$} \notag
\end{gather}
\end{exa}

\begin{exa}[\textit{Bell Measurement}]
We illustrate the measurement of two qubits with respect to the \textit{Bell basis}. For that purpose, we first
apply a unitary operation on the qubits and finally measure them with respect to the computational basis:
\begin{gather}
\scalebox{0.9}{$\textrm{Bell}_M : \qbit \multimap (\qbit \multimap \bit \otimes \bit)$} \notag\\
\scalebox{0.9}{$\textrm{Bell}_M \defeq \lambda x.\lambda y. \tletb{z}{x \otimes y}{\text{let } q_1 \otimes q_2 = \tun{CNOT}{z} \text{ in }} \tmeas{\tun{H}{q_1}} \otimes \tmeas{q_2}$} \notag
\end{gather}
\end{exa}

\begin{exa}[\textit{Teleportation}]
Teleportation is an algorithm designed to transport a qubit in
an unknown state $|\psi \rangle$ using two classical bits (\cite{nielsen2001quantum}) of communication.
We now showcase the teleportation algorithm in our calculus. In order to do so, we first construct a
term which allows us to apply different unitaries to a qubit based on the values that two bits take:
\[ \textrm{App}_U : \qbit \multimap (\bit \otimes \bit \multimap \qbit)   \]
\begin{align*}
\textrm{App}_U \defeq \lambda q. \lambda p. \tletpair{x}{y}{p}{\tcaseof{x,y}}\{  0,0 &\to\ U_{00}(q),\\
 0,1 &\to\ U_{01}(q),\\ 
 1,0 &\to\ U_{10}(q),\\ 
 1,1 &\to\ U_{11}(q)\}
\end{align*}
where  $ 0 \defeq inl(\ast)$, $ 1 \defeq inr(\ast)$, and $
\textit{case }x_1,x_2  \textit{ of } \{\ldots\}
 $
denotes $2$ nested cases, and  $U_{ij}$ denotes the unitary to be applied corresponding to the case when the first bit has value $i$ and the second bit has value $j$. Finally, terms $\textrm{App}_U$, $\textrm{Bell}_M$, and $\textrm{Bell}_S$ can be used to construct quantum teleportation:
\begin{gather}
\scalebox{0.9}{$\textrm{tele}  : \qbit \multimap \qbit$} \notag \\
\scalebox{0.9}{$\textrm{tele} \defeq \lambda q. \tletpairb{x}{y}{\textrm{Bell}_S}{}  \tletpair{b_1}{b_2}{\textrm{Bell}_M \; x \; q}{\textrm{App}_U \; q  \; (b_1 \otimes b_2) }$} \notag
\end{gather}
\end{exa}

\begin{exa}[\textit{Quantum walk search}]
Given a graph with some marked vertices, a quantum walk search is the quantum
analogue of a random walk~\cite{childs2003exponential}, which looks for
these marked vertices.  Each node in the graph represents the state of two
quantum registers. A conditional on the application of a coin operator on the
first register decides the direction in which a walker should take the next step.
A step in this direction is represented by the application of a unitary on the second
register. After a given number of steps, measurement is performed on
the second register to reveal whether the walker has arrived on a marked
node. Here, we present a $k$-step quantum walk on a cycle with 10 nodes for
simplicity in our language. Representing the second register with qubits,
in this case, would require 4 qubits, whereas we represent the state of the
second register with a single $\pnat$. In the unitary $\pcnot_{\pnat}$ below, 
we use some (hopefully obvious) syntactic sugar with the expressions involving 
$\textbf{n}$ in order to avoid writing all the cases; the letter $\textbf{n}$ 
there should not be seen as a variable. \\
\begin{minipage}{0.3\textwidth}
	\scalebox{0.7}{ $\stikz{q-walk.tikz}$ }
\end{minipage}
\begin{minipage}{0.6\textwidth}
	\scalebox{0.7}{
		$
			u_1 \defeq
			\left\{ \begin{array}{lcll}
				\mid \ket{\psu{2n}} & \iso & \ket{\psu{2n+1}}, 
				& \textbf{n} \leq 4 \\
				\mid \ket{\psu{2n+1}} & \iso & \ket{\psu{2n}}, 
				& \textbf{n} \leq 4 \\
				\mid \pket{\psu{y+10}} & \iso & \pket{\psu{y+10}}
			\end{array} \right\}
 		\quad
 			u_2 \defeq
			\left\{ \begin{array}{lcll}
				\mid \pket{\psu{0}} & \iso & \pket{\psu{9}} & \\
 				\mid \pket{\psu{2n+2}} & \iso & \pket{\psu{2n+1}}, 
 				& \textbf{n} \leq 3 \\
 				\mid \pket{\psu{2n+1}}  & \iso & \pket{\psu{2n+2}}, 
 				& \textbf{n} \leq 3 \\
 				\mid \pket{\psu{9}} & \iso & \pket{\psu{0}} & \\
				\mid \pket{\psu{y+10}} & \iso & \pket{\psu{y+10}}
				&
			\end{array} \right\}
		$
	}

	\ \\

	\scalebox{0.7}{
		$
		\begin{array}{l}
			\pcnot_{\pnat} \colon \isotype{\pqbit \ptimes \pnat}
			{\pqbit \ptimes \pnat} \\
			\pcnot_{\pnat} \defeq \qif{x}{u_1}{u_2}
		\end{array}
		$
	}
\end{minipage}

We define the unitary $S$ corresponding to one step of the walk as $S \defeq \pcnot_{\pnat} \circ (\phad \otimes \mathrm{Id})$.
Finally, the $k$-step walk can be represented in our language as follows, where $S^k$ represents
$k$ compositions of $S$ with itself:
\begin{gather}
 	\textrm{walk} : (\mathcal B(\pqbit \ptimes \pnat) \multimap
 	\qbit \otimes \mathcal \Nat) \notag \\
	\textrm{walk} \defeq \lambda z.
	\tletpairb{x_1}{x_2}{\tun{S^k}{z}}{x_1 \otimes \tmeas{x_2}} \notag
\end{gather}

We now exhibit how the lambda expression for the quantum walk would reduce in
our semantics when evaluated at the quantum state $\ketz \ptimes \ket{\psu{0}}$
for $k=1$. For the rest of this example, let $M \defeq
\tletpairb{x_1}{x_2}{\tun{W}{z}}{x_1 \otimes \tmeas{x_2}}$ and $\ps v \defeq
\ketz \ptimes \ket{\psu{0}}$. The following reductions can be derived:
\begin{align*}(\ps{\ast}, \isoterm_{id}, \tpure{\ps v}) &\leadsto_1 (\ps{\ast} \ptimes \ps v, \isoterm_{\sigma_{swap} \circ (\isoterm_{id} \otimes \isoterm_{id})}, x)\\
(\ps{\ast}, \isoterm_{id}, (\lambda z.M)\, \tpure{\ps v})& \leadsto_1 (\ps{\ast} \ptimes \ps v, \isoterm_{\sigma_{swap}} \circ (\isoterm_{id} \otimes \isoterm_{id}), M[x/z])
\end{align*}
It also holds that,
\begin{align*}
&(\ps{\ast} \ptimes \ps v ,\isoterm_{\sigma_{swap}} \circ (\isoterm_{id} \otimes \isoterm_{id}), \tletpairb{x_1}{x_2}{\mathcal B(S)(x)}{x_1 \otimes \tmeas{x_2}})\\
&\leadsto_1 (S~(\ps{\ast} \ptimes \ps v) , \isoterm_{\sigma_{swap}} \circ (\isoterm_{id} \otimes \isoterm_{id}), \tletpairb{x_1}{x_2}{x}{x_1 \otimes \tmeas{x_2}})
\end{align*}
	
\[
	\begin{bprooftree}
		\AxiomC{$(\ps{\ast}, \isoterm_{id}, \tpure{\ps v}) \leadsto_1 (\ps{\ast} \ptimes \ps v, u_{\sigma_{swap} \circ (u_{id} \otimes u_{id})}, x)$}
		\UnaryInfC{$(\ps{\ast}, u_{id}, (\lambda z.M)\tpure{\ps v}) \leadsto_1 (\ps{\ast} \ptimes \ps v, u_{\sigma_{swap}} \circ (u_{id} \otimes u_{id}), M[x/z])$}
		\UnaryInfC{\stackanchor{$(\ps{\ast} \ptimes \ps v ,u_{\sigma_{swap}} \circ (u_{id} \otimes u_{id}), 
		\tletpairb{x_1}{x_2}{\mathcal B(S)(x)}{x_1 \otimes \tmeas{x_2}})
		\leadsto_1$} 
		{$(u(\ps{\ast} \ptimes \ps v) , \\
		u_{\sigma_{swap}} \circ (u_{id} \otimes u_{id}),
		\tletpairb{x_1}{x_2}{x}{x_1 \otimes \tmeas{x_2}})$}}
	\end{bprooftree}
\]

The unitary $\isoterm$ in the above derivation is given by 
$\isoterm \defeq (\isoterm_{\sigma_{swap}} \circ (\isoterm_{id} \otimes \isoterm_{id}))^{\ast} \circ (S \otimes id) \circ (\isoterm_{\sigma_{swap}} \circ (\isoterm_{id} \otimes \isoterm_{id}))$.
The quantum term $\isoterm(\ps{\ast} \ptimes \ps v)$ rewrites to $\ps v' \defeq \sfrac{1}{\sqrt 2} \psdot (\ps{\ast} \ptimes \ketz \ptimes \ket{\psu{1}} + \ps{\ast} \ptimes \keto \ptimes \ket{\psu{9}})$.       We define $\textit{nine} \defeq \textit{succ}^9(\tzero)$.
Thus, the following reductions hold
\begin{align*}
	&(\ps v', \isoterm_{\sigma_{swap}} \circ (\isoterm_{id} \otimes \isoterm_{id}),  	\tletpairb{x_1}{x_2}{x}{x_1 \otimes \tmeas{x_2}})\\ &\leadsto_1  (\ps v',\isoterm_{\sigma_{divide}} \circ (\isoterm_{\sigma_{swap}} \circ (\isoterm_{id} \otimes \isoterm_{id})),
	x_1 \otimes \tmeas{x_2})\\
	&\begin{cases}
		\leadsto_{\sfrac{1}{2}} (\ps{\ast} \ptimes \ketz,\isoterm_{\sigma_{swap}} , x_1 \otimes \tsucc{\tzero}) \\
		\leadsto_{\sfrac{1}{2}} (\ps{\ast} \ptimes \keto,\isoterm_{\sigma_{swap}} , x_1 \otimes \textit{nine})
	\end{cases}
\end{align*}

\end{exa}

%%%%%%%%%%%%%%%%%%%%%%%%%%%%%%%%%%%%%%%%%%%%%%
\section{Heisenberg vs \Schrod{} Pictures of Quantum Mechanics}
\label{app:cptp-ncpu}

Before going on to lay down the denotational model, in this section
we describe the physical theory closely tied into it.

The set $\mathcal B(H) \eqdef \{ x \colon H \to H\ |\ x \text{ is a bounded
operator} \}$ has the structure of a vector space when equipped with the obvious
operations of addition and scalar multiplication of operators. Operators in
$\mathcal B(H)$ are closed under composition, which we also call
multiplication, and moreover, this operation gives the vector space $\mathcal
B(H)$ the additional structure of an \emph{algebra}.  The operation of taking
adjoints $(-)^*$ is an \emph{involution} on $\mathcal B(H)$ and this gives the
algebra $\mathcal B(H)$ the structure of a \emph{$\ast$-algebra}. The space
$\mathcal B(H)$ is, in fact, a von Neumann algebra (we define this later) and
also a Banach space with respect to the well-known operator norm.

Another important space (which is not a von Neumann algebra in general) is the
space $\mathcal T(H)$ of \emph{trace class} operators on the Hilbert space $H$.
Intuitively, this is a space of operators that admit an appropriate notion of
trace (e.g., if $H$ is finite-dimensional, then all linear operators on it are
trace class). Then a quantum operation (also called quantum channel) from $H_1$
to $H_2$ in the \emph{\Schrod{} picture} is a \emph{completely-positive
trace-preserving} (CPTP) map $\varphi \colon \mathcal T(H_1) \to \mathcal
T(H_2)$.  Whereas the corresponding quantum operation (channel) in the
\emph{Heisenberg} picture is a \emph{normal completely-positive unital} (NCPU)
map $\varphi^* \colon \mathcal B(H_2) \to \mathcal B(H_1)$ that goes in the
other direction and which is determined by the Banach space adjoint of
$\varphi$.  The Heisenberg-\Schrod{} duality
can be given a mathematical formulation through the fact that there exists a
bijective correspondence between CPTP maps $\mathcal T(H_1) \to \mathcal
T(H_2)$ and NCPU maps $\mathcal B(H_2) \to \mathcal B(H_1).$

The view of quantum operations presented above does not make it clear how we
can explicitly represent \emph{classical} information that we need for our
development.  One of the strengths of von Neumann algebras is the fact that
they allow us to easily and elegantly model not only quantum information, but
also classical information. A von Neumann algebra $M$ is a certain kind of
subalgebra $M \subseteq \mathcal B(H)$ and classical information may be
represented using von Neumann algebras $M$ that are \emph{commutative}, i.e.,
algebras for which the multiplication operation commutes.

%%%%%%%%%%%%%%%%%%%%%%%%%%%%%%%%%%%%%%%%%%%%%%
\section{Denotational Semantics: Soundness and Adequacy}\label{sub:denotational}
%%%%%%%%%%%%%%%%%%%%%%%%%%%%%%%%%%%%%%%%%%%%%%

In this section we describe the denotational semantics of the main calculus. In order to do so,
we first introduce some mathematical preliminaries and the relevant categorical structure. The denotational
model is based on von Neumann algebras, mathematical structures well-suited to describe not only mixed
quantum computation, but also classical information in the Heisenberg picture
of quantum mechanics.

\paragraph{Von Neumann algebras.}
\label{sub:von-neumann-algebras}
Intuitively, a von Neumann algebra may be understood as a certain kind of
(sub)algebra of the space $\mathcal B(H)% \eqdef \{ x \colon H \to H\ |\ x \text{ is a bounded operator} \}
$ of bounded operators on the Hilbert space $H$.  More specifically, a non-empty subset $M \subseteq \mathcal B(H)$ is
called a $*$\emph{-subalgebra} of $\mathcal B(H)$ if $M$ is closed under the operations
of addition, scalar multiplication, composition, and involution of $\mathcal B(H).$
Given a subset $S \subseteq \mathcal B(H)$, its \emph{commutant}, written $S'$, is the
set $S' \defeq \{ x \in \mathcal B(H)\ |\ \forall y \in S,\ x y = y x \},$
i.e., the operators $x$ that commute (with respect to composition) with all
operators in $S.$ A simple algebraic argument shows that $S \subseteq S''
\defeq (S')'$ for any $S \subseteq \mathcal B(H).$ We now give the main definition of this subsection.

\begin{defi}[\cite{takesaki}]
  \label{def:von-neumann-algebra}
  A \emph{von Neumann algebra} on a Hilbert space $H$ is a $*$-subalgebra $M$
  of $\mathcal B(H)$ such that $M'' = M.$
\end{defi}

\begin{exa}
  One of the most important examples of a von Neumann algebra is $\mathcal
  B(H)$ itself.  Observe that in the special case when $H$ has dimension $n \in
  \mathbb N$, then $\mathcal B(H)$ may be identified with the algebra $\mathrm
  M_n(\mathbb C)$ of complex $n \times n$ matrices.
\end{exa}

A von Neumann algebra is \emph{commutative} if for every $x, y \in
M$, we have that $xy = yx.$ Such von Neumann algebras can be used to represent
classical information.  The Banach space $\ell^\infty(X)
\defeq \left\{ f \colon X \to \mathbb C \ |\ \sup_{x \in X} |f(x)| < \infty
\right\}$, for any set $X$, equipped with the supremum norm, can be identified with a
commutative von Neumann algebra on the Hilbert space $\ell^2(X)$ via the
isometric embedding given by $m_X \colon \ell^\infty(X) \to \mathcal B(\ell^2(X))$ and $m_X(f)= m_f,$ where $m_f \colon \ell^2(X) \to \ell^2(X)$ is the bounded linear
operator defined by $m_f(\sum_{x \in X} \alpha_x \ket x)$ $\eqdef \sum_{x \in
X} f(x)\alpha_x \ket x.$

An important special
case for our development includes the von Neumann algebra
$\ell^\infty(\mathbb B)$, where $\mathbb B = \{true,false\}$, which can be
identified with $\mathbb C \oplus \mathbb C$ and which corresponds to the
interpretation for the type of (classical) bits in our language.
Another important example is given by the space $\ell^\infty(\mathbb N)$ which
we use to interpret the type $\Nat$ of classical natural numbers in our
language.

It follows easily from Definition \ref{def:von-neumann-algebra} that every von
Neumann algebra $M \subseteq B(H)$ contains the identity operator $1_H$ of $H$,
which we often write as $1_M$ as well. Obviously, $1_M \in M$
is the neutral
element of $M$ with respect to multiplication, called the \emph{unit}
of $M$. Similarly, $M$ always contains the zero operator $0_H$ which we
also write as $0_M.$ Another important kind of operators are the
\emph{self-adjoint} (or Hermitian) operators, which are
operators $x \in M$, such that $x = x^*.$ An operator $x \in M$
is  \emph{positive} if $\langle h, xh \rangle \geq 0$ for every $h \in H.$ The poset
(partially ordered set) of Hermitian elements of $M$ is given by the set $M_h
\defeq \{ x \in M\ |\ x = x^* \}$ together with the \emph{L\"owner order}: $x
\leq y$ iff $y - x$ is a positive operator. It follows that $x \in M$ is
positive iff $0_M \leq x.$ Another important poset is the \emph{unit interval}
of $M$: the subposet of $M_h$ given by $[0,1]_M \defeq \{ x \in M_h \
|\ 0_M \leq x \leq 1_M \} $.  In fact, $[0,1]_M$ is a dcpo
(\emph{directed-complete partial order}) with least element $0_M$.

\paragraph{Categorical Structure.}

Next, we recall two important types of morphisms between von Neumann algebras.
We use them to organise the relevant data into categories that allow us to
define our semantic interpretation.

\paragraph{NMIU maps.}
  Let $M$ and $N$ be two von Neumann algebras (not necessarily over the same
  Hilbert space). A linear function $\varphi \colon M \to N$ is called
  \emph{multiplicative}/\emph{involutive}/\emph{unital} if it preserves
  multiplication/involution/the unit of $M$, respectively. A linear map
  $\varphi \colon M \to N$ is called \emph{subunital} if $\varphi(1_M) \leq
  1_N$.  When a map $\varphi \colon M \to N$ is both positive and subunital,
  then it restricts to a map $[0,1]_M \to [0,1]_N$ and we say that $\varphi$
  is \emph{normal} whenever this restriction is a Scott-continuous map between
  the two dcpo's. A map $\varphi \colon M \to N$  is called an \emph{NMIU}-map, if it is normal, multiplicative, involutive, and unital.

\paragraph{NCPSU maps.}
If $N$ is a von Neumann algebra (over $H$), then we can equip the set
$\matalg n (N)$ of $n \times n$ matrices with entries in $N$ with the
structure of a von Neumann algebra (over $\oplus_{i=1}^n H)$.\footnote{For simplicity, here we identify $\matalg n (B(H))$ and $B(\oplus_{i=1}^n H)$.} If $\varphi
\colon N \to N'$ is a linear map between von Neumann algebras, then we can
define a linear map $\varphi^{(n)} \colon \matalg n (N) \to \matalg n (N')$,
for every $n \in \mathbb N$ by using $\varphi$ component-wise. The map $\varphi \colon N \to N'$ is \emph{completely positive} if
$\varphi^{(n)} \colon \matalg n (N) \to \matalg n (N')$ is positive for every
$n \in \mathbb N$. A map $\varphi \colon M \to N$ is called an \emph{NCP(S)U}-map, if it is normal, completely positive, and (sub)unital.

The two main notions of morphisms that we use in our semantics are given by the
NMIU and NCPSU maps. Every NMIU-map is also NCP(S)U, but not vice-versa (e.g.,
state preparation), hence NMIU-maps give a stronger notion of morphism. We
write $\NMIU \subseteq \NCPSU$ for the \emph{opposite} (sub)categories with
objects von Neumann algebras and with NMIU/ NCPSU maps as morphisms between
them. There are several reasons for working with the opposite categories:
\begin{itemize}
\item They allow us to reason appropriately about certain physical processes
in the Heisenberg picture (as it is customary with von Neumann
algebras), rather than in the Schrödinger one;
\item The aforementioned opposite categories
enjoy several categorical properties that we use in our semantics that are not
known to hold for their non-opposite counterparts (e.g., monoidal closure,
coproducts).
\end{itemize}

\paragraph{Symmetric Monoidal Structure.}
The categories $\NMIU$ and $\NCPSU$ have a symmetric monoidal structure when
equipped with the \emph{spatial tensor product} \cite[Prop
7.2]{kornell2012quantum}: the tensor unit can be identified with $\mathbb C$, and the tensor product of von Neumann
algebras $M \subseteq B(H)$ and $N \subseteq B(K),$ written $M \otimes N$,
is the smallest von Neumann subalgebra of $\mathcal B(H \otimes K)$ that
contains the algebraic tensor product $M \odot N$ of the underlying vector spaces.
The subcategory inclusion $\mathcal I \colon \NMIU \hookrightarrow \NCPSU$
is a \emph{strict} symmetric monoidal functor.
Moreover, the category $\NMIU$ is a \emph{closed} symmetric monoidal category
and we write $N^{\ast M}$ for its internal hom \cite{kornell2012quantum}.

\paragraph{Coproducts.}
Both categories $\NMIU$ and $\NCPSU$ have coproducts that are described in the
same way: the initial object is the von Neumann algebra on the zero-dimensional Hilbert space $0$ and the coproduct is given by direct sum with
respect to the supremum norm \cite[Definition 3.4]{takesaki}. More
specifically, if $M \subseteq \mathcal B(H)$ and $N \subseteq \mathcal B(K)$ are von Neumann
algebras, then $M \oplus N$ is a Banach space (equipped with the supremum norm)
which can be isometrically embedded into $\mathcal B(H \oplus K)$.
The image of $M \oplus N$ under this embedding is a von Neumann subalgebra of $\mathcal B(H \oplus K)$, but we abuse notation and we do not carefully distinguish between the two as the differences are not essential.

Before delving into the details of the interpretation, we present the functor which facilitates the main contribution of the paper, i.e combining
quantum and classical control.

%%%%%%%%%%%%%%%%%%%%%%%%%%%%%%%%%%%%%%%%%%
\paragraph{The $\mathcal B$ functor.}\label{sub:b-functor}
%%%%%%%%%%%%%%%%%%%%%%%%%%%%%%%%%%%%%%%%%%
In order to model the interaction between the pure quantum subsystem
and the main calculus, we define $\mathcal B : \Isometry \to \NCPSU$
\begin{wrapfigure}{l}{0.35\textwidth}
\scalebox{0.6}{
\stikz{adjunctions.tikz}}
\caption{\scalebox{0.8}{Categorical relations.}}
\label{fig:adjunctions-c}
\end{wrapfigure}
and show it is a functor (Lemma ~\ref{lem:functor})
mapping a Hilbert space $H$ to $\mathcal B(H).$ For an isometry
$f \colon H_1 \to H_2$, the map $(f^* \circ (-) \circ f)$ is an NCPSU
morphism $B(H_2) \to B(H_1),$ so by defining $\mathcal B(f) \defeq (f^* \circ
(-) \circ f)^{op}$, we get that $\mathcal B(f) \colon \mathcal B(H_1) \to \mathcal B(H_2)$ in
$\NCPSU$. \footnote{Recall that $\NCPSU$ is the \emph{opposite} category of the category of von Neumann algebras with NCPSU maps.}.
We remark that this functor restricts to a functor $\mathcal B \colon \Unitary \to \NMIU$
and we show that the functor $\mathcal B : \Isometry \rightarrow \NCPSU$ is strict monoidal.

\begin{lem}
\label{lem:functor}
The map $\mathcal B : \Isometry \rightarrow \NCPSU$ as defined above is a functor. 
\end{lem}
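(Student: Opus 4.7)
The plan is to verify the two functor axioms, but the bulk of the work is checking that $\mathcal{B}(f)$ is a well-defined morphism of $\NCPSU$. Unpacking definitions, for an isometry $f \colon H_1 \to H_2$ the underlying map (viewed in the non-opposite direction) is $\varphi_f \colon \mathcal{B}(H_2) \to \mathcal{B}(H_1)$, $\varphi_f(x) = f^* x f$. First I would observe that $\varphi_f$ is manifestly linear in $x$, so the only thing to verify to get a morphism in the category of von Neumann algebras with NCPSU maps is that it is normal, completely positive, and subunital.

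For subunitality I would use the hypothesis that $f$ is an isometry, i.e.\ $f^* f = \id_{H_1}$: this gives $\varphi_f(1_{\mathcal{B}(H_2)}) = f^* f = 1_{\mathcal{B}(H_1)}$, so $\varphi_f$ is in fact unital, hence subunital. For complete positivity, for each $n$ the map $\varphi_f^{(n)}$ acts entrywise and is easily seen to coincide with conjugation by the block-diagonal isometry $F_n = \mathrm{diag}(f,\dots,f) \colon \bigoplus_{i=1}^n H_1 \to \bigoplus_{i=1}^n H_2$; since conjugation by any bounded operator preserves positivity (as $\langle h, F_n^* X F_n h\rangle = \langle F_n h, X F_n h\rangle \geq 0$ whenever $X \geq 0$), each $\varphi_f^{(n)}$ is positive. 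For normality I would use the standard fact that conjugation by a fixed bounded operator is ultraweakly continuous, hence preserves suprema of bounded increasing nets of positive operators, which is enough to give Scott-continuity of the restriction $[0,1]_{\mathcal{B}(H_2)} \to [0,1]_{\mathcal{B}(H_1)}$.

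Having established that $\varphi_f$ is NCPSU (and so $\mathcal{B}(f) = \varphi_f^{\op} \colon \mathcal{B}(H_1) \to \mathcal{B}(H_2)$ is a morphism in $\NCPSU$), functoriality is a direct calculation. For identities, $\varphi_{\id_H}(x) = \id^* x\, \id = x$, so $\mathcal{B}(\id_H) = \id_{\mathcal{B}(H)}^{\op} = \id_{\mathcal{B}(H)}$ in $\NCPSU$. For composition of isometries $f \colon H_1 \to H_2$ and $g \colon H_2 \to H_3$, I would compute
\[
  \varphi_{g \circ f}(x) = (g f)^* x (g f) = f^*\bigl(g^* x g\bigr) f = \varphi_f\bigl(\varphi_g(x)\bigr),
\]
so $\varphi_{g \circ f} = \varphi_f \circ \varphi_g$ in the non-opposite category. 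Passing to the opposite then reverses the order, yielding $\mathcal{B}(g \circ f) = \mathcal{B}(g) \circ \mathcal{B}(f)$ in $\NCPSU$, as required.

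The only genuine subtlety, and therefore the step to treat with most care, is normality: it relies on a nontrivial operator-theoretic fact (ultraweak continuity of conjugation by a bounded operator). Everything else is either immediate from $f^* f = \id$ or a routine manipulation of adjoints; in particular, the opposite-category direction-flipping must be tracked, but once it is made explicit the composition law falls out of $(gf)^* = f^* g^*$.
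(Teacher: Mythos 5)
Your proposal is correct, and its overall skeleton matches the paper's: check that $\varphi_f = f^*(-)f$ is unital (immediate from $f^*f=\id$), completely positive, and normal, and then observe that the functor laws are routine (the paper literally dismisses them as ``trivially true,'' whereas you spell out the identity and composition checks including the op-flip, which is harmless and slightly more careful). The two sub-arguments where you diverge are complete positivity and normality. For complete positivity the paper invokes the algebraic criterion for CP maps on $C^*$-algebras, verifying that $\sum_{i,j}\psi_i^*\,f^*(\phi_i^*\phi_j)f\,\psi_j = \bigl(\sum_i \phi_i f\psi_i\bigr)^*\bigl(\sum_i \phi_i f\psi_i\bigr) \geq 0$, while you amplify to matrices and note that $\varphi_f^{(n)}$ is conjugation by the block-diagonal isometry $\mathrm{diag}(f,\dots,f)$, which preserves positivity by a quadratic-form computation; both are standard and equally valid. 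For normality the paper exhibits an explicit predual $\tau = f(-)f^*$ on trace-class operators and checks $\mathrm{tr}(\varphi_f(b)t)=\mathrm{tr}(b\,\tau(t))$, whereas you appeal to ultraweak continuity of conjugation (equivalently, preservation of suprema of bounded increasing nets). These are two faces of the same characterisation of normality — existence of a predual versus ultraweak/Scott continuity — so the content is the same; the paper's version has the advantage of being self-contained (the predual is written down and verified in one line), while yours is marginally quicker if one is willing to cite the continuity fact, and could be made equally self-contained by checking $\langle h, f^*x_\alpha f h\rangle \uparrow \langle h, f^*x f h\rangle$ directly on quadratic forms.
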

\begin{proof}
We first need to show that if $f$ is an isometry then the function $\phi \mapsto f^* \circ \phi \circ f$ is a
ncpu map from $B(H_2)$ to $B(H_1)$. Note that for $\phi = Id_{B(H_2)}$ we have, 
\[
	f^* \circ (Id) \circ f = f^* \circ f = Id_{B(H_1)}
\]
Hence the map is unital. To show that it is completely positive, recall that every von Neumann algebra is a
$C^*$-algebra, and hence it suffices to check an equivalent condition for complete positivity of maps defined
on $C^*$-algebra. Hence we need to show that for any $\{\phi_i \}_{i \in \mathbb{N}} \in B(H_2)$
and $\{ \psi_i \}_{i \in \mathbb{N}} \in B(H_1)$,
$\Sigma_{i,j}\psi_i^*(f^* \circ (\phi_i^* \phi_j) \circ f)\psi_j$ is positive. However we have,
	\[ \Sigma_{i,j}\psi_i^*(f^* \circ (\phi_i^* \phi_j) \circ f)\psi_j = (\Sigma_i \phi_i \circ f \circ \psi_i)^*(\Sigma_i \phi_i \circ f \circ \psi_i)
\]
Hence,the above summation is positive.

  To show that $\psi \eqdef f^* \circ (-) \circ f \colon B(H_2) \to B(H_1)$ is normal, it suffices to show that $\psi$ has a predual $\tau \colon T(H_1) \to T(H_2).$ But the
  predual is obviously $\tau = f \circ (-) \circ f^*.$ Indeed, one can easily see that $\tau$ is positive and therefore bounded with respect to the trace norm (it is also trace preserving).
  To check that $\tau$ is indeed the predual, we simply have to verify that for any $t \in T(H_1)$ and $b \in B(H_2)$ we have that
  \[ \trace{\psi(b)t} = \trace{b\tau(t)} . \]
  Indeed we have
  \[ \trace{\psi(b)t} = \trace{f^*bft} = \trace{bftf^*} = \trace{b\tau(t)} \]
  as required, so that $\psi$ is indeed normal.

Every other property of a functor is trivially true.
\end{proof}

\begin{defi}[\cite{heunen2015categories}]
\label{def:lax_monoidal}
Let $(C, \otimes_C, 1_C)$ and $(D, \otimes_D, 1_D)$ be monoidal categories. A \textbf{lax monoidal functor}
$\mathcal F : (C, \otimes_C, 1_C) \rightarrow (D, \otimes_D, 1_D)$ is
\begin{itemize}
        \item a functor $\mathcal F : C \rightarrow D$,
        \item a morphism $e : 1_D \rightarrow \mathcal F (1_C)$,
        \item a natural transformation $\mu_{x,y} : \mathcal F(x) \otimes_D \mathcal F(y) \rightarrow \mathcal F(x \otimes_C y)$,
\end{itemize}
such that:
\begin{itemize}
         \item (\textbf{associativity}) for all $x,y,z \in \textbf{Ob}(C)$, the diagram
                \[\scalebox{0.75}{
                        \stikz{associativity.tikz}
                        }
                     \]
              commutes,
       \item (\textbf{unitality})for all $x \in \textbf{Ob}(C)$, the diagrams
              \[\scalebox{0.75}{
                      \stikz{unit_D.tikz}
                      }
                \quad
                \scalebox{0.75}{
                      \stikz{unit_C.tikz}
                      }
                \]
             commute.
\end{itemize}
\end{defi}

\begin{defi}
\label{def:strong-m}
A \textbf{strict monoidal functor} is a lax monoidal functor(Definition~\ref{def:lax_monoidal}) $(\mathcal F, e, \mu_{x,y})$ such that $e$ and $\mu_{x,y}$
are identity maps. 
\end{defi}

\begin{restatable}{proposition}{strictmonoidal}
\label{lem:strict-monoidal}
The functor $\mathcal B : \Isometry \rightarrow \NCPSU$ is strict monoidal.
\end{restatable}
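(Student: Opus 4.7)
The plan is to verify the three strictness conditions that define a strict monoidal functor (Definition~\ref{def:strong-m}), showing that both the unit comparison map $e : \mathbb{C} \to \mathcal{B}(\mathbb{C})$ and the tensorator $\mu_{H,K} : \mathcal{B}(H) \otimes \mathcal{B}(K) \to \mathcal{B}(H \otimes K)$ are identity morphisms in $\NCPSU$, and that the functor respects tensor of morphisms on the nose.

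First I would handle the unit. Since $\Isometry$ has tensor unit $\mathbb{C}$ and $\NCPSU$ also has tensor unit $\mathbb{C}$, we need $\mathcal{B}(\mathbb{C}) = \mathbb{C}$. The space of bounded operators on the one-dimensional Hilbert space $\mathbb{C}$ is canonically identified with $\mathbb{C}$ itself, with $z \in \mathbb{C}$ acting as multiplication by $z$. Under this standard identification (the same identification used to define the tensor unit of $\NCPSU$), the equality is literal.

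Next, for the tensor on objects, I would invoke the classical fact that for type~I factors, the spatial tensor product of $\mathcal{B}(H)$ and $\mathcal{B}(K)$ equals $\mathcal{B}(H \otimes K)$ (see, e.g., \cite{takesaki}, Chapter IV): the algebraic tensor product $\mathcal{B}(H) \odot \mathcal{B}(K)$ is weakly dense in $\mathcal{B}(H \otimes K)$, and by the definition of the spatial tensor as the smallest von Neumann subalgebra of $\mathcal{B}(H \otimes K)$ containing $\mathcal{B}(H) \odot \mathcal{B}(K)$, we obtain strict equality $\mathcal{B}(H) \otimes \mathcal{B}(K) = \mathcal{B}(H \otimes K)$. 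This is exactly the definition of $\otimes$ on von Neumann algebras given just before the statement, so the equality holds definitionally.

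Finally, for the action on morphisms, given isometries $f \colon H_1 \to H_2$ and $g \colon K_1 \to K_2$, I would check that $\mathcal{B}(f \otimes g)$ and $\mathcal{B}(f) \otimes \mathcal{B}(g)$ coincide as NCPSU maps from $\mathcal{B}(H_2 \otimes K_2)$ to $\mathcal{B}(H_1 \otimes K_1)$ (reading morphisms in $\NCPSU = \mathbf{vNA}_{\mathrm{NCPSU}}^{\op}$). On elementary tensors $a \otimes b \in \mathcal{B}(H_2) \odot \mathcal{B}(K_2)$:
\begin{align*}
\mathcal{B}(f \otimes g)(a \otimes b) &= (f \otimes g)^* (a \otimes b) (f \otimes g) \\
 &= (f^* a f) \otimes (g^* b g) \\
 &= \mathcal{B}(f)(a) \otimes \mathcal{B}(g)(b) \\
 &= \bigl(\mathcal{B}(f) \otimes \mathcal{B}(g)\bigr)(a \otimes b).
\end{align*}
Both maps are normal (hence weakly continuous on bounded sets) and linear, and they agree on the weakly dense $*$-subalgebra $\mathcal{B}(H_2) \odot \mathcal{B}(K_2)$, so they agree on all of $\mathcal{B}(H_2 \otimes K_2)$. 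Functoriality of tensor product then also gives the coherence diagrams in Definition~\ref{def:lax_monoidal} trivially, since both $e$ and $\mu$ are identities. The main potential obstacle is the identification of the spatial tensor product with $\mathcal{B}(H \otimes K)$ as a \emph{strict} equality rather than a mere isomorphism, but this is ensured precisely by how the spatial tensor product and $\mathcal{B}(-)$ were set up in the preceding subsection, so no additional coherence data is needed.
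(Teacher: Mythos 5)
Your proposal is correct and follows essentially the same route as the paper's proof: identify the spatial tensor $\mathcal B(H)\otimes\mathcal B(K)$ with $\mathcal B(H\otimes K)$ strictly (via Takesaki), verify $\mathcal B(f\otimes g)=\mathcal B(f)\otimes\mathcal B(g)$ on elementary tensors by the computation $(f\otimes g)^*(a\otimes b)(f\otimes g)=(f^*af)\otimes(g^*bg)$, and extend to the whole algebra using normality and ultraweak density; your explicit treatment of the unit and the coherence diagrams is a minor addition the paper leaves implicit.
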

\begin{proof}
From $eqn(10)$ on page 185 in \cite{takesaki}, it follows that $\mathcal B(H_1) \otimes \mathcal B(H_2) \cong_{\mu_{H_1,H_2}} \mathcal B(H_1 \otimes H_2)$.
Here $\mu_{H_1,H_2}$ is just the identity.
Then for Hilbert spaces $H_1,H_2,K_1,K_2$ and isometries $f,g$ we need to show the following:
\[
	\mathcal B(f) \otimes \mathcal B(g) = \mathcal B(f \otimes g)
\]
First we show this for the generators of the von Neumann algebra $\mathcal B(H_1) \otimes B(H_2)$ which are of the form $\phi_1 \otimes \phi_2$. Note that,
\begin{align*}
	\mathcal B(f) \otimes \mathcal B(g) (\phi_1 \otimes \phi_2) &= (f^* \circ \phi_1 \circ f) \otimes (g^* \circ \phi_2 \circ g) \notag \\
	&= (f^* \otimes g^*) \circ (\phi_1 \otimes \phi_2) \circ (f \otimes g) \\
	&= (f \otimes g)^* \circ (\phi_1 \otimes \phi_2) \circ (f \otimes g) \notag \\
	&= \mathcal B(f \otimes g) (\phi_1 \otimes \phi_2) \notag
\end{align*}
This argument extends to linear combinations of elements of the form $\phi_1 \otimes \phi_2 \in \mathcal B(H_1) \otimes \mathcal B(H_2)$. Recalling the
definition of the spatial tensor product, note that $(\mathcal B(H_1) \odot \mathcal B(H_2))''$ is the closure of $\mathcal B(H_1) \otimes \mathcal B(H_2)$
in the ultraweak topology. Hence we have that the other elements of this von Neumann algebra are limits of such linear combinations in the ultraweak topology.
Let $X$ be a directed set and $\phi = lim_{x \in X}\phi_x$
the limit of a directed net $\{\phi_x\}_{x \in X}$,
where the above equation holds for $\phi_x$ for all $x$. Then we have the following:
\begin{align}
	\mathcal B(f) \otimes \mathcal B(g) (\phi) &= \mathcal B(f) \otimes \mathcal B(g)(lim_{x \in X} \phi_x) \\
	&= lim_{x \in X} \mathcal B(f) \otimes \mathcal B(g)(\phi_x) \\
	&= lim_{x \in X} \mathcal B(f \otimes g)  \\
	&= \mathcal B(f \otimes g) (lim_{x \in X} \phi_x)  \\
	&= \mathcal B(f \otimes g) (\phi)
\end{align}
In the above calculation, $(2)$ holds, because normal maps are continuous with respect to the ultraweak topology, and the tensor product
of normal maps is a normal map. Similarly, $(4)$ holds because composition of normal maps is a normal map.
\end{proof}

%%%%%%%%%%%%%%%%%%%%%%%%%%%%%%%%%%%%%%%%%%%%
\paragraph{Interpretation of Types.}\label{sub:interpretation-types}
%%%%%%%%%%%%%%%%%%%%%%%%%%%%%%%%%%%%%%%%%%%%
We interpret types as von Neumann algebras. 
The types corresponding to a linear call-by-value
lambda-calculus are interpreted in the standard way:
$\lrb{\mathrm{I}} \defeq \mathbb{C}$, $\lrb{\mathrm{A+B}} \defeq \lrb{\mathrm{A}} \oplus \lrb{\mathrm{B}}$,
$\lrb{\mathrm{A \otimes B}} \defeq \lrb{\mathrm{A}} \otimes \lrb{\mathrm{B}}$, $\lrb{!\mathrm{A}} \defeq !\lrb{\mathrm{A}}$
and $\lrb{\mathrm{A \multimap B}} \defeq \lrb{\mathrm{A}} \multimap \lrb{\mathrm{B}}$.
The type for natural numbers is interpreted as the commutative von Neumann algebra $\ell^\infty(\mathbb N)$, i.e
$\lrb{\Nat} \defeq \ell^\infty(\mathbb{N}).$ Finally, the interpretation of the type $\mathcal B(\pQ)$ uses the
$\mathcal B$ functor that allows us to incorporate pure quantum primitives into our semantics, i.e 
$\lrb{\mathcal B (\ptype)} \defeq \mathcal B \lrb{\ptype}$.

%%%%%%%%%%%%%%%%%%%%%%%%%%%%%%%%%%%%%%%%%%%%
\paragraph{Interpretation of Typing judgements.}\label{sub:interpretation-typing-rules}
%%%%%%%%%%%%%%%%%%%%%%%%%%%%%%%%%%%%%%%%%%%%
A typing judgment of the form 
$\Delta \vdash M \colon \mathrm{A}$ is interpreted as a morphism $\lrb{\Delta \vdash M \colon \mathrm{A}} \colon \lrb{\Delta} \to \lrb{\mathrm{A}}$ in $\NCPSU$ and we often abbreviate this by writing $\lrb M.$
The full interpretation of typing judgements are given in Figure~\ref{fig:typing-judgements}. We now describe
some of the notation used therein.

We use the symbol $\diamond$ for the canonical discarding map that may be defined at types $!\mathrm A$ and non-linear contexts $!\Delta.$
For a non-linear context $!\Delta$ we may define a copy (or diagonal) map $d_{!\Delta} \colon \lrb{!\Delta} \to \lrb{!\Delta} \otimes \lrb{!\Delta}$ in the standard way
and also the map
$ \textit{split} \eqdef \left( \lrb{!\Delta, \Sigma_1, \Sigma_2} \xrightarrow{d \otimes \id \otimes \id} \lrb{!\Delta, !\Delta, \Sigma_1, \Sigma_2}
  \xrightarrow{\cong} \lrb{!\Delta, \Sigma_1} \otimes \lrb{!\Delta , \Sigma_2} \right) .$

Given a non-linear context $!\Delta \defeq \{ x_1 \colon !{\mathrm{A}}_1 , \dots , x_n \colon !{\mathrm{A}}_n \}$, the \emph{promotion} map is defined as usual by
$\rho_{!\Delta} \eqdef \lrb{!\Delta} \xrightarrow{\cong} \mathcal L(\mathcal R \lrb{\mathrm{A}_1} \times$
$\cdots \times \mathcal R \lrb{\mathrm{A}_n}) \xrightarrow{\mathcal L \eta}
  !\mathcal L(\mathcal R \lrb{\mathrm{A}_1} \times \cdots \times \mathcal R \lrb{\mathrm{A}_n}) \xrightarrow{\cong} !\lrb{!\Delta}.$
We write $\gamma_{A,B} \colon A \otimes B \to B \otimes A$ for the monoidal symmetry.
The successor map is given by $s \colon \lrb{\Nat} \rightarrow \lrb{\Nat} ::  (x_1,x_2,\ldots) \mapsto (0,x_1,x_2,\ldots).$

\begin{figure}
\begin{align*}
&\lrb{\Gamma, y:\mathrm{B} , x:\mathrm{A} , \Sigma \vdash M : \mathrm{C}} \eqdef\lrb{\Gamma, x:\mathrm{A} , y:\mathrm{B} , \Sigma \vdash M : \mathrm{C}} \circ (id_{\lrb{\Gamma}} 	\otimes \gamma \otimes id_{\lrb{\Sigma}})\\
&\lrb{!\Delta \vdash * : \mathrm{I}} \eqdef\lrb{!\Delta} \xrightarrow{\diamond_{!\Delta}} \lrb{\mathrm{I}}\\
&\lrb{!\Delta,x: \mathrm{A} \vdash x: \mathrm{A}} \eqdef \lrb{!\Delta} \otimes \lrb{\mathrm{A}} \xrightarrow{\diamond_{!\Delta} \otimes id} \lrb{\mathrm{I}} \otimes \lrb{\mathrm{A}} \cong \lrb{\mathrm{A}}\\
&\lrb{\Delta \vdash \tinl{N} : \mathrm{A + B}} \eqdef i_1 \circ \lrb{\Delta \vdash N : \mathrm{A}}\\
&\lrb{\Delta \vdash \tinr{M} : \mathrm{A + B}} \eqdef i_2 \circ \lrb{\Delta \vdash M : \mathrm{B}}\\
&\lrb{!\Delta, \Sigma_1 , \Sigma_2 \vdash \tcase{L}{M}{N} : \mathrm{C}} \\
&\quad \quad \eqdef[\lrb{!\Delta, \Sigma_2, x : \mathrm{A} \vdash M : \mathrm{C}} , \lrb{!\Delta, \Sigma_2 , y : \mathrm{B} \vdash N : \mathrm{C}}] 	\circ \cong \circ (\lrb{!\Delta, \Sigma_1 \vdash L : \mathrm{A + B}}\otimes id) \circ \textit{split}\\
&\lrb{!\Delta, \Sigma_1 , \Sigma_2 \vdash M \otimes N : \mathrm{A \otimes B}} \eqdef(\lrb{!\Delta, \Sigma_1 \vdash M : \mathrm{A}} \otimes  \lrb{!\Delta, \Sigma_2 \vdash N : \mathrm{B}}) \circ \textit{split}\\
&\lrb{!\Delta, \Sigma_1 , \Sigma_2 \vdash \tletpair{x}{y}{M}{N} : \mathrm{C}}  \\
&\quad \quad \eqdef \lrb{!\Delta, x: \mathrm{A} , \textit{y} : \mathrm{B} , \Sigma_2 \vdash N : \mathrm{C}} 	\circ (\lrb{!\Delta, \Sigma_1 \vdash M : \mathrm{A \otimes B}} 		\otimes id) \circ \textit{split}\\
&\lrb{!\Delta \vdash \tlift{M} :\ !\mathrm{A}} \eqdef\ !\lrb{!\Delta \vdash M : \mathrm{A}} \circ \rho_{!\Delta}\\
&\lrb{\Delta \vdash \tforce{M} : \mathrm{A}} \eqdef\mu \circ \lrb{\Delta \vdash M : !\mathrm{A}}\\
&\lrb{\Delta \vdash \lambda x.M : \mathrm{A \multimap B}} \eqdef\Phi_2 \circ \Phi_1(\lrb{\Delta, x : \mathrm{A} \vdash M : \mathrm{B}})\\
&\lrb{!\Delta, \Sigma_1, \Sigma_2 \vdash M\,N : \mathrm{B}} \eqdef \epsilon_{\mathrm{B,A}}\circ (\lrb{!\Delta, \Sigma_1 \vdash M : \mathrm{A \multimap B}} 
	\otimes \lrb{!\Delta, \Sigma_2 \vdash N:\mathrm{A}}) \circ \textit{split}\\
&\lrb{!\Delta \vdash \tzero : \Nat} \eqdef\lrb{!\Delta} \xrightarrow{\diamond_{!\Delta}} \lrb{I} \xrightarrow{i_0} \lrb{\Nat}\\
&\lrb{\Delta \vdash \tsucc{M}} \eqdef s \circ \lrb{\Delta \vdash M : \Nat}\\
&\lrb{!\Delta, \Sigma_1 , \Sigma_2 \vdash \tmatch{L}{M}{N} : \mathrm{A}} \\
&\quad \quad \eqdef[\lrb{!\Delta, \Sigma_2 \vdash M : \mathrm{A}}, \lrb{!\Delta, \Sigma_2 , x :\Nat \vdash N : \mathrm{A}}] 
	\circ \cong \circ (id \otimes \lrb{!\Delta, \Sigma_1 \vdash L : \Nat}) \circ \textit{split} \\
&\lrb{!\Delta \vdash \tpure{\ps t}: \mathcal B(\ptype)} \eqdef\mathcal B\lrb{\ps t} \circ \diamond_{!\Delta} \\
&\lrb{\Delta \vdash \tmeas{M} : \ov{\ptype}} \eqdef	m^{\ptype} \circ \lrb{\Delta \vdash M : \mathcal B( \ptype)}\\
&\lrb{\Delta \vdash \tun{\isoterm}{M} : \mathcal B(\ptype_2)} \eqdef\mathcal B\lrb{\isoterm} \circ \lrb{\Delta \vdash M : \mathcal B(\ptype_1)}\\
&\lrb{!\Delta , \Sigma_1 , \Sigma_2 \vdash \tletb{z}{M}{N} : \mathrm{A}} \\
&\quad \quad \eqdef\lrb{!\Delta , \Sigma_2 , z: \mathcal B(\ptype_1 \ps \otimes \ptype_2) \vdash N : \mathrm{A}} 
	\circ 
	(\lrb{!\Delta , \Sigma_1 \vdash M : \mathcal B (\ptype_1) \otimes \mathcal B (\ptype_2)}\otimes id) \circ \textit{split}\\
&\lrb{!\Delta , \Sigma_1 , \Sigma_2 \vdash \tletpairb{x}{y}{M}{N} : \mathrm{A}} \\
& \quad \quad \eqdef\lrb{!\Delta , \Sigma_2 , x : \mathcal B(\ptype_1) , y : \mathcal B(\ptype_2) \vdash N : \mathrm{A}} 
	\circ
	(\lrb{!\Delta , \Sigma_1 \vdash M : \mathcal B (\ptype_1 \ps \otimes \ptype_2)}\otimes id) \circ \textit{split}
\end{align*}
%\end{tabular}
%}
\caption{Interpretation of typing judgements.}
\label{fig:typing-judgements}
\end{figure}

In order to interpret measurement, we need the following theorem: 
\begin{restatable}{thm}{basis}
\label{lem:basis_set}
For every pure quantum type $\pQ$, there exist a set $X$, an NMIU isomorphism 
$\alpha_{\ov{\pQ}} : \lrb{\ov{\pQ}} \cong \ell^\infty(X)$, and an isometric isomorphism $\beta_{\pQ} : \lrb{\pQ} \cong \ell^2(X)$. Moreover, $ \forall \ps{b} \in \pbterms$, $ \exists x \in X$ s.t.
$\beta_{\pQ}(\lrb{\ps{b}}) = \alpha_{\ov{\pQ}}(\lrb{\ov{\ps{b}}}) = \ket{x}$.
\end{restatable}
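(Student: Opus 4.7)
The plan is to proceed by induction on the structure of the pure quantum type $\pQ$, building the set $X$ and the two isomorphisms $\alpha_{\ov{\pQ}}, \beta_{\pQ}$ in parallel. For the base case $\pQ = \pbasic$, I would take $X = \{\ast\}$, so that the canonical identifications $\lrb{\pbasic} = \mathbb C \cong \ell^2(\{\ast\})$ and $\lrb{I} = \mathbb C \cong \ell^\infty(\{\ast\})$ provide $\beta$ and $\alpha$, and the only closed basis term $\ps *$ has denotation $\iid_{\mathbb C}$ on both sides, matching $\ket \ast$. For $\pQ = \pNat$, take $X = \mathbb N$; the equalities $\lrb{\pNat} = \ell^2(\mathbb N)$ and $\lrb{\Nat} = \ell^\infty(\mathbb N)$ hold by definition, so $\alpha, \beta$ may be taken to be identities. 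The basis-term condition follows by unwinding Figure \ref{fig:very-simple-value-interpretation} and Figure \ref{fig:typing-judgements}; a short induction on $n$ shows $\lrb{\psucc^n \pzero} = \ket n$ in both settings.

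For $\pQ = \pQ_1 \pplus \pQ_2$, I would invoke the IH to obtain $X_i, \alpha_i, \beta_i$ and set $X \defeq X_1 \sqcup X_2$. The canonical $\NMIU$-isomorphism $\ell^\infty(X_1) \oplus \ell^\infty(X_2) \cong \ell^\infty(X_1 \sqcup X_2)$ (since coproducts in $\NMIU$ are algebraic direct sums, as recalled in the paper) and the analogous isometric isomorphism $\ell^2(X_1) \oplus \ell^2(X_2) \cong \ell^2(X_1 \sqcup X_2)$ yield, after composition with $\alpha_1 \oplus \alpha_2$ and $\beta_1 \oplus \beta_2$, the required $\alpha_{\ov{\pQ}}, \beta_{\pQ}$. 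Compatibility with the basis terms $\inl{\ps b}$ and $\inr{\ps b}$ is immediate from $\ov{\inl{\ps b}} = \tinl{\ov{\ps b}}$, $\ov{\inr{\ps b}} = \tinr{\ov{\ps b}}$, and the fact that the coproduct injections $\iota_l, \iota_r, i_1, i_2$ correspond under both isomorphisms to the set-theoretic inclusions $X_i \hookrightarrow X_1 \sqcup X_2$. For $\pQ = \pQ_1 \ptimes \pQ_2$, I would set $X \defeq X_1 \times X_2$ and use the standard isometric isomorphism $\ell^2(X_1) \otimes \ell^2(X_2) \cong \ell^2(X_1 \times X_2)$ on the Hilbert-space side, together with the $\NMIU$-isomorphism $\ell^\infty(X_1) \otimes \ell^\infty(X_2) \cong \ell^\infty(X_1 \times X_2)$ (spatial tensor product on the left) on the algebra side. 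Composing with the tensor of the IH isomorphisms gives $\alpha_{\ov{\pQ}}, \beta_{\pQ}$, and the basis-term condition on $\ps b_1 \ptimes \ps b_2$ reduces to checking that both tensor products act on pure tensors as $\ket{x_1} \otimes \ket{x_2} \mapsto \ket{(x_1, x_2)}$ (and similarly $\delta_{x_1} \otimes \delta_{x_2} \mapsto \delta_{(x_1, x_2)}$ on the algebra side).

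The main obstacle is the tensor-product step on the algebra side: the spatial tensor product $\ell^\infty(X_1) \otimes \ell^\infty(X_2)$ is not merely the algebraic tensor but its bicommutant inside $B(\ell^2(X_1 \times X_2))$, so I must verify that the natural map $\delta_{x_1} \otimes \delta_{x_2} \mapsto \delta_{(x_1, x_2)}$ extends uniquely to a normal $\ast$-isomorphism. This can be handled by an ultraweak-continuity argument in the spirit of the proof of Lemma \ref{lem:strict-monoidal}, exploiting the fact that both algebras are commutative and atomic. Once this standard fact about spatial tensor products of $\ell^\infty$-algebras is in hand, the rest of the argument is routine induction-driven bookkeeping.
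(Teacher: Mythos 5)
Your proposal is correct and follows essentially the same route as the paper's proof: induction on $\pQ$ with $X = \{\ast\}$, $\mathbb N$, $X_1 \sqcup X_2$, and $X_1 \times X_2$ in the respective cases, using the canonical $\ell^2$/$\ell^\infty$ identifications and tracking basis terms through the injections and pure tensors. Your explicit verification that the spatial tensor product $\ell^\infty(X_1) \otimes \ell^\infty(X_2) \cong \ell^\infty(X_1 \times X_2)$ extends normally is a welcome bit of extra care that the paper's sketch leaves implicit, but it does not change the argument.
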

\begin{proof}(Sketch)
      Proof is by induction on the form of $\pQ$.
      \begin{itemize}
	  \item $\pQ = \ps I$:\\
		  \[ \lrb{\ps I} = \mathbb{C} \cong \ell^2(\{ \ast \}) 
		     \quad
		  \lrb{\ov{\ps I}} = \mathbb{C} \cong \ell^\infty(\{ \ast \})
		  \]
		  Moreover , in this case we have $\ps b = \ps \ast$, and $\ov{\ps b} = \ast$. We also know that $\lrb{\ps \ast} = \lrb{\ov{\ps b}} = 1 \in \mathbb{C}$.
		  However, the scalar $1$ can be identified with $\ket \ast$. Hence the second statement of the theorem also holds.
	  \item $\pQ = \pnat$:\\
		  \[ \lrb{\pnat} = \ell^2(\mathbb N)
		     \quad
		  \lrb{\ov{\pnat}} = \lrb{\Nat} = \ell^\infty(\mathbb N)
		  \]
		  In this case the isomorphisms are identity maps, and we know that $\{ \ps b_i \}_{i \in \mathbb{N}} = 
		  \{\ket n \}_{n \in \mathbb{N}}$ because of Definition~\ref{def:orthogonality}.
		   Similarly, $\{ \ov{\ps{b_i}} \}_{i \in \mathbb{N}} = \{\ket n \}_{n \in \mathbb{N}}$. Hence, the second statement of the theorem also holds.
	  \item $\pQ = \pQ_1 \ps \otimes \pQ_2$:\\
		  By induction hypothesis we have that $\alpha_{\ov{\ptype_1}} :\lrb{\ov{\pQ_1}} \rightarrow \ell^\infty(X_1)$
		  and $\beta_{\pQ_1} : \lrb{\pQ_1} \rightarrow \ell^2(X_1)$ for some set $X_1$. Furthermore, for all $i$, we have the relation 
		      $\beta_{\pQ_1}(\lrb{\ps{b_i}}) = \alpha_{\ov{\pQ_1}}(\lrb{\ov{\ps{b_i}}}) = \ket{x_i}$ for $x_i \in X_1$.
		  Similarly, for $\pQ_2$ we have, $\alpha_{\ov{\ptype_2}} :\lrb{\ov{\pQ_2}} \rightarrow \ell^\infty(X_2)$
                  and $\beta_{\pQ_2} : \lrb{\pQ_2} \rightarrow \ell^2(X_2)$ for some set $X_2$. We also have that
		      $\beta_{\pQ_2}(\lrb{\ps{b_i'}}) = \alpha_{\ov{\pQ_2}}(\lrb{\ov{\ps{b_i'}}}) = \ket{x'_i}$ for $x'_i \in X_2$.
		  Consider the following:
		      \[
			      \lrb{\pQ_1 \ps \otimes \pQ_2} = \lrb{\pQ_1} \otimes \lrb{\pQ_2} \cong_{\beta_{\pQ_1} \otimes \beta_{\pQ_2}} \ell^2(X_1) \otimes \ell^2(X_2) 
			      \cong \ell^2(X_1 \times X_2)
		      \]
		      \[
			      \lrb{\ov{\pQ_1 \ps \otimes \pQ_2}} = \lrb{\ov{\pQ_1}} \otimes \lrb{\ov{\pQ_2}} \cong_{\alpha_{\ov{\pQ_1}} \otimes \alpha_{\ov{\pQ_2}}} \ell^\infty(X_1) \otimes \ell^\infty(X_2) 
			      \cong \ell^\infty(X_1 \times X_2)
		      \]
		      The second statement holds because $\beta_{\pQ_1 \ptimes \pQ_2}(\lrb{\ps b_i \ptimes \ps b_j'}) = \beta_{\pQ_1}(\lrb{\ps b_i}) \otimes \beta_{\pQ_2}(\lrb{\ps b_j'}) = 
		      \ket{x_i} \otimes \ket{x_j'}$.
		      Similar statement holds for $\{\ov{b_i} \}_{i \in I}$.
	   \item $\pQ = \pQ_1 \ps \oplus \pQ_2$:\\
		   By induction hypothesis we have that $\lrb{\ov{\pQ_1}} \cong_{\alpha_{\ov{\pQ_1}}} \ell^\infty(X_1)$
                  and $\lrb{\pQ_1} \cong_{\beta_{\pQ_1}} \ell^2(X_1)$ for some set $X_1$.
                  Similarly, for $\pQ_2$ we have, $\lrb{\ov{\pQ_2}} \cong_{\alpha_{\ov{\pQ_2}}} \ell^\infty(X_2)$ 
                  and $\lrb{\pQ_2} \cong_{\beta_{\pQ_2}} \ell^2(X_2)$ for some set $X_2$.
                  Consider the following:
                      \[
			      \lrb{\pQ_1 \ps \oplus \pQ_2} = \lrb{\pQ_1} \oplus \lrb{\pQ_2} \cong_{\beta_{\pQ_1} \oplus \beta_{\pQ_2}} \ell^2(X_1) \oplus \ell^2(X_2) 
			      \cong \ell^2(X_1 \sqcup  X_2)
                      \]
                      \[
			      \lrb{\ov{\pQ_1 \ps \oplus \pQ_2}} = \lrb{\ov{\pQ_1}} \oplus \lrb{\ov{\pQ_2}} \cong_{\alpha_{\ov{\pQ_1}} \oplus \alpha_{\ov{\pQ_2}}} \ell^\infty(X_1) \oplus \ell^\infty(X_2) 
			      \cong \ell^\infty(X_1 \sqcup X_2)
                      \]
		      The second statement holds because $\beta_{\pQ_1 \ps \oplus \pQ_2}(\lrb{\inl{\ps b_i}}) = \tinl{\beta_{\pQ_1}(\lrb{\ps b_i})} = \ket{x_i}^{ext}$. 
		      Here $\ket{x_i}^{ext}$ is just the extension of the functional $\ket{x_i}$ defined on $X_1$ to $X_1 \sqcup X_2$ by defining it to be identically zero on
		      whole of $X_2$. Same is true for the right injection.
		      Similar statement holds for $\{\ov{b_i} \}_{i \in I}$.
      \end{itemize}
	Note that this argument works because we deal with separable Hilbert spaces only, and hence the set $\{\ps b_i \}_{i \in I}$ is always countable.
\end{proof}

\noindent Using the above theorem, measurement is interpreted as the map $m^{\ptype} : \mathcal B(\lrb{\pQ}) \rightarrow \lrb{\ov{\pQ}}$ defined as
$m^{\ptype}\defeq \alpha_{\ov{\pQ}}^{op} \circ m_X^{op} \circ \mathcal B(\beta_{\pQ})$.

%%%%%%%%%%%%%%%%%%%%%%%%%%%%%%%%%%%%%%%%%%%%%%
\paragraph{Interpretation of Configurations.}\label{sub:interpretation-q-config}
%%%%%%%%%%%%%%%%%%%%%%%%%%%%%%%%%%%%%%%%%%%%%%
A well-formed configuration $(\pt, u_\sigma, M) $ with $\cdot \pdash \pt \colon \pQ$, $\udash u_\sigma \colon (\pQ, \pQ_1' \ptimes \cdots \ptimes \pQ_m')$, and
$x_1 \colon \mathcal B(\pQ_1'), \ldots, x_m \colon \mathcal B(\pQ_m') \vdash M \colon \mathrm{A}$ is interpreted as the NCPSU morphism
\noindent
\[
    \lrb{(\pt, u_{\sigma}, M):\mathrm{A}} \eqdef \left( \mathbb{C} \xrightarrow{\mathcal B(\lrb{u_{\sigma} \ps t})} \begin{array}{c} \mathcal B(\ptype_1' \ps \otimes \cdots \ps \otimes \ptype_m') \\
    \shortparallel\\  \mathcal B(\ptype_1') \otimes \cdots \otimes \mathcal B(\ptype_m') \end{array} \xrightarrow{\lrb{M}} %\notag \\
   \mathrm{A}\right)\! . %\notag
\]

We naturally extend the interpretation defined for a \emph{total} configuration above to a configuration
with auxiliary quantum types as follows:
\[
\scalebox{0.8}{\stikz{quantum_config.tikz}}
\]

We can now show that our semantic interpretation is sound with respect to single-step reduction. In order to do so, we first state and prove some auxilliary lemmas.

\begin{lem}[\textit{Substitution}]
\label{lem:subst-d}
Given typing judgements $!\Delta, \Sigma_1 \vdash V : A$ ,$!\Delta, \Sigma_2, x: A \vdash M : B$
and $!\Delta, \Sigma_1 , \Sigma_2 \vdash M[V/x] : B$ (Lemma~\ref{lem:subst-o}), the following
equation holds:
\[\scalebox{0.9}{$      
        \lrb{!\Delta, \Sigma_1, \Sigma_2 \vdash M[V/x]:B} = \lrb{!\Delta, \Sigma_2, x:A \vdash M : B}
        \circ
        (\lrb{!\Delta, \Sigma_1 \vdash V : A} \otimes id) \circ \textit{split}$}
        \]
\end{lem}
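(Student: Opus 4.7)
The plan is to proceed by induction on the derivation of the typing judgement $!\Delta, \Sigma_2, x : A \vdash M : B$. In each case I would unfold the definition of the interpretation on both sides of the desired equation (using the clauses in Figure~\ref{fig:typing-judgements}) and reduce the goal to naturality of $\textit{split}$ together with the comonoid laws (coassociativity, cocommutativity, counitality) that the copy map $d_{!\Delta}$ and the discard map $\diamond_{!\Delta}$ satisfy on the interpretation $\lrb{!\Delta}$ of a non-linear context. A small preliminary lemma will be useful: if $y \notin \mathrm{FV}(M)$ then $\lrb{!\Delta, \Sigma, y : C \vdash M : B} = \lrb{!\Delta, \Sigma \vdash M : B} \circ (\id \otimes \diamond_{\lrb C})$, which follows by an easy sub-induction and handles the ``vacuous'' substitutions coming from variables other than $x$.

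The easy cases are when $M$ is a variable, $\ast$, $\tzero$, or an application of a unary constructor ($\tinl{\cdot}$, $\tinr{\cdot}$, $\tsucc{\cdot}$, $\tforce{\cdot}$, $\tmeas{\cdot}$, $\tun{\isoterm}{\cdot}$). For $M = x$ one has $\Sigma_2 = \emptyset$, $M[V/x] = V$, and both sides collapse via $\diamond_{!\Delta} \otimes \id$ and the unitor; for $M = y$ with $y \neq x$ one applies the preliminary lemma above, noting that $x$ must not occur and so linearity forces $\Sigma_1 = \emptyset$ (after again discarding). Unary constructors are handled by post-composing the IH with the constant piece of semantic data (an injection, the successor $s$, $\mathcal{B}\lrb{\isoterm}$, $m^{\pQ}$, etc.) and using functoriality of $\otimes$.

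The binary constructor cases are where all the real work lies: they include $M_1 \otimes M_2$, $M_1\, M_2$, $\tletpair{y}{z}{M_1}{M_2}$, $\tcase{L}{M_1}{M_2}$, $\tmatch{L}{M_1}{M_2}$, $\tletb{z}{M_1}{M_2}$, and $\tletpairb{y}{z}{M_1}{M_2}$. In each such case one splits on whether $x$ appears in the left or the right subterm (it must appear in exactly one, by linearity of $A$). Suppose for definiteness $x \in \mathrm{FV}(M_2)$, so the linear context for $M_1$ is $!\Delta, \Sigma_1^{M_1}$ and that for $M_2$ is $!\Delta, \Sigma_2^{M_2}, x : A$, with $\Sigma_2 = \Sigma_1^{M_1}, \Sigma_2^{M_2}$. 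After substitution, the linear contexts become $!\Delta, \Sigma_1^{M_1}$ and $!\Delta, \Sigma_1, \Sigma_2^{M_2}$. Applying the IH to $M_2$ yields $\lrb{M_2[V/x]} = \lrb{M_2} \circ (\lrb V \otimes \id) \circ \textit{split}$. Substituting this inside the clause that defines $\lrb{M}$ and repeatedly using coassociativity/cocommutativity of $d_{!\Delta}$ to rebracket the two successive applications of $\textit{split}$ into a single $\textit{split}$ acting on $!\Delta, \Sigma_1, \Sigma_2$ gives exactly the right-hand side. The symmetric subcase ($x \in \mathrm{FV}(M_1)$) is identical after an application of the monoidal symmetry.

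The main obstacle I anticipate is this bookkeeping of the $\textit{split}$ maps: because the non-linear prefix $!\Delta$ is shared across \emph{all} sub-judgements, each use of a binary construct introduces its own $d_{!\Delta}$, and after IH one has two nested copy-operations that must be shown equal (up to associators and the canonical symmetry) to the single copy-operation that $\textit{split}$ produces on the combined context $!\Delta, \Sigma_1, \Sigma_2$. This is purely a diagram chase in the symmetric monoidal category $\NCPSU$ using the fact that $(\lrb{!\Delta}, d_{!\Delta}, \diamond_{!\Delta})$ is a cocommutative comonoid, but spelling it out cleanly -- especially in the three-linear-context shape $(\Sigma_1, \Sigma_1^{M_1}, \Sigma_2^{M_2})$ induced by the application rule -- is the only genuinely tedious part. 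Finally, the rules $\tlift{M}$, $\lambda y.M$, and $\tpure{\ps t}$ require a short remark: for $\tlift{M}$ and $\tpure{\ps t}$ the premise forces $\Sigma_2 = \emptyset$ (and in particular $x$ cannot occur), so the preliminary lemma applies; for $\lambda y.M$ one commutes the substitution under the closure $\Phi_1$ and uses naturality of the monoidal-closed adjunction.
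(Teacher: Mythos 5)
Your proposal takes essentially the same route as the paper: the paper's own proof is just a one-line induction on the structure of $M$, which is exactly the induction you carry out, with the comonoid/\textit{split} bookkeeping for the shared non-linear context $!\Delta$ being the only substantive content. The extra detail you supply (the weakening lemma for vacuous substitutions, the case split on which subterm contains $x$, naturality of the closed structure for $\lambda$-abstraction) is correct and goes beyond what the paper records.
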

\begin{proof}
        Proof is by induction on the form of the term $M$.
\end{proof}

\begin{lem}
The following equations hold for valid typing judgements:
\begin{align*}
	\lrb{\Delta \vdash (\lambda x.M)V : A}& = \lrb{\Delta \vdash M[V/x] : A}\\
	\lrb{\Delta \vdash \tletpair{x}{y}{V \otimes W}{M} : A} &= \lrb{\Delta \vdash M[V/x,W/y] : A}\\
	\lrb{\Delta \vdash \tcase{inl(V)}{M}{N} : A} &= \lrb{\Delta \vdash M[V/x] : A} \\
	\lrb{\Delta \vdash \tcase{inr(W)}{M}{N} : A} &= \lrb{\Delta \vdash M[W/y] : A}\\
	\lrb{\Delta \vdash \tforce{\tlift{M}} : A} &= \lrb{\Delta \vdash M : A} \\
	\lrb{\Delta \vdash \tmatch{\tzero}{M}{N} : A} &= \lrb{\Delta \vdash M : A}\\
	\lrb{\Delta \vdash \tmatch{\tsucc{V}}{M}{N} : A} &= \lrb{\Delta \vdash N[V/x] : A}
\end{align*}
\end{lem}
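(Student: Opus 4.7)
My plan is to prove each of the seven equations separately by unfolding the denotational interpretation given in Figure~\ref{fig:typing-judgements} and applying the substitution lemma (Lemma~\ref{lem:subst-d}) together with standard categorical identities. The substitution lemma will do most of the heavy lifting: in each case, after identifying the beta-reduct, what remains is to show that the composite of maps produced by the introduction rule followed by the elimination rule collapses in the expected way.

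For the first equation, $\lrb{(\lambda x.M)\,V}$ unfolds to $\epsilon_{\mathrm{B},\mathrm{A}} \circ (\Phi_2 \circ \Phi_1(\lrb{\Delta,x{:}\mathrm{A} \vdash M{:}\mathrm{B}}) \otimes \lrb{V}) \circ \mathit{split}$; by the triangle identity of the monoidal closed adjunction (i.e.\ $\epsilon \circ (\mathrm{curry}(f) \otimes \mathrm{id}) = f$), this composite equals $\lrb{\Delta, x{:}\mathrm{A} \vdash M{:}\mathrm{B}} \circ (\mathrm{id} \otimes \lrb{V}) \circ \mathit{split}$, which is exactly $\lrb{\Delta \vdash M[V/x]:\mathrm{B}}$ by Lemma~\ref{lem:subst-d}. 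For the tensor let, the interpretation of $\tletpair{x}{y}{V \otimes W}{M}$ composes the pair interpretation $(\lrb{V} \otimes \lrb{W}) \circ \mathit{split}$ with $\lrb{M}$; using associativity/coherence of the monoidal product to rebracket and then applying Lemma~\ref{lem:subst-d} twice (once for $x$, once for $y$) yields the right-hand side.

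The case equations use the universal property of coproducts: $\lrb{\tcase{\tinl{V}}{M}{N}}$ reduces, after unfolding, to $[\lrb{M},\lrb{N}] \circ (i_1 \otimes \mathrm{id}) \circ (\lrb{V} \otimes \mathrm{id}) \circ \mathit{split}$, and the coproduct identity $[\lrb{M},\lrb{N}] \circ i_1 = \lrb{M}$ reduces this to the interpretation of the substituted term; the $\tinr{}$ case is symmetric. The $\tforce{\tlift{M}}$ equation uses the comonad triangle identity $\mu \circ {!}f \circ \rho = f$ for $!$ (one of the standard coherence laws relating the promotion and counit). Similarly, $\tmatch{\tzero}{M}{N}$ and $\tmatch{\tsucc{V}}{M}{N}$ reduce via the coproduct structure on $\lrb{\Nat} = \ell^\infty(\mathbb{N}) \cong \lrb{\mathrm{I}} \oplus \lrb{\Nat}$, since the successor map $s$ and the zero injection $i_0$ are the two coprojections, so $[\lrb{M},\lrb{N}] \circ i_0 = \lrb{M}$ handles the $\tzero$ branch, and $[\lrb{M},\lrb{N}] \circ s = \lrb{N}$ followed by the substitution lemma handles the $\tsucc{V}$ branch.

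The main obstacle, to the extent there is one, is keeping careful track of the interaction between $\mathit{split}$ (the non-linear context duplication) and the substitution lemma: the linear and non-linear variables have to be threaded correctly through the tensor reassociations, and one must verify that the naturality of $\mathit{split}$ together with the comonoid laws for $!\Delta$ produce exactly the form required by Lemma~\ref{lem:subst-d}. Once the diagrammatic bookkeeping is done, every equation reduces to an instance of either a universal property (coproduct, exponential) or a comonad law, combined with the substitution lemma.
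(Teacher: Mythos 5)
Your proposal is correct and follows the same route as the paper, whose entire proof is the one-line remark that the equations follow from Lemma~\ref{lem:subst-d}; you simply make explicit the structural identities (adjunction triangle law, coproduct copairing through the distributivity isomorphism, the counit/promotion law for $!$, and the $\lrb{\Nat}\cong\lrb{\mathrm{I}}\oplus\lrb{\Nat}$ decomposition) that the paper leaves implicit. The only glossed-over points, the distributivity isomorphism $\cong$ in the case/match clauses and the symmetry bookkeeping around \textit{split}, are exactly the routine details the paper also suppresses, so there is no gap.
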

\begin{proof}
	Follows from Lemma~\ref{lem:subst-d}.
\end{proof}

\begin{lem}
\label{lem:interpretation-measurement}
       For a quantum type $\pQ$ and finitely many basis terms $\{\ps{b_i}\}_{i \in I}$ of $\pQ$, the following equation holds:
	   \[
		   \mathcal B^r (\lrb{\Sigma_i p_i \ps{b_i}}) \circ \mathcal
		   B^r(\beta_{\pQ}) \circ m_X \circ \alpha_{\ov{\pQ}} = \Sigma_i
		   |p_i|^2 \lrb{\ov{\ps{b_i}}}
	   \]
       provided that $\cdot \pdash \Sigma_i p_i \ps{b_i}.$
Here $\mathcal B^r: \Isometry \rightarrow \textbf{W}$ is defined as $\mathcal B^r(H) \defeq \mathcal B(H)$ on objects, and for a morphism
$f \in \Isometry$, $\mathcal B^r(f)(\phi) \defeq f^* \circ \phi \circ f$ and $\NCPSU = \textbf{W}^{op}$.
\end{lem}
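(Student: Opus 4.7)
The plan is to test both sides of the claimed identity as morphisms $\lrb{\ov{\pQ}} \to \mathbb{C}$ in $\mathbf{W}$ by evaluating them at an arbitrary element $f \in \lrb{\ov{\pQ}}$, and then to compute directly inside the concrete $\ell^2/\ell^{\infty}$ presentation supplied by Theorem~\ref{lem:basis_set}. Throughout, the coherence between the pure quantum side (the $\lrb{\ps b_i}$ regarded as vectors) and the classical side (the $\lrb{\ov{\ps b_i}}$ regarded as pure states on $\lrb{\ov{\pQ}}$) is the key bridge.

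First, I would invoke Theorem~\ref{lem:basis_set} to obtain a set $X$, the NMIU isomorphism $\alpha_{\ov{\pQ}}$, the isometric isomorphism $\beta_{\pQ}$, and, for each closed basis term $\ps{b_i}$, a point $x_i \in X$ with $\beta_{\pQ}(\lrb{\ps{b_i}}) = \ket{x_i}$, and with $\alpha_{\ov{\pQ}}$ sending $\lrb{\ov{\ps{b_i}}}$ to the pure state $\mathrm{ev}_{x_i} \colon \ell^{\infty}(X) \to \mathbb{C}$, $g \mapsto g(x_i)$ (the functional written $\ket{x_i}$ in the theorem). The formation rule for $\Sigma_i p_i \ps{b_i}$ forces the family $(\ps{b_i})$ to be pairwise orthogonal, so Lemma~\ref{lem:orthogonal-semantics-ortho-value} gives $\braket{x_i}{x_j} = \lrb{\ps{b_i}}^{\ast} \lrb{\ps{b_j}} = 0$ whenever $i \neq j$, hence the $x_i$ are distinct and $\{\ket{x_i}\}$ is orthonormal in $\ell^2(X)$.

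Next, I would use linearity of the interpretation (Figure~\ref{fig:simple-term-interpretation}) to write $\lrb{\Sigma_i p_i \ps{b_i}} = \sum_i p_i \lrb{\ps{b_i}}$, which $\beta_{\pQ}$ carries to $\sum_i p_i \ket{x_i} \in \ell^2(X)$. Setting $\tilde f \eqdef \alpha_{\ov{\pQ}}(f)$, the operator $m_X(\tilde f) = m_{\tilde f}$ acts diagonally as $\ket{y}\mapsto \tilde f(y)\ket{y}$; the functor $\mathcal B^r(\beta_{\pQ})$ transports it to $\beta_{\pQ}^{\ast}\,m_{\tilde f}\,\beta_{\pQ}$ on $\lrb{\pQ}$; and $\mathcal B^r(\lrb{\Sigma_i p_i \ps{b_i}})$ then evaluates to the scalar $\bigl(\sum_j p_j \lrb{\ps{b_j}}\bigr)^{\!\ast}\,\beta_{\pQ}^{\ast}\,m_{\tilde f}\,\beta_{\pQ}\,\bigl(\sum_i p_i \lrb{\ps{b_i}}\bigr)$. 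Expanding bilinearly and using the orthonormality of the $\ket{x_i}$, this collapses to $\sum_{i,j} \overline{p_j}\,p_i\,\tilde f(x_i)\,\delta_{ij} = \sum_i |p_i|^2 \tilde f(x_i)$.

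Finally, to match the right-hand side at $f$, the identification $\alpha_{\ov{\pQ}}(\lrb{\ov{\ps{b_i}}}) = \ket{x_i}$ from Theorem~\ref{lem:basis_set} gives $\lrb{\ov{\ps{b_i}}}(f) = \mathrm{ev}_{x_i}(\tilde f) = \tilde f(x_i)$, and hence $\bigl(\sum_i |p_i|^2 \lrb{\ov{\ps{b_i}}}\bigr)(f) = \sum_i |p_i|^2 \tilde f(x_i)$, matching the computation above. The main point that needs care, and which I expect to be the genuine obstacle, is bookkeeping between $\NCPSU$ and $\mathbf{W}$: in $\NCPSU$ the interpretation $\lrb{\ov{\ps{b_i}}}$ is literally a morphism $\mathbb{C} \to \lrb{\ov{\pQ}}$, equivalently a state on $\lrb{\ov{\pQ}}$ in $\mathbf{W}$, and one must verify that under $\alpha_{\ov{\pQ}}$ (which is NMIU and so preserves the convex-theoretic structure of states) this state is exactly the point evaluation $\mathrm{ev}_{x_i}$; once this identification is in hand, the rest of the argument is a routine linear computation.
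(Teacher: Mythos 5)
Your proposal is correct and follows essentially the same route as the paper: both go through Theorem~\ref{lem:basis_set}'s concrete $\ell^2(X)/\ell^\infty(X)$ presentation, the definitions of $\mathcal B^r$ and $m_X$, and the identification of each $\lrb{\ov{\ps b_i}}$ with the point evaluation at $x_i$. The only difference is one of explicitness: the paper reduces to the per-$i$ identity ``by the definition of $\mathcal B^r$'' and leaves the bilinear expansion implicit, whereas you spell out the vanishing of cross-terms via syntactic orthogonality and Lemma~\ref{lem:orthogonal-semantics-ortho-value}, which is exactly the hidden content of that step.
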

\begin{proof}
 It follows from the definition of the functor $\mathcal B^r$, that it suffices to show the following:
        \[
                \mathcal B^{r} (\lrb{\ps{b_i}}) \circ \mathcal B^r(\beta_{\pQ}) \circ m_X \circ \alpha_{\ov{\pQ}} = \lrb{\ov{\ps{b_i}}}
        \quad
        \forall i
       \]
        This follows from the definition of $\mathcal B^r , m_X$, Theorem~\ref{lem:basis_set} and the
        observation that $\{\ps{b_i}\}_{i \in I} = \{ \ket{x}\}_{x \in X}$.
\end{proof}

\begin{prop}
\label{lem:strong-small-step}
For a well-formed quantum configuration $\mathcal C : A[\ptype'_1, \ldots,\ptype'_k]$, we have 
$\lrb{\mathcal C : A[\ptype'_1, \ldots, \ptype'_k]} = \Sigma_{\mathcal C'} p \lrb{\mathcal C' : A[\ptype'_1, \ldots, \ptype'_k]}$.
\end{prop}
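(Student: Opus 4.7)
The plan is to prove this by induction on the derivation of the small-step reduction relation $\rightarrow_p$ on well-formed configurations (then lift to $\leadsto_p$ using soundness of the equational theory, Proposition~\ref{prop:qua-soundness}), handling each reduction rule of Figure~\ref{fig:red_uni} case-by-case, and then using compositionality of the denotation to handle the contextual rule. Note that all rules except measurement have probability~$1$ and produce a single successor $\mathcal C'$, so for them the right-hand side sum collapses to $\lrb{\mathcal C'}$.

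First I would dispatch the purely classical reductions ($\beta$, let-pair, case, force/lift, match-zero, match-succ). For each, the first component of the configuration and the permutation $u_\sigma$ do not change, so the equation reduces to an equality of term denotations, which follows from the substitution lemma for the denotational semantics (Lemma~\ref{lem:subst-d}) together with the defining equations for the term interpretations in Figure~\ref{fig:typing-judgements}. Next I would handle the quantum-state introduction $\tpure{\pt'}$: the reduct's first component is $\pt \otimes \pt'$ prepended and permuted by $u_{\sigma_{swap}} \circ (u_\sigma \otimes u_{id})$, and the new term is the single variable $x$; unfolding the definition of $\lrb{\mathcal C'}$ and using $\lrb x = \id$, functoriality and strict monoidality of $\mathcal B$ (Proposition~\ref{lem:strict-monoidal}), together with the fact that $\lrb{\tpure{\pt'}}$ is defined as $\mathcal B\lrb{\pt'}\circ \diamond$, yields the required equality after identifying $\mathbb C \otimes \mathbb C \cong \mathbb C$.

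The cases for $\tletb$ and $\tletpairb$ follow in the same vein, since the unitaries $u_{\sigma_{gather}}$ and $u_{\sigma_{divide}}$ denote the identity of the strict monoidal isomorphism $\mathcal B(\pQ_1 \ps\otimes \pQ_2) \cong \mathcal B(\pQ_1) \otimes \mathcal B(\pQ_2)$, i.e.\ $\mathcal B\lrb{u_{\sigma_{gather}}} = \id$ under strict monoidality; the ``gather/divide'' rules therefore only rebracket the codomain of $\mathcal B\lrb{u_\sigma \pt}$, which is a trivial equality. For the unitary application $\tun{\isoterm}{z}$, the reduct's first component is $(u_\sigma^* \circ (\isoterm \otimes \id) \circ u_\sigma)\,\pt$; the key step is to notice that, by functoriality of $\mathcal B$ and the unitarity of $u_\sigma$, $\mathcal B\lrb{u_\sigma^* \circ (\isoterm \otimes \id) \circ u_\sigma \cdot \pt}$ equals $\mathcal B\lrb{\isoterm \otimes \id} \circ \mathcal B\lrb{u_\sigma \pt}$, and strict monoidality of $\mathcal B$ then matches exactly $\lrb{\tun{\isoterm}{z}} \circ \mathcal B\lrb{u_\sigma\pt}$.

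The main obstacle — and the only genuinely probabilistic case — is the measurement rule, where the sum on the right-hand side appears nontrivially. Here I would invoke Lemma~\ref{lem:interpretation-measurement} directly: write the first component as $\Sigma_i p_i \cdot \pb_i \otimes \pt_i$ after reindexing via $u_{\sigma_s}$, then compute $\lrb{\mathcal C} = \lrb{\tmeas x \otimes \text{rest}} \circ \mathcal B\lrb{u_{\sigma_s}\pt}$ and observe that the measurement subterm contributes the composite $m^{\pQ} = \alpha_{\ov\pQ}^{\op} \circ m_X^{\op} \circ \mathcal B(\beta_\pQ)$, which by Lemma~\ref{lem:interpretation-measurement} produces precisely $\Sigma_i |p_i|^2 \lrb{\ov{\pb_i}}$; the residual quantum state $\Sigma_j \alpha_{ij}\cdot \pb'_{ij}\otimes \cdots$ is exactly the denotation of the $i$th reduct $\mathcal C'_i$. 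Care is needed here to track the auxiliary quantum data so that the list-equality clause of Definition~\ref{def:well-formed-auxiliary} is preserved across the sum.

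Finally, the contextual rule follows by induction. Assuming the equation holds for $(\pt_1, u_{\sigma_1^{ext}}, M_1) \to_p (\pt_2, u_{\sigma_2^{ext}}, M_2)$, I would show that $\lrb{E[M_1]} = \lrb{E[-]} \circ \lrb{M_1}$ in an appropriate compositional sense (a routine induction on evaluation contexts $E$), and that composition with $\mathcal B\lrb{u_{\sigma^{ext} \circ \sigma_i^{ext}} \cdot \pt_i}$ factors as $\lrb{E[-]}$ composed with $\mathcal B\lrb{u_{\sigma_i^{ext}}\pt_i}$ tensored with the identity on the auxiliary part $u_{\sigma^{ext}}$. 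The hypothesis $\operatorname{dom}(\sigma) \cap (\operatorname{dom}(\sigma_1) \cup \operatorname{dom}(\sigma_2)) = \emptyset$ ensures these two permutations act on disjoint coordinates, so by strict monoidality of $\mathcal B$ the $\mathcal B\lrb{u_{\sigma^{ext}}}$-part factors out and commutes with the sum, yielding the required equation. The only delicate bookkeeping is ensuring that the auxiliary-type list $[\pQ'_1,\dots,\pQ'_k]$ is carried consistently through the contextual step, which is immediate from Theorem~\ref{thm:subred_strong}.
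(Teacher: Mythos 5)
Your proposal is correct and follows essentially the same route as the paper's proof: reduce $\leadsto_p$ to $\rightarrow_p$ via the equational theory on pure terms, then argue case-by-case over the reduction rules using the denotational substitution lemma (Lemma~\ref{lem:subst-d}), strict monoidality and functoriality of $\mathcal B$ for the $\tpure$, $\tletb$/$\tletpairb$ and unitary-application cases, Lemma~\ref{lem:interpretation-measurement} for the only probabilistic (measurement) case, and a compositionality argument for the congruence step (the paper phrases this as induction on the form of $M$ rather than on the reduction derivation, but the content is the same). Your appeal to soundness (Proposition~\ref{prop:qua-soundness}) for the passage from $\rightarrow_p$ to $\leadsto_p$ is in fact the precise direction needed, where the paper cites the full completeness theorem.
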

\begin{proof}
Note that, the proof essentially reduces to
proving the statement for the relation $\rightarrow_p$, since $\leadsto_p$ holds after we rewrite the first component. More precisely,
if $\mathcal C$ doesn't reduce further, the statement holds trivially. If it does, then since the equational theory for the quantum
control fragment is complete(Theorem~\ref{th:complete}), it suffices to show the statement for $\rightarrow_p$.
We can assume without loss of generality that $\mathcal C = (\ps t, u_{\sigma} , M)$ for some
quantum value $\ps t$, some unitary $u_{\sigma}$ and some term $M$.
Now the proof is by induction on the form of the term $M$.
Note that if $M$ is a value, then the above equation holds trivially.
Let $\mathcal C = (\ps t, u_{\sigma}, M\, N)$, where $M$ is not a value. Then we know from the contextual rule fo Figure~\ref{fig:red_uni} that the only possible reductions
take the form $(\ps t , u_{\sigma}, M\, N) \rightarrow_{p_i} (\ps t_i , u_{\sigma_i} , M_i\, N)$ whenever $(\ps t , u_{\sigma}, M) \rightarrow_{p_i} (\ps t_i , u_{\sigma_i}, M_i)$. By the induction
hypothesis , we know that $\lrb{(\ps t , u_{\sigma}, M)} = \Sigma_ip_i \lrb{(\ps t_i , u_{\sigma_i}, M_i)}$. Hence we have the following:
\begin{align*}
       & \lrb{(\ps t , u_{\sigma}, M\,N) : \mathrm{A}[\ptype'_1, \ldots,\ptype'_k]}\\
       &= \lrb{x_1 ,\ldots,x_m:\mathcal B(\ptype_m) \vdash M\,N : \mathrm{A}} \otimes id_{\mathcal B(\ptype'_1)\ldots}
        \circ \mathcal B (\lrb{u_{\sigma}\ps t})\\
        &= ((\epsilon_{\mathrm{A,B}} \circ (\lrb{x_1,\ldots ,x_j \vdash M : \mathrm{B \multimap A}} \otimes 
        \lrb{x_{j+1},\ldots \vdash N:\mathrm{B}})) \otimes 
        id_{\mathcal B(\ptype'_1)\ldots}) \circ \mathcal B (\lrb{\ps v})\\
        &= ((\epsilon_{\mathrm{A,B}}\otimes id_{\mathcal B(\ptype'_1)\ldots}) \circ (id_{\mathrm{B \multimap A}} \otimes \lrb{x_{j+1},\ldots \vdash N:\mathrm{B}} \otimes id_{\mathcal B(\ptype'_1)\ldots})
	\\
	&\quad\circ(\lrb{x_1,\ldots ,x_j \vdash M : \mathrm{B \multimap A}} \otimes id_{\mathcal B(\ptype_{j+1}) \ldots} \otimes 
        id_{\mathcal B(\ptype'_1)\ldots}) \circ \mathcal B (\lrb{u_{\sigma}\ps t}) \\
        &= ((\epsilon_{\mathrm{A,B}}\otimes id_{\mathcal B(\ptype'_1)\ldots}) \\
        &\quad \circ (id_{\mathrm{B \multimap A}} \otimes \lrb{x_{j+1},\ldots \vdash N:\mathrm{B}} \otimes id_{\mathcal B(\ptype'_1)\ldots})
        \circ
        (\Sigma_ip_i \lrb{(\ps t_i , u_{\sigma_i}, M_i)}) \otimes (id_{\mathcal B(\ptype'_1)\ldots}) \\ %\circ \mathcal B (\lrb{\ps v})) \\
        &= \Sigma_i p_i \lrb{(\ps t_i , u_{\sigma_i} , M_iN)}
\end{align*}
        The case when $M\,N = (\lambda x.M')\, V$ is a consequence of ~\ref{lem:subst-d}.
        The other cases in the induction step for $M\, N$ follow similarly.
        We now prove the induction step for the term $\tpure{\ps t}$. From Figure~\ref{fig:red_uni}, it follows that
we need this equation to hold:
        \[\scalebox{1}{
                $\lrb{(\ps t , u_{\sigma} , \tpure{\ps t'}) : \mathcal B(\ptype_{\ps t'})[\ptype_{\ps t}]} = \lrb{(\ps t \otimes \ps t' , u_{\sigma_{swap}} \circ (u_{\sigma} \otimes u_{id}) , x) 
        : \mathcal B(\ptype_{\ps t'})[\ptype_{\ps t}]}$
                }\]
Note the following:
\begin{align*}
\lrb{(\ps t , u_{\sigma} , \tpure{\ps t'}) : \mathcal B(\ptype_{\ps t'})[\ptype_{\ps t}]}&= (\lrb{\cdot\vdash \tpure{\ps t'} : \mathcal B(\ptype_{\ps t'})}\otimes id_{\mathcal B(\ptype_{\ps t})})\circ \mathcal B(\lrb{u_{\sigma}\ps t})\\
        &= (\mathcal B(\lrb{\ps t'}) \otimes id_{\mathcal B(\ptype_{\ps t})})\circ \mathcal B(\lrb{u_{\sigma}\ps t}) \\
        &= (id_{\mathcal B(\ptype_{\ps t'})} \otimes id_{\mathcal B(\ptype_{\ps t})})\circ \mathcal B(\lrb{u_{\sigma_{swap}} \circ (u_{\sigma} \otimes u_{id}) (\ps t \otimes \ps t')})\\
&= (\lrb{x}\otimes id_{\mathcal B(\ptype_{\ps t})}) \circ \mathcal B(\lrb{u_{\sigma_{swap}} (u_{\sigma} \ps t) \otimes \ps t'}) \\
&= \lrb{(\ps t \otimes \ps t' , u_{\sigma_{swap}} \circ (u_{\sigma} \otimes u_{id}), x) 
        : \mathcal B(\ptype_{\ps t'})[\ptype_{\ps t}]}
\end{align*}
       The proof for the induction step for the term $\tmeas{M}$ follows from Lemma~\ref{lem:interpretation-measurement}. Rest of the cases follow
       using arguments similar to the above cases.
\end{proof}

\begin{thm}[Soundness]
	\label{thm:small-step-sound}
	For $\mathcal C \in \wfconfs{\svconfs}{A}$, if $\mathcal C \notin \vconfs$, then
	\noindent
	\[ 
	\lrb{\mathcal C : \mathrm{A}} = \sum_{\mathcal C \leadsto_p \mathcal C'} p \lrb{\mathcal C' : \mathrm{A}} , 
	\]
	where the sum ranges over all possible reducts $\mathcal C \leadsto_p \mathcal C'$.
\end{thm}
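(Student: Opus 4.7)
The plan is to derive the theorem as a direct corollary of Proposition~\ref{lem:strong-small-step}, which already proves the sum-equation in the more general setting of configurations with auxiliary quantum types and the underlying reduction $\rightarrow_p$. Since $\mathcal C \in \wfconfs{\svconfs}{A}$ is just the special case of $\wfconfs{\confs}{A[\ptype_1',\ldots,\ptype_k']}$ where $k = 0$ and the first component is already a quantum value, the extra content to supply here is only the bridge from $\rightarrow_p$ to $\leadsto_p$.

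First I would unfold Definition~\ref{def:reduction}: a step $(\ps v, u_\sigma, M) \leadsto_p (\ps v', u_{\sigma'}, M')$ factors as $\cdot \pdash \ps v = \ps t : \pQ$, a step $(\ps t, u_\sigma, M) \rightarrow_p (\ps t', u_{\sigma'}, M')$, and $\cdot \pdash \ps t' = \ps v' : \pQ'$. By soundness of the equational theory (Proposition~\ref{prop:qua-soundness}, part of Theorem~\ref{th:complete}), one gets $\lrb{\ps v} = \lrb{\ps t}$ and $\lrb{\ps t'} = \lrb{\ps v'}$, and because the interpretation of a configuration only sees $\ps t$ through the composite $\mathcal B(\lrb{u_\sigma \ps t})$, the denotations of $(\ps v, u_\sigma, M)$ and $(\ps t, u_\sigma, M)$ agree, and similarly for the reducts. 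Thus
\[ \lrb{(\ps v, u_\sigma, M) : A} = \lrb{(\ps t, u_\sigma, M) : A}, \qquad \lrb{(\ps v', u_{\sigma'}, M') : A} = \lrb{(\ps t', u_{\sigma'}, M') : A}. \]

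Next, since $\mathcal C \notin \vconfs$, Theorem~\ref{thm:progress-strong} gives a nonempty finite family of reducts $\{\mathcal C_i\}_{i \in I}$ with $\sum_{i \in I} p_i = 1$, each of which is itself a well-formed semi-value configuration in $\wfconfs{\svconfs}{A}$ by subject reduction (Theorem~\ref{lem:sub_red}). Applying Proposition~\ref{lem:strong-small-step} to $(\ps t, u_\sigma, M)$ with $k = 0$ yields
\[ \lrb{(\ps t, u_\sigma, M) : A} = \sum_i p_i \lrb{\mathcal C_i' : A}, \]
where $\mathcal C_i'$ are the $\rightarrow_p$-reducts. Rewriting each $\mathcal C_i'$ on the right via the equational theory back to its corresponding semi-value form, and using the denotational invariance established above, turns this into exactly the sum indexed by $\leadsto_p$-reducts, completing the argument.

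The only subtlety I anticipate is bookkeeping around the correspondence between $\rightarrow_p$-steps and $\leadsto_p$-steps: one must check that each semi-value reduct $(\ps v', u_{\sigma'}, M')$ is counted with the correct multiplicity, i.e.\ that the equational rewriting on pure terms does not spuriously merge or split probability branches. This is immediate from the fact that the equational theory acts only on the quantum term component while keeping the classical term $M'$ and the permutation $u_{\sigma'}$ structurally unchanged, so the correspondence is bijective on branches and preserves the associated probabilities $p_i$.
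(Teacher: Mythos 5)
Your proposal is correct and follows essentially the same route as the paper: the paper's proof of this theorem is simply "follows from Proposition~\ref{lem:strong-small-step}", whose own proof already contains the reduction of $\leadsto_p$ to $\rightarrow_p$ via the (sound and complete) equational theory that you spell out explicitly. Your additional bookkeeping about matching branches and probabilities is a reasonable elaboration of what the paper leaves implicit, not a different argument.
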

\begin{proof}
	Follows from Proposition~\ref{lem:strong-small-step}.
\end{proof}

Next, we can show that our interpretation is sound in a big-step sense as well, which follows easily using the previous result and strong normalisation.

\begin{thm}[Strong Adequacy]
For $\mathcal C \in \wfconfs{\svconfs}{A}$,
\noindent
	\[
		\lrb{\mathcal C} = \sum_{\mathcal V \in \vconfs}P(\mathcal C \rightarrow_{\ast} \mathcal V)\lrb{\mathcal V},
	\]
	where $P(\mathcal C \rightarrow_{\ast} \mathcal V)$ indicates the overall probability that $\mathcal C$ reduces to $\mathcal V$.
\end{thm}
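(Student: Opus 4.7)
The plan is to proceed by well-founded induction on the reduction tree rooted at $\mathcal C$. Strong Normalisation (Theorem~\ref{thm:st-norm}) ensures there are no infinite reduction sequences, and Progress (Theorem~\ref{thm:progress-strong}) ensures that at each non-value configuration the set of immediate reducts is finite. By König's lemma, the reduction tree of $\mathcal C$ is therefore a finite tree; in particular, the set of value configurations reachable from $\mathcal C$ is finite, and $P(\mathcal C \rightarrow_\ast \mathcal V)$ is well-defined as the (finite) sum over all reduction paths from $\mathcal C$ to $\mathcal V$ of the product of the transition probabilities along the path. Subject Reduction (Theorem~\ref{lem:sub_red}) guarantees that every configuration appearing in this tree is itself in $\wfconfs{\svconfs}{A}$, so every denotation $\lrb{\mathcal C' : A}$ involved lives in $\NCPSU(\mathbb C, \lrb{A})$.

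For the base case, suppose $\mathcal C \in \vconfs$. Then $P(\mathcal C \rightarrow_\ast \mathcal C) = 1$ by the empty path, and $P(\mathcal C \rightarrow_\ast \mathcal V) = 0$ for every other value configuration $\mathcal V$, since value configurations are irreducible under $\leadsto$. The right-hand side thus collapses to $\lrb{\mathcal C}$, matching the left-hand side.

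For the inductive step, assume $\mathcal C \notin \vconfs$. By Progress, there is a finite family $\{\mathcal C \leadsto_{p_i} \mathcal C_i\}_{i \in I}$ of distinct one-step reducts with $\sum_{i\in I} p_i = 1$, and each $\mathcal C_i$ has a strictly smaller reduction tree than $\mathcal C$. Applying the small-step Soundness result (Theorem~\ref{thm:small-step-sound}) and then the induction hypothesis to each $\mathcal C_i$ gives
\[
\lrb{\mathcal C} \;=\; \sum_{i \in I} p_i\, \lrb{\mathcal C_i} \;=\; \sum_{i \in I} p_i \sum_{\mathcal V \in \vconfs} P(\mathcal C_i \rightarrow_\ast \mathcal V)\, \lrb{\mathcal V}.
\]
Interchanging the two finite sums yields $\sum_{\mathcal V} \bigl(\sum_i p_i P(\mathcal C_i \rightarrow_\ast \mathcal V)\bigr) \lrb{\mathcal V}$, so it remains to verify the Markov-style identity $P(\mathcal C \rightarrow_\ast \mathcal V) = \sum_i p_i P(\mathcal C_i \rightarrow_\ast \mathcal V)$. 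This follows immediately from the definition of $P$ as a sum over reduction paths, since every path $\mathcal C \rightarrow_\ast \mathcal V$ of positive length factors uniquely as a first step $\mathcal C \leadsto_{p_i} \mathcal C_i$ followed by a (possibly empty) path $\mathcal C_i \rightarrow_\ast \mathcal V$, and the probability of the composite path is the product of $p_i$ with the probability of the suffix.

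The main obstacle is essentially bookkeeping: giving a clean formal definition of $P(\mathcal C \rightarrow_\ast \mathcal V)$ that makes the Markov decomposition above transparent, and checking that the finite-branching plus strong-normalisation combination indeed lets one swap finite sums freely. No further semantic content is needed beyond small-step Soundness and Strong Normalisation.
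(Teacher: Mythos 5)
Your proposal is correct and takes essentially the same route as the paper, whose proof simply cites Strong Normalisation (Theorem~\ref{thm:st-norm}) together with small-step Soundness (Theorem~\ref{thm:small-step-sound}); your induction on the finite reduction tree with the Markov-style decomposition of path probabilities is just the expected elaboration of that one-line argument.
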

 \begin{proof}
 The result is a consequence of Theorem~\ref{thm:st-norm} and Theorem~\ref{thm:small-step-sound}.
 \end{proof}
 
 This implies that, for any well-formed configuration $\mathcal C$, its interpretation $\lrb{\mathcal C}$ is an NCPU map (i.e., it is not merely subunital) and therefore it corresponds to a quantum channel in the Heisenberg picture of quantum mechanics.

%%%%%%%%%%%%%%%%%%%%%%%%%%%%%%%%%%%%%%%%%%%%%%
\section{Conclusion and Perspectives}\label{sec:conclusion}
%%%%%%%%%%%%%%%%%%%%%%%%%%%%%%%%%%%%%%%%%%%%%%

We described a programming language which has support for both pure state quantum computation
and mixed state quantum computation. We began by describing the pure quantum subsystem (\secref{sec:quantum-control}),
for which we introduced an equational theory and proved its completeness w.r.t. the denotational semantics (\secref{sub:quantum-equational}).
Then, we described the main calculus (\secref{sec:syntax}) which uses a new adaptation of quantum configurations (\secref{sub:operational-c}),
which we used to define the operational semantics of the language. Here, the interaction between the pure quantum subsystem and the
main calculus becomes apparent. We showed that our  
denotational semantics (\secref{sub:denotational}) has a clear and appropriate interpretation as channels in the Heisenberg picture of quantum mechanics
and we showed it is sound and adequate with respect to the operational semantics (\secref{sub:denotational}).

Although our language has support for $\Nat$, an inductive type, an obvious extension would be to include more general inductive types.
Additionally, support for higher-order \emph{pure} quantum computation is another feature that would be interesting to add in a future
work.

\section*{Acknowledgements}

The authors would like to thank Benoît Valiron for helpful comments and
discussions.
Louis Lemonnier's research was funded by the Engineering and Physical
SciencesResearch Council (EPSRC) under project EP/X025551/1 ``Rubber DUQ:
Flexible Dynamic Universal Quantum programming''.
This work is supported by the the Plan France 2030 through the PEPR
integrated project EPiQ ANR-22-PETQ-0007 and the HQI platform
ANR-22-PNCQ-0002; and by the European Union through the MSCA SE project
QCOMICAL (Grant Agreement ID: 101182520).

\bibliographystyle{alphaurl}
\bibliography{references}

\end{document}